\DeclareMathAlphabet{\mathpzc}{OT1}{pzc}{m}{it}
\newtheorem{propo}{Proposition}[section]
\newtheorem{lemma}[propo]{Lemma}
\newtheorem{definition}[propo]{Definition}
\newtheorem{coro}[propo]{Corollary}
\newtheorem{thm}[propo]{Theorem}
\newtheorem{remark}[propo]{Remark}
\newtheorem{conj}[propo]{Conjecture}
\newtheorem{fact}[propo]{Fact}
\newcommand\T{\rule{0pt}{2.6ex}}
\newcommand\B{\rule[-1ex]{0pt}{0pt}}
\def\bb{\,\textup{\textbrokenbar}\, }
\def\Ev{{\sf E}}
\def\parent{P}
\def\cC{{\mathcal C}}
\def\I{\mathcal{I}}
\def\sX{\mathbb{X}}
\def\sY{\mathbb{Y}}
\def\sD{\mathbb{D}}
\def\hsD{\widehat{\mathbb{D}}}
\def\bsD{\breve{\mathbb{D}}}
\def\sZ{\mathbb{Z}}
\def\sU{\mathbb{U}}
\def\bsZ{\breve{\mathbb{Z}}}
\def\bZ{\breve{Z}}
\def\bz{\breve{z}}
\def\bK{\breve{K}}
\def\bY{\breve{Y}}
\def\bX{\breve{X}}
\def\prob{{\mathbb P}}
\def\tY{\widetilde{Y}}
\def\ind{\mathbb{I}}
\def\St{\mathcal{S}}
\def\E{\mathbb{E}}
\def\eps{{\epsilon}}
\def\ve{{\varepsilon}}
\def\naturals{{\mathbb N}}
\def\integers{{\mathbb Z}}
\def\tW{\widetilde{W}}
\def\hW{{\widehat{W}}}
\def\hY{{\widehat{Y}}}
\def\hX{{\widehat{X}}}
\def\tY{{\widetilde{Y}}}
\def\tmu{\widetilde{\mu}}
\def\hD{{\widehat{D}}}
\def\hK{{\widehat{K}}}
\def\const{\kappa}
\def\gt1{ >\hskip-3pt 1}
\def\prob{\mathbb{P}}
\def\t{\mathtt{t}}
\def\tT{\widetilde{T}}
\def\sS{{\mathbb S}}
\def\tc{\tilde{c}}
\newcommand{\xleftrightarrow}[1]{\xleftarrow{\hskip2pt#1}\hskip-7pt\rightarrow}
\newcommand{\reals}{\mathbb{R}}
\newcommand{\tL}{\widetilde{L}}
\newcommand{\tl}{\widetilde{l}}
\newcommand{\tLr}{\widetilde{L}^{\textup{\tiny rep}}}
\newcommand{\tLa}{\widetilde{L}^{\textup{\tiny alt}}}
\def\lr{l^{\textup{\tiny rep}}}
\newcommand{\cH}{\mathcal{H}}
\newcommand{\hcH}{\widehat{\mathcal{H}}}
\newcommand{\cI}{\mathcal{I}}
\newcommand{\asX}{\acute{\mathbb{X}}}
\newcommand{\asY}{\acute{\mathbb{Y}}}
\newcommand{\aX}{\acute{X}}
\newcommand{\aY}{\acute{Y}}
\newcommand{\aq}{\acute{q}}
\begin{document}

\title{Optimal coding for the deletion channel\\ with small deletion probability}

\author{Yashodhan Kanoria\thanks{Department of Electrical Engineering,
Stanford University}
 \;\;\; and\;\;\;  Andrea Montanari\thanks{Department of Electrical
Engineering and Department of Statistics,
Stanford University}}

\maketitle
\begin{abstract}
The deletion channel is the simplest point-to-point  communication
channel that models lack of synchronization. Input bits are deleted
independently with probability $d$, and when they are not deleted,
they are not affected by the channel. Despite
significant effort, little is known about the capacity of this channel,
and even less about optimal coding schemes. In this paper we develop
a new systematic approach to this problem, by demonstrating that
capacity can be computed in a series expansion for small deletion probability.
We compute three leading terms of this expansion, and find an input
distribution that achieves capacity up to this order. This constitutes
the first optimal coding result for the deletion channel.

The key idea employed is the following: We understand perfectly the deletion channel
with deletion probability $d=0$. It has capacity 1 and the optimal input distribution is i.i.d. Bernoulli$(1/2)$. It is natural to expect that the channel with small deletion probabilities
has a capacity that varies smoothly with $d$, and that the optimal input distribution is
obtained by smoothly perturbing the i.i.d. Bernoulli$(1/2)$ process. Our results show that
this is indeed the case. We think that this general strategy can be useful in a number of capacity calculations.
\end{abstract}

%
%

\section{Introduction}
\label{sec:intro}

The (binary) deletion channel accepts bits as inputs,
and deletes each transmitted bit independently with  probability $d$.
Computing or providing systematic approximations to its capacity is
one of the outstanding problems in information theory
\cite{MitzenmacherReview}. An important
motivation comes from the need to understand synchronization errors and
optimal ways to cope with them.

In this paper we suggest a new approach. We demonstrate that capacity
can be computed in a series expansion for small deletion probability,
by computing the first two orders of such an expansion. Our main result
is the following.
\begin{thm}\label{thm:main_theorem}
Let $C(d)$ be the capacity of the deletion channel with deletion
probability $d$. Then, for small $d$ and
any $\eps>0$,
\begin{align}
C(d)=1+ d\log d - A_1\, d+ A_2\, d^2 + O(d^{3-\epsilon})\, ,\label{eq:MainFormula}
\end{align}
where
\begin{align*}
A_1 &\equiv \log(2e)-\sum_{l=1}^\infty 2^{-l-1}l\log l\approx 1.15416377 \\
A_2 &= c_3 +c_4+ \frac{1}{4 \ln 2}  \left( 2+ \frac{3}{2} c_2^2 +
\sum_{l=1}^{\infty}
    2^{-l} \left (l \ln l \right)^2- c_2\sum_{l=1}^{\infty}
    2^{-l}l^2 \ln l \right) \approx 1.67814594\\
c_2 &\equiv \sum_{l=1}^\infty 2^{-l} l \ln l \approx 1.78628364\\
c_3 &\equiv \frac{1}{2} \left ( -1 + \sum_{l=3}^\infty 2^{-l} \left\{\binom{l}{2} \log \binom{l}{2} - l^2 \log l
+(l-1)(l-3) \log (l-1) + (l-2) \log (l-2) \right \} \,\right)\\
&\approx \, -0.88636960 \\
c_4 &\equiv \sum_{j=4}^\infty 2^{-(2+j)}  \,  (j-1 ) (j-3)\, h\!\left( \frac{1}{j-1}\right)\nonumber\\
&\phantom{\equiv \ }+\sum_{i=2}^\infty \sum_{j=4}^\infty  2^{-(i+j+1)}  \, (i+j-1 )(j-3)\, h\!\left( \frac{i+ 1}{i+j-1}\right) \approx 0.69001321
\end{align*}
Here $h(\cdot)$ is the binary entropy function, i.e., $h(p) \equiv -p \log p - (1-p) \log(1-p)$.

Further, the binary stationary source defined by the property that the
times at which it switches from $0$ to $1$ or viceversa form a renewal
process with holding time distribution $p_L(l) = 2^{-l}(1+ d( l \ln l -c_2 l/2))$, achieves
rate within $O(d^{3-\epsilon})$ of capacity.
 \end{thm}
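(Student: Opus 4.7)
The plan is to prove matching upper and lower bounds on $C(d)$ of the asserted form, together with the identification of the stated renewal source as capacity-achieving up to $O(d^{3-\epsilon})$. The unifying idea is to expand the per-symbol information rate $\bar I(X;Y) = \bar H(Y) - \bar H(Y|X)$ in powers of $d$ around the noiseless baseline at $d = 0$, for which the i.i.d.\ Bernoulli$(1/2)$ input is optimal. I would view Bernoulli$(1/2)$ as the stationary renewal binary source with geometric holding time $p_L(l) = 2^{-l}$, and consider the perturbed family
\begin{equation*}
p_L(l) \;=\; 2^{-l}\bigl(1 + d\,\phi(l)\bigr) + O(d^2)
\end{equation*}
(with $\sum_l 2^{-l}\phi(l) = 0$ for normalization); the goal is to identify the $\phi$ that maximizes $\bar I$ through order $d^2$ and to show that no other stationary ergodic input does appreciably better.

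For the achievability direction, I would take $X$ to be the renewal source specified in the statement (i.e.\ $\phi(l) = l\ln l - c_2 l/2$) and compute $\bar I(X;Y)$ by expanding each piece. The conditional entropy $\bar H(Y|X)$ is the more tractable: given the run-length sequence $L_1, L_2, \ldots$ of $X$, the output $Y$ is determined by the surviving run lengths, so each run contributes essentially the binomial entropy of $\mathrm{Bin}(L_i, d)$; summing these contributions and averaging over $L_i$ produces the $d\log d$ term together with the $c_2$-dependent and $\sum 2^{-l}(l\ln l)^2$-dependent pieces of $A_1$ and $A_2$. The output entropy $\bar H(Y)$ is more delicate, because $Y$ is only approximately renewal: a run of $X$ that is entirely deleted causes two adjacent runs of $Y$ to merge. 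I would classify the deletion patterns contributing through $O(d^2)$, namely (i) no deletions, (ii) a single deletion leaving a run of length $\geq 2$, (iii) a single deletion that erases a length-$1$ run and merges its neighbours, and (iv) two deletions in various configurations; evaluating $\bar H(Y)$ class by class and summing then yields $c_3$ (from the single-merger patterns) and $c_4$ (from the two-deletion patterns).

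For the converse, the argument has two layers. First, I would show that restricting to renewal sources costs at most $O(d^{3-\epsilon})$: the deletion channel is memoryless, and the statistics of $Y$ entering $\bar I$ through order $d^2$ depend on $X$ only via the marginal and short-range joint law of consecutive run lengths, so a concavity / i.i.d.\ refinement argument allows replacing a general stationary ergodic input by its renewal projection at negligible loss. Second, within the renewal class I would expand
\begin{equation*}
\bar I(p_L; d) \;=\; 1 + d\log d + d\,\Lambda_1[\phi] + d^2\,\Lambda_2[\phi] + O(d^{3-\epsilon})
\end{equation*}
and optimize: the Euler--Lagrange equation for $\Lambda_1[\phi]$ under the single constraint $\sum_l 2^{-l}\phi(l) = 0$ has the unique solution (up to the Lagrange multiplier fixed by that constraint) $\phi(l) = l\ln l - c_2 l/2$, producing the coefficient $-A_1$ of $d$; substituting this $\phi$ back into $\Lambda_2$ yields the coefficient $A_2$ of $d^2$.

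The main obstacle I anticipate is controlling the error uniformly to precision $O(d^{3-\epsilon})$: this requires sharp tail estimates on the contributions from long input runs and from deletion patterns touching three or more runs, and the $\epsilon$ slack in the exponent strongly suggests that genuine logarithmic corrections of the form $d^3 (\log(1/d))^C$ appear in the expansion and must be absorbed into the error. A secondary, conceptually delicate obstacle is the first step of the converse: the space of stationary ergodic binary processes is much richer than the renewal family, so one needs a uniform continuity estimate for the information functional $\bar I(\,\cdot\,; d)$ on that space, not merely the existence of a good renewal approximation to any particular optimal source.
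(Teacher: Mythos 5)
Your overall philosophy (perturb around $d=0$, where i.i.d.\ Bernoulli$(1/2)$ is optimal) is the same as the paper's, and your achievability sketch is broadly aligned with its Lemma \ref{lemma:achievability}. The genuine gap is in the first layer of your converse. You assert that ``the statistics of $Y$ entering $\bar I$ through order $d^2$ depend on $X$ only via the marginal and short-range joint law of consecutive run lengths, so a concavity / i.i.d.\ refinement argument allows replacing a general stationary ergodic input by its renewal projection at negligible loss.'' No such projection step appears in the paper, and as stated it does not work: $I(\sX)$ is not a functional of short-range run statistics alone. The output entropy $H(\sY)$ depends on the full process (the paper never computes it; it only uses the bound $H(\sY)\le H(L_{\sY})/\mu(\sY)$, valid for \emph{every} stationary ergodic input, with equality iff the runs are i.i.d.), while the deletion-ambiguity terms $H(D^n|X^n,Y,K)$ and $H(K|X^n,Y)$ enter $I$ with a \emph{plus} sign and depend on joint laws of three and more adjacent runs, so replacing $\sX$ by its renewal projection moves different terms in different directions and there is no obvious concavity giving a one-sided comparison at precision $O(d^{3-\eps})$. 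What the paper actually proves is quantitative rigidity of near-optimal inputs: a truncation to bounded super-runs ($\St_{\lfloor 1/d\rfloor}$, Lemma \ref{lemma:small_loss_by_restricting_runs}, applied twice), a self-improving bootstrap showing $H(\sY)\ge 1-d^{2-\eps}$ for any near-capacity input (Lemmas \ref{lemma:cyclic_hy_bound_SLstar} and \ref{lemma:hy_is_large}), and comparison lemmas forcing the $k$-run joint marginals $p_{L(k)}$ and $q_{L(k)}$ to be within $d^{\gamma/2-\eps}$ of the product-geometric law; only then does the upper bound collapse to a finite-dimensional maximization over the single-run law $q_L$. In addition, all the order-$d^2$ ambiguity computations for \emph{arbitrary} inputs require the modified/perturbed deletion processes with third-moment run-length tail control, which your proposal does not supply. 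Without a substitute for this machinery, your converse does not close.

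Two secondary points. First, your within-renewal optimization is structurally off: with $p_L(l)=2^{-l}(1+d\phi(l))$ the order-$d$ coefficient $\Lambda_1$ is \emph{independent} of $\phi$ (any $O(d)$ perturbation leaves $A_1$ unchanged), so there is no Euler--Lagrange problem at first order; the optimal $\phi(l)=l\ln l-c_2 l/2$ arises at second order from the quadratic trade-off between the entropy cost $D(p_L\|p_L^*)=\Theta(d^2)$ and the linear-in-$\phi$ gain in the term $\frac{d}{2}\sum_l p_L(l)\,l\bigl(\log l-\tfrac{c_2}{2\ln 2}\bigr)$, exactly as in Section \ref{sec:intuition}. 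Second, your bookkeeping misplaces $c_3$ and $c_4$: in the paper both arise from the conditional term $H(Y|X^n)$ -- $c_3$ from the residual ambiguity of deletion positions (two deletions in a run, and merge patterns) and $c_4$ from the parent-run ambiguity $H(K|X^n,Y)$ -- not from $H(Y)$; relatedly, $H(Y|X^n)$ equals a sum of run-wise binomial entropies only on the event that no run is entirely deleted, and the corrections from merges are precisely where these constants live.
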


Given a binary sequence, we will call `runs' its maximal blocks of
contiguous  $0$'s or $1$'s. We shall refer to binary sources such that the switch times form a renewal process
as \emph{sources (or processes) with i.i.d. runs}.

The `rate' of a given binary source is the maximum rate at which
information can be transmitted through the deletion channel using
input sequences distributed as the source.
A formal definition is provided below (see Definition \ref{def:rate}). Logarithms denoted by $\log $ here
(and in the rest of the paper) are understood to be
in base $2$.
While one might be skeptical about the concrete meaning
of asymptotic expansions of the type (\ref{eq:MainFormula}),
they often prove surprisingly accurate. For instance at $d=0.1$
($10\%$ of the input symbols are deleted),
the expression in Eq.~(\ref{eq:MainFormula}) (dropping the error term
$O(d^{3-\eps})$) is larger than the best lower bound
\cite{Drinea07lb} by about $0.007$ bits. The lower bound of
\cite{Drinea07lb} is derived using a Markov source and `jigsaw'
decoding. Our asymptotic analysis implies that the loss in rate due to
restricting to Markov sources and jigsaw decoding (cf. Theorem
\ref{thm:markov_loss} and Remark \ref{rem:lb_off_pt9dsq}), to leading
order, is $0.904d^2 \approx 0.009$.
Hence, we estimate that our asymptotic approach incurs an error of
about $0.002$ bits for computing the capacity at  $d=0.1$.

More importantly asymptotic expansions can provide useful design
insight. Theorem \ref{thm:main_theorem}
shows that the stationary process consisting of i.i.d. runs with
the specified run length distribution, achieves capacity to within
$O(d^{3-\epsilon})$. In comparison, the best performing approach
tried before this was to use a first order Markov source for coding \cite{Drinea07lb}.
We are able to show, in fact, that this approach incurs a loss that is
$\Omega(d^2)$, which is the same order as the loss incurred by the trivial
approach of using i.i.d. Bernoulli$(1/2)$!

\begin{remark}
\label{rem:constants_not_computed}
In this work, we prove rigorous upper and lower bounds on capacity
that match up to  quadratic order in $d$ (cf. Theorem
\ref{thm:main_theorem}), but without explicitly evaluating the
constants in the error terms.
It would be very interesting to obtain explicit expressions for these
constants.
\end{remark}

Before this work, there was no non-trivial optimal coding result known for
the deletion channel\footnote{The trivial exception
is the case $d=0$, for which the  i.i.d. Bernoulli$(1/2)$
process achieves capacity.}. Further terms in the capacity expansion can be expected
to supply even more detailed information about the optimal coding scheme and
allow us to achieve capacity to higher orders.
%

We think that the strategy adopted here might be useful in other
information theory problems. The underlying philosophy is
that whenever capacity is known for a specific value of
the channel parameter, and the corresponding
optimal input distribution is unique and well characterized,
it should be possible to compute an asymptotic expansion around that value.
In the present context the special channel is the perfect channel,
i.e. the deletion channel with deletion probability $d=0$. The
corresponding input distribution is the i.i.d. Bernoulli$(1/2)$ process.
%
%
\subsection{Related work}

Dobrushin \cite{Dobrushin} proved a coding theorem for the deletion channel,
and other channels with synchronization errors. He showed that the maximum
rate of reliable communication is given by the  maximal mutual information per
bit, and proved that this can be achieved through a random coding scheme.
This characterization has so far found limited use in proving concrete
estimates.
An important exception is provided by the work of Kirsch and Drinea
\cite{KirschDrinea} who use Dobrushin coding theorem to prove
lower bounds on the capacity of channels with deletions and
duplications.
We will also use Dobrushin theorem in a crucial way, although
most of our effort will be devoted to proving upper bounds on the capacity.

Several capacity bounds have been developed over the last few years,
following alternative approaches, and are surveyed in
\cite{MitzenmacherReview}.
In particular, it has been proved that $C(d)=\Theta(1-d)$
as $d\to 1$ \cite{Drinea06dnear1}.
The papers \cite{Fertonani09,Dalai} improve the upper bound in this
limit obtaining $\limsup_{d\to 1}C(d)/(1-d)\leq 0.413$. However, determining the asymptotic behavior in
this limit (i.e. finding a constant $B_1$ such that
$C(d) = B_1(1-d)+o(1-d)$) is an open problem.
When applied to the small $d$ regime, none of the known upper bounds
actually captures the correct behavior as stated in Eq.~(\ref{eq:MainFormula}).
A simple calculation shows that the first upper bound in \cite{DMP07upperbounds}
has asymptotics of $1+(3/4)d\log d$. Another work \cite{Fertonani09} shows that
$C \geq 1- 4.19 d$ as $d \rightarrow \infty$.
As we show in the present paper, this behavior can be
controlled exactly, up to the third leading term
of the expansion.

A short version of this paper  was presented at the 2010 International
Symposium on Information Theory (ISIT) \cite{KanoriaMontanari}.
At the same conference,  Kalai, Mitzenmacher and Sudan \cite{KMS}
presented a result analogous to Theorem \ref{thm:main_theorem}.
The proof is based on a counting argument, very different from the
the techniques employed here.
Also, the result of \cite{KMS} is not the same as in Theorem
\ref{thm:main_theorem}, since only the $d\log d$ term of the series
is established in \cite{KMS}. Theorem
\ref{thm:main_theorem} improves on our ISIT result \cite{KanoriaMontanari}, that contained only the first two terms in the
series expansion, but not the order $d^2$ term. Also, we obtain a non-trivial coding
scheme for the first time in this paper. The trivial i.i.d. Bernoulli$(1/2)$ coding
scheme is enough to achieve capacity up to linear order as shown in our
conference paper \cite{KanoriaMontanari}.

%

\subsection{Numerical illustration of results}

We can numerically evaluate the expression in Eq.~(\ref{eq:MainFormula}) (dropping the error term) to obtain estimates of capacity for small deletion probabilities.
\begin{align*}
C_{\textup{\tiny est}} = 1+ d\log d - A_1\, d+ A_2\, d^2\, .
\end{align*}

The values of $C_{\textup{\tiny est}}$ are presented in Table \ref{table:capacity_bounds_and_expansion} and Figure \ref{fig:capacity_bounds_and_expansion}. We compare with the best known numerical lower bounds \cite{Drinea07lb} and upper bounds \cite{Fertonani09,DMP07upperbounds}.

We stress here that $C_{\textup{\tiny est}}$ is \emph{neither an upper nor a lower bound on capacity}.
It is an estimate based on taking the leading terms of the asymptotic expansion of capacity for small $d$, and is expected to be accurate for small values of $d$. Indeed, we see that for $d$ larger than $0.4$, our estimate $C_{\textup{\tiny est}}$ \emph{exceeds} the upper bound. This simply indicates that we should not use $C_{\textup{\tiny est}}$ as an estimate for such large $d$. We believe that $C_{\textup{\tiny est}}$ provides an excellent estimate of capacity for $d \apprle 0.2$.

\small
\begin{table}
\begin{center}
\begin{tabular}{c|c|c|c}
$d$& Best lower bound & $C_{\textup{\tiny est}}$ & Best upper bound\\[5pt]
\hline
 0.05 &    0.7283  &  0.7304 &   0.8160\T\\
 0.10 &    0.5620  &  0.5692 &   0.6890\\
 0.15 &    0.4392  &  0.4541 &   0.5790\\
 0.20 &    0.3467  &  0.3719 &   0.4910\\
 0.25 &    0.2759  &  0.3163 &   0.4200\\
 0.30 &    0.2224  &  0.2837 &   0.3620\\
 0.35 &    0.1810  &  0.2715 &   0.3150\\
 0.40 &    0.1484  &  0.2781 &   0.2750\\
 0.45 &    0.1229  &  0.3020 &   0.2410\\
 0.50 &    0.1019  &  0.3425 &   0.2120\B
\end{tabular}
\end{center}
\caption{Table showing best known numerical bounds on capacity (from \cite{Drinea07lb,Fertonani09,DMP07upperbounds}) compared with our estimate based on the small $d$ expansion.}\label{table:capacity_bounds_and_expansion}
\end{table}

\begin{figure}[htp]
\centering
\includegraphics[scale=0.9]{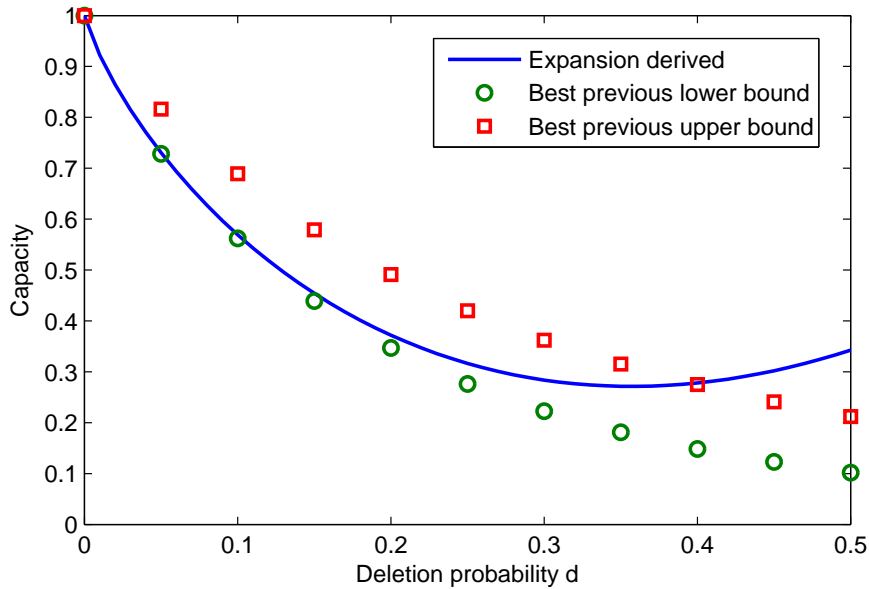}
\caption{Plot showing best known numerical bounds on capacity (from \cite{Drinea07lb,Fertonani09,DMP07upperbounds}) compared with our estimate based on the small $d$ expansion.}\label{fig:capacity_bounds_and_expansion}
\end{figure}

\subsection{Notation}
We borrow $O(\cdot)$, $\Omega(\cdot)$ and $\Theta(\cdot)$ notation from the computer science
literature. We define these as follows to fit our needs.
Let $f: [0,1] \rightarrow \mathbb{R}$ and $g: [0,1] \rightarrow \mathbb{R_+}$. We say:
\begin{itemize}
\item We say $f=O(g)$ if there is a constant $c < \infty$ such that $|f(x)| \leq c g(x)$ for all
$x \in [0,1]$.
\item We say $f=\Omega(g)$ if there is a constant $c >0$ such that $f(x) \geq c g(x)$ for all
$x \in [0,1]$.
\item We say $f=\Theta(g)$ if there are constants $c < \infty \,,\, c' >0$ such that $c g(x) \geq f(x) \geq c' g(x)$ for all
$x \in [0,1]$.
\end{itemize}
Throughout this paper, we adhere to the convention that the constants $c, c'$ above should
not depend on the processes $\sX, \sY, \ldots$ etc. under consideration, if there are such
processes.

\subsection{Outline of the paper}

Section \ref{sec:Preliminaries} contains the basic definitions and
results necessary for our approach to estimating the capacity of the
deletion channel. We show that it is sufficient to consider stationary ergodic input sources, and
define their corresponding rate (mutual information per bit). Capacity
is obtained by maximizing this quantity over stationary processes.
In Section \ref{sec:intuition}, we present an informal argument that
contains the basic intuition leading to our main result (Theorem
\ref{thm:main_theorem}), and allows us to correctly guess the optimal input
distribution.
Section \ref{sec:outline} states a small number of core lemmas, and
shows that they imply  Theorem
\ref{thm:main_theorem}.
Finally, Section \ref{sec:ProofLemma} states several technical results
(proved in appendices) and uses them to prove the core lemmas. We conclude with a short discussion, including a  list of open problems, in Section \ref{sec:discussion}.

\section{Preliminaries}
\label{sec:Preliminaries}

For the reader's convenience, we restate here some known results
that we will use extensively, along with some definitions
and auxiliary lemmas.

Consider a sequence of channels $\{W_n\}_{n\ge 1}$, where $W_n$ allows exactly
$n$ inputs bits, and deletes each bit independently with probability $d$.
The output of $W_n$ for input $X^n$ is a binary vector denoted by $Y(X^n)$.
The length of $Y(X^n)$ is a binomial random variable.
We want to find maximum rate at which we can send information
over this sequence of channels
with vanishingly small error probability.

The following characterization follows from \cite{Dobrushin}.
\begin{thm}\label{thm:cap_limit}
Let
\begin{align*}
C_n \equiv \frac{1}{n}\max_{p_{X^n}}\, I(X^n; Y(X^n))\, .
\end{align*}
Then, the following limit exists
\begin{align}
C \equiv \lim_{n\rightarrow \infty}C_n = \inf_{n\ge 1} C_n\, ,
\label{eq:cap_defined}
\end{align}
and is equal to the capacity of the deletion channel.
\end{thm}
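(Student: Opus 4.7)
My plan is to establish the theorem in three steps: subadditivity of $n C_n$, application of Fekete's lemma, and identification of the limit with operational capacity via Dobrushin's coding theorem.

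First, I would prove that the sequence $\{n C_n\}_{n\ge 1}$ is subadditive, i.e., $(n+m) C_{n+m} \le n C_n + m C_m$. Fix any input distribution $p_{X^{n+m}}$ and split the input as $X^{n+m} = (X^n, X_{n+1}^{n+m})$. The key observation is that the actual channel output $Y(X^{n+m})$ is the concatenation of $Y_1 \equiv Y(X^n)$ and $Y_2 \equiv Y(X_{n+1}^{n+m})$ in which the boundary has been erased, so by the data processing inequality
\begin{align*}
I(X^{n+m}; Y(X^{n+m})) \;\le\; I(X^{n+m}; Y_1, Y_2).
\end{align*}
Because the deletion patterns on the two blocks are mutually independent and independent of $X^{n+m}$, $Y_1$ and $Y_2$ are conditionally independent given $X^{n+m}$, and moreover $Y_i$ depends on $X^{n+m}$ only through its respective block. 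Hence
\begin{align*}
I(X^{n+m}; Y_1,Y_2) \;=\; H(Y_1,Y_2) - H(Y_1\mid X^n) - H(Y_2\mid X_{n+1}^{n+m}) \;\le\; I(X^n; Y_1) + I(X_{n+1}^{n+m}; Y_2),
\end{align*}
where I used $H(Y_1,Y_2)\le H(Y_1)+H(Y_2)$. The marginals of $X^n$ and $X_{n+1}^{n+m}$ are particular input distributions, so the right-hand side is at most $n C_n + m C_m$. Taking the supremum over $p_{X^{n+m}}$ yields the subadditivity bound.

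Second, subadditivity plus the trivial bound $C_n \le 1$ allows Fekete's lemma to be invoked, giving existence of $\lim_{n\to\infty} C_n = \inf_{n\ge 1} C_n$. This establishes Eq.~(\ref{eq:cap_defined}).

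Third, I would argue that this common value equals the operational capacity. The converse direction is a standard Fano-type argument: any sequence of $(M_n, n)$ codes with vanishing error probability satisfies $\log M_n \le n C_n + o(n)$ by the usual chain through $I(W; \hat{W}) \le I(X^n; Y(X^n)) \le n C_n$, so rate above $C$ is not achievable. For the achievability direction I would invoke Dobrushin's random coding theorem for channels with synchronization errors, which shows that for any $R < \sup_n C_n = C$ there exist codes of rate $R$ with vanishing error probability; the nontrivial content is his information stability argument, which I would cite from \cite{Dobrushin} rather than reproduce.

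The main technical subtlety, and the step I would most carefully verify, is the subadditivity step, specifically the data-processing inequality that formalizes ``erasing the boundary loses information.'' One must ensure that the map from $(Y_1, Y_2)$ to their concatenation is indeed a deterministic function independent of the channel randomness so that the data-processing inequality applies cleanly; everything else is a routine use of chain rule and conditional independence of the two deletion patterns.
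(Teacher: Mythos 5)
Your proposal is correct and follows essentially the same route as the paper: you prove $(m+n)C_{m+n}\le mC_m+nC_n$ by noting that the true output is a degraded (boundary-erased) version of the pair of per-block outputs, which is exactly the paper's physically-degraded-channel/marker construction, then apply Fekete's lemma and cite Dobrushin's theorem for the operational capacity. Your explicit chain-rule computation and the Fano converse are just slightly more detailed versions of steps the paper compresses into "mutual information of a product channel is maximized by a product input" and "reformulation of Theorem 1 in \cite{Dobrushin}."
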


A further useful remark is that, in computing capacity, we can
assume $(X_1,\dots,X_n)$ to be $n$ consecutive coordinates of a stationary
ergodic process. We denote by $\St$ the class of stationary and ergodic processes
that take binary values.

\begin{lemma}\label{lemma:stationary_suffices}
Let $\sX= \{X_i\}_{i\in\integers}$ be a stationary and ergodic
process, with $X_i$ taking values in $\{0,1\}$.  Then the limit
$I(\sX)=\lim_{n \rightarrow \infty} \frac{1}{n}I(X^n; Y(X^n))$ exists and
\begin{align*}
C=\max_{\sX \in \St} I(\sX)\,.
\end{align*}
\end{lemma}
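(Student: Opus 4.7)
\medskip
\noindent\textbf{Proof plan.} The plan proceeds in three parts: (a) establish existence of the limit $I(\sX) = \lim_n I_n(\sX)/n$, where $I_n(\sX) \equiv I(X^n; Y(X^n))$, for every stationary ergodic $\sX$; (b) prove the upper bound $I(\sX) \leq C$; and (c) exhibit a sequence of stationary ergodic sources whose rates approach $C$.

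For (a), the idea is to prove that $I_n(\sX)$ is subadditive, $I_{n+m}(\sX) \leq I_n(\sX) + I_m(\sX)$, and then invoke Fekete's lemma. The key observation is that when $X^{n+m}$ is split at position $n$, the two substrings $Y(X^n)$ and $Y(X_{n+1}^{n+m})$ produced by the two halves are conditionally independent given $X^{n+m}$ (since deletions at distinct positions are independent) and each depends only on its corresponding half. Their concatenation equals $Y(X^{n+m})$, so the data-processing step and subadditivity of joint entropy give
\begin{align*}
I(X^{n+m}; Y(X^{n+m}))
&\leq I\bigl(X^{n+m};\, Y(X^n),\, Y(X_{n+1}^{n+m})\bigr) \\
&\leq I(X^n; Y(X^n)) + I(X_{n+1}^{n+m}; Y(X_{n+1}^{n+m})) = I_n(\sX) + I_m(\sX),
\end{align*}
where the last equality uses stationarity of $\sX$. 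Fekete's lemma then yields $I(\sX) = \inf_n I_n(\sX)/n$.

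Part (b) is immediate: $I_n(\sX)/n \leq C_n$ for each $n$, so Theorem \ref{thm:cap_limit} gives $I(\sX) \leq \inf_n C_n = C$. For (c), I would fix $n$, let $p^*_n$ be a maximizer in the definition of $C_n$ (attained by compactness of the simplex and continuity of mutual information), and form $\sX^{(n)}$ by generating i.i.d.\ length-$n$ blocks from $p^*_n$, concatenating them into a bi-infinite sequence, and then shifting by an independent uniform phase $U \in \{0, 1, \ldots, n-1\}$. The resulting process is stationary by construction; it is also ergodic, since any $T$-invariant set is $T^n$-invariant and hence (by mixing of the block shift at fixed phase) measurable with respect to the phase, but $T$ cycles the phase through $\{0, \ldots, n-1\}$, forcing the set to be trivial. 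To lower bound the rate, I would use
\begin{align*}
I(X^m; Y(X^m)) \geq I(X^m; Y(X^m) \mid U) - H(U),
\end{align*}
which follows from the chain rule and $I(X^m; U \mid Y(X^m)) \leq H(U) = \log n$, together with the block-independence argument reversed in direction (given $U$, the complete blocks inside $X^m$ and their outputs factorize), to obtain $I(X^m; Y(X^m) \mid U) \geq k \cdot I_n(p^*_n) - O(n)$ for $m = kn$. Dividing by $m$ and sending $k \to \infty$ yields $I(\sX^{(n)}) \geq C_n$, and then $n \to \infty$ completes the achievability.

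The main obstacle I anticipate is the book-keeping in (c): one must cleanly track the $O(n)$ boundary losses from partial blocks at the endpoints of $X^m$ and the $\log n$ loss from phase randomization, and verify that both are absorbed when normalized by $m = kn$. A secondary subtlety is the distinction between $\max$ and $\sup$ in the statement; this can be handled either by a compactness and upper-semicontinuity argument on the space of shift-invariant measures on $\{0,1\}^\integers$ combined with the ergodic decomposition, or by reading ``$\max$'' as shorthand for a supremum that is realized in the limit along the sequence $\sX^{(n)}$ constructed above.
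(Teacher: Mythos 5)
Your plan follows essentially the same route as the paper: subadditivity of $I_n(\sX)$ plus Fekete's lemma for the existence of the limit (your two inequalities are exactly the paper's, which phrases the second one via a marked output $\tY$), the observation $I_n(\sX)\le nC_n$ combined with Theorem \ref{thm:cap_limit} for the converse, and the identical block-i.i.d.\ construction with a uniformly random offset for achievability. Your closing remark on max versus sup is fair but not a point of difference: the paper's own proof also only exhibits, for each $\ve>0$, a stationary ergodic process within $\ve$ of $C$, and nothing later relies on attainment.

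The one step that would fail as written is the core of the achievability bound. You assert that, conditionally on the phase $U$, ``the complete blocks inside $X^m$ and their outputs factorize,'' and conclude $I(X^m;Y(X^m)\mid U)\ge k\,I_n(p^*_n)-O(n)$. The factorization is true for the tuple of block outputs $(y(1),\dots,y(k-1))$, but the channel delivers only their unmarked concatenation: even knowing $U$, the receiver does not know where one block's output ends and the next begins, because deletions randomize the output block lengths. Hence $H(Y\mid U)$ falls short of $\sum_j H(y(j))$ by the conditional entropy of the segmentation, a quantity you can only bound by roughly $k\log(n+1)$ — it grows with $k$ and is not $O(n)$, so it cannot be absorbed as a boundary term. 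This synchronization loss is precisely what the paper's proof handles: it compares $Y$ with the marked string $\tY$, notes that given $Y$ there are at most $(n+1)^k$ admissible placements of the markers, and gets $H(Y)\ge (k-1)H(y(1))-k\log(n+1)$ together with $H(Y\mid X^N)\le (k-1)H(y(1)\mid x(1))+2n$. The conclusion is then $I(\sX^{(n)})\ge C_n-\log(n+1)/n$ rather than $\ge C_n$, which still suffices once $n\to\infty$; with this correction in place your phase-conditioning step (and its $H(U)=\log n$ penalty) becomes unnecessary, since the marker argument works unconditionally. So the gap is in the justification of one inequality, not in the overall strategy, and the standard fix is exactly the paper's counting of segmentations.
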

We use the following natural definition of the \emph{rate} achieved by a stationary ergodic process.
\begin{definition}\label{def:rate}
For stationary and ergodic $\sX$, we call $I(\sX)=\lim_{n \rightarrow \infty} \frac{1}{n}I(X^n; Y(X^n))$ the \emph{rate} achieved by $\sX$.
\end{definition}
Proofs of Theorem \ref{thm:cap_limit} and Lemma \ref{lemma:stationary_suffices} are provided in Appendix \ref{app:prelim_results}.

Given a stationary process $\sX$, it is convenient to consider it 
%
from the point of view of a `uniformly
random' block/run. Intuitively, this corresponds to choosing a large
integer $n$ and selecting as reference point the beginning of a
uniformly random block in $X_1,\dots,X_n$. Notice that this approach
naturally discounts longer blocks for finite $n$. While such a procedure can
be made rigorous by taking the limit $n\rightarrow \infty$, it is more convenient to make use of the notion of \emph{Palm measure} from the theory of point processes \cite{PP1,PP2},
which is, in this case, particularly easy to define.
To a binary source $\sX$, we can associate in a bijective way 
a subset of times $\sS\subseteq \sZ$, by letting $t\in \sS$ if and
only if $X_t$ is the first bit of a run.
The Palm measure $\prob_{1}$ is then the distribution of $\sX$
conditional on the event $1\in \sS$.

We denote by $L$ the length of the block starting at $1$
under the Palm measure, and denote by $p_{L}$ its distribution.
As an example, if $\sX$  is the i.i.d. Bernoulli$(1/2)$ process,
we have $p_{L}= p_L^*$ where $p_L^*(l) \equiv 2^{-l}$.
We  will also call   $p_{L}$ the \emph{block-perspective run
length distribution} or simply the \emph{run length distribution}, and let
\begin{eqnarray*}
\mu(\sX) \equiv \E \sum_{l=1}^{\infty}p_{L}(l) \, l\, ,
\end{eqnarray*}
be its average. 
Let $L_0$ be the length of the block containing bit $X_0$ in
the stationary process $\sX$. A standard calculation\cite{PP1,PP2}  yields
$
\prob(L_0= l)=lp_{L}(l)/\mu(\sX) \, 
$.
Since $L_0$ is a well defined and almost surely finite (by
ergodicity), we necessarily have $\mu(\sX) < \infty$.

In our main result, Theorem \ref{thm:main_theorem}, a special role is
played by processes $\sX$ such that the associated switch times form a
stationary renewal process. We will refer to such an $\sX$ as 
\emph{a process with i.i.d. runs.}
%
%
\section{Intuition behind the main theorem}
\label{sec:intuition}
In this section, we provide a heuristic/non-rigorous explanation for our main result. The aim is build intuition and motivate our approach, without getting bogged down with the numerous technical difficulties that arise. In fact, we focus here on heuristically deriving the optimal input process $\sX^\dagger$, and do not actually obtain the quadratic term of the capacity expansion. We find $\sX^\dagger$ by computing various quantities to leading order and using the following observation (cf. Remark \ref{rem:h_within_1minuseps}).

\vspace{0.1cm}

{\bf Key Observation:}
\emph{The process that achieves capacity for small $d$ should be
  `close' to the Bernoulli$(1/2)$ process, since $H(\sX)$ must be close to $1$.}

\vspace{0.1cm}

We have
\begin{align}
I(X^n ; Y(X^n)) = H(Y) -H(Y|X^n) \, .
\label{eq:ixy_intuitive}
\end{align}
Let $D^n$ be a binary vector containing a one at position $i$ if and
only if $X_i$ is deleted from the input vector.
We can write
\begin{align*}
H(Y|X^n) =  H(Y, D^n | X^n) - H(D^n| X^n, Y)\, .
\end{align*}
But $Y$ is a function of $(X^n, D^n)$, leading to $H(Y, D^n | X^n) = H(D^n | X^n) = H(D^n) =n h(d)$, where we used the fact that $D^n$ is i.i.d. Bernoulli($d$), independent of $X^n$.
It follows that
\begin{align}
H(Y|X^n) =  n h(d) - H(D^n| X^n, Y)\, .
\label{eq:hygivenx_intuitive}
\end{align}

The term $H(D^n| X^n, Y)$ represents ambiguity in the location of deletions, given the input and output strings. Now, since $d$ is small, we expect that most deletions occur in
`isolation', i.e., far away from other deletions. Make the (incorrect) assumption that all deletions occur such that no three consecutive runs have more than one deletion in total. In this case, we can unambiguously associate runs in $\sY$ with runs in $\sX$. Ambiguity in the location of a deletion occurs if and only if a deletion occurs in a run of length $l>1$. In this case, each of $l$ locations is equally likely for the deletion,  leading to a contribution of
$\log l$ to $H(D^n| X^n, Y)$. Now, a run of length $l$ should suffer a deletion with probability $\approx ld$. Thus, we expect
\begin{align*}
\frac{1}{n} \, H(D^n| X^n, Y) \approx \frac{d}{\mu(\sX)}
\sum_{l=1}^\infty p_L(l) l \log l\, .
\end{align*}
We know that $H(\sX)$ is close to $1$, implying $\mu(\sX)$ is close to $2$ and
$p_L$ is close to $ p_L^*(l) \equiv 2^{-l}$. This leads to
\begin{align}
\frac{1}{n} \, H(D^n| X^n, Y) &\approx \frac{d}{2} \sum_{l=1}^\infty p_L(l) l \log l
+ \frac{d(\mu(\sX)-2)}{4} \sum_{l=1}^\infty p_L^*(l) l \log l \nonumber \\
&= \frac{d}{2} \left [ -\frac{c_2}{ \ln 2} + \sum_{l=1}^\infty p_L(l) l \left ( \log l - \frac{c_2}{2 \ln 2} \right) \right]\,.
\label{eq:hd_givenxy_intuitive_bd}
\end{align}


Consider $H(Y)$. Now, if the input $X^n$ is drawn from a stationary process $\sX$, we expect the output $Y(X^n)$ to also be a segment of some stationary process $\sY$. (It turns out that this is the case.) Moreover, we expect that the channel output has $n(1-d)+o(n)$ bits, leading to $H(Y) \approx n(1-d) H(\sY)$. Denote the run length distribution in $\sY$ by $q_L(\cdot)$. Define $\mu(\sY) \equiv \sum_{l=1}^\infty q_L(l) l$. Let $L_{\sY}$ denote the length of a random run drawn according to $q_L(\cdot)$. It is not hard to see that
\begin{align*}
H(\sY) \leq H(L_{\sY})/\mu(\sY)\, ,
\end{align*}
with equality iff $\sY$ consists of i.i.d. runs, which occurs iff $\sX$ consists of i.i.d. runs.
Define $q_L^*(l) \equiv 2^{-l}$. An explicit calculation yields $H(L_{\sY}) = 1 - D(q_L|| q_L^*)/\mu(\sY)$. We know that $H(\sY)$ is close to $1$, implying $\mu(\sY)$ is close to $2$ and $D(q_L|| q_L^*)$ is small.
Thus,
\begin{align*}
\lim_{n \rightarrow \infty} \frac{1}{n} \, H(Y)= (1-d)H(\sY) \leq (1-d) (1 - D(q_L|| q_L^*)/\mu(\sY))
\approx 1 - d - D(q_L|| q_L^*)/2 \, .
\end{align*}

Notice  that an i.i.d. Bernoulli$(1/2)$ input results in an i.i.d. Bernoulli$(1/2)$ output from the deletion channel.
The following is made precise in Lemma \ref{lemma:xy_similar_runs}: Let $\Delta$ be the `distance' between $p_L$ and $p_L^*$. Then a short calculation tells us that the distance between $p_L$ and $q_L$ should be $O(d^{1-\eps}\Delta) $. In other words $p_L$ and $q_L$ are very nearly equal to each other.

So we obtain, to leading order,
\begin{align}
\lim_{n \rightarrow \infty} \frac{1}{n} \, H(Y)
\apprle 1 - d - D(p_L|| p_L^*)/2 \, ,
\label{eq:hy_intuitive_bd}
\end{align}
with (approximate) equality iff $\sX$ consists of i.i.d. runs.

Putting Eqs.~\eqref{eq:ixy_intuitive}, \eqref{eq:hygivenx_intuitive}, \eqref{eq:hd_givenxy_intuitive_bd} and \eqref{eq:hy_intuitive_bd} together, we have
\begin{align*}
I(\sX) &= \lim_{n \rightarrow \infty} \frac{1}{n} I(X^n; Y) \nonumber \\
&\apprle 1-d - D(p_L|| p_L^*)/2  - h(d) + \frac{d}{2} \left [ -\frac{c_2}{ \ln 2} + \sum_{l=1}^\infty p_L(l) l \left ( \log l - \frac{c_2}{2 \ln 2} \right) \right ] \nonumber \\
& \approx 1 - d\log(1/d) - A_1 d - \frac{1}{2}  D(p_L|| p_L^*)
+ \frac{d}{2} \left [\sum_{l=1}^\infty p_L(l) l \left ( \log l - \frac{c_2}{2 \ln 2} \right) \right ]\, .
\end{align*}

Since this (approximate) upper bound on $I(\sX)$ depends on input $\sX$ only through $p_L$, we choose
$\sX$
consisting of i.i.d. runs so that (approximate) equality holds.

We expect $p_L$ to be close to $p_L^*(l)$. A Taylor expansion gives
\begin{align*}
D(p_L|| p_L^*) &= \sum_{l=1}^\infty p_L(l) (l+\log p_L(l)) \\
& \approx  \frac{1}{\ln2}\sum_{l=1}^{\infty} \bigg ( \left(p_L(l) - 2^{-l}\right ) + 2^{l-1} \left (p_L(l) - 2^{-l}\right )^2 \bigg )\\
& =  \frac{1}{\ln2}\sum_{l=1}^{\infty} 2^{l-1} \left (p_L(l) - 2^{-l}\right )^2\, .
\end{align*}

Thus, we want to maximize
\begin{align*}
\frac{1}{2 \ln2}\sum_{l=1}^{\infty} 2^{l-1} \left (p_L(l) - 2^{-l}\right )^2
+ \frac{d}{2} \left [\sum_{l=1}^\infty p_L(l) l \left ( \log l - \frac{c_2}{2 \ln 2} \right) \right ]\, ,
\end{align*}
subject to $\sum_{l=1}^\infty p_L(l) =1$, in order to achieve the largest possible $I(\sX)$. A simple calculation tells us that the maximizing distribution is $p_L^\dagger(l) = 2^{-l}(1+ d( l \ln l -c_2 l/2))$.

\section{Proof of the main theorem: Outline}
\label{sec:outline}

In this section we provide the proof of Theorem \ref{thm:main_theorem}
after stating the key lemmas involved.
We defer the proof of the lemmas to the next section.
Sections \ref{subsec:run_charac}-\ref{subsec:self_improving_bd_hy} develop
the technical machinery we use, and the proofs of the lemmas are in
Section \ref{subsec:lemma_proofs}.

Given a (possibly infinite) binary sequence,
a \emph{run} of $0$'s (of $1$'s) is a maximal subsequence of consecutive $0$'s
($1$'s), i.e. an subsequence of $0$'s bordered by $1$'s
(respectively, of $1$'s bordered by $0$'s). The first step consists in proving achievability by
estimating $I(\sX)$ for a process having i.i.d. runs with appropriately chosen distribution.
\begin{lemma}\label{lemma:achievability}
Let $\sX^\dagger$ be the process consisting of i.i.d.  runs with
distribution $p_L^\dagger(l) = 2^{-l}(1+ d( l \log l -c_2l/2))$.
Then for any $\eps >0$, we have
\begin{align*}
I(\sX^\dagger) = 1+ d\log d - A_1\, d + A_2\, d^2 + O(d^{3-\eps})\, .
\end{align*}
\end{lemma}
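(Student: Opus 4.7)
The plan is to compute $I(\sX^\dagger)$ via the same decomposition used in the heuristic of Section~\ref{sec:intuition}, namely
\begin{equation*}
\tfrac{1}{n}I(X^n;Y)\;=\;\tfrac{1}{n}H(Y)\;-\;h(d)\;+\;\tfrac{1}{n}H(D^n\mid X^n,Y),
\end{equation*}
and to evaluate each of the three terms on the right, specialized to $\sX=\sX^\dagger$, up to an error of $O(d^{3-\eps})$. The middle term is elementary, using the power series $h(d)=d\log(1/d)+d/\ln 2-d^2/(2\ln 2)+O(d^3)$. Since $p_L^\dagger(l)=2^{-l}(1+d g(l))$ with $g(l)=l\log l-c_2 l/2$ is a renewal source differing from Bernoulli$(1/2)$ only at order $d$, all series encountered below converge and the coefficients of $d$, $d^2$ in $I(\sX^\dagger)$ can be tracked explicitly.

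First I would compute $\tfrac{1}{n}H(D^n\mid X^n,Y)$ by enumerating deletion configurations. A single deletion inside an input run of length $l$ contributes $\log l$ bits of ambiguity; this event has probability density $l\, p_L^\dagger(l)/\mu(\sX^\dagger)$ per output bit, and summing gives the leading $d$-term plus, after accounting for $\mu(\sX^\dagger)=2+O(d)$ and the $d\, g(l)$ correction in $p_L^\dagger$, an explicit $d^2$ correction. Pairs of deletions falling in consecutive (or near-consecutive) runs produce the $c_3$ and $c_4$ contributions: $c_3$ comes from two deletions in a single run of length $l\ge 3$ together with the merging events where two adjacent runs lose their separating length-$1$ block, and $c_4$ comes from the remaining configurations where a short intervening block absorbs a deletion and fuses two outer runs. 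Clusters of three or more deletions have aggregate mass $O(d^3)$ and can be bounded using the obvious $\log n$ bound on their per-event entropy, at the cost of an $\eps$ in the exponent. The main obstacle is getting the two-deletion contributions exactly right, so that they assemble into the stated closed forms for $c_3$ and $c_4$ when weighted against $p_L^*$ (to leading order, since the $d$-correction in $p_L^\dagger$ only affects the order-$d^3$ piece of this term).

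Next I would compute $\tfrac{1}{n}H(Y)$, which uses $\lim_n \tfrac{1}{n}H(Y)=(1-d)H(\sY)$. Because $\sX^\dagger$ has i.i.d.\ runs, the deletion channel preserves the renewal property at the output, so $H(\sY)=H(L_\sY)/\mu(\sY)$, and a direct calculation gives
\begin{equation*}
\tfrac{H(L_\sY)}{\mu(\sY)}\;=\;1\;-\;\tfrac{D(q_L\|\,q_L^*)}{\mu(\sY)},\qquad q_L^*(l)=2^{-l},
\end{equation*}
where $q_L$ is the output run-length law. I would then relate $q_L$ to $p_L^\dagger$: for a renewal input, $q_L$ is obtained from $p_L^\dagger$ by convolving through the deletion operation with the correct merging rule when a short run disappears entirely, and this relation can be inverted to show $q_L=p_L^\dagger+O(d^{2-\eps})$ in the appropriate weighted norm (this is the rigorous version of Lemma~\ref{lemma:xy_similar_runs}). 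Consequently $D(q_L\|q_L^*)=D(p_L^\dagger\|p_L^*)+O(d^{3-\eps})$, yielding $\tfrac{1}{n}H(Y)=1-d-\tfrac12 D(p_L^\dagger\|p_L^*)+O(d^{3-\eps})$, again after using $\mu(\sY)=2+O(d)$.

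The last step is algebraic. Combining the three terms gives
\begin{equation*}
I(\sX^\dagger)\;=\;1-h(d)-\tfrac12 D(p_L^\dagger\|p_L^*)\;+\;\tfrac{d}{2}\sum_{l}p_L^\dagger(l)\,l\Bigl(\log l-\tfrac{c_2}{2\ln 2}\Bigr)\;+\;(c_3+c_4)d^2\;+\;O(d^{3-\eps}).
\end{equation*}
Since $\sum_l 2^{-l}g(l)=0$ by the definition $g(l)=l\log l-c_2l/2$ together with $c_2=\sum_l 2^{-l}l\log l\cdot\ln 2$ (wait: $c_2=\sum 2^{-l}l\ln l$), the first-order $d$-piece of $p_L^\dagger-p_L^*$ has mean zero under $1$, so the KL divergence is purely quadratic: $D(p_L^\dagger\|p_L^*)=\tfrac{d^2}{\ln 2}\sum_l 2^{-l-1}g(l)^2+O(d^3)$. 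Substituting and collecting powers of $d$, the $d\log d$ and $d$ coefficients reproduce $-A_1$ via the identity $A_1=\log(2e)-\sum 2^{-l-1}l\log l$, and the $d^2$ coefficient, once combined with the $c_3+c_4$ pieces from the deletion-ambiguity term and the $-d^2/(2\ln 2)$ piece from $h(d)$, reproduces the explicit formula for $A_2$. The hardest single step, as noted above, is the rigorous identification and bookkeeping of the two-deletion cluster contributions $c_3$ and $c_4$, together with showing that all three-or-more deletion clusters contribute only $O(d^{3-\eps})$.
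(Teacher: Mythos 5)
Your skeleton matches the paper's: the paper also computes $(1-d)H(\sY^\dagger)$ through the renewal property of the output, the identity $H(\sY^\dagger)=1-D(q_L^\dagger\|q_L^*)/\mu(\sY^\dagger)$, the closeness $|q_L^\dagger(l)-p_L^\dagger(l)|=O(d^{2-\eps/2})$ for $l\lesssim\log(1/d)$ (Lemma \ref{lemma:xy_similar_runs}) plus a tail bound on $q_L^\dagger$ (Lemma \ref{lemma:Ydagger_exponential_decay}), and then a Taylor expansion of the divergence; and it obtains the $c_3$ and $c_4$ contributions from exactly the one- and two-deletion configurations you describe. However, there is a genuine gap at the step you dismiss as routine: bounding the contribution of clusters of three or more deletions to $\lim_n\frac1n H(D^n|X^n,Y)$. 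A ``probability times $\log n$ per event'' argument does not work for a per-bit conditional entropy in the $n\to\infty$ limit: the probability that \emph{some} bad cluster occurs in $X^n$ is $\Theta(nd^3)$, i.e.\ order one for large $n$, and the number of deletion patterns consistent with a given $(X^n,Y)$ grows exponentially in $n$, so the bad events must be localized run by run before any such bound can be applied. The paper's device for this is the modified deletion process $\hsD$ (deletions are reversed in any run suffering three or more of them), the coupling Fact \ref{fact:U_V_W} giving $|H(D^n|X^n,Y,K)-H(\hD^n|X^n,\hY,\hK)|\le 2nh(z)$ with $z=O(d^3\E[L^3])$, and a second, super-run-based perturbed process (with Proposition \ref{propo:product_of_increasing} to control correlated run lengths) for the run-boundary ambiguity term $H(K|X^n,Y)$ that produces $c_4$ (Lemmas \ref{lemma:hatD_givenxy} and \ref{lemma:K_conditional_entropy}). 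Your proposal contains no substitute for this localization mechanism, and without it the ``$O(d^{3-\eps})$ from rare clusters'' claim is unsupported.

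Two smaller omissions in the same part of the argument: the paper's cluster expansion controls its error terms through moments such as $\E[L^3\log L]$ and, for the $(2k+1)$-parent-run configurations, through the bound $L\le\lfloor 1/d\rfloor$; this is why the paper first passes from $\sX^\dagger$ to a truncated process $\asX^\dagger\in\St_{\lfloor 1/d\rfloor}$ by bit flips costing only $O(d^3)$ (as in Lemma \ref{lemma:small_loss_by_restricting_runs}) before invoking Corollary \ref{coro:hygivenx_q} and Lemma \ref{lemma:K_conditional_entropy}, and then transfers the estimate back. If you work with $\sX^\dagger$ directly you must supply replacement tail/moment bounds (its geometric-type run tails make this plausible, but it has to be argued). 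Finally, be careful with the normalization of the perturbation: the mean-zero property $\sum_l 2^{-l}\bigl(l\ln l-c_2 l/2\bigr)=0$ that makes $D(p_L^\dagger\|p_L^*)$ purely quadratic holds with the natural logarithm (since $c_2=\sum_l 2^{-l}l\ln l$), a point your parenthetical hedge leaves unresolved.
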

Lemma \ref{lemma:achievability} is proved in Section \ref{subsec:lemma_proofs}.

Lemma \ref{lemma:stationary_suffices} allows us to restrict our
attention to stationary ergodic processes in proving the converse.
For a process
$\sX$, we denote by $H(\sX)$ its \emph{entropy rate}. Define
\begin{align}
\label{eq:hyx_defined}
H(Y_{\sX}) \equiv \lim_{n \rightarrow \infty} \frac{H(Y(X^n))}{n(1-d)}\, .
\end{align}
A simple argument shows that this limit exists and is bounded above by $1$ for any
stationary process $\sX$ and any $d$, with $H(Y_{\sX})=1$ iff $\sX$ is the i.i.d. Bernoulli($1/2$)
process.

In light of Lemma \ref{lemma:achievability}, we can restrict
consideration to processes $\sX$ satisfying $I(\sX) > 1-d^{1-\epsilon}$
whence $H(\sX) > 1-d^{1-\epsilon}, H(Y_{\sX}) > 1-d^{1-\epsilon}$:
\begin{remark}\label{rem:h_within_1minuseps}
There exists $d_0(\eps)>0$ such that for all $d < d_0(\eps)$, if $I(\sX) > C - d$, we
have $I(\sX) > 1-d^{1-\epsilon}$ and hence also
$H(\sX) > 1-d^{1-\epsilon}\, ,\ H(Y_{\sX}) > 1-d^{1-\epsilon}$\,.
\end{remark}

We define a `super-run' next.
\begin{definition}
\label{def:super-run}
A \emph{super-run} consists of a maximal contiguous sequence of runs such that all runs in the sequence after the first one (on the left) have length one.
We divide a realization of $\sX$ into \emph{super-runs} $\ldots, S_{-1}, S_0, S_1, \ldots\,$.
Here $S_1$ is the super-run including the bit at position 1.
\end{definition}

See Table \ref{table:super-run_example} for an example showing division into super-runs.

\begin{table}[t]
\begin{center}
\begin{tabular}{cc|ccc|ccc|cccc|ccc}
$\ldots$ & $b_{-4}$ & $b_{-3}$ & $b_{-2}$ & $b_{-1}$ & $b_{0}$ & $b_{1}$ & $b_{2}$ & $b_{3}$ & $b_{4}$ & $b_{5}$ & $b_{6}$ & $b_{7}$ & $b_{8}$ & $\ldots$\\
$\ldots$ & 1 & 0 & 0 & 1 & 0 & 0 & 0 & 1 & 1 & 0 & 1 & 0 & 0 & $\ldots$
\end{tabular}
\end{center}
\caption{An example showing how $\sX$ is divided into super-runs}\label{table:super-run_example}
\end{table}

Denote by $\St$ the set of all stationary ergodic processes and
by $\St_{L^*}$ the set of stationary ergodic processes such
that, with probability one, no super-run has length larger than $L^*$.

Our next lemma tightens
the constraint given by Remark \ref{rem:h_within_1minuseps} further for processes in $\St_{\lfloor 1/d \rfloor}$.
\begin{lemma}\label{lemma:hy_is_large}
Consider  any $\epsilon > 0$ and constant $\const$.
There exists $d_0(\epsilon, \const)>0$ such that the following
happens for any $\sX \in \St_{\lfloor 1/d \rfloor}$.
For any $d< d_0$, if
\begin{align*}
I(\sX) &\geq C - \const d^{2-(\epsilon/2)} \, ,
\end{align*}
then
\begin{align*}
H(Y_{\sX}) \geq 1 - d^{2-\epsilon} \, .
\end{align*}
\end{lemma}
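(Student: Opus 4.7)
The plan is to use the exact identity
\begin{align*}
I(\sX) \,=\, (1-d)\,H(Y_\sX) \,-\, h(d) \,+\, J(\sX), \qquad J(\sX) \,:=\, \lim_{n\to\infty}\tfrac{1}{n}\,H(D^n \mid X^n, Y),
\end{align*}
which follows from $H(Y\mid X^n) = H(Y, D^n\mid X^n) - H(D^n\mid X^n, Y) = n h(d) - H(D^n\mid X^n, Y)$ together with $\lim_{n} H(Y)/n = (1-d) H(Y_\sX)$. Combining this with the hypothesis $I(\sX) \ge C - \const d^{2-\eps/2}$ and the achievability bound $C \ge I(\sX^\dagger) = 1 + d\log d - A_1 d + A_2 d^2 - O(d^{3-\eps})$ from Lemma \ref{lemma:achievability}, and using $h(d) = -d\log d + d\log e + O(d^2)$ together with the identity $A_1 - \log e = 1 - c_2/(2\ln 2)$, I obtain
\begin{align*}
(1-d)\,H(Y_\sX) \,\ge\, 1 - d + \tfrac{c_2\, d}{2\ln 2} + A_2\, d^2 - J(\sX) - O(d^{2-\eps/2}).
\end{align*}

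The technical core is a matching upper bound $J(\sX) \le \tfrac{c_2\, d}{2\ln 2} + O(d^{2-\eps})$ valid for every $\sX \in \St_{\lfloor 1/d\rfloor}$ satisfying the hypothesis. I would prove it by decomposing the deletion process super-run by super-run. An \emph{isolated} deletion, i.e., one that is the only deletion in its super-run, leaves $l$ indistinguishable positions within the run of length $l$ in which it occurred and therefore contributes $\log l$ bits of ambiguity; averaging against the length-biased distribution $l\, p_L(l)/\mu(\sX)$ of the run containing a uniformly random deleted bit yields the leading term $\tfrac{d}{\mu(\sX)}\sum_l p_L(l)\,l\log l$. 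For super-runs with $k \ge 2$ deletions, the ambiguity is bounded by $O(k \log L^*)$, where $L^* = \lfloor 1/d\rfloor$ is the super-run cap. Combined with the exponential tail decay of the super-run length distribution (which follows from the weak entropy control $H(\sX) \ge 1 - d^{1-\eps}$ supplied by Remark \ref{rem:h_within_1minuseps}), the expected contribution of multi-deletion super-runs to $J(\sX)$ is $O(d^{2-\eps})$. Finally, comparing the leading functional to its value at $p_L^*(l)=2^{-l}$ requires a tail-weighted (chi-squared style) refinement of the Pinsker-type estimate on $D(p_L \| p_L^*)$, since the coefficients $l\log l$ range up to $L^*\log L^* = O(d^{-1}\log(1/d))$ and the crude $L^1$-closeness afforded by the weak entropy bound is insufficient.

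Substituting the bound on $J(\sX)$ into the display above gives $(1-d) H(Y_\sX) \ge 1 - d - O(d^{2-\eps/2})$, whence $H(Y_\sX) \ge 1 - O(d^{2-\eps/2})$. Since $d^{-\eps/2}\to\infty$ as $d\to 0$, the implicit constant can be absorbed for $d$ small enough (depending on $\eps$ and $\const$), yielding $H(Y_\sX) \ge 1 - d^{2-\eps}$ as claimed. The main obstacle is the $O(d^{2-\eps})$ error analysis for $J(\sX)$: the super-run cap $L^*$ plays a double role, bounding both the per-super-run ambiguity $O(\log L^*)$ and, together with the weak entropy control, the tail of the super-run length distribution; the delicate accounting of multi-deletion configurations and the upgrade from $L^1$ to a chi-squared-type closeness of $p_L$ to $p_L^*$ are the hardest technical steps, and presumably what is developed in Sections \ref{subsec:run_charac}--\ref{subsec:self_improving_bd_hy}.
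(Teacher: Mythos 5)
Your reduction to the single estimate $J(\sX)\le \tfrac{c_2 d}{2\ln 2}+O(d^{2-\eps})$ is where the argument breaks. At the point where Lemma \ref{lemma:hy_is_large} is invoked, the only control available on the input is the weak one from Remark \ref{rem:h_within_1minuseps}, namely $H(\sX)>1-d^{1-\eps}$ (equivalently $D(p_L\|p_L^*)=O(d^{1-\eps})$ by Eq.~\eqref{eq:Lentropy}). From this, Lemma \ref{lemma:L_TV} (or any Pinsker/chi-squared refinement, since $\chi^2$-type closeness also scales like the square root of the divergence) only gives $\|p_L-p_L^*\|\lesssim d^{1/2-\eps}$. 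Hence the leading term of $J(\sX)$, namely $\tfrac{d}{\mu(\sX)}\sum_l p_L(l)\,l\log l$, can be pinned to its Bernoulli value $\tfrac{c_2 d}{2\ln 2}$ only up to an error of order $d\cdot d^{1/2-\eps}=d^{3/2-\eps}$, not $d^{2-\eps}$; the tail and multi-deletion super-run contributions are not the bottleneck. A direct bound of the strength you claim would require $p_L$ to be within $O(d^{1-\eps})$ of $2^{-l}$, which (via Lemma \ref{lemma:xy_similar_runs_gammaSTRONG}) needs $H(\sY)\ge 1-d^{2-\eps}$ --- i.e.\ the very conclusion of the lemma --- so as written your plan is either circular or stalls at $H(Y_\sX)\ge 1-O(d^{3/2-\eps})$.

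The paper gets around exactly this obstruction with a bootstrap rather than a one-shot estimate: Lemma \ref{lemma:cyclic_hy_bound_SLstar} shows that if $H(\sY)\ge 1-d^{\gamma}$ then the error in estimating the $H(D^n|X^n,Y,K)$ and $H(K|X^n,Y)$ terms is $O(d^{1+\gamma/2-\eps/4})$, which, fed back through the identity you wrote, upgrades the bound to $H(\sY)\ge 1-d^{1+\gamma/2-\eps/2}$. Taking $\gamma_*=\sup\{\gamma: H(\sY)\ge 1-d^\gamma\}$ (starting from $\gamma\approx 1$ via Remark \ref{rem:h_within_1minuseps}) and using that $d_0$ in Lemma \ref{lemma:cyclic_hy_bound_SLstar} is uniform in $\gamma$, the fixed-point inequality $\gamma_*\ge 1+\gamma_*/2-\eps/2$ forces $\gamma_*\ge 2-\eps$. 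Your sketch contains the right identity and the right heuristic for isolated versus clustered deletions (the paper's Lemma \ref{lemma:hatD_givenxy} formalizes the latter through the modified deletion process and the $K$-decomposition, since ambiguity does not decompose exactly super-run by super-run), but it is missing this self-improvement loop, and without it the target error $O(d^{2-\eps})$ in the leading term of $J(\sX)$ is not reachable from the hypotheses at hand.
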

We show an upper bound for the restricted class of processes $\St_{L^*}$.
\begin{lemma}\label{lemma:converse_for_restricted_runs}
For any $\eps>0$ there exists  $d_0=d_0(\eps)>0$ and $\const < \infty$
such that the following happens.
If $d < d_0(\epsilon)$, for any $\sX \in \St_{\lfloor 1/d \rfloor}$,
\begin{align*}
I(\sX) \leq 1 + d\log d - A_1d + A_2 d^2 + \const d^{3-\epsilon}\, .
\end{align*}
\end{lemma}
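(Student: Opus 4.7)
The plan is to follow the heuristic calculation from Section~\ref{sec:intuition}, but make every approximation quantitative to order $d^{3-\eps}$, exploiting the truncation $\sX\in\St_{\lfloor 1/d \rfloor}$ to kill higher-order super-run interactions. I may assume throughout that $I(\sX) \geq 1+d\log d - A_1 d - d^{2-\eps/2}$, since otherwise the claimed inequality is immediate; Remark~\ref{rem:h_within_1minuseps} and Lemma~\ref{lemma:hy_is_large} then force $H(\sX),\,H(Y_\sX) \geq 1 - d^{2-\eps}$, which in turn forces the run-length distribution $p_L$ to lie within an $O(d^{1-\eps/2})$ $\ell_2$-neighborhood (suitably weighted) of $p_L^*(l)=2^{-l}$. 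This \emph{a priori} closeness is what allows me to Taylor-expand relative entropies and compute the $d^2$ correction by a finite-dimensional quadratic optimization.

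The starting identity is
\[
I(X^n;Y) \;=\; H(Y) - nh(d) + H(D^n \mid X^n, Y),
\]
where $D^n$ is the deletion pattern. For the output entropy, the standard concavity inequality for stationary processes with runs gives the entropy-rate bound
\[
H(Y_\sX) \;\leq\; \frac{H(L_\sY)}{\mu(\sY)} \;=\; 1 - \frac{D(q_L \,\|\, q_L^*)}{\mu(\sY)},
\]
with $q_L$ the output run-length distribution and $q_L^*(l)=2^{-l}$, equality holding iff $\sX$ has i.i.d.\ runs. I then invoke Lemma~\ref{lemma:xy_similar_runs} to replace $q_L$ by $p_L$ with error $O(d^{1-\eps}\|p_L-p_L^*\|)$, which is $o(d^{2-\eps/2})$ in the regime of interest, and expand $D(p_L\|p_L^*)$ to second order about $p_L^*$ to express $(1-d)H(Y_\sX)$ as $1-d$ minus an explicit quadratic functional of $p_L-p_L^*$, modulo $O(d^{3-\eps})$.

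The main piece of technical work is bounding the ambiguity term $H(D^n \mid X^n, Y)$. Here the super-run decomposition is essential: because $\sX \in \St_{\lfloor 1/d \rfloor}$, each super-run has length $O(1/d)$ and the probability that two or more deletions land in the same super-run is $O(d^2)$ per super-run, while three-or-more is $O(d^{3-\eps})$ and can be absorbed into the error term. Conditional on $(X^n,Y)$ and the number of deletions in each super-run, the ambiguity contribution is a combinatorial count of compatible deletion patterns. The single-deletion case in a run of length $l$ contributes $\log l$, summing to the leading $(d/2)\sum_l p_L(l)\,l\log l$; the two-deletion-per-super-run events contribute the more delicate terms that produce the constants $c_3$ and $c_4$ in the definition of $A_2$. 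Enumerating these two-deletion cases carefully (including deletions straddling the boundary between the head block of a super-run and its subsequent length-one runs) and showing that cross-super-run cancellations are benign is the chief obstacle and the source of most of the bookkeeping.

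Combining the two estimates yields, uniformly over $\sX \in \St_{\lfloor 1/d \rfloor}$,
\[
I(\sX) \;\leq\; 1 + d\log d - A_1 d + F(p_L, d) + O(d^{3-\eps}),
\]
where $F(p_L,d)$ is an explicit functional that is quadratic in $p_L - p_L^*$ plus a linear-in-$d$ term involving $\sum_l p_L(l) l(\log l - c_2/(2\ln 2))$ plus the constants $c_3+c_4$ from the super-run analysis. Maximizing $F$ over probability distributions $p_L$ is a finite (effectively quadratic) optimization whose unique maximizer is $p_L^\dagger(l)=2^{-l}(1+d(l\ln l - c_2 l/2))$, and the maximum value matches $A_2 d^2$ by direct calculation, completing the bound. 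The hard parts are thus (i) the combinatorial control of $H(D^n\mid X^n,Y)$ to order $d^2$ with $O(d^{3-\eps})$ error, which is where the super-run truncation is used critically, and (ii) ensuring that the Taylor remainders in $D(p_L\|p_L^*)$ and the $q_L\leftrightarrow p_L$ replacement produce errors that are genuinely $O(d^{3-\eps})$ uniformly in $\sX$---both of which the auxiliary results in Sections~\ref{subsec:run_charac}--\ref{subsec:self_improving_bd_hy} will supply.
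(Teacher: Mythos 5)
Your overall architecture is the paper's: the same entropy decomposition (your $H(D^n\mid X^n,Y)$ is exactly the paper's $H(D^n\mid X^n,Y,K)+H(K\mid X^n,Y)$), the run-length entropy bound for $H(Y)$, a second-order expansion of a relative entropy, an order-$d^2$ combinatorial analysis of the ambiguity term exploiting the $\St_{\lfloor 1/d\rfloor}$ truncation, and a final quadratic maximization over run-length distributions whose value gives $A_2 d^2$. But one step fails as stated: Lemma~\ref{lemma:hy_is_large} upgrades only $H(Y_{\sX})$ to $1-d^{2-\eps}$; neither it nor Remark~\ref{rem:h_within_1minuseps} gives $H(\sX)\ge 1-d^{2-\eps}$, and the paper explicitly notes that no such bound on $H(\sX)$ is available (one only has $H(\sX)\ge I(\sX)>1-d^{1-\eps}$). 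Consequently you cannot deduce the $O(d^{1-\eps/2})$ weighted-$\ell_2$ closeness of $p_L$ to $p_L^*$ from $H(\sX)$: from $H(\sX)>1-d^{1-\eps}$ alone the direct bounds (Lemmas \ref{lemma:L_TV}, \ref{lemma:L_TVstrong}) give closeness only at scale $d^{1/2-\eps}$, which is far too weak to control the Taylor remainder of $D(p_L\Vert p_L^*)$ and the $p_L\leftrightarrow q_L$ swap to within $O(d^{3-\eps})$. The paper circumvents this by staying on the output side: it bounds $(1-d)H(\sY)$ through $q_L$ (Lemma~\ref{lemma:hy_upper_bound}), writes the ambiguity terms in terms of $q_L$ (Corollary~\ref{coro:hygivenx_q}, Lemma~\ref{lemma:K_conditional_entropy}), and performs the final maximization over $q_L$, not $p_L$; whatever control on $p_L$ is needed along the way is pulled back from $H(\sY)$ via Lemma~\ref{lemma:xy_similar_runs_gammaSTRONG} (which requires only the weak bound $H(\sX)>1-d^{0.6}$) together with the tail lemmas \ref{lemma:pL_tail_control_fromHy} and \ref{lemma:q_exp_decay_in_St1byd}. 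Your plan is repairable, but only by this detour through the output distribution; the mechanism you invoke does not exist.

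A second, smaller imprecision: for $\sX\in\St_{\lfloor 1/d\rfloor}$ a super-run may have length $\Theta(1/d)$, and then three or more deletions inside it has probability $\Theta(1)$, not $O(d^{3-\eps})$. The $O(d^{3-\eps})$ error for triple-deletion events comes from third-moment bounds such as $\E[L^3],\E[\tL^3]=O(d^{\gamma-2})$ (Fact~\ref{fact:ELcube_is_small}, and the super-run analogue via Lemma~\ref{lemma:tL_tail_control_fromHy}), which again rest on $H(\sY)\ge 1-d^{\gamma}$ and the tail control it yields, not on the length cap alone. With these two corrections your outline coincides with the paper's proof of the lemma.
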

Finally, we show a suitable reduction from the class $\St$ to the class
$\St_{L^*}$.
\begin{lemma}\label{lemma:small_loss_by_restricting_runs}
For any $\eps>0$ there exists  $d_0=d_0(\eps)>0$
such that the following happens for all $d < d_0$, and all $\gamma>0$.
For any $\sX \in \St$ such that $H(Y_{\sX}) > 1 - d^\gamma$ and for any
$L^* > 2\gamma \log (1/d)$, there exists
$\sX_{L^*} \in \St_{L^*}$  such that
\begin{align}
I(\sX) &\leq I(\sX_{L^*}) + d^{\gamma-\epsilon}(L^*)^{-1}\log L^*\, ,\label{eq:I_vs_restricted}\\
H(Y_{\sX}) &\geq H(Y_{\sX_{L^*}}) - d^{\gamma-\epsilon}(L^*)^{-1}\log L^*\, . \label{eq:Hy_vs_restricted}
\end{align}
\end{lemma}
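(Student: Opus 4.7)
The plan is to construct $\sX_{L^*}$ from $\sX$ by excising every super-run of length exceeding $L^*$ and replacing it with a freshly sampled, bounded-length substitute. Two observations drive the bound: the per-bit density of long super-runs in $\sX$ is of order $d^{\gamma-\eps}$, while the entropy content of a single super-run is only logarithmic in its length, giving a per-bit excision cost of order $d^{\gamma-\eps}(\log L^*)/L^*$. To establish the first observation, I would exploit the hypothesis $H(Y_\sX) > 1 - d^\gamma$ to get sharp tail bounds on the output run-length distribution. Let $q_L$ denote the Palm-sense run-length distribution of the output process $\sY$, and set $q_L^*(l) \equiv 2^{-l}$. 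The inequality $H(\sY) \le H(L_\sY)/\mu(\sY) = 1 - D(q_L\|q_L^*)/\mu(\sY)$ yields $D(q_L\|q_L^*) = O(d^\gamma)$. A Chernoff-style KL bound — tilting $q_L^*$ by the test function $f(l) = l\,\mathbf{1}_{l \ge L^*}$ at tilt parameter $\lambda = (\ln 2)/2$, and exploiting $L^* > 2\gamma\log(1/d)$ so that the resulting contribution $2^{-L^*/2}$ is dominated by $d^\gamma$ — then gives $\sum_{l \ge L^*} l\,q_L(l) = O(d^\gamma)$; an analogous calculation applied to the super-run length distribution of $\sY$ produces the same $O(d^\gamma)$ bound on the per-bit density of long super-runs.

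Second, I would transfer this tail estimate from $\sY$ back to $\sX$. A super-run of $\sX$ of length $L \ge L^*$ is mapped by the channel to a substring of $\sY$ of length $L - \mathrm{Bin}(L,d)$; a Chernoff estimate on the deletion count, combined with the fact that deletions can shrink or merge runs but cannot fabricate long super-run structure from nothing, shows that with probability $1 - o(d^\gamma)$ uniformly in $L \ge L^*$, the image in $\sY$ contains a super-run of length $\Omega(L)$. It follows that the per-bit probability in $\sX$ of lying inside a super-run of length $\ge L^*$ is at most a constant multiple of the analogous quantity in $\sY$, hence $O(d^{\gamma-\eps})$.

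Third, I would carry out the surgery and bound the entropy change. Working under the Palm measure anchored at super-run boundaries, I replace each super-run $S_i$ satisfying $|S_i| > L^*$ by a fresh, independently sampled segment $\widetilde{S}_i$ drawn from a distribution concentrated on strings whose super-runs all have length at most $L^*$ (for instance, a Bernoulli$(1/2)$ block of length $|S_i|$ conditioned on this event, flanked by a short deterministic boundary to cleanly terminate the replaced super-run); concatenating the resulting super-runs on the Palm space and returning to the time axis via the ergodic theorem yields a stationary ergodic $\sX_{L^*} \in \St_{L^*}$. The crucial observation for the entropy accounting is that a super-run of length $L$ carries only $O(\log L)$ bits of entropy — it is determined by its first-run length, its total length, and the sign of its first bit, with the trailing alternating pattern being forced by these parameters. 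Hence each surgery rewrites $\ge L^*$ bits of input but changes the source entropy by only $O(\log L^*)$, and the per-bit entropy change is $O((\log L^*)/L^*)$. Combining with the per-bit surgery density $O(d^{\gamma-\eps})$ from step two, subadditivity of entropy (together with a Fano-type $O(\log L^*)$ boundary correction per surgery to couple the edited and unedited regions) controls both $|I(\sX) - I(\sX_{L^*})|$ and $|H(Y_\sX) - H(Y_{\sX_{L^*}})|$ by $O(d^{\gamma-\eps}(\log L^*)/L^*)$, as required.

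The main obstacle is the transfer in step two: because super-runs contain many length-one runs, deletions can merge, split, or shift the super-run partition in subtle ways, and one must verify that a long super-run in $\sX$ reliably manifests as at least one long super-run in $\sY$ rather than dissolving into several shorter fragments. A secondary technical difficulty, in the construction of step three, is guaranteeing that the surgical modification yields a genuinely stationary ergodic process; this is handled by performing the surgery on the Palm space of super-run boundaries and only then pulling back to the time axis, rather than splicing segments together directly in time.
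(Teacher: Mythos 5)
There is a genuine gap in your entropy accounting for the surgery step, and it breaks the bound \eqref{eq:Hy_vs_restricted}. Your construction replaces each over-long super-run by a \emph{freshly sampled} block of the same length, and you charge only $O(\log L^*)$ entropy per surgery. That charge correctly counts the entropy removed with the excised super-run (a super-run is indeed determined by its first bit, first-run length and total length), but it ignores the $\Theta(|S_i|)$ bits of new randomness injected by the replacement block. In the worst case compatible with the hypothesis, a $\Theta(d^\gamma)$ fraction of the bits of $\sX$ lie in super-runs longer than $L^*$; these stretches are essentially deterministic in $\sX$ but become essentially i.i.d.\ uniform in your $\sX_{L^*}$, so both $H(\sX_{L^*})-H(\sX)$ and $H(Y_{\sX_{L^*}})-H(Y_{\sX})$ can be of order $d^\gamma$, not $d^{\gamma-\eps}(\log L^*)/L^*$. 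Since the lemma is applied with $L^*=\lfloor 1/d\rfloor$, where the allowed error is only $\approx d^{1+\gamma-\eps}\log(1/d)$, inequality \eqref{eq:Hy_vs_restricted} fails for your $\sX_{L^*}$; the same problem infects \eqref{eq:I_vs_restricted}, because with independent resampling there is no realization-wise coupling between $Y$ and $Y_{L^*}$, and any Fano-type correction must pay $\Theta(1)$ bits per rewritten position, i.e.\ $\Theta(d^\gamma)$ per bit. The missing idea is to make the modification \emph{deterministic and local}: the paper truncates each super-run at length $L^*$ by flipping at most two bits just past position $L^*$, so that $\sX_{L^*}$ is a deterministic function of $\sX$ and, under the same deletion realization, the two outputs differ exactly on a sparse flip process $F$ of bit-density $\alpha\lesssim d^\gamma/L^*$. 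The elementary two-way-determinism fact ($U\xleftrightarrow{V}\widehat U$ implies $|H(U)-H(\widehat U)|\le H(V)$) then bounds all three differences $|H(Y)-H(\aY)|$, $|H(Y|X^n)-H(\aY|\aX^n)|$, hence $|I(\sX)-I(\sX_{L^*})|$, by $H(F)\approx n(1-d)h(\alpha)\le n\,d^{\gamma-\eps}(\log L^*)/L^*$.

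A secondary problem is your transfer step: the claim that a super-run of length $L\ge L^*$ in $\sX$ yields, with probability $1-o(d^\gamma)$ uniformly in $L$, a super-run of length $\Omega(L)$ in $\sY$ is false when $L\gg 1/d$, since a deletion inside the alternating tail merges two length-one runs into a length-two run and thereby \emph{splits} the super-run; long super-runs typically fragment into pieces of length $O(1/d)$. This is repairable, and is not needed at threshold $L^*$: as in the paper's Lemma \ref{lemma:tL_tail_control_fromHy}, one transfers the tail bound at the much smaller threshold $\ell=\lfloor 2\gamma\log(1/d)\rfloor$, where a deletion-free window of length $\ell$ survives with probability $\ge 1-\ell d\ge 0.9$, and this already bounds $\E[\tL\,\ind(\tL\ge L^*)]\le \E[\tL\,\ind(\tL\ge \ell)]=O(d^\gamma)$, which is all that the flip-density estimate requires. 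Your KL-tilting derivation of the $\sY$-side tail is fine and essentially parallels the paper's Lemmas \ref{lemma:tmu_closeto4}--\ref{lemma:tL_tail_bound}, but the surgery-and-resampling construction itself must be abandoned in favor of the low-entropy bit-flip coupling.
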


Lemmas \ref{lemma:hy_is_large}, \ref{lemma:converse_for_restricted_runs} and \ref{lemma:small_loss_by_restricting_runs}
are proved in Section \ref{subsec:lemma_proofs}.

The proof of Theorem \ref{thm:main_theorem} follows from these lemmas with Lemma
\ref{lemma:small_loss_by_restricting_runs} being used twice.
\begin{proof}[Proof of Theorem \ref{thm:main_theorem}]
Lemma \ref{lemma:achievability} shows achievability. For the converse, we start with a process $\sX  \in \St$
such that $I(\sX) > C - d^3$. By Remark \ref{rem:h_within_1minuseps},
$H(Y_{\sX}) > 1 - d^{1- \delta}$ for any $\delta>0$
and $d<d_0(\delta)$. Use
Lemma \ref{lemma:small_loss_by_restricting_runs}, with
$\gamma = 1 - \delta$, $L^* = \lfloor 1/d \rfloor$ and $\eps = \delta/2$.
It follows that for  $d<d_0(\delta/2)$,
\begin{align*}
I(\sX_{L^*}) &> C - d^{2-2\delta}\, ,\\
H(Y_{\sX}) &\geq H(Y_{\sX_{L^*}}) - d^{2-2\delta}\, .
\end{align*}
We now use Lemma \ref{lemma:hy_is_large} which yields $H(Y_{\sX_{L^*}}) \geq 1 - d^{2-2\delta}$
and hence, by Eq.~(\ref{eq:Hy_vs_restricted}), $H(Y_{\sX}) \geq 1 - 2 d^{2-2\delta} \geq 1 - d^{2-3\delta}$ for small $d$. Now, we can
use Lemma \ref{lemma:small_loss_by_restricting_runs} again with
$\gamma = 2 - 3\delta$,  $L^* = \lfloor 1/d \rfloor$, $\eps =\delta/2$. We obtain
\begin{align*}
I(\sX_{L^*}) \geq C - d^{3-4\delta}\, .
\end{align*}
Finally, using Lemma \ref{lemma:converse_for_restricted_runs},
 we get the required upper bound on $C$.
\end{proof}
%
%
\section{Proofs of the Lemmas}
\label{sec:ProofLemma}

In Section \ref{subsec:run_charac} we show that, for  any stationary
ergodic $\sX$ that achieves a rate close to capacity, the run-length distribution must be
close to the distributions obtained for the i.i.d. Bernoulli$(1/2)$ process.
In Section \ref{subsec:information_rate}, we suitably rewrite the
rate $I(\sX)$ achieved by stationary ergodic process $\sX$ as the sum
of three terms.
In Section \ref{subsec:modified_deletion} we construct a modified
deletion process that allows accurate estimation of $H(Y|X^n)$ in
the small $d$ limit. Section \ref{subsec:self_improving_bd_hy} proves
a key bound on $H(Y_{\sX})$ that leads directly to Lemma \ref{lemma:hy_is_large}.
Finally, in Section \ref{subsec:lemma_proofs}
we present proofs of the Lemmas quoted in Section \ref{sec:outline}
using the tools developed.

We will often write $X_{a}^b$ for the random vector
$(X_a,X_{a+1},\dots, X_b)$ where the $X_i$'s are distributed according
to the process $\sX$.
%
%
\subsection{Characterization in terms of runs}
\label{subsec:run_charac}

Let $m_n$ be the number of runs in $X^n$.
Let $L_1^+,L_2,\dots, L_{m_n}$ be the run lengths ($L_1^+$ being the length of the intersection of that run with $X^n$). It is clear that $H(X^n) \le 1+H(m_n, L_1^+,L_2,\dots,L_{m_n})$
(where one bit is needed to remove the $0,1$ ambiguity).
By ergodicity $m_n/n\to 1/\E[L]$ almost surely as $n\to\infty$.
Also $m_n\le n$  implies $H(m_n)/n \leq \log n /n \to 0$.
Further, 
$\limsup_{n\rightarrow \infty} H(L_1^+,L_2,\dots,L_{m_n})/n \le \lim_{n\rightarrow \infty}
H(L) m_n/n = H(L)/\E[L] $.
If $H(\sX)$ is the entropy rate of the process $\sX$,
by taking the $n\to\infty$ limit, it is easy to deduce that
\begin{align}
H(\sX) \leq \frac{ H(L) }{\E[L]} \, ,
\label{eq:run_hx_upper_bd}
\end{align}
with equality if and only if $\sX$ is a process with i.i.d. runs with common
distribution $p_L$.

We know that given $\E[L]=\mu$, the probability distribution
with largest possible entropy $H(L)$ is geometric with mean $\mu$, i.e.
$p_L(l) = (1-1/\mu)^{l-1}1/\mu $ for all $ l \geq 1$, leading to
\begin{align}
\frac{H(L)}{\E[L]} \leq  -\big(1 - \frac{1}{\mu}\big) \log
\big(1 - \frac{1}{\mu}\big) -
\frac{1}{\mu}\log \frac{1}{\mu} \equiv h(1/\mu) \, .
\label{eq:BoundMu}
\end{align}
Here we introduced the notation $h(p) = -p \log p - (1-p) \log(1-p)$
for the binary entropy function.

Using this, we are able to obtain sharp bounds on $p_L$ and
$\mu(\sX)$.
\begin{lemma}
 There exists $d_0>0$
such that the following occurs. For  any $\beta > 1/2$ and $d < d_0$, if $\sX\in\St$ is such that $H(\sX) > 1 - d^{\beta}$, we have
\begin{align}
|\mu(\sX) - 2 | \leq 7 \, d^{\beta/2} \, .
\end{align}
\label{lemma:mean_closeto2}
\end{lemma}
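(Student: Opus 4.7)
My plan is to chain the two inequalities \eqref{eq:run_hx_upper_bd} and \eqref{eq:BoundMu}, which together give
\[
H(\sX)\;\leq\;\frac{H(L)}{\E[L]}\;\leq\;h(1/\mu(\sX)),
\]
valid for any $\sX\in\St$ with $\mu(\sX)<\infty$ (which holds automatically by ergodicity, as noted in the Preliminaries). Combined with the hypothesis $H(\sX)>1-d^{\beta}$, this immediately yields the scalar inequality $h(1/\mu(\sX))>1-d^{\beta}$. All that remains is to convert this into a bound on $|\mu(\sX)-2|$ using the local behaviour of $h$ near its maximum at $p=1/2$.

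The key ingredient I would use is the global quadratic bound
\[
1-h(p)\;\geq\;\frac{2}{\ln 2}\,(p-1/2)^{2}\qquad\text{for all }p\in(0,1).
\]
This is easy to establish: since $h''(p)=-1/(p(1-p)\ln 2)$ and $p(1-p)$ is maximized at $p=1/2$, the second derivative is largest (least negative) at $1/2$ with value $-4/\ln 2$, and Taylor's theorem with integral remainder around $1/2$ gives the claim on all of $(0,1)$. Applying this to $p=1/\mu(\sX)$ and using the previous paragraph yields
\[
|\,1/\mu(\sX)-1/2\,|\;\leq\;\sqrt{(\ln 2)/2}\;d^{\beta/2}.
\]

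The final step is routine bookkeeping. I would choose $d_{0}$ small enough that $\sqrt{(\ln 2)/2}\,d_{0}^{\beta/2}<1/4$, which ensures $\mu(\sX)\in[4/3,4]$, and then
\[
|\mu(\sX)-2|\;=\;2\mu(\sX)\cdot|1/\mu(\sX)-1/2|\;\leq\;8\sqrt{(\ln 2)/2}\;d^{\beta/2}\;<\;7d^{\beta/2},
\]
since $8\sqrt{(\ln 2)/2}\approx 4.71$. There is no substantive obstacle here: the entire argument is just the quantitative observation that forcing $H(\sX)$ to be very close to $1$ forces the mean run length to be very close to the Bernoulli$(1/2)$ value $2$. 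I do not see a step that actually requires $\beta>1/2$; that hypothesis is presumably imposed only for convenience downstream, where $d^{\beta/2}$ must improve on error bounds already in hand.
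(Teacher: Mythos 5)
Your proof is correct and takes essentially the same route as the paper's: chain \eqref{eq:run_hx_upper_bd} and \eqref{eq:BoundMu} to get $h(1/\mu(\sX))\ge 1-d^{\beta}$, apply the quadratic (Pinsker-type) bound $h(p)\le 1-\tfrac{2}{\ln 2}(p-\tfrac12)^2$, and finish with elementary algebra. One small caveat: the hypothesis $\beta>1/2$ is not merely cosmetic---it is what lets you choose $d_0$ independently of $\beta$ (via $d^{\beta/2}\le d^{1/4}$ for $d<1$), as the statement of the lemma requires, and without some positive lower bound on $\beta$ the uniform-$d_0$ claim would in fact fail.
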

\begin{proof}
By Eqs. (\ref{eq:run_hx_upper_bd}) and (\ref{eq:BoundMu}),
we have $h(1/\mu)\ge 1- d^\beta$. By  Pinsker's inequality
$h(p)\le 1-(1-2p)^2/(2\ln 2)$, and therefore
$|1-(2/\mu)|^2\le (2\ln 2)d^\beta$.
The claim follows from simple calculus.
\end{proof}

\begin{lemma}\label{lemma:L_TV}
There exists $d_0>0$ and $\const'< \infty$
such that the following occurs for  any $\beta > 1/2$ and $d < d_0$.
For any  $\sX\in\St$  such that $H(\sX) > 1 - d^{\beta}$, we have
\begin{align}
\sum_{l=1}^\infty \left|p_L(l) - \frac{1}{2^l} \right| \leq
\const' d^{\beta/2}\, .
\label{eq:L_TV}
\end{align}
\end{lemma}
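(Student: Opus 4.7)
The plan is to express the divergence $D(p_L \| p_L^*)$ in a particularly clean way, bound it using the entropy hypothesis combined with Lemma \ref{lemma:mean_closeto2}, and then apply Pinsker's inequality.

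First, I would observe that because $\log_2 p_L^*(l) = -l$, we have the identity
\begin{align*}
D(p_L \| p_L^*) \;=\; \sum_{l=1}^\infty p_L(l)\bigl(\log p_L(l) - \log p_L^*(l)\bigr) \;=\; -H(L) + \sum_{l=1}^\infty p_L(l)\, l \;=\; \mu(\sX) - H(L)\,.
\end{align*}
This is the same algebraic identity noted informally in Section \ref{sec:intuition}.

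Next, I would invoke Eq.~(\ref{eq:run_hx_upper_bd}), which gives $H(L) \geq \mu(\sX)\, H(\sX)$. Combined with the hypothesis $H(\sX) > 1 - d^\beta$, this yields
\begin{align*}
D(p_L \| p_L^*) \;=\; \mu(\sX) - H(L) \;\leq\; \mu(\sX)\bigl(1 - H(\sX)\bigr) \;\leq\; \mu(\sX)\, d^\beta\,.
\end{align*}
By Lemma \ref{lemma:mean_closeto2}, $\mu(\sX) \leq 2 + 7 d^{\beta/2} \leq 3$ for $d$ small enough, so $D(p_L\|p_L^*) \leq 3 d^\beta$.

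Finally, I would apply Pinsker's inequality in base $2$, i.e.\ $\sum_l |p_L(l) - p_L^*(l)| \leq \sqrt{2\ln 2\cdot D(p_L\|p_L^*)}$, to obtain $\sum_l |p_L(l) - 2^{-l}| \leq \sqrt{6\ln 2}\, d^{\beta/2}$, which proves the lemma with $\const' = \sqrt{6\ln 2}$. There is no real obstacle here: the argument is essentially a three-line calculation, and the only point requiring any care is keeping track of base-$2$ versus natural logarithms when applying Pinsker.
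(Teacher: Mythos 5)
Your proposal is correct and follows essentially the same route as the paper's proof: the identity $H(L)=\mu(\sX)-D(p_L\|p_L^*)$, the bound $H(\sX)\leq H(L)/\E[L]$ from Eq.~(\ref{eq:run_hx_upper_bd}) together with Lemma \ref{lemma:mean_closeto2} to get $D(p_L\|p_L^*)\leq 3d^{\beta}$, and then Pinsker's inequality. Your version merely spells out the constant bookkeeping (including the base-$2$ form of Pinsker) that the paper leaves implicit.
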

\begin{proof}
Let $p_L^*(l) = 1/2^l, \ l \geq 1$ and recall that
$\mu(\sX) =\E[L]=\sum_{l\ge 1}
p_L(l)l$. An explicit calculation yields
\begin{align}
H(L)= \mu(\sX)- D(p_L || p_L^*) \, .
\label{eq:Lentropy}
\end{align}
Now, by Pinsker's inequality,
\begin{align}
D(p_L || p_L^*) \geq \frac{2}{\ln 2}\lVert p_L-p_L^*\rVert_{\rm TV}^2\, .
\label{eq:pinsker}
\end{align}
Combining Lemma \ref{lemma:mean_closeto2},
and Eqs.~(\ref{eq:run_hx_upper_bd}), (\ref{eq:Lentropy})
and (\ref{eq:pinsker}), we
get the desired result.
\end{proof}

For the rest of Section \ref{subsec:run_charac}, we only state our
technical estimates, deferring proofs to Appendix \ref{app:run_charac}.

We now state a tighter bound on probabilities of large run lengths. We will
find this useful, for instance, to control the number of bit flips in going from  general $\sX$ to $\sX_{L^*}$ having bounded run lengths. 

\begin{lemma}
There exists $d_0>0$
such that the following occurs: Consider  any $\beta > 1/2$, and define $\ell \equiv \lfloor 2\beta \log (1/d) \rfloor$. For all $d< d_0$, if $\sX\in\St$ is such that $H(\sX) > 1 - d^{\beta}$, we have
\begin{align}
\sum_{l=\ell}^\infty\  l p_L(l) \ &\leq 20 d^{\beta} \, ,
\end{align}
\label{lemma:L_tail_control}
\end{lemma}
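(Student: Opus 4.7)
The plan is to convert the entropy hypothesis $H(\sX)>1-d^\beta$ into a tight Kullback--Leibler bound on the run-length distribution $p_L$ relative to the geometric reference $p_L^*(l)=2^{-l}$, and then extract a tail estimate via the Gibbs (Donsker--Varadhan) variational inequality.

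First, combining Eq.~\eqref{eq:run_hx_upper_bd} with Eq.~\eqref{eq:Lentropy} gives $H(\sX)\le H(L)/\mu(\sX) = 1 - D(p_L\|p_L^*)/\mu(\sX)$, so the hypothesis yields $D(p_L\|p_L^*)\le \mu(\sX)\,d^\beta$. Plugging in the bound $\mu(\sX)\le 2+7 d^{\beta/2}$ from Lemma~\ref{lemma:mean_closeto2}, this gives $D(p_L\|p_L^*)\le (2+o(1))\,d^\beta$ as $d\to 0$.

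Second, I would apply the Gibbs variational inequality
$$\E_{p_L}[f(L)] \;\le\; \ln \E_{p_L^*}\!\left[e^{f(L)}\right] + (\ln 2)\, D(p_L\|p_L^*)$$
(here $D$ is in bits, following the paper's convention) with the test function $f(l) = c\, l\, \ind_{l\ge \ell}$. Since $p_L^*$ is geometric, the reference moment generating function $\E_{p_L^*}[e^{cL\ind_{L\ge\ell}}]$ has a closed form as a truncated-binary piece plus a geometric tail $(e^c/2)^\ell/(1-e^c/2)$, valid for any $c\in(0,\ln 2)$. The natural choice is $c=\tfrac12\ln 2$: then $e^c/2 = 2^{-1/2}$, so $(e^c/2)^\ell = 2^{-\ell/2} = \Theta(d^\beta)$ (using $\ell = \lfloor 2\beta\log(1/d)\rfloor$), exactly matching the order of the KL bound. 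Using $\ln(1+x)\le x$ and dividing by $c$ then gives $\sum_{l\ge \ell} l\, p_L(l) = \E_{p_L}[L\ind_{L\ge \ell}] = O(d^\beta)$.

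The main obstacle is pinning down the explicit constant~$20$. A direct computation with $c=\tfrac12\ln 2$ and the refined estimate $D(p_L\|p_L^*)\le (2+o(1))\,d^\beta$ produces a leading constant slightly below~$20$; the claim then follows by taking $d_0$ small enough to absorb the subleading $2^{-(\ell-1)}=O(d^{2\beta})$ term in the geometric sum, the $o(1)$ in the KL bound, and the $\sqrt 2$ factor coming from the floor in~$\ell$. The free parameter $c$ can be tuned if one wishes to sharpen the constant further.
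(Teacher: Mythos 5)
Your proposal is correct, and the tail-extraction step is genuinely different from the paper's. Both arguments share the same first move: combining Eq.~\eqref{eq:run_hx_upper_bd}, Eq.~\eqref{eq:Lentropy} and Lemma \ref{lemma:mean_closeto2} to turn $H(\sX)>1-d^\beta$ into $D(p_L\|p_L^*)=O(d^\beta)$ (the paper uses $D(p_L\|p_L^*)\le 3d^\beta$). From there the paper works directly with the definition of the divergence: it fixes $\alpha=e^{3/5}$, splits the tail indices into the set $\I=\{l\ge \ell: p_L(l)\ge \alpha^{-l}\}$ and its complement, lower-bounds the contribution of $\I$ to the divergence by $\log(2/\alpha)(\Delta-d^\beta)$, and controls the possibly negative contribution of the complement via Markov's inequality and a convexity (log-sum) argument, then solves for $\Delta=\sum_{l\ge\ell}l\,p_L(l)$. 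You instead invoke the Gibbs/Donsker--Varadhan variational inequality with the exponential test function $f(l)=c\,l\,\ind_{l\ge\ell}$, exploiting the closed-form moment generating function of the geometric reference; with $c=\tfrac12\ln 2$ the reference tail term is $\Theta(d^\beta)$, matching the KL bound, and a direct computation (including the $\sqrt2$ loss from the floor in $\ell$ and $D\le(2+o(1))d^\beta$) gives a leading constant of roughly $18<20$, with the remaining $o(1)$ terms absorbed uniformly in $\beta>1/2$ since $d^{\beta/2}\le d^{1/4}$ — so the stated constant $20$ and a $\beta$-independent $d_0$ are indeed recovered. Your route is shorter and more transparent about constants (and the tunable $c$, taken with $e^c=2^t$ for fixed $t<1/2$, would even yield a constant close to $2/t$, i.e.\ well below $20$), at the price of invoking the variational characterization of relative entropy; the paper's argument is more elementary and self-contained, using nothing beyond Pinsker, Markov and convexity. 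One small caveat: the $\sqrt2$ factor from the floor in $\ell$ is not an error term to be absorbed by shrinking $d_0$ but a genuine multiplicative factor in the leading constant — it must be (and, as your arithmetic implicitly does, can be) carried through the computation while still landing under $20$.
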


We use
$L(k)$ to denote the vector of lengths $(L_1, L_2, \ldots, L_k)$ of a randomly selected block of $k$ consecutive runs
(a `$k$-block').
Formally, $(L_1,L_2,\dots,L_k)$ is the vector of lengths of the first
$k$ runs starting from bit $X_1$, under the Palm measure  $\prob_1$ introduced
in Section \ref{sec:Preliminaries}.
\begin{coro}
There exists $d_0>0$
such that the following occurs: Consider  any positive integer $k$ and any $\beta > 1/2$, and define $\ell \equiv \lfloor 2\beta \log (1/d) \rfloor$. For all $d<d_0$, if $\sX\in\St$ is such that $H(\sX) > 1 - d^{\beta}$, we have
\begin{align}
\sum_{l_1+ \ldots +l_k \geq k\ell}\,  (l_1 + \ldots +l_k) p_{L(k)}(l_1, \ldots, l_k) \ &\leq 20 k^2 d^{\beta} \, .
\end{align}
\label{coro:Lk_tail_control}
\end{coro}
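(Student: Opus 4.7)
The plan is to reduce to Lemma \ref{lemma:L_tail_control} via an elementary pigeonhole argument, after identifying the one-dimensional marginals of $p_{L(k)}$.

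The first step is to invoke the standard Palm-calculus fact \cite{PP1,PP2} that the sequence $(L_1, L_2, \ldots)$ of consecutive run lengths under the Palm measure $\prob_1$ is itself a stationary sequence. In particular each component $L_i$ has the same marginal distribution $p_L$ as in Section \ref{sec:Preliminaries}. This is the only input beyond Lemma \ref{lemma:L_tail_control}: it says that the tail estimate proved there for $L = L_1$ applies verbatim to each individual coordinate $L_i$ of the $k$-block.

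With that identification in hand, set $T \equiv L_1 + \ldots + L_k$ and write the quantity to bound as $\E_{\prob_1}\!\left[T\, \mathbb{I}(T \geq k\ell)\right]$. The key deterministic inequalities are $T \leq k\,\max_i L_i$ and, by pigeonhole, $\{T \geq k\ell\} \subseteq \{\max_i L_i \geq \ell\}$. Chaining these,
\begin{align*}
T\, \mathbb{I}(T \geq k\ell) \;\leq\; k\, \bigl(\max_{i} L_i\bigr)\, \mathbb{I}\bigl(\max_{i} L_i \geq \ell\bigr) \;\leq\; k \sum_{i=1}^k L_i\, \mathbb{I}(L_i \geq \ell)\,.
\end{align*}
Taking $\E_{\prob_1}$ and using that each $L_i$ has marginal $p_L$, every term in the sum satisfies $\E_{\prob_1}\!\left[L_i\, \mathbb{I}(L_i \geq \ell)\right] = \sum_{l \geq \ell} l\, p_L(l) \leq 20\, d^{\beta}$ by Lemma \ref{lemma:L_tail_control}. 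Summing over the $k$ coordinates and multiplying by the prefactor $k$ yields the claimed bound $20 k^2 d^{\beta}$.

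The only substantive step is the invocation of Palm stationarity to identify the marginals of $L(k)$ with $p_L$; once that is acknowledged, the bound itself is a one-line pigeonhole estimate. I would therefore write the proof as a short paragraph citing \cite{PP1,PP2} for the marginal identification and then performing the three-line computation above.
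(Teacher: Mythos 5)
Your proposal is correct and is essentially the paper's own argument: both use the pigeonhole observation that $l_1+\cdots+l_k \geq k\ell$ forces $\max_i l_i \geq \ell$, bound the total by $k\max_i l_i$, and then sum over the $k$ coordinates using the fact that each $L_i$ has marginal $p_L$, finishing with Lemma \ref{lemma:L_tail_control}. Nothing to add.
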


Clearly, $\E[L_1 + \ldots + L_k] = k \mu(\sX)$.  We have
\begin{align*}
H(\sX) \leq \frac{H(L_1, L_2, \ldots, L_k)}{k \mu(\sX)}\, .
\end{align*}
A stronger form of Lemma \ref{lemma:L_TV} follows.

\begin{lemma}\label{lemma:L_TVstrong}
Let $p_{L(k)}^*(l_1,\dots,l_k) \equiv 2^{-\sum_{i=1}^k l_i}$. For the same $\const'$ and $d_0  >0$ as in Lemma \ref{lemma:L_TV}, the following occurs. Consider any positive integer $k$ and any $\beta > 1/2$. For all $d < d_0$, if $\sX\in\St$ is such that $H(\sX) > 1 - d^{\beta}$, we have
\begin{align*}
\sum_{l_1=1}^\infty \sum_{l_2=1}^\infty \ldots \sum_{l_k=1}^\infty
\left|p_{L(k)}(l_1, \ldots, l_k) - p_{L(k)}^*(l_1, \ldots, l_k) \right| \leq
\const'\sqrt{k}\,d^{\beta/2}\, .
\end{align*}
\end{lemma}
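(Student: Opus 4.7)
The plan is to follow the same line of argument as in the proof of Lemma \ref{lemma:L_TV}, but applied to the joint distribution $p_{L(k)}$ rather than the marginal $p_L$. The key input that is already provided in the text immediately preceding the lemma is the entropy-rate bound
\begin{align*}
H(\sX) \leq \frac{H(L_1,\ldots,L_k)}{k\mu(\sX)}\, ,
\end{align*}
which plays the role of Eq.~(\ref{eq:run_hx_upper_bd}) for $k$-tuples.

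First I would compute the joint entropy $H(L_1,\ldots,L_k)$ explicitly in terms of a Kullback--Leibler divergence against the reference product distribution $p_{L(k)}^*$. Since $\log p_{L(k)}^*(l_1,\ldots,l_k) = -\sum_{i=1}^k l_i$, a straightforward calculation (the exact $k$-run analog of Eq.~(\ref{eq:Lentropy})) gives
\begin{align*}
H(L_1,\ldots,L_k) = k\mu(\sX) - D(p_{L(k)}\|p_{L(k)}^*)\, .
\end{align*}
Substituting into the entropy-rate bound above and using the hypothesis $H(\sX) > 1 - d^\beta$ yields $D(p_{L(k)}\|p_{L(k)}^*) \leq k\mu(\sX)\,d^\beta$.

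Next I would apply Pinsker's inequality exactly as in Eq.~(\ref{eq:pinsker}), now to the joint distributions on $\naturals^k$:
\begin{align*}
\lVert p_{L(k)} - p_{L(k)}^*\rVert_{\rm TV}^2 \leq \frac{\ln 2}{2}\, D(p_{L(k)}\|p_{L(k)}^*) \leq \frac{\ln 2}{2}\, k\,\mu(\sX)\, d^\beta\, .
\end{align*}
Lemma \ref{lemma:mean_closeto2} gives $\mu(\sX) \leq 2 + 7d^{\beta/2}$, which is uniformly bounded above (by $3$, say) for all $d$ smaller than some $d_0>0$. The stated $\ell_1$ bound then follows by taking square roots and using that $\sum_{l_1,\ldots,l_k} |p_{L(k)} - p_{L(k)}^*| = 2\lVert p_{L(k)} - p_{L(k)}^*\rVert_{\rm TV}$; the prefactor can be absorbed into the same constant $\const'$ as in Lemma \ref{lemma:L_TV} since the only difference relative to that proof is the extra factor of $k$ inside the square root, which produces the $\sqrt{k}$ in the claim.

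There is no serious obstacle in this argument: once one accepts the entropy-rate bound stated just before the lemma and the explicit formula for $H(L_1,\ldots,L_k)$ in terms of $D(p_{L(k)}\|p_{L(k)}^*)$, everything else is a mechanical repetition of the proof of Lemma \ref{lemma:L_TV}. The one point that deserves a sentence of justification is the entropy-rate bound itself, which can be obtained (if not taken as given) by partitioning the $m_n$ runs of $X^n$ into $\lfloor m_n/k\rfloor$ disjoint consecutive $k$-blocks, using stationarity and subadditivity to upper bound the joint entropy of all run lengths by $(m_n/k)\, H(L_1,\ldots,L_k) + O(1)$ plus $O(\log n)$ for encoding $m_n$ and one bit for the initial symbol, dividing by $n$, and letting $n\to\infty$ with $m_n/n \to 1/\mu(\sX)$ by ergodicity.
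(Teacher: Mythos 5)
Your proposal is correct and is essentially the paper's own argument: the paper's proof of Lemma \ref{lemma:L_TVstrong} is literally ``repeat the proof of Lemma \ref{lemma:L_TV},'' i.e. combine the $k$-block entropy-rate bound $H(\sX)\leq H(L_1,\ldots,L_k)/(k\mu(\sX))$ with the identity $H(L_1,\ldots,L_k)=k\mu(\sX)-D(p_{L(k)}\|p_{L(k)}^*)$, Pinsker's inequality, and the bound on $\mu(\sX)$ from Lemma \ref{lemma:mean_closeto2}, exactly as you do. Your additional sketch justifying the $k$-block entropy-rate bound (which the paper states without proof) is a harmless and sound supplement.
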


We now relate the run-length distribution in $\sX$ and in $Y(X^n)$ (as $n \rightarrow \infty$). For this, we first need a characterization of
$Y$ in terms of a stationary ergodic process. Let $\sD = ( \ldots, D_{-1}, D_0, D_1, D_2, \ldots )$ be an i.i.d. Bernoulli$(d)$, independent
of $\sX$. Construct $\sY$ as follows. Look at $X_1, X_2, \ldots$. Delete bits corresponding
to $D_1, D_2, \ldots$. The bits remaining are $Y_1, Y_2, \ldots$ in order. Similarly, in
$X_0, X_{-1}, X_{-2}, \ldots$ delete bits corresponding
to $D_0, D_{-1}, D_{-2}, \ldots$. The bits remaining are $Y_0, Y_{-1}, \ldots$ in order.

\begin{propo}
The process $\sY$ is stationary and ergodic for any stationary ergodic $\sX$.
\label{propo:Y_stat_ergodic}
\end{propo}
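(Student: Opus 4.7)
The plan is to view $\sY$ as a measurable factor of the jointly stationary and ergodic process $(\sX,\sD)=((X_i,D_i))_{i\in\integers}$, and then apply the standard theory of induced (first-return) transformations to transfer its properties to $\sY$.

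First I would verify that $(\sX,\sD)$ itself is stationary and ergodic. Stationarity is immediate, since $\sX$ is stationary by assumption, $\sD$ is i.i.d.\ Bernoulli$(d)$ and hence stationary, and the two are independent. Ergodicity follows from a standard product argument: $\sD$ is i.i.d.\ and hence mixing, and the product of a stationary ergodic process with an independent mixing process is ergodic, because any joint shift-invariant event must be measurable with respect to each factor up to null sets, after which the ergodicity of $\sX$ concludes. I would then express $\sY$ as a factor of $(\sX,\sD)$: the surviving set $\Phi=\{i\in\integers:D_i=0\}$ is a stationary point process on $\integers$ of intensity $1-d>0$, a.s.\ bi-infinite by ergodicity, which we enumerate as $\ldots<s_{-1}<s_0\le 0<s_1<\ldots$; the construction of $\sY$ in the paper is exactly $Y_n=X_{s_n}$, which is a measurable function of $(\sX,\sD)$.

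For stationarity of $\sY$ the key idea is to realize the shift on $\sY$ as a first-return transformation. Let $T$ denote the shift on $(\sX,\sD)$, let $A=\{D_0=0\}$ (so $\prob(A)=1-d>0$), and set $\tau=\min\{k\ge 1:D_k=0\}$. Then $T_A=T^{\tau}$ is the induced map on $A$, and Kac's theorem together with the induced ergodic theorem give that $T_A$ preserves the Palm measure $\prob(\cdot\mid A)$ and is ergodic. A direct check shows that applying $T_A$ to an element of $A$ shifts $\sY$ by exactly one index, so under the Palm measure $\sY$ is stationary and ergodic, and the standard Palm inversion formula transports these properties back to the original measure $\prob$. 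Ergodicity of $\sY$ under $\prob$ can also be obtained directly, by noting that any shift-invariant event for $\sY$ pulls back through the factor map to a $T$-invariant event for $(\sX,\sD)$, which has probability $0$ or $1$ by ergodicity of the joint process.

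The main technical obstacle I expect is the careful bookkeeping around the indexing convention $s_0\le 0<s_1$: with this choice the ``origin gap'' $s_1-s_0$ has a size-biased distribution under $\prob$, so one cannot realize the shift on $\sY$ as any single deterministic shift of $(\sX,\sD)$. Routing through the Palm measure of the point process $\Phi$ is the clean way to sidestep this; most of the work in a rigorous write-up will be in verifying joint measurability and in making the Palm-inversion step precise. The ergodicity half is then comparatively clean, essentially reducing to the ergodicity of $(\sX,\sD)$ already established.
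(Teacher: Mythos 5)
The first half of your argument is sound, and it is genuinely different from the paper's proof (the paper's entire argument is the one\hbox{-}line observation that a unit shift of $\sY$ corresponds to a random, $\sD$-measurable --- hence $\sX$-independent --- time shift of $(\sX,\sD)$, and it never explicitly addresses ergodicity). Joint stationarity and ergodicity of $(\sX,\sD)$, the fact that the first-return map $T_A$ to $A=\{D_0=0\}$ preserves $\prob(\cdot\mid A)$ and is ergodic, and the intertwining of $T_A$ with the unit shift of the output sequence are all correct, and they do prove that under $\prob(\cdot\mid D_0=0)$ the output is stationary and ergodic; your direct pull-back argument showing that shift-invariant events for $\sY$ have $\prob$-probability $0$ or $1$ is also fine. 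The genuine gap is the sentence ``the standard Palm inversion formula transports these properties back to $\prob$.'' Inversion recenters the Palm configuration inside the length-biased gap straddling the origin, and because the marks $Y_n=X_{s_n}$ are in general correlated with the gap lengths $s_{n+1}-s_n$, this recentering changes the law of the sequence. The size-biased origin gap you yourself flag is not a bookkeeping nuisance that Palm theory sidesteps; it is exactly the obstruction.

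Indeed, with the two-sided indexing exactly as in the paper's construction, stationarity of $\sY$ under $\prob$ fails for some stationary ergodic $\sX$, so no transport argument can close this step. Take the alternating source $X_i=(i+B)\bmod 2$ with $B$ uniform on $\{0,1\}$: then $Y_0=Y_1$ iff $s_1-s_0$ is even and $Y_1=Y_2$ iff $s_2-s_1$ is even, and since $s_1-s_0$ is size-biased geometric while $s_2-s_1$ is geometric, one gets $\prob(Y_0=Y_1)=2d/(1+d)^2\neq d/(1+d)=\prob(Y_1=Y_2)$ for $0<d<1$. The correct way to finish --- and all that the paper actually needs, since only $(\sY)_1^M$ and $H(\sY)$ enter later --- is to observe that $(Y_n)_{n\ge 1}$ is a function of $(X_i)_{i\ge 1}$ and $(D_i)_{i\ge 1}$ alone, hence has the same law under $\prob$ and under $\prob(\cdot\mid D_0=0)$; your induced-map argument then gives stationarity and ergodicity of this one-sided output process, equivalently of its two-sided stationary extension, which is the version of the proposition the rest of the paper uses. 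As written, though, the final inversion step asserts something that is false for the literal two-sided construction, so the proof does not go through without this repair.
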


Notice on the other hand that $(\sX,\sY)$ are \emph{not} jointly stationary.

The channel output $Y(X^n)$ is then $(\sY)_1^M$ where $M \sim \textup{Binomial}(n, 1-d)$. It is easy to check that
\begin{align*}
H(\sY) = H(Y_{\sX})
\end{align*}
(cf. Eq.~(\ref{eq:hyx_defined})). We will henceforth use $H(\sY)$ instead of the more cumbersome  notation $H(Y_{\sX})$.

Let $q_L$ denote the block perspective run-length distribution for $\sY$. Denote by $q_{L(k)}$
the block perspective distribution for $k$-blocks in $\sY$.
Lemmas \ref{lemma:mean_closeto2}, \ref{lemma:L_TV}, \ref{lemma:L_tail_control}, \ref{lemma:L_TVstrong}
 and Corollary \ref{coro:Lk_tail_control}
hold for any stationary ergodic process,
hence they hold true if we replace $(\sX, p)$ with $(\sY, q)$.

In proving the upper bound, it turns out that we are able to establish a bound of $H(\sY) > 1 - d^{2-\eps}$ for $\eps > 0$ and small $d$, but
no corresponding bound for $H(\sX)$.
Next, we establish that if $H(\sY)$ is close to $1$, this leads to tight control over the tail for $p_L(\,\cdot\,)$. This is
a corollary of Lemma \ref{lemma:L_tail_control}.

\begin{lemma}
There exists $d_0>0$
such that the following occurs: Consider any $\gamma>1/2$, and define
$\ell \equiv \lfloor 2\gamma \log (1/d) \rfloor$. For all $d<d_0$,
if $H(\sY) \geq 1 - d^{\gamma}$, we have
\begin{align*}
\sum_{l=2\ell}^\infty\  l p_L(l) \ &\leq 80 d^{\gamma} \, .
\end{align*}
Note that $p_L$ refers to the block
length distribution of $\sX$, not $\sY$.
\label{lemma:pL_tail_control_fromHy}
\end{lemma}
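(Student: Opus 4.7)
The plan is to leverage Lemma \ref{lemma:L_tail_control} applied to $\sY$ (valid since $H(\sY) \geq 1 - d^\gamma$ with $\gamma > 1/2$), combined with the observation that a long run of $\sX$ almost always produces a long run of $\sY$. Applying the ergodic theorem to convert between rates on the two sides then yields the claim.

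First, I establish a deterministic fact about runs. For any run $R$ of $\sX$ of length $l \geq 2\ell$, let $K$ be the number of its bits deleted by the channel. By the independence of $\sD$ from $\sX$, $K \sim \textup{Binomial}(l,d)$ conditional on $l$. The $l-K$ surviving bits form a contiguous substring of $\sY$ lying within a single run of $\sY$ of length at least $l-K$ (possibly longer, if all $\sX$-runs adjacent to $R$ are wholly deleted). In particular, whenever $K < l/2$, that $\sY$-run has length at least $l/2 \geq \ell$. By Markov's inequality, $\Pr(K \geq l/2) \leq \E[K]/(l/2) = 2d$, so
\[
\E\bigl[(l-K)\,\mathbf{1}\{K < l/2\}\bigr] \ \geq \ (l/2)(1-2d)\, .
\]

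Second, I apply the ergodic theorem to the jointly stationary ergodic process $(\sX,\sD)$. The rate per $\sX$-bit of $\sX$-bits lying in a run of length $\geq 2\ell$ whose run has fewer than half its bits deleted satisfies, via the Palm measure identity,
\[
r \ \geq \ \frac{1}{\mu(\sX)}\sum_{l\geq 2\ell} p_L(l)\cdot \tfrac{l(1-2d)}{2} \ = \ \frac{1-2d}{2\mu(\sX)}\sum_{l\geq 2\ell} l\, p_L(l)\, .
\]
Each such $\sX$-bit corresponds deterministically to a $\sY$-bit lying in a run of $\sY$ of length $\geq \ell$. On the other hand, the rate per $\sX$-bit of $\sY$-bits in such long runs is $(1-d)\sum_{l'\geq \ell} l' q_L(l')/\mu(\sY) \leq 20(1-d)d^\gamma/\mu(\sY)$, where I use that the ratio of $\sY$-length to $\sX$-length tends a.s.\ to $1-d$, and Lemma \ref{lemma:L_tail_control} applied to $\sY$. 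Combining the two rate bounds yields
\[
\sum_{l\geq 2\ell} l\, p_L(l) \ \leq \ \frac{40(1-d)\mu(\sX)}{(1-2d)\mu(\sY)}\, d^\gamma\, .
\]

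To bound the prefactor, note that deletions can only merge (never split) runs, so the rate of $\sY$-runs per $\sX$-bit is at most the rate of $\sX$-runs per $\sX$-bit, i.e.\ $(1-d)/\mu(\sY) \leq 1/\mu(\sX)$, giving $\mu(\sX) \leq \mu(\sY)/(1-d)$. Hence the prefactor is at most $40/(1-2d) \leq 80$ for $d$ sufficiently small, which is the claim. The main delicate point in executing this plan is the rate-conversion step: one must verify that the count of satisfying $\sX$-bits in $X_1^n$ is a deterministic lower bound (up to $O(1)$ boundary corrections from an incomplete first or last run) for the count of $\sY$-bits in runs of length $\geq \ell$ within $Y(X^n)$, and then invoke the almost-sure limits guaranteed by the ergodic theorem on $(\sX,\sD)$, which is ergodic because $\sD$ is i.i.d.\ and independent of the ergodic process $\sX$.
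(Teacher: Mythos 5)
Your proposal is correct and follows essentially the same route as the paper: the key step in both is that a run of $\sX$ of length $l\geq 2\ell$ survives, with probability close to one, into a run of $\sY$ of length at least $l/2\geq\ell$, so that Lemma \ref{lemma:L_tail_control} applied to $\sY$ (giving $\sum_{l\geq\ell} l\,q_L(l)\leq 20d^{\gamma}$) transfers to the tail of $p_L$, with the factor $4$ turning $20$ into $80$. The only differences are bookkeeping ones—you count surviving bits rather than runs (which neatly sidesteps the paper's ``bordering runs do not disappear'' step) and handle the $\mu(\sX),\mu(\sY),(1-d)$ normalization explicitly via the fact that runs can only merge—so the argument is sound and matches the paper's proof in substance.
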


\begin{coro}
There exists $d_0>0$
such that the following occurs: Consider any positive integer $k$ and $\gamma>1/2$, and define
$\ell \equiv \lfloor 2\gamma \log (1/d) \rfloor$. For all $d<d_0$,
if $H(\sY) \geq 1 - d^{\gamma}$, we have
\begin{align*}
\sum_{l=2kl_0}^\infty\  (l_1+ \ldots + l_k) p_{L(k)}(l_1, \ldots, l_k) \ &\leq 80 k^2 d^{\gamma} \, .
\end{align*}
\label{coro:pLk_tail_control_fromHy_STRONG}
\end{coro}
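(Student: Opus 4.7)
The plan is to reduce the $k$-block tail estimate to the single-block tail estimate of Lemma \ref{lemma:pL_tail_control_fromHy}, in exactly the same manner that Corollary \ref{coro:Lk_tail_control} is derived from Lemma \ref{lemma:L_tail_control}. Under the Palm measure $\prob_1$ associated with $\sX$, the sequence of run lengths $L_1, L_2, \dots$ is stationary, so each marginal $L_i$ is distributed according to $p_L$; Lemma \ref{lemma:pL_tail_control_fromHy} therefore controls $\E[L_i\, \ind\{L_i \ge 2\ell\}] \le 80\, d^\gamma$ for every $i$, and the only task is to package these single-coordinate estimates into a joint statement.

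The key step will be the deterministic inequality
\begin{align*}
S_k\, \ind\{S_k \ge 2k\ell\} \;\le\; k \,\bigl(\max_{1\le i\le k} L_i\bigr)\, \ind\bigl\{\max_{1\le i\le k} L_i \ge 2\ell\bigr\} \;\le\; k \sum_{i=1}^k L_i\, \ind\{L_i \ge 2\ell\},
\end{align*}
where $S_k = L_1 + \cdots + L_k$. The first bound uses $S_k \le k\max_i L_i$ together with the pigeonhole observation that $S_k \ge 2k\ell$ forces at least one $L_i$ to exceed $2\ell$; the second simply retains only the terms whose indicators fire. Taking expectations under $\prob_1$ and applying Lemma \ref{lemma:pL_tail_control_fromHy} separately to each of the $k$ marginals then gives
\begin{align*}
\E\bigl[S_k\, \ind\{S_k \ge 2k\ell\}\bigr] \;\le\; k\sum_{i=1}^k \E\bigl[L_i\, \ind\{L_i \ge 2\ell\}\bigr] \;\le\; 80\, k^2 d^\gamma,
\end{align*}
which is precisely the stated bound (after rewriting the left-hand side as the sum $\sum_{l_1+\cdots+l_k \ge 2k\ell}(l_1+\cdots+l_k)\, p_{L(k)}(l_1,\dots,l_k)$).

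I do not expect any genuine difficulty here. The only subtlety worth flagging is the implicit appeal to stationarity of $(L_i)_{i\ge 1}$ under the Palm measure, which is what ensures that each $L_i$ inherits the single-marginal tail estimate of Lemma \ref{lemma:pL_tail_control_fromHy}; this is a standard property of Palm measures and is already being used in the proof of Lemma \ref{lemma:L_TVstrong}. With that in hand, the argument is entirely parallel to the $H(\sX)$ version, Corollary \ref{coro:Lk_tail_control}, differing only in which single-coordinate tail bound is fed in.
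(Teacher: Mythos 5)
Your proposal is correct and follows essentially the same route as the paper: the paper's proof is simply "analogous to Corollary \ref{coro:Lk_tail_control}," i.e.\ the same pigeonhole bound $(l_1+\cdots+l_k)\ind\{l_1+\cdots+l_k\ge 2k\ell\}\le k\sum_i l_i\ind\{l_i\ge 2\ell\}$ together with the fact that each $L_i$ has marginal $p_L$, with Lemma \ref{lemma:pL_tail_control_fromHy} substituted for Lemma \ref{lemma:L_tail_control}. Your indicator/maximum formulation is just a restatement of that argument, so there is nothing to fix.
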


Consider $\sX$ being i.i.d. Bernoulli$(1/2)$. Clearly, this corresponds to $\sY$ also i.i.d. Bernoulli$(1/2)$.
Hence, each has the same run length distribution
$p_L^*(l) = q_L^*(l) = 2^{-l}$. This happens irrespective of the deletion probability $d$.
Now suppose $\sX$ is not i.i.d. Bernoulli$(1/2)$ but
approximately so, in the sense that $H(\sX)$ close to $1$. The next lemma establishes, that in this case also, the run
length distribution of $\sY$ is very close to that of $\sX$, for small run lengths and small $d$.

\begin{lemma} \label{lemma:xy_similar_runs}
There exist a function $(\const,\eps)\mapsto d_0(\const, \eps) > 0$
and constants $\const_1 < \infty$, $\const_2 < \infty$ such that the
following happens, for any  $\beta\in (1/2,2)$, $\eps>0$ and $\const < \infty$.\\
(i) For  all $d < d_0$, for all $\sX$ such that $H(\sX) > 1 - d^{\beta}$,
and all $l < \const \log(1/d)$, we have
\begin{align*}
|p_L(l) - q_L(l)| \leq \const_1 d^{1 +\beta/2-\eps} \, .
\end{align*}
(ii) For all $d< d_0$ and all $\sX$ such that $H(\sX) > 1 - d^{\beta}$, we have
\begin{align}
\label{eq:muX_close2_muY}
|\mu(\sX) - \mu(\sY)| \leq \const_2 d^{1+\beta/2}\, .
\end{align}
\end{lemma}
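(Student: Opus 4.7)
The plan is to express $q_L(l)$ as a polynomial in $d$ whose coefficients are linear functionals of the joint $\sX$-run distributions $p_{L(k)}$, and then to exploit the fact that i.i.d.\ Bernoulli$(1/2)$ input to the deletion channel produces an i.i.d.\ Bernoulli$(1/2)$ output for any $d$.

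For part (i), I work under the Palm measure of $\sY$ and condition on a window of $k$ consecutive $\sX$-runs with lengths $(l_1, \ldots, l_k)$ surrounding the $\sY$-run of interest, together with the deletion pattern on these runs, obtaining a representation
\begin{align*}
q_L(l) = \sum_k \sum_{l_1, \ldots, l_k} p_{L(k)}(l_1, \ldots, l_k)\, G_{l,k}(l_1, \ldots, l_k; d)
\end{align*}
in which $G_{l,k}$ is an explicit polynomial in $d$. I then truncate to bounded $k$ (any contribution from $k$ consecutive $\sX$-runs requires at least $k-2$ intervening fully deleted runs and so costs $O(d^{k-2})$) and to $l_i \le \lfloor 2\beta\log(1/d)\rfloor$ (the excluded mass is controlled by Corollary \ref{coro:Lk_tail_control}). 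Within the truncated sum I expand $G_{l,k}$ in powers of $d$: the $d^0$ contribution reproduces $p_L(l)$ exactly; the $d^1$ contribution defines a linear functional $\Lambda_d$ acting on $p_{L(k)}$; and the higher-order terms are already $O(d^2 \, \mathrm{polylog}(1/d)) \subset O(d^{1+\beta/2-\eps})$ since $\beta < 2$.

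The crucial cancellation is that plugging in $p_{L(k)} = p_{L(k)}^*$ corresponds to the i.i.d.\ Bernoulli$(1/2)$ input, for which $q_L = p_L^*$ at every $d$; consequently $\Lambda_d[p_{L(k)}^*] = 0$. By linearity $\Lambda_d[p_{L(k)}] = \Lambda_d[p_{L(k)} - p_{L(k)}^*]$, and Lemma \ref{lemma:L_TVstrong} bounds the $L^1$ norm of the latter argument (restricted to the truncated region) by $O(\sqrt{k}\, d^{\beta/2})$. Combined with a polylogarithmic operator-norm bound on $\Lambda_d$, absorbed into the factor $d^{-\eps}$, this yields $|q_L(l) - p_L(l)| \le \const_1 d^{1+\beta/2-\eps}$.

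For part (ii), I use a direct counting identity. In a stretch of $\sX$ containing $N$ runs, the output $\sY$ has expected length $N(1-d)\mu(\sX)$, while its number of runs equals the number of surviving $\sX$-runs minus mergers caused by fully deleted runs joining equal-valued neighbors. A first-order-in-$d$ computation gives $\mu(\sY) = (1-d)\mu(\sX)/(1-2dp_L(1)+O(d^2))$; substituting $|p_L(1) - 1/2| = O(d^{\beta/2})$ and $|\mu(\sX) - 2| = O(d^{\beta/2})$ from Lemmas \ref{lemma:mean_closeto2} and \ref{lemma:L_TV} and expanding in $d$ yields $|\mu(\sY) - \mu(\sX)| = O(d^{1+\beta/2})$. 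The main obstacle throughout is the combinatorial bookkeeping behind $\Lambda_d$: identifying the full list of $(\sX$-run, deletion-pattern) configurations producing a length-$l$ $\sY$-run, verifying the fixed-point cancellation uniformly in $l$, and controlling the $L^1 \to \ell^\infty$ operator norm with only polylogarithmic blowup, so that the naive $O(d)$ bound is upgraded to the desired $O(d^{1+\beta/2-\eps})$.
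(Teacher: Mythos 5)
Your proposal is correct, and for part (ii) it is essentially identical to the paper's argument: the paper likewise uses $\mu(\sX)/\mu(\sY)=r(\sX)/(1-d)$ with $r=1-2p_L(1)d+O(d^2)$ and then bounds $|p_L(1)-1/2|$ via Lemma \ref{lemma:L_TV}. For part (i) the underlying mechanism is also the same — a first-order-in-$d$ expansion of how $\sY$-runs are created, truncated in window size and run length using Corollary \ref{coro:Lk_tail_control}, with Lemma \ref{lemma:L_TVstrong} supplying the $O(\sqrt{k}\,d^{\beta/2})$ closeness that upgrades the naive $O(d)$ estimate to $O(d^{1+\beta/2-\eps})$ — but you organize the key cancellation differently. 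The paper explicitly enumerates the typical creation modes (a run passing through unchanged, two runs fused across a deleted length-one run, a run of length $l+1$ suffering one deletion), evaluates each order-$d$ coefficient at the near-Bernoulli values, and checks by hand that the coefficients cancel against the $r^{-1}$ normalization; you instead deduce the cancellation abstractly from the fact that i.i.d.\ Bernoulli$(1/2)$ input gives i.i.d.\ Bernoulli$(1/2)$ output at every $d$, which spares you the explicit bookkeeping. One wrinkle you should make explicit: $q_L(l)$ is a ratio — the per-$\sX$-run intensity $N_l$ of length-$l$ runs in $\sY$ divided by the run-density ratio $r=\sum_{l'}N_{l'}$ — so the first-order coefficient of $q_L(l)-p_L(l)$ is not literally a linear functional $\Lambda_d$ of the $p_{L(k)}$'s; it contains a bilinear piece $p_L(l)\cdot B[p]$ from the denominator, and the fixed-point identity at $p^*$ only kills the combination $A_l[p^*]-2^{-l}B[p^*]$. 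The repair is routine (work with $N_l-p_L(l)\,r$, whose order-$d$ coefficient vanishes at $p^*$, and control the deviation by $\lVert p_{L(k)}-p^*_{L(k)}\rVert_1=O(\sqrt{k}\,d^{\beta/2})$ together with $|p_L(l)-2^{-l}|$), and with that adjustment your argument yields the same bounds as the paper's proof.
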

Let us emphasize that $\const_1, \const_2$ do not depend at all on $\beta, \eps, \const$, where as $d_0$ does not depend on $\beta$ in the above lemma. Analogous comments apply to the remaining lemmas in this section.

As before, we are able to generalize this result to blocks of $k$ consecutive runs.
\begin{lemma}\label{lemma:xy_similar_runs_kSTRONG}
There exist a function $(\const,\eps)\mapsto d_0(\const, \eps) > 0$
and a constant $\const < \infty$ such that the
following happens, for any $\beta\in (1/2,2)$, $\eps>0$ and $\const <
\infty$.

For all $d < d_0$, for all integers $k >0$ and $(l_1, l_2, \ldots,
l_k)$ such that $\sum_{i=1}^k l_i < \const \log(1/d)$,
and all $\sX$ such that $H(\sX) > 1 - d^{\beta}$,
we have
\begin{align*}
|p_{L(k)}(l_1, \ldots, l_k) - q_{L(k)}(l_1, \ldots, l_k)| \leq \const' \, d^{1 +\beta/2 - \eps} \, .
\end{align*}
\end{lemma}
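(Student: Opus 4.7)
The plan is to adapt the proof of Lemma~\ref{lemma:xy_similar_runs}(i) from single runs to $k$-blocks, using a window of length proportional to $\log(1/d)$ and replacing Lemma~\ref{lemma:L_TV} by its $k$-block counterpart Lemma~\ref{lemma:L_TVstrong}. The guiding intuition is that a $k$-block with total length $N := l_1+\cdots+l_k \le \const\log(1/d)$ survives the deletion process with its structure intact with probability $(1-d)^N = 1-O(d^{1-\eps'})$, so at leading order $q_{L(k)}(l_1,\ldots,l_k)$ equals $p_{L(k)}(l_1,\ldots,l_k)$, and the task is to control the discrepancy at order $d^{1+\beta/2-\eps}$.

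I would couple $\sX$ and $\sY$ via the deletion process $\sD$ of Section~\ref{subsec:run_charac}, and express $p_{L(k)}$ and $q_{L(k)}$ through the stationary (non-Palm) measures by the standard relations $\mu(\sX)\,p_{L(k)}(l_1,\ldots,l_k) = \prob(\text{the specified pattern of lengths starts at bit}\, 0\,\text{of}\,\sX)$ and similarly for $\sY$. Fix a buffer $W = \lfloor c\log(1/d)\rfloor$ with $c$ large, and let $E$ be the event that no deletion falls in a window of $N+2W$ positions of $\sX$ straddling the candidate $k$-block. On $E$ the bits of $\sY$ in the image of this window coincide with those of $\sX$, so the indicator that a $k$-block of lengths $(l_1,\ldots,l_k)$ begins at the image of bit $0$ in $\sY$ equals the corresponding indicator in $\sX$, giving exact equality of the ``good'' contributions on both sides.

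The residual error splits into two contributions. The event $E^c$ has probability $O((N+W)d) = O(d^{1-\eps'})$, and restricting to it one loses the factor $(1-d)^{N+2W}$ on the $\sX$-side; comparing the two, the typical value $p_{L(k)}(l_1,\ldots,l_k) = 2^{-N}(1+O(\sqrt{k}\,d^{\beta/2}))$ supplied by Lemma~\ref{lemma:L_TVstrong} provides the extra $d^{\beta/2}$ factor that upgrades the naive bound $O(d^{1-\eps'})$ to $O(d^{1+\beta/2-\eps})$. Separately, the normalization mismatch $|1/\mu(\sY)-1/\mu(\sX)| = O(d^{1+\beta/2})$ from Lemma~\ref{lemma:xy_similar_runs}(ii) contributes at the same order and is absorbed into the same bound.

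The main obstacle is the bookkeeping of boundary effects: deletions just inside or just outside the window can shift the alignment between bits of $\sX$ and bits of $\sY$, possibly altering whether bit $0$ of $\sY$ is the first bit of its run or whether the $k$-th run length equals $l_k$. Choosing $W = \Theta(\log(1/d))$ reduces all such boundary events to probability $O(d^{1-\eps'})$, and combining this with Lemma~\ref{lemma:L_TVstrong} (applied to both $\sX$ and $\sY$, the latter via the transfer discussed after Proposition~\ref{propo:Y_stat_ergodic}) to trade a $d^{1-\eps'}$ factor for $d^{1+\beta/2-\eps}$ is the crux of the argument. Since no uniformity in $k$ is demanded by the statement, the constant $\const'$ is permitted to absorb $k$-dependent factors such as $\sqrt{k}$, which simplifies the estimates considerably.
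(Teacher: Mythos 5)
Your overall architecture (couple $\sX$ and $\sY$ through $\sD$, localize to a window of length $\Theta(\log(1/d))$, control $p_{L(k)}$ by Lemma \ref{lemma:L_TVstrong}, and handle the normalization via Lemma \ref{lemma:xy_similar_runs}(ii)) is in the spirit of the paper's route, which simply repeats the proof of Lemma \ref{lemma:xy_similar_runs}(i) at the $k$-block level. But there is a genuine gap exactly at the step that carries the whole content of the lemma: the event $E^c$. Each of $\prob(\text{pattern in }\sX,\,E^c)$ and $\prob(\text{pattern in }\sY,\,E^c)$ is individually of order $2^{-N}(N+W)d=\Theta(d^{1-\eps'})$, so bounding the residual by ``$\prob(E^c)$ times the typical value $2^{-N}(1+O(\sqrt{k}d^{\beta/2}))$'' only yields $O(d^{1-\eps'})$, not $O(d^{1+\beta/2-\eps})$. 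The extra factor $d^{\beta/2}$ does not come from the typical value itself; it comes from an exact first-order \emph{cancellation} between destruction of the pattern (a deletion falling inside it, including the survival factor $(1-d)^{N}$ and the $+d$ coming from the run-density/normalization ratio $r^{-1}$) and creation of the pattern from \emph{different} $\sX$-configurations: a run of length $l_i+1$ suffering exactly one deletion, and two runs fusing when an intervening length-$1$ run is deleted. These order-$d$ creation and destruction rates balance exactly when the block law is the ideal $2^{-\sum_i l_i}$ (equivalently, because i.i.d.\ Bernoulli$(1/2)$ is invariant under the deletion channel), and hence balance up to a relative error $O(\sqrt{k}\,d^{\beta/2})$ by Eq.~\eqref{eq:L_Vstrong} — applied not only at $(l_1,\ldots,l_k)$ but also at the shifted arguments $(l_1,\ldots,l_i+1,\ldots,l_k)$ and at $(k+2)$-blocks containing an interspersed length-$1$ run. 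Your proposal never identifies these creation modes or the cancellation, so the claimed upgrade from $d^{1-\eps'}$ to $d^{1+\beta/2-\eps}$ is asserted rather than proved; as written, the argument establishes only $|p_{L(k)}-q_{L(k)}|=O(d^{1-\eps'})$, which is insufficient. (Note also that no triangle-inequality shortcut through $p^*_{L(k)}$ can work, since Lemma \ref{lemma:L_TVstrong} only gives $O(d^{\beta/2})$ closeness.)

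A secondary point: the constant $\const'$ in the statement is fixed before $k$, so you are not free to let it ``absorb $k$-dependent factors such as $\sqrt{k}$.'' The correct (and easy) fix is the one the paper uses: since $k\le\sum_i l_i\le\const\log(1/d)$, any polynomial in $k$ is at most $d^{-\eps/2}$ for small $d$ and can be absorbed into the $d^{-\eps}$ slack in the exponent. With that repair, and with the creation/destruction enumeration carried out as in the proof of Lemma \ref{lemma:xy_similar_runs}(i), your window-coupling framing would become a valid proof essentially equivalent to the paper's.
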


In proving the lower bound, we have $H(\sX^\dagger) = 1 - O(d^2)$, but no corresponding bound for $H(\sY)$. The next lemma allows us to get tight control over the
tail of $q_L^\dagger(\cdot)$.

\begin{lemma}
For any $\eps>0$, there exists $d_0\equiv d_0(\eps)>0$
such that the following occurs: Consider any $\beta\in (1/2,2]$, and define
$\ell \equiv \lfloor 4\log (1/d) \rfloor$. For all $d<d_0$,
if $H(\sX) \geq 1 - d^{\beta}$, we have
\begin{align*}
\sum_{l=\ell}^\infty\  l q_L(l) \ &\leq d^{\beta-\eps} \, .
\end{align*}
\label{lemma:qL_tail_control_fromHx}
\end{lemma}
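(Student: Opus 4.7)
The plan is to translate the desired tail bound on $q_L$ into a bound on the stationary probability that $Y_0$ lies in a long $\sY$-run, and then trace that event back to the structure of $\sX$. By the length-biased Palm identity, $\sum_{l\geq \ell} l\,q_L(l) = \mu(\sY)\,\prob(\tilde L_0 \geq \ell)$, where $\tilde L_0$ is the length of the $\sY$-run containing the stationary bit $Y_0$. Lemma~\ref{lemma:mean_closeto2} applied to $\sX$ together with Lemma~\ref{lemma:xy_similar_runs}(ii) (used with a parameter slightly smaller than $\beta$ if $\beta=2$) gives $\mu(\sY) = \Theta(1)$, so it suffices to show $\prob(\tilde L_0 \geq \ell) = O(d^{\beta-\eps})$.

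Let $X_0$ be the $\sX$-bit whose image in $\sY$ is $Y_0$. The $\sY$-run containing $Y_0$ is formed by merging $K \geq 1$ consecutive same-value $\sX$-runs (one of which is $X_0$'s), whose $K-1$ intervening opposite-value $\sX$-runs are all completely deleted. I decompose $\prob(\tilde L_0 \geq \ell)$ according to $K$. For $K = 1$ no merging occurs, so $\tilde L_0 \leq R_0$, and $\tilde L_0 \geq \ell$ forces $R_0 \geq \ell$; since $\ell \geq \lfloor 2\beta \log(1/d) \rfloor$ for $\beta \leq 2$, Lemma~\ref{lemma:L_tail_control} bounds this contribution by $O(d^\beta)$.

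For $K \geq 2$, the $K-1$ intervening opposite-value runs contribute a factor $\prod d^{r_{\textup{even}}} \leq d^{K-1}$, and the sum of the lengths of the $K$ same-value runs is at least $\tilde L_0 \geq \ell$. In the key case $K = 2$, applying the length-biased Palm formula (a stationary bit lies in the $i$-th run of a 3-block with lengths $(r_1,r_2,r_3)$ with probability $r_i\,p_{L(3)}(r_1,r_2,r_3)/\mu(\sX)$), bounding $d^{r_2}\le d$, and summing over the two possible positions of $X_0$'s run within the 3-block yields
\begin{align*}
\prob(\tilde L_0 \geq \ell,\, K = 2) &\leq \frac{d}{\mu(\sX)} \sum_{r+r' \geq \ell} (r+r')\, \prob(L_1 = r, L_3 = r') \\
&\leq \frac{4 d}{\mu(\sX)} \sum_{l \geq \ell/2} l\, p_L(l),
\end{align*}
where the last step uses $(r+r')\ind(r+r'\geq\ell) \leq 2r\ind(r\geq\ell/2) + 2r'\ind(r'\geq\ell/2)$ together with the fact that under the Palm measure on $\sX$ the marginals of $L_1$ and $L_3$ both equal $p_L$. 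Applying Lemma~\ref{lemma:L_tail_control} with a suitable parameter $\beta' \in (1/2, \min(\beta,1))$ (admissible since $H(\sX) \geq 1-d^\beta \geq 1-d^{\beta'}$ and $\ell/2 \geq 2\beta'\log(1/d)$) bounds this by $O(d^{1+\beta'})$, which can be made $O(d^{\beta-\eps})$ by taking $\beta'$ close enough to $\min(\beta,1)$. For $K \geq 3$, each additional fully deleted run contributes an extra factor of $d$, so the tail $\sum_{K\geq 3}$ is of strictly smaller order.

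Summing the cases yields the desired bound $\prob(\tilde L_0 \geq \ell) = O(d^{\beta-\eps})$. The main obstacle is the Palm--stationary length-biasing bookkeeping when $X_0$ is not the first bit of its run; the cleanest route is to use the marginal identity $L_i \sim p_L$ under the Palm measure to reduce the key estimate to the one-dimensional tail already controlled by Lemma~\ref{lemma:L_tail_control}.
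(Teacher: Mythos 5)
Your route is genuinely different from the paper's, which settles this lemma in a few lines: write the tail as the full mean minus the head, $\sum_{l\ge\ell} l q_L(l) = \mu(\sY) - \sum_{l<\ell} l q_L(l)$, replace $\mu(\sY)$ by $\mu(\sX)$ up to $O(d^{1+\beta/2})$ via Lemma \ref{lemma:xy_similar_runs}(ii), replace the head by $\sum_{l<\ell} l p_L(l)$ up to $O(\ell^2 d^{1+\beta/2-\eps})$ via Lemma \ref{lemma:xy_similar_runs}(i), and bound the remaining $p_L$-tail by Lemma \ref{lemma:L_tail_control}; the hypothesis $\beta\le 2$ enters only through $1+\beta/2\ge\beta$. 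Your plan (length-biasing at a stationary $\sY$-bit plus a decomposition over the number $K$ of merged $\sX$-runs) can also be made to work, and several of your reductions are fine: the inequality $(r+r')\ind(r+r'\ge\ell)\le 2r\ind(r\ge\ell/2)+2r'\ind(r'\ge\ell/2)$, the choice $\beta'\in(1/2,\min(\beta,1))$, and the crude handling of $K\ge3$ (which needs only $\beta\le2$ and $\mu(\sX)=O(1)$).

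The gap is in the justification of the key $K=2$ estimate. Your reference bit is the pre-image of the fixed $\sY$-position $0$, i.e.\ the first undeleted bit at or to the left of $X_0$. Conditionally on this pre-image lying $j\ge1$ positions to the left of $0$, the $j$ intervening bits are deleted with probability one, so relative to the pre-image the deletion pattern is \emph{not} i.i.d.\ Bernoulli$(d)$: it carries a geometric block of forced deletions immediately to its right (an inspection-paradox effect). Hence the step ``the intervening run is fully deleted with probability $d^{r_2}\le d$, independently of the length-biased run configuration'' is not justified; for instance, the bit immediately following the pre-image of $Y_0$ is deleted with probability $2d-d^2$, not $d$, so for configurations with $r_1=r_2=1$ the factor you allocate is genuinely too small, and your displayed bound on $\prob(\widetilde L_0\ge\ell,K=2)$ is not established by the argument given. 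The damage is only constants plus additive corrections of the same order: conditioning on the offset $J$ of the pre-image and paying $\prob(J=j)=d^j(1-d)$ for the forced deletions yields your main term times a constant plus a correction $O\big(d\sum_{l\ge\ell/2}p_L(l)\big)=O(d^{1+\beta'})$, so the final conclusion survives, but as written the step is not a proof. A cleaner repair is to count from the $\sX$ side (bound the expected fraction of $\sX$-bits whose surviving image lies in a $\sY$-run of length $\ge\ell$), where the deletion process really is independent of the reference position; this is exactly the device the paper uses for the companion bound, Lemma \ref{lemma:pL_tail_control_fromHy}. The same caveat applies to your one-line dismissal of $K\ge3$.
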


Define $p_{L(k)}^*(l_1, \ldots, l_k )\equiv 2^{-\sum_{i=1}^k l_i}$. We
show, using Lemma \ref{lemma:xy_similar_runs_kSTRONG}, that if
$H(\sY)$ is close to 1, than one can bound the distance between
$p_{L(k)}(\,\cdot\,)$ and $p_{L(k)}^*(\,\cdot\,)$.
\begin{lemma}\label{lemma:xy_similar_runs_gammaSTRONG}
There exist a function $(\const,\eps)\mapsto d_0(\const, \eps) > 0$
and constants $\const_1 < \infty$, $\const_2 < \infty$ such that the
following happens, for any   $\eps>0$ and $\const <
\infty$.\\
 (i) For all $d<d_0$, all sources $\sX$ such that $H(\sX) > 1 - d^{0.6}$ and $H(\sY) > 1 - d^{\gamma}$,
and all integers $k>0$ and $(l_1, l_2, \ldots, l_k)$ such that $\sum_{i=1}^k l_i < \const \log(1/d)$,
we have
\begin{align}
|p_{L(k)}(l_1, \ldots, l_k) - p_{L(k)}^*(l_1, \ldots, l_k )| &\leq  \, d^{\gamma/2-\eps}
\label{eq:L_Vstrong_fromY} \, ,\\
|p_{L(k)}(l_1, \ldots, l_k) - q_{L(k)}(l_1, \ldots, l_k)| &\leq  \, d^{1 +\gamma/2-\eps} \, .
\label{eq:xy_similar_runs_gammaSTRONG}
\end{align}
(ii) For all $d<d_0$, all sources $\sX$ such that $H(\sX) > 1 -
d^{0.6}$ and $H(\sY) > 1 - d^{\gamma}$,
we have
\begin{align}
|\mu(\sX) -2|  &\leq \const_1 d^{\gamma/2} \label{eq:muX_close2_two_using_Hy}\, ,\\
|\mu(\sX) -\mu(\sY)|  &\leq \const_2 d^{1+ \gamma/2} \label{eq:muX_close2_muY_using_Hy} \, .
\end{align}
\end{lemma}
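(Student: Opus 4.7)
The plan is to establish the four bounds in sequence, bootstrapping on the stronger hypothesis $H(\sY)>1-d^\gamma$ (beyond the assumed $H(\sX)>1-d^{0.6}$) by applying the already-proven machinery to the process $\sY$.

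First, to prove \eqref{eq:L_Vstrong_fromY}, I would apply Lemma \ref{lemma:L_TVstrong} directly to $\sY$: since $\sY$ is stationary ergodic by Proposition \ref{propo:Y_stat_ergodic} and satisfies $H(\sY)>1-d^\gamma$, that lemma yields $\sum_{\vec{l}}|q_{L(k)}(\vec{l})-p_{L(k)}^*(\vec{l})|\leq \const'\sqrt{k}\,d^{\gamma/2}$. For any fixed $\vec{l}$ with $\sum l_i<\const\log(1/d)$, the pointwise bound $|q_{L(k)}(\vec{l})-p_{L(k)}^*(\vec{l})|\leq \const'\sqrt{k}\,d^{\gamma/2}$ then holds, and the factor $\sqrt{k}=O(\sqrt{\log(1/d)})$ is absorbed into $d^{-\eps}$. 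I then combine this with Lemma \ref{lemma:xy_similar_runs_kSTRONG} applied with $\beta=0.6$ (yielding $|p_{L(k)}-q_{L(k)}|\leq \const' d^{1.3-\eps}$) via the triangle inequality, so the $d^{\gamma/2-\eps}$ term dominates in the relevant regime $\gamma\leq 2.6$.

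Second, to prove \eqref{eq:xy_similar_runs_gammaSTRONG}, I would revisit the proof of Lemma \ref{lemma:xy_similar_runs_kSTRONG} and thread through the refined bound from \eqref{eq:L_Vstrong_fromY}. The argument in Lemma \ref{lemma:xy_similar_runs_kSTRONG} presumably expresses $q_{L(k)}(\vec{l})-p_{L(k)}(\vec{l})$ as a sum over possible pre-channel patterns weighted by deletion probabilities, with a leading contribution that vanishes whenever $p_{L(\cdot)}=p_{L(\cdot)}^*$ (since for a Bernoulli$(1/2)$ input the output is also exactly Bernoulli$(1/2)$, regardless of $d$). This cancellation produces the extra factor of $d$ in the bound, so the difference is bounded by $d$ times the deviation $|p_{L(\cdot)}-p_{L(\cdot)}^*|$ up to negligible higher-order terms. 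Substituting the improved bound $d^{\gamma/2-\eps}$ from \eqref{eq:L_Vstrong_fromY} in place of the $d^{\beta/2}$ bound from Lemma \ref{lemma:L_TVstrong} then yields $|p_{L(k)}-q_{L(k)}|\leq d^{1+\gamma/2-\eps}$.

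Third, for part (ii), I would first obtain $|\mu(\sY)-2|\leq 7d^{\gamma/2}$ by applying Lemma \ref{lemma:mean_closeto2} to $\sY$. For \eqref{eq:muX_close2_muY_using_Hy}, I would write
\begin{align*}
\mu(\sX)-\mu(\sY)=\sum_{l\geq 1} l\bigl(p_L(l)-q_L(l)\bigr)
\end{align*}
and split at a cutoff $l_0=4\gamma\log(1/d)$. For $l\leq l_0$, Eq.~\eqref{eq:xy_similar_runs_gammaSTRONG} with $k=1$ gives $|p_L(l)-q_L(l)|\leq d^{1+\gamma/2-\eps}$, and summing against the factor $l$ contributes $O(l_0^2\, d^{1+\gamma/2-\eps})=d^{1+\gamma/2-O(\eps)}$. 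For the tail $l>l_0$, I would apply Lemma \ref{lemma:pL_tail_control_fromHy} to control $\sum_{l>l_0}l\,p_L(l)$ and Lemma \ref{lemma:L_tail_control} applied to $\sY$ to control $\sum_{l>l_0}l\,q_L(l)$. Finally, \eqref{eq:muX_close2_two_using_Hy} follows from the triangle inequality $|\mu(\sX)-2|\leq |\mu(\sX)-\mu(\sY)|+|\mu(\sY)-2|$.

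The main obstacle is Step 2: one has to re-open the proof of Lemma \ref{lemma:xy_similar_runs_kSTRONG} and carefully verify that the leading-order contribution to $p_{L(k)}-q_{L(k)}$ indeed cancels in the Bernoulli$(1/2)$ case, so that the dependence on the input deviation enters through a $d$-weighted factor that can be improved by substituting in the sharper bound. A secondary subtlety lies in Step 3 when $\gamma<2$: the tail bound from Lemma \ref{lemma:pL_tail_control_fromHy} only yields $O(d^\gamma)$, which exceeds the target $d^{1+\gamma/2}$, so one likely needs to use an intermediate-range estimate (pushing the cutoff $l_0$ further into the range where the pointwise bound from Step 2 still applies) combined with the bare tail estimate only for very large $l$ where $l\,2^{-l}$ is negligible.
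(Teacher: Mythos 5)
Your treatment of part (i) is essentially the paper's own argument: apply Lemma \ref{lemma:L_TVstrong} to $\sY$, combine with Lemma \ref{lemma:xy_similar_runs_kSTRONG} (at $\beta=0.6$) by the triangle inequality to get \eqref{eq:L_Vstrong_fromY}, and then rerun the proof of Lemma \ref{lemma:xy_similar_runs}(i)/\ref{lemma:xy_similar_runs_kSTRONG} with the sharper input bound \eqref{eq:L_Vstrong_fromY} in place of \eqref{eq:L_Vstrong} to get \eqref{eq:xy_similar_runs_gammaSTRONG}; your explanation of why the extra factor of $d$ survives (exact invariance of the Bernoulli$(1/2)$ run statistics under the channel) is the right mechanism. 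The bound $|\mu(\sY)-2|\le 7d^{\gamma/2}$ via Lemma \ref{lemma:mean_closeto2} applied to $\sY$ also matches.

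The genuine gap is in your derivation of \eqref{eq:muX_close2_muY_using_Hy}. Writing $\mu(\sX)-\mu(\sY)=\sum_l l(p_L(l)-q_L(l))$ and splitting at a logarithmic cutoff does not reach the order $d^{1+\gamma/2}$: the pointwise estimate \eqref{eq:xy_similar_runs_gammaSTRONG} is only valid for $l<\const\log(1/d)$, and beyond any such cutoff the only available tail controls are Lemma \ref{lemma:pL_tail_control_fromHy} for $p_L$ and Lemma \ref{lemma:L_tail_control} (applied to $\sY$) for $q_L$, each of which gives only $O(d^{\gamma})$ --- and for $\gamma<2$ this strictly dominates the target $d^{1+\gamma/2}$ (e.g.\ $d$ versus $d^{3/2}$ at $\gamma=1$). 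Enlarging the constant in the cutoff does not help, since these tail lemmas do not improve below $d^{\gamma}$ for a general stationary ergodic $\sX$ under the stated entropy hypotheses; the cancellation between the two tails is exactly what one would need, and termwise bounds cannot capture it. (A secondary point: even the head contribution carries a $d^{-\eps}$ and polylog loss, whereas \eqref{eq:muX_close2_muY_using_Hy} is stated without such slack.) The paper sidesteps the tail entirely: from the proof of Lemma \ref{lemma:xy_similar_runs}(ii) one has the run-count identity $\mu(\sX)/\mu(\sY)=r(\sX)/(1-d)$ with $r=1-2p_L(1)d+O(d^2)$, i.e.\ Eq.~\eqref{eq:muXminusmuY_intermsof_pL1}, $|\mu(\sX)-\mu(\sY)|\le 4|p_L(1)-1/2|\,d+\const d^2$, and then \eqref{eq:L_Vstrong_fromY} with $k=1$, $l=1$ gives $|p_L(1)-1/2|=O(d^{\gamma/2})$, yielding \eqref{eq:muX_close2_muY_using_Hy} cleanly; \eqref{eq:muX_close2_two_using_Hy} then follows by the triangle inequality (your weaker estimate would in fact still suffice for \eqref{eq:muX_close2_two_using_Hy}, but not for \eqref{eq:muX_close2_muY_using_Hy}). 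You should replace your splitting argument by this run-count reduction, or otherwise supply a genuinely new tail-cancellation estimate that the paper's lemmas do not provide.
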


The next Lemma assures us that if $\sX \in \St_{\lfloor 1/d \rfloor}$, then very few runs in $\sY$ are much longer than ${\lfloor 1/d \rfloor}$. In fact, we show that $q_L(\lambda {\lfloor 1/d \rfloor})$ decays exponentially in $\lambda$.

\begin{lemma}
There exists $d_0>0$ such that, for all $d< d_0$, the following occurs: Consider any $\sX \in \St_{\lfloor 1/d \rfloor}$ such that $H(\sX)> 1- d^{2/3}$. Then, for all $\lambda > 2$ such that $\lambda \lfloor 1/d \rfloor$ is an integer, we have
\begin{align*}
q_L(\lambda \lfloor 1/d \rfloor) \leq d^{\lambda-2} \, .
\end{align*}
\label{lemma:q_exp_decay_in_St1byd}
\end{lemma}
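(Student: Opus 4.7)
The plan is to upper bound $q_L(\lambda L^*)$, where $L^* \equiv \lfloor 1/d \rfloor$, by enumerating the configurations in $\sX$ that can give rise to a run of length at least $\lambda L^*$ in $\sY$. The key observation is that if $\sX \in \St_{L^*}$, then every individual run of $\sX$ has length at most $L^*$ (the head of each super-run has length at most $L^*$, and tail runs have length $1$). Consequently, a run of length $\geq \lambda L^*$ in $\sY$ must arise by merging a chain $A_1, B_1, A_2, \ldots, B_{m-1}, A_m$ of $m \geq \lceil \lambda \rceil$ same-type runs of $\sX$, where the $m-1$ opposite-type separators $B_1, \ldots, B_{m-1}$ have all been fully deleted by the channel.

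The main estimate proceeds via a Palm-measure argument on $\sX$. Condition on a position $p$ being the start of a 0-run, and denote by $A_1, B_1, A_2, B_2, \ldots$ the subsequent alternating runs. For any $m \geq 1$, the probability (marginalizing over the deletion pattern $\sD$) that $B_1, \ldots, B_{m-1}$ are all fully deleted equals $\E_\sX \big[d^{|B_1|+\cdots+|B_{m-1}|}\big] \leq d^{m-1}$, since each $|B_i| \geq 1$. Summing this upper bound over $m \geq \lceil \lambda \rceil$ shows that the Palm-conditional probability that $p$ initiates a merged 0-run of length $\geq \lambda L^*$ is at most $\sum_{m \geq \lceil \lambda \rceil} d^{m-1} \leq 2 d^{\lceil \lambda \rceil - 1}$. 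Combining the contributions from both 0-run and 1-run starts in $\sX$ (whose combined density is $1/\mu(\sX)$), and rescaling from $\sX$-bit density to $\sY$-bit density by the factor $1/(1-d)$, the density per $\sY$-bit of runs of length at least $\lambda L^*$ in $\sY$ is at most $\frac{2}{\mu(\sX)(1-d)}\, d^{\lceil \lambda \rceil - 1}$.

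The final step translates this to $q_L(\lambda L^*)$ via $\sum_{\ell \geq \lambda L^*} q_L(\ell) = \mu(\sY) \cdot (\text{density of such runs per } \sY\text{-bit})$, giving
\begin{align*}
q_L(\lambda L^*) \leq \frac{2\mu(\sY)}{\mu(\sX)(1-d)}\, d^{\lceil \lambda \rceil - 1}\, .
\end{align*}
The hypothesis $H(\sX) > 1 - d^{2/3}$, together with Lemma \ref{lemma:mean_closeto2} (giving $|\mu(\sX) - 2| \leq 7 d^{1/3}$) and Lemma \ref{lemma:xy_similar_runs}(ii) (giving $|\mu(\sX) - \mu(\sY)| \leq \const_2 d^{4/3}$), bounds the prefactor $\mu(\sY)/(\mu(\sX)(1-d))$ by an absolute constant $C$ for all sufficiently small $d$. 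Since $\lceil \lambda \rceil - 1 \geq \lambda - 1$, we conclude that $q_L(\lambda L^*) \leq 2C\, d^{\lambda - 1} \leq d^{\lambda - 2}$ once $d$ is small enough that $2Cd \leq 1$. I expect the main obstacle to be the Palm-measure bookkeeping---specifically, rigorously justifying that the per-0-run-start estimate translates into the stated density of long merged runs per $\sY$-bit, correctly handling the rescaling between $\sX$ and $\sY$ by the surviving fraction $1-d$ and the summation over both 0-run and 1-run merging events.
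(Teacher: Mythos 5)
Your proposal is correct and follows essentially the same route as the paper: every run of $\sX\in\St_{\lfloor 1/d\rfloor}$ has length at most $\lfloor 1/d\rfloor$, so a run of length $\lambda\lfloor 1/d\rfloor$ in $\sY$ forces at least (roughly) $\lambda-1$ intervening runs of $\sX$ to be fully deleted, costing $d^{\lambda-1}$ up to rounding, and the run-density conversion factor $\mu(\sY)/(\mu(\sX)(1-d))$ is bounded by a constant via Lemmas \ref{lemma:mean_closeto2} and \ref{lemma:xy_similar_runs}(ii), exactly as in the paper. Your union bound over the number of merged parent runs and the explicit Palm-measure bookkeeping are just a slightly more detailed rendering of the same argument.
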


Next, we prove some analogous results for super-runs, cf. Definition \ref{def:super-run}, that we also need.

We denote by $\tLr$ the length of the first run in a random super-run
and by $\tLa$ the total length of the remaining runs of the same
super-run.
More precisely, we repeat here the construction of Section
\ref{sec:Preliminaries},
and define a new Palm measure, $\prob_{s1}$, which is the measure of
$\sX$ conditional on $X_1$ being the first bit of a super-run. Then,
$\tLr$ the length of the first run of this super-run, and $\tLa$ is
the residual length of the same super run, always under the Palm
measure $\prob_{s1}$.
 Here `rep' indicates `repeated' with $\tLr$ being the number of repeated bits and `alt' indicates `alternating' with $\tLa$ being the number of alternating bits. We denote
the type of a random super run by $\tT \equiv (\tLr, \tLa)$ and the length by $\tL \equiv \tLa + \tLr$. We need versions of Lemmas \ref{lemma:L_tail_control} and \ref{lemma:pL_tail_control_fromHy} for super-runs.

Define $\tmu(\sX) \equiv 1/\E[\tL]$. It is easy to see that
\begin{align}
H(\sX) \leq \frac{H(\tT)}{\tmu(\sX)} \, .
\label{eq:superrun_hX_upper_bd}
\end{align}
We denote by $p_{\tT}$ the distribution of $\tT$.
Define $p_{\tT}^*(l_1,l_2) \equiv 2^{-l_1-l_2}$, this being the distribution for the i.i.d. Bernoulli$(1/2)$
process $\sX^*$. We denote by $p_{\tL}$ the distribution of $\tL$ in $\sX$.
Clearly,
\begin{align*}
p_{\tL}(l) = \sum_{\lr=2}^l p_{\tT}(\lr, l-\lr) \, .
\end{align*}

\begin{lemma}
There exists $d_0>0$
such that the following occurs. For  any $\beta > 1/2$ and $d < d_0$, if $\sX\in\St$ is such that $H(\sX) > 1 - d^{\beta}$, we have
\begin{align*}
|\tmu(\sX) - 4 | \leq 4 \, d^{\beta/2} \, .
\end{align*}
\label{lemma:tmu_closeto4}
\end{lemma}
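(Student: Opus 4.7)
The plan is to mimic the proof of Lemma \ref{lemma:mean_closeto2}, replacing the one-dimensional maximum-entropy bound on a single run length $L$ by a two-dimensional analogue for the super-run type $\tT = (\tLr, \tLa)$. Concretely, I aim to show that $H(\sX) \leq h(2/\tmu(\sX))$, after which Pinsker's inequality closes the argument exactly as in Lemma \ref{lemma:mean_closeto2}.

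The first step, and essentially the only nontrivial one, is the maximum-entropy inequality
\begin{align*}
H(\tT) \leq \tmu(\sX) \cdot h\!\left(2/\tmu(\sX)\right),
\end{align*}
valid for any distribution of $\tT$ supported on $\{2,3,\ldots\}\times\{0,1,\ldots\}$ with $\E[\tLr + \tLa] = \tmu(\sX)$. A standard Lagrange-multiplier computation with the single linear constraint $\E[\tLr + \tLa] = \tmu(\sX)$ shows that the maximizer has product form $P(l_1, l_2) = (1-q)^2 q^{l_1-2+l_2}$, i.e.\ $\tLr - 2$ and $\tLa$ are independent geometrics on $\{0,1,\ldots\}$ with common parameter $1-q$. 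Solving $\E[\tL] = \tmu(\sX)$ gives $q = (\tmu(\sX)-2)/\tmu(\sX)$, and substituting into the entropy formula for a product of geometrics yields $\tmu(\sX) \cdot h(2/\tmu(\sX))$ after a direct calculation.

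Combined with \eqref{eq:superrun_hX_upper_bd}, this gives $H(\sX) \leq H(\tT)/\tmu(\sX) \leq h(2/\tmu(\sX))$, so the hypothesis $H(\sX) > 1 - d^{\beta}$ forces $h(2/\tmu(\sX)) > 1 - d^{\beta}$. Applying Pinsker's inequality $h(p) \leq 1 - (1-2p)^2/(2\ln 2)$ with $p = 2/\tmu(\sX)$ then yields $|1 - 4/\tmu(\sX)|^2 \leq 2\ln 2 \cdot d^{\beta}$, equivalently $|\tmu(\sX) - 4| \leq \sqrt{2\ln 2}\,\tmu(\sX)\,d^{\beta/2}$. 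For $d$ small enough this already forces $\tmu(\sX)$ within a small fixed neighborhood of $4$, so the factor $\tmu(\sX)$ on the right can be absorbed into the constant to yield the stated bound $|\tmu(\sX) - 4| \leq 4 d^{\beta/2}$ for all $d < d_0$.

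I do not anticipate any serious obstacle: the max-entropy calculation on the asymmetric support $\{2,3,\ldots\}\times\{0,1,\ldots\}$ is the only step that is not already present in the proof of Lemma \ref{lemma:mean_closeto2}, and it is routine Lagrangian work.
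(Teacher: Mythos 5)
Your proposal follows essentially the same route as the paper's own proof: that proof also starts from Eq.~\eqref{eq:superrun_hX_upper_bd}, maximizes $H(\tT)$ at fixed mean, and your maximizer (product of geometrics on the support $\{2,3,\dots\}\times\{0,1,\dots\}$) together with the resulting bound $H(\tT)/\tmu(\sX)\le h\!\left(2/\tmu(\sX)\right)$ is literally the paper's Eq.~\eqref{eq:bound_tmu}, since $h(2/x)=-\tfrac{2}{x}-\bigl(1-\tfrac{2}{x}\bigr)\log(x-2)+\log x$ identically; the Lagrangian computation and the Pinsker step are both correct, exactly as in Lemma \ref{lemma:mean_closeto2}.

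The one soft spot is your final absorption. From $|1-4/\tmu(\sX)|\le\sqrt{2\ln 2}\,d^{\beta/2}$ you get $|\tmu(\sX)-4|\le\sqrt{2\ln 2}\,\tmu(\sX)\,d^{\beta/2}$, and since this forces $\tmu(\sX)$ close to $4$, the resulting constant is $4\sqrt{2\ln 2}\approx 4.71$, which cannot be absorbed down to the stated $4$. Indeed $\sqrt{32\ln 2}$ is the sharp leading constant for this line of argument, because $f''(4)=-1/(16\ln 2)$; the paper's own Taylor claim $f(\tmu)\le 1-(\tmu-4)^2/15$ overstates the curvature of $f$ at its maximum (it already fails at $\tmu=4.1$) and so does not legitimately deliver the constant $4$ either. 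Stated honestly, your argument proves $|\tmu(\sX)-4|\le 5\,d^{\beta/2}$ for small $d$, and this discrepancy in the numerical constant is harmless: the lemma is only invoked downstream through crude consequences such as $\tmu(\sX)\le 5$ and $|\tmu(\sX)-4|=O(d^{\beta/2})$.
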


\begin{lemma}
There exists $d_0>0$
such that the following occurs: Consider  any $\beta > 1/2$, and define $\ell \equiv \lfloor 2\beta \log (1/d) \rfloor$. For all $d< d_0$, if $\sX\in\St$ is such that $H(\sX) > 1 - d^{\beta}$, we have
\begin{align*}
\sum_{l=\ell}^\infty l p_{\tL}(l) \leq 40 d^\beta\, .
\end{align*}
\label{lemma:tL_tail_bound}
\end{lemma}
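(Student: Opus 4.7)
The plan is to follow the proof of Lemma \ref{lemma:L_tail_control}, replacing runs with super-runs, but introducing one extra refinement to handle the additional factor $(l-1)$ in $p_{\tL}^*(l) = (l-1) 2^{-l}$.

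First I would convert the entropy hypothesis into a relative-entropy bound. Because $-\log_2 p_{\tT}^*(l_1, l_2) = l_1 + l_2 = \tL$, direct expansion yields the identity $D(p_{\tT}\|p_{\tT}^*) = \tmu(\sX) - H(\tT)$, the super-run analog of Eq.~(\ref{eq:Lentropy}). Combined with Eq.~(\ref{eq:superrun_hX_upper_bd}), this gives $D(p_{\tT}\|p_{\tT}^*) \leq \tmu(\sX)(1-H(\sX)) \leq \tmu(\sX)\,d^\beta$, and Lemma \ref{lemma:tmu_closeto4} then gives $\tmu(\sX) \leq 5$ for $d$ small, so $D(p_{\tT}\|p_{\tT}^*) \leq 5d^\beta$. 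The data-processing inequality applied to the map $(\tLr,\tLa) \mapsto \tLr + \tLa$ also yields $D(p_{\tL}\|p_{\tL}^*) \leq 5d^\beta$.

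A direct application of the Gibbs/Donsker--Varadhan variational formula with test function $f(l) = \lambda l\,\ind\{l \geq \ell\}$ would fail here, because $p_{\tL}^*$ carries the extra linear weight and so the associated log-moment-generating function contributes a term of order $\ell\cdot r^\ell = \Theta(\log(1/d)\cdot d^\beta)$ for constant $\lambda$, overwhelming the target $40 d^\beta$. Instead I would apply the chain rule of KL divergence for the partition $\{\tL < \ell\}\cup\{\tL \geq \ell\}$. Writing $p := P(\tL \geq \ell)$ and $p^* := P^*(\tL \geq \ell)$, with the closed-form computation $p^* = 2\ell\cdot 2^{-\ell}$, the chain rule simultaneously produces
\[
d(p, p^*) \leq D(p_{\tL}\|p_{\tL}^*) \leq 5d^\beta \qquad\text{and}\qquad D\bigl(\tilde p_{\tL}^{\geq}\,\big\|\,\tilde p_{\tL}^{*,\geq}\bigr) \leq \frac{5d^\beta}{p}\, .
\]
Expanding the binary KL $d(p, p^*)$ and using $\log(p/p^*) \gtrsim \beta\log(1/d)$ for small $d$ gives $p = O(d^\beta/\log(1/d))$, hence $p\ell = O(d^\beta)$. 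For the conditional expectation, Donsker--Varadhan applied to the conditional KL bound with $g(l) = \lambda(l-\ell)$ and $\lambda = (\ln 2)/2$ yields $\E[\tL-\ell\mid\tL\geq\ell] \leq (2/\ln 2)\bigl(D(\tilde p^{\geq}\|\tilde p^{*,\geq}) + C_0\bigr)$, where $C_0$ is a universal constant: here the normalization by $p^* = 2\ell\cdot 2^{-\ell}$ in $\tilde p^{*,\geq}$ exactly cancels the spurious $\ell$ factor arising from the $(l-1)$ weight, and this cancellation is the crucial reason the \emph{conditional} calculation avoids the $\log(1/d)$ loss. Multiplying by $p$ collapses the $1/p$ blow-up into the bounded quantity $D \leq 5d^\beta$, giving
\[
\sum_{l\geq\ell} l\,p_{\tL}(l) = p\ell + p\,\E[\tL-\ell\mid\tL\geq\ell] \leq p\ell + \frac{2}{\ln 2}\bigl(D + p\,C_0\bigr) = O(d^\beta)\, ,
\]
and tracking constants using the sharper $\tmu(\sX) \leq 4 + o(1)$ from Lemma \ref{lemma:tmu_closeto4} yields $\leq 40 d^\beta$ for $d$ sufficiently small.

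The main technical obstacle is exactly the extra $(l-1)$ factor in $p_{\tL}^*$, which rules out the single-pass variational argument that suffices for ordinary runs and forces the two-stage (tail-probability plus conditional-expectation) approach above. Once this split is made, the remaining estimates are routine adaptations of those used in Lemma \ref{lemma:L_tail_control}.
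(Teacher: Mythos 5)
Your argument is correct in substance, but it follows a genuinely different route from the paper, whose entire proof is the identity $H(\tT)=\tmu(\sX)-D(p_{\tT}\|p_{\tT}^*)$ followed by a verbatim re-run of the proof of Lemma \ref{lemma:L_tail_control}, with Lemma \ref{lemma:tmu_closeto4} replacing Lemma \ref{lemma:mean_closeto2}. The key point there is that the argument is carried out at the level of the \emph{type} distribution $p_{\tT}$ on pairs $(\tLr,\tLa)$, whose reference measure is exactly $2^{-(l_1+l_2)}$ — geometric in the total length with no polynomial prefactor — so the splitting of Lemma \ref{lemma:L_tail_control} (the set $\I=\{(l_1,l_2):\,l_1+l_2\ge\ell,\ p_{\tT}(l_1,l_2)\ge\alpha^{-(l_1+l_2)}\}$ with $\alpha=e^{3/5}$) applies unchanged and bounds $\sum_{l_1+l_2\ge\ell}(l_1+l_2)\,p_{\tT}(l_1,l_2)=\sum_{l\ge\ell}l\,p_{\tL}(l)$ directly. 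You instead marginalize to $\tL$ by data processing, face the reference $p_{\tL}^*(l)=(l-1)2^{-l}$, and tame the linear weight with a chain-rule split (tail probability plus conditional expectation), each piece controlled variationally; the steps check out, the cancellation of the $\ell$ factor by the normalization $p^*=2\ell\,2^{-\ell}$ is correct, and with a dichotomy on $p$ (threshold chosen so the estimates are uniform in $\beta>1/2$, which the lemma's uniform $d_0$ and constant $40$ require) the constants come out around $20d^\beta$, comfortably within $40d^\beta$. Two remarks: first, the obstacle motivating your two-stage argument is overstated — even after marginalizing, a single tilted test function with any $\lambda$ strictly below $(\ln 2)/2$ gives $\sum_{l\ge\ell}(l-1)2^{-l}e^{\lambda l}=O\big(\ell\, d^{2\beta(1-\lambda/\ln 2)}\big)=o(d^\beta)$, so the polynomial factor is harmless and no split is needed; second, the paper's choice of staying with $p_{\tT}$ avoids the issue altogether. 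What your route buys is a self-contained information-theoretic template (divergence bound, chain rule, Gibbs/Donsker--Varadhan) that does not re-run the combinatorial splitting; what the paper's route buys is brevity, since everything reduces to a lemma already proved.
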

Let $q_{\tL}(\cdot)$ the distribution of super-run lengths in $\sY$, and $\tmu(\sY)$ denote the mean length of a super-run in $\sY$.

\begin{lemma}
There exists $d_0>0$
such that the following occurs: Consider any  $\gamma>1/2$, and define
$\ell \equiv \lfloor 2\gamma \log (1/d) \rfloor$. For all $d<d_0$,
if $H(\sX) \geq 1 - d^{0.6}$ and  $H(\sY) \geq 1 - d^{\gamma}$, we
have
\begin{align*}
\sum_{l=\ell}^\infty\  l p_{\tL}(l) \ &\leq 80 d^{\gamma} \, .
\end{align*}
Note that $p_{\tL}$ refers to the super-run
length distribution of $\sX$, not $\sY$.
\label{lemma:tL_tail_control_fromHy}
\end{lemma}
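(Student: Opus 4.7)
The proof parallels that of Lemma~\ref{lemma:pL_tail_control_fromHy}, adapted from runs to super-runs: we transfer a tail bound on $q_{\tL}$ (the super-run length distribution in $\sY$) back to a tail bound on $p_{\tL}$, by exploiting the fact that a typical long $\sX$-super-run remains contained in a single long $\sY$-super-run when no deletions land inside it. First I would apply Lemma~\ref{lemma:tL_tail_bound} to $\sY$, valid because $H(\sY)\geq 1-d^\gamma$, to obtain $\sum_{l\geq\ell} l\, q_{\tL}(l)\leq 40\,d^\gamma$; and I would invoke Lemma~\ref{lemma:tmu_closeto4} on both $\sX$ (with $\beta=0.6$, since $H(\sX)\geq 1-d^{0.6}$) and $\sY$ (with $\beta=\gamma$) to obtain $\tmu(\sX),\tmu(\sY)=4+o(1)$ as $d\to 0$.

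The core structural step is the following claim: if $X_i$ lies in the interior of a length-$(\ell+2)$ window $W$ contained in the $\sX$-super-run of $X_i$, and no bit of $W$ is deleted by the channel, then the surviving image $Y_j$ of $X_i$ lies in a $\sY$-super-run of length $\geq \ell$. To see this, note that $W$'s bits are preserved as a contiguous block $W_{\sY}$ in $\sY$; any length-$\geq 2$ run inside $W_{\sY}$ can come only from (part of) the super-run's first run in $\sX$, because (a)~the alternating portion of an $\sX$-super-run consists exclusively of length-$1$ runs, and (b)~survival of the two buffer bits at the ends of $W$ precludes deletion-induced merges from creating spurious length-$\geq 2$ runs at either boundary of $W_{\sY}$. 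Hence at most one $\sY$-super-run boundary can fall inside $W_{\sY}$ (at its very start, if the first run of $\sX$'s super-run begins there), and so the length-$\ell$ interior of $W_{\sY}$ sits inside a single $\sY$-super-run.

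With this claim in hand, I would compare bit densities on both sides of the channel. For any $X_i$ in an $\sX$-super-run $S$ with $|S|=l\geq\ell+2$ and position in $[2,l-1]$, some length-$(\ell+2)$ window $W\subseteq S$ places $X_i$ in its interior, and this $W$ is deletion-free with probability $(1-d)^{\ell+2}$. Each well-embedded $X_i$ maps to a distinct $\sY$-bit lying in a $\sY$-super-run of length $\geq\ell$, whence, by size-biased/Palm counting of such bits,
\[
(1-d)^{\ell+2}\,\tmu(\sX)^{-1}\sum_{l\geq \ell+2}(l-2)\,p_{\tL}(l)\;\leq\;(1-d)\,\tmu(\sY)^{-1}\sum_{l\geq \ell} l\,q_{\tL}(l)\;\leq\;40\,d^\gamma(1-d)/\tmu(\sY).
\]
Since $\tmu(\sX)/\tmu(\sY)=1+o(1)$ and $(1-d)^{-(\ell+1)}=1+o(1)$ (because $d\log(1/d)\to 0$), this yields $\sum_{l\geq \ell+2} l\,p_{\tL}(l)\leq(40+o(1))d^\gamma$. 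The two residual terms at $l\in\{\ell,\ell+1\}$ are handled by the analogous density argument counting super-runs rather than bits, which gives $\sum_{l\geq\ell}p_{\tL}(l)=O(d^\gamma/\ell)$ and hence $\ell\,p_{\tL}(\ell)+(\ell+1)\,p_{\tL}(\ell+1)=O(d^\gamma)$. Combining the two pieces gives $\sum_{l\geq\ell} l\,p_{\tL}(l)\leq 80\,d^\gamma$ for $d$ sufficiently small.

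The principal obstacle lies in verifying the structural claim, i.e., controlling how $\sY$-super-run boundaries respond to deletions around the window $W$. The subtlety is that a single deletion just outside $W$ can fuse two previously distinct length-$1$ alternating bits into a length-$2$ run, thereby starting a new $\sY$-super-run and potentially truncating the one containing $Y_j$; the two buffer bits of $W$ being preserved is precisely what rules out this boundary-merge phenomenon.
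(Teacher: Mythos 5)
Your proposal is correct and follows essentially the same route as the paper: compare the density of bits lying in long super-runs of $\sX$ and of $\sY$ (a deletion-free window inside a long $\sX$-super-run forces its surviving bits into a single long $\sY$-super-run), then invoke Lemma \ref{lemma:tL_tail_bound} applied to $\sY$ and the bounds on $\tmu(\sX),\tmu(\sY)$ from Lemma \ref{lemma:tmu_closeto4}. Your buffer-bit device is in fact a more careful treatment of the right-boundary merge than the paper's own argument (which implicitly tolerates the last window bit being split into a new $\sY$-super-run); just note that the buffer does not \emph{preclude} that merge, it only confines the induced super-run boundary to the buffer position, which is all the claim needs, and the residual $l\in\{\ell,\ell+1\}$ terms require only the routine constant-chasing the paper also performs.
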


\begin{coro}
There exists $d_0>0$
such that the following occurs: Consider any positive integer $k$, any $\gamma>1/2$, and define
$\ell \equiv \lfloor 2\gamma \log (1/d) \rfloor$. For all $d<d_0$,
if $H(\sX) \geq 1 - d^{0.6}$ and  $H(\sY) \geq 1 - d^{\gamma}$, we have
\begin{align*}
\sum_{l_1+ \ldots +l_k \geq k\ell}^\infty\  (l_1+ \ldots + l_k) p_{\tL(k)}(l_1, \ldots, l_k) \ &\leq 80 k^2 d^{\gamma}\, .
\end{align*}
\label{coro:tLk_tail_control_fromHy_STRONG}
\end{coro}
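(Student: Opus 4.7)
My plan is to follow the same template used to derive the analogous run-based statement Corollary \ref{coro:Lk_tail_control} from Lemma \ref{lemma:L_tail_control}: lift the single super-run tail bound of Lemma \ref{lemma:tL_tail_control_fromHy} to blocks of $k$ consecutive super-runs by means of a union bound, and control the resulting cross terms by a truncation argument. The hypotheses of the corollary are exactly those of Lemma \ref{lemma:tL_tail_control_fromHy}, which (for $d$ small enough) supplies
\begin{align*}
\sum_{l\ge \ell} l\, p_{\tL}(l) \,\le\, 80\, d^{\gamma}
\end{align*}
and, as an immediate consequence obtained by dropping the weight $l \ge \ell$, the probability bound $\prob_{s1}(\tL \ge \ell) \le (80/\ell)\, d^{\gamma}$.

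Let $S = \tL_1 + \cdots + \tL_k$ denote the total length of $k$ consecutive super-runs under the super-run Palm measure $\prob_{s1}$. The event $\{S \ge k\ell\}$ forces $\tL_i \ge \ell$ for at least one index $i$, so $\ind(S \ge k\ell) \le \sum_{i=1}^k \ind(\tL_i \ge \ell)$, giving
\begin{align*}
\E_{s1}[S\, \ind(S \ge k\ell)] \,\le\, \sum_{i=1}^k \E_{s1}[S\, \ind(\tL_i \ge \ell)] \,=\, \sum_{i,j=1}^k \E_{s1}[\tL_j\, \ind(\tL_i \ge \ell)].
\end{align*}
It thus suffices to show that each of the $k^2$ expectations on the right is of order $d^{\gamma}$. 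For the diagonal terms $j=i$, cycle-stationarity of the super-run Palm measure together with Lemma \ref{lemma:tL_tail_control_fromHy} gives $\E_{s1}[\tL_i\, \ind(\tL_i \ge \ell)] = \sum_{l\ge\ell} l\, p_{\tL}(l) \le 80\, d^{\gamma}$. For the off-diagonal terms $j \ne i$, I would split $\tL_j = \tL_j\, \ind(\tL_j < \ell) + \tL_j\, \ind(\tL_j \ge \ell)$: the first piece contributes at most $\ell\, \prob_{s1}(\tL_i \ge \ell) \le \ell \cdot (80/\ell)\, d^{\gamma} = 80\, d^{\gamma}$, and the second piece contributes at most $\E_{s1}[\tL_j\, \ind(\tL_j \ge \ell)] \le 80\, d^{\gamma}$, again by cycle-stationarity and Lemma \ref{lemma:tL_tail_control_fromHy}. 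Summing over $(i,j)$ yields the desired bound of order $k^2\, d^{\gamma}$ (with some loss in the explicit constant relative to the $80$ in the statement, but the asymptotic form is correct).

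The main subtlety is handling the off-diagonal terms without picking up a spurious $\log(1/d)$ factor, since $\ell = \Theta(\log(1/d))$. This is precisely where the \emph{dual} use of Lemma \ref{lemma:tL_tail_control_fromHy} matters: the factor of $\ell$ introduced when bounding $\tL_j\, \ind(\tL_j < \ell) \le \ell$ in the small-$\tL_j$ part is exactly cancelled by the factor $1/\ell$ in the probability tail bound $\prob_{s1}(\tL \ge \ell) \le (80/\ell)\, d^{\gamma}$, which itself comes from dividing the total-mass bound $\sum_{l \ge \ell} l\, p_{\tL}(l) \le 80\, d^{\gamma}$ by $\ell$. This cancellation is what keeps every summand at the clean scale $d^{\gamma}$ and therefore yields the required overall $k^2 d^{\gamma}$ tail bound for $k$-blocks of super-runs.
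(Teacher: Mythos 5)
Your argument is sound and delivers the right order $O(k^2 d^{\gamma})$, but it takes a slightly different and more laborious route than the paper, and as written it does not quite reach the stated constant. The paper proves this corollary exactly as it proves Corollary \ref{coro:Lk_tail_control}: on the event $\tL_1+\cdots+\tL_k\ge k\ell$ the largest summand, say $\tL_i$, satisfies $\tL_i\ge\ell$ and also $\tL_1+\cdots+\tL_k\le k\tL_i$, so
\begin{align*}
\sum_{l_1+\cdots+l_k\ge k\ell}(l_1+\cdots+l_k)\,p_{\tL(k)}(l_1,\ldots,l_k)
\;\le\; \sum_{i=1}^k\sum_{l_i\ge\ell} k\, l_i\, p_{\tL}(l_i)
\;=\; k^2\sum_{l\ge\ell} l\,p_{\tL}(l)\;\le\;80\,k^2 d^{\gamma},
\end{align*}
using only the marginal bound of Lemma \ref{lemma:tL_tail_control_fromHy}; there are no cross terms and no truncation, and the constant $80k^2$ comes out directly. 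Your decomposition $\E_{s1}[S\,\ind(S\ge k\ell)]\le\sum_{i,j}\E_{s1}[\tL_j\,\ind(\tL_i\ge\ell)]$ together with the split of $\tL_j$ at $\ell$ (and the tail-probability bound $\prob_{s1}(\tL\ge\ell)\le 80\,d^{\gamma}/\ell$, which correctly cancels the factor $\ell$) is valid, and the cycle-stationarity fact you invoke (each $\tL_j$ has marginal $p_{\tL}$ under $\prob_{s1}$) is the same one the paper uses implicitly. However, each off-diagonal term costs you $2\times 80\,d^{\gamma}$, so the final bound is roughly $160\,k^2 d^{\gamma}$ rather than the $80\,k^2 d^{\gamma}$ claimed in the statement; this is harmless for every downstream use (the constant is always absorbed into $d^{-\eps}$ slack), but to prove the corollary as literally stated you should replace the union-bound-plus-truncation step with the ``bound the total by $k$ times the largest summand'' trick above.
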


Proofs of all results stated in Section \ref{subsec:run_charac} above (except the first two) are available in Appendix \ref{app:run_charac}.

\subsection{Rate achieved by a process}
\label{subsec:information_rate}

We make use of an approach similar to that of Kirsch and Drinea \cite{KirschDrinea} to
evaluate $I(\sX)$ for a stationary ergodic process $\sX$ that may
be used to generate an input for the deletion channel.
A fundamental difference is that \cite{KirschDrinea} only considers
processes with i.i.d. runs.
Our analysis is instead  general. This enables us to obtain tight upper and lower bounds (up to $O(d^{3-\eps})$),
hence leading to an estimate for the channel capacity.

We depart from the notation of Kirsch and Drinea, retaining $X_i$ for the $i$th bit
of $X$, and using $Y(j)$ to denote the $j$th run in $Y(X^n)$.
Denote by $L_1, L_2, \ldots, L_{m}$ the lengths of runs in $X_1^n$ (where $m$ is a non-decreasing function of $n$ for any fixed $X_1^\infty$).  Let the $i$th run consists of $b(i)$'s, where $b(i) \in \{0,1\}$. For instance, if the first run consists of $0$'s, then
$b(i)= i+1 \; (\textup{mod } 2)$.

We use $X(j)$ to denote the
concatenation of runs in $X$ that led to $Y(j)$, with the first run in
$X(j)$ contributing at least one bit (if the run is completely deleted, then it is
part of $X(j-1)\,$). $X(1)$ is an exception. This is made precise in
Table \ref{table:KirschDrinea_runs}, which is essentially the same
as \cite[Figure 1]{KirschDrinea}, barring changes in notation.
We call runs in $X(j)$ the \emph{parent runs} of the run $Y(j)$.

\begin{table}[t]
\begin{center}
{\fontfamily{cmtt}\selectfont
\begin{tabular}{ll}
\hline
1: & Set $X(1)=Y(1)=$the empty string.\\
2: & $j \leftarrow 1$\\
3: & For $i=1$ to $m$ do\\
4: & \hspace{0.4cm} $\sigma \leftarrow b(i)^{L_i}$\\
5: & \hspace{0.4cm} $\omega \leftarrow$ the bits in $Y$ that arise from $i$th run in $X$\\
6: & \hspace{0.9cm} \% $\sigma$ is a (possibly empty) string of all $b(i)$'s.\\
7: & \hspace{0.9cm} \% $Y(j)$ is a (possibly empty) string of all $b(j)$'s.\\
8: & \hspace{0.4cm} If $b(i)=b(j)$ or $|\omega|=0$ then \\
9: &\hspace{0.9cm} \% $\omega$ is contained in the current block $Y(j)$ of $Y$\\
10: &\hspace{0.9cm} $Y(j) \leftarrow Y(j) \omega$\\
11: &\hspace{0.9cm} $X(j) \leftarrow X(j) \sigma$\\
12: & \hspace{0.4cm} Else  \% $\omega$ is a prefix of $Y(j+1)$\\
13: &\hspace{0.9cm} $j \leftarrow j+1$\\
14: &\hspace{0.9cm} $Y(j) \leftarrow Y(j) \omega$\\
15: &\hspace{0.9cm} $X(j) \leftarrow X(j) \sigma$\\
16: & \hspace{0.4cm} End If\\
17: &  End For\\
\hline
\end{tabular}}
\end{center}
\caption{Procedure for generating $Y(1), Y(2), \ldots, Y(M)$ and $X(1), X(2), \ldots, X(M)$ given $X^n$ and $Y(X^n)$ (adapted from \cite[Figure 1]{KirschDrinea}).}\label{table:KirschDrinea_runs}
\end{table}

We define $K(X^n)$ as the vector of $|X(j)|$.
Let the total number of runs in $Y(X^n)$ be $M$.
Thus,
\begin{align*}
Y(X^n)= \,&Y(1)  \ldots Y(M-1) Y(M)\, ,\\
X^n = \,&X(1)  \ldots X(M-1) X(M)\, ,\\
K(X^n) = \,&\left( |X(1)|, \ldots , |X(M-1)|\right)\, .
\end{align*}
Note that $X(j)$ consists of an odd number of runs for $1< j<M$.

We write
\begin{align}
I(X^n; Y(X^n)) = H(Y) - H(Y, K| X^n) + H(K|X^n, Y)\, ,
\label{eq:I_three_terms}
\end{align}
which is analogous to the identity $I(X^n; Y(X^n)) = H(X^n) - H(X^n,
K| Y) + H(K|X^n, Y)$ used in \cite{KirschDrinea},
but more convenient for our proof.

Let $L_{\sY}$ be an integer random variable having the distribution $q_L$, i.e. the distribution
of run length in $\sY$. It is easy to see that
\begin{align*}
\lim_{n \rightarrow \infty} \frac{H(Y(X^n))}{n(1-d)} = H(\sY) \leq \frac{H(L_{\sY})}{\mu(\sY)}
\end{align*}
holds, similar to (\ref{eq:run_hx_upper_bd}).
It turns out that this suffices for our upper bound
(cf. Lemma \ref{lemma:hy_is_large}).

Consider the second term in Eq.~\eqref{eq:I_three_terms}.
Let $D^n$ denote the $n$-bit binary vector that indicates which
bit locations in $X^n$ have suffered deletions. We have
\begin{align}
H(Y, K| X^n) &= H(D^n|X^n) - H(D^n|X^n, Y, K) \nonumber\\
    &= n h(d) - H(D^n|X^n, Y, K)\, .
\label{eq:hYKgivenX_split}
\end{align}
We study $H(D^n|X^n, Y, K)$ by constructing an appropriate modified deletion process in Section \ref{subsec:modified_deletion}

Consider the third term in Eq.\eqref{eq:I_three_terms}. From \cite{KirschDrinea}, we know that
\begin{align*}
\lim_{n \rightarrow \infty} \frac{H(K|X^n, Y)}{n} = \frac{\lim_{n \rightarrow \infty} H(\,|X(2)| \ | X(2) \ldots X(M), Y(2) \ldots Y(M))}{\E [|X(2)|]}\, .
\end{align*}
Here $X(2) \ldots X(M)$ denotes the string obtained by concatenating
$X(2)$, \dots, $X(M)$, without separation marks, and analogously for
$Y(2) \ldots Y(M)$.
Roughly, single deletions do not lead to ambiguity in $|X(2)|$ if $X(2) \ldots $ and $Y(2) \ldots$ are known. Thus,
this term is $O(d^2)$. It turns out we can we can get a good estimate for this term by computing it
for the i.i.d. Bernoulli($1/2$) case.

\begin{lemma}
\label{lemma:K_conditional_entropy}
For any $\eps>0$, there exists $d_0\equiv d_0(\eps)>0$, and $\const < \infty$ such that for all $d<d_0$ the following occurs: Consider any
$\sX \in \St_{\lfloor 1/d \rfloor}$ such that $H(\sX)>1-d^{1-\eps}$ and $\max\{H(\sX), H(\sY)\}>1-d^{\gamma}$ for some $\gamma \in (1/2,2)$.
Then
\begin{align}
\left|\lim_{n \rightarrow \infty} \, \frac{1}{n} \,
  H(K(X^n)|X^n,Y(X^n)) - d^2 c_4  \right|\le  \const
d^{1+\gamma-\eps/2}\, ,
\label{eq:K_conditional_entropy}
\end{align}
where
\begin{align*}
c_4\equiv &\sum_{j=4}^\infty 2^{-(2+j)}  \,  (j-1 ) (j-3)\, h\!\left( \frac{1}{j-1}\right)\nonumber\\
&+\sum_{i=2}^\infty \sum_{j=4}^\infty  2^{-(i+j+1)}  \, (i+j-1 )(j-3)\, h\!\left( \frac{i+ 1}{i+j-1}\right).
\end{align*}
\end{lemma}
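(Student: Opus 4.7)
The plan is to compute $\lim_{n\to\infty} \frac{1}{n} H(K(X^n) \mid X^n, Y(X^n))$ by localizing its contributions to individual super-runs of $\sX$ (Definition \ref{def:super-run}), leveraging the hypothesis $\sX \in \St_{\lfloor 1/d\rfloor}$ which caps super-run lengths at $\lfloor 1/d\rfloor$ and keeps the expected number of deletions per super-run bounded. The starting observation is that $H(K\mid X^n, Y) = H(D^n \mid X^n, Y) - H(D^n\mid X^n, Y, K)$, so the extra information in $K$ over $(X^n, Y)$ is precisely the macro-level identification of which runs of $X^n$ were completely deleted; ambiguity in $K$ therefore arises exactly when two or more distinct configurations of complete run deletions are consistent with the observed $(X^n, Y)$. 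For small $d$, complete deletions of runs of length $\ge 2$ have probability $O(d^2)$ per run and can be absorbed into the error, so the leading $O(d^2)$ contribution comes from patterns involving complete deletions of length-$1$ runs that lie in the alternating tails of super-runs.

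The first step is a localization argument. A super-run of length at most $\lfloor 1/d\rfloor$ sees $O(1)$ deletions in expectation, so the conditional posterior on deletion patterns factorizes across super-runs up to an error $O(d^3 \log(1/d))$ per bit: the logarithm accounts for the $O(\log(1/d))$ entropy of a super-run's deletions and the $O(d^3)$ probability of $3$-or-more deletions in one super-run. Contributions from super-runs whose length approaches the $\lfloor 1/d\rfloor$ cutoff are controlled by Lemma \ref{lemma:tL_tail_control_fromHy} and Corollary \ref{coro:tLk_tail_control_fromHy_STRONG}, exploiting the exponential tail of $q_L$ given by Lemma \ref{lemma:q_exp_decay_in_St1byd}. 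After this reduction, $\frac{1}{n} H(K \mid X^n, Y)$ equals $\tmu(\sX)^{-1}$ times the expected per-super-run $K$-entropy under the Palm measure, up to the target error.

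The second step enumerates, for each super-run type $\tT = (\lr, \la)$, the two-deletion configurations that create $K$-ambiguity and computes the resulting conditional entropy of $K$. A single deletion never creates $K$-ambiguity: a deletion inside the first run merely shortens it, and a complete deletion of a length-$1$ run produces a distinctive merger whose location in the super-run is determined by $Y$. Two deletions can, however, be \emph{shifted} so as to produce the same output with a different assignment of ``which length-$1$ run was completely deleted''. A case split produces two families: $\lr = 1$, with both deletions in the alternating tail, yields the first sum in $c_4$, where $j$ is identifiable as the total super-run length, the factor $(j-1)(j-3)$ counts the ordered pairs of ``shiftable'' deletion positions, and $h(1/(j-1))$ is the binary entropy of the posterior weights on the two competing reconstructions; $\lr \ge 2$, with one deletion in the first run and one in the adjacent length-$1$ run, yields the second sum, where $\lr = i+1$ and $\la = j-1$, and the posterior weights $\tfrac{i+1}{i+j-1}$ and $\tfrac{j-2}{i+j-1}$ arise from counting deletion positions in the first run versus the tail. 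Weighting these per-super-run entropies against the Bernoulli$(1/2)$ Palm distribution $p_{\tT}^*$ and normalising by $\tmu(\sX^*)$ reproduces exactly $d^2 c_4$.

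The third step controls errors. The replacement of $p_{\tT}$ by $p_{\tT}^*$ uses the super-run analog of Lemma \ref{lemma:xy_similar_runs_gammaSTRONG} when $H(\sY) > 1 - d^\gamma$, or of Lemma \ref{lemma:xy_similar_runs_kSTRONG} when $H(\sX) > 1 - d^\gamma$ (deriving the super-run version by the same argument used for runs); both give a pointwise bound $|p_{\tT} - p_{\tT}^*| = O(d^{\gamma/2 - \eps/4})$ on super-runs of length $O(\log(1/d))$, with tail weight $O(d^\gamma)$ outside. Propagated through the $d^2$-weighted sum of Step 2, this yields an error of $O(d^{2 + \gamma/2 - \eps/4})$, which for $\gamma \le 2$ is absorbed in $O(d^{1+\gamma - \eps/2})$ as required. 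The main obstacle will be the combinatorial case analysis of Step 2: correctly identifying which two-deletion configurations leave $K$ ambiguous, and showing that the posterior has the specific two-valued structure whose binary entropy appears in $c_4$ (rather than a larger uniform-$\log$ term), requires delicate bookkeeping of how deletions in the first run of length $\lr$ interact with deletions in the alternating tail to produce indistinguishable outputs. A secondary challenge is to rigorously justify the factorization of the posterior over super-runs given only that $\sX$ is close to Bernoulli$(1/2)$, so that independence across super-runs cannot be invoked directly and must instead be derived from the closeness bounds in Section \ref{subsec:run_charac}.
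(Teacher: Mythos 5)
Your overall strategy (localize via super-runs by discarding three-or-more-deletion events, enumerate the two-deletion patterns that leave $K$ ambiguous, evaluate against the Bernoulli$(1/2)$ statistics, and control tails with Lemma \ref{lemma:tL_tail_control_fromHy} and Lemma \ref{lemma:q_exp_decay_in_St1byd}) is the same general route the paper takes, but the combinatorial core of your Step 2 --- which you yourself flag as the main obstacle --- is not just unfinished, it is structurally wrong, and as written it would not reproduce $c_4$. The ambiguity in $K$ is \emph{not} a function of the type $\tT=(\lr,\la)$ of a single super-run. In the paper's analysis (Lemma \ref{lemma:Kmod_conditional_entropy_intermsof_pLk}), the ambiguous event is a binary choice between $\bX(j)=R_P$ and $\bX(j)=(R_P,R_{P+1},R_{P+2})$, and the competing two-deletion patterns are: one deletion in $R_P$ together with the adjacent length-$1$ run deleted (only if $L_P>1$); two \emph{adjacent} trailing length-$1$ runs of the super-run deleted (possible for every $\lr$, not only $\lr=1$); and, crucially, the \emph{last} trailing length-$1$ run deleted together with one deletion in the first run of the \emph{next} super-run. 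Consequently the posterior is $h\!\left(\frac{l_0\ind(l_0>1)+1}{l_0\ind(l_0>1)+k-1+l_{k+1}}\right)$, which mixes the first-run length $l_0$, the number $k$ of trailing $1$-runs, and the leading run length $l_{k+1}\ge 2$ of the following super-run; the factor $(j-3)$ in $c_4$ arises from summing over the splits $j=k+l_{k+1}$ with $k\ge 2$, $l_{k+1}\ge 2$ (i.e.\ over the joint law $p_{L(k+2)}(l_0,1,\dots,1,l_{k+1})$), not from counting ``ordered pairs of shiftable deletion positions'' inside one super-run. Your identifications ($j$ equal to the total super-run length in the first family; $\lr=i+1$, $\la=j-1$ with weights $\frac{i+1}{i+j-1}$ versus $\frac{j-2}{i+j-1}$ read off as ``first run versus tail'' in the second) are inconsistent with this, you omit the within-tail family when $\lr\ge 2$, and you omit the cross-boundary family entirely; a per-super-run functional of $\tT$ alone cannot yield $c_4$.

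The localization step is also asserted rather than proved, and the difficulty you defer (``factorization of the posterior over super-runs'') is exactly what the paper's construction is designed to avoid needing. The paper does not factorize any posterior: it defines a perturbed deletion process $\bsD$ that reverses deletions whenever \emph{three consecutive super-runs} $(S_i,S_{i+1},S_{i+2})$ carry three or more deletions in total (a window of one super-run, as in your sketch, would not suffice, precisely because the ambiguous patterns straddle the boundary into the next super-run), then uses the exact chain rule $H(\bK|X^n,\bY)=\sum_j H(|\bX(j)|\,|\,\bX(j)\dots\bX(M),\bY(j)\dots\bY(M))$ so that each term is at most a binary ambiguity, and finally transfers back from $(\bK,\bY)$ to $(K,Y)$ via Fact \ref{fact:U_V_W} with the auxiliary processes $(\sU,\bsZ)$, bounding the reversal probability by $\bz\le 27 d^3\E[\tL^3]\le 27 d^{1+\gamma}$ (using $\tL\le 1/d$, Lemma \ref{lemma:tL_tail_control_fromHy}, and Proposition \ref{propo:product_of_increasing}). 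Your error budget for the Bernoulli$(1/2)$ replacement, $O(d^{2+\gamma/2-\eps/4})\le O(d^{1+\gamma-\eps/2})$ for $\gamma\le 2$, matches Corollary \ref{coro:Kmod_conditional_entropy_estimate}, but to make the proof work you would need both the three-super-run perturbation (or an equivalent device replacing your unproved factorization) and the corrected case enumeration above.
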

Note that with $\gamma=2-\eps/2$, we obtain $|\delta| \leq \const d^{3-\eps}$.

The proof of Lemma \ref{lemma:K_conditional_entropy} is quite technical and uses a modified deletion process (cf. Section \ref{subsec:modified_deletion}). We defer it to Appendix \ref{app:K_conditional_entropy_proof}. 

\begin{lemma}
\label{lemma:hy_upper_bound}
For any $\eps >0$, there exists $d_0 \equiv d_0(\eps)>0$ such that if $H(\sY) \geq 1 - d^{2-\eps/2}$, then
\begin{align*}
H(\sY) \leq 1 - \frac{1}{2} \sum_{l=1}^\infty q_L(l) \big ( \log q_L(l) + l \big) + d^{3-\eps}\, ,
\end{align*}
for all $d < d_0$.
\end{lemma}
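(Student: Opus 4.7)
The plan is to reduce the claim to the general entropy-rate bound $H(\sY)\le H(L_\sY)/\mu(\sY)$ (the analogue of Eq.~(\ref{eq:run_hx_upper_bd}) applied to the stationary ergodic process $\sY$, which is stationary and ergodic by Proposition~\ref{propo:Y_stat_ergodic}) and then verify an algebraic two-factor inequality using the hypothesis on $H(\sY)$ together with Lemma~\ref{lemma:mean_closeto2} applied to $\sY$.

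Concretely, first I would rewrite the right-hand side of the target using $\sum_l q_L(l) l = \mu(\sY)$ and $-\sum_l q_L(l)\log q_L(l) = H(L_\sY)$ as
\begin{align*}
1 - \tfrac{1}{2}\sum_l q_L(l)\bigl(\log q_L(l) + l\bigr) + d^{3-\eps}
= 1 + \tfrac{1}{2}H(L_\sY) - \tfrac{1}{2}\mu(\sY) + d^{3-\eps}.
\end{align*}
Writing $u = H(L_\sY)$ and $v = \mu(\sY)$, and invoking $H(\sY)\le u/v$, it suffices to prove $u/v \le 1 + u/2 - v/2 + d^{3-\eps}$. Multiplying through by $2v>0$ and factoring, this is equivalent to the clean product bound $(v-u)(v-2) \le 2v\, d^{3-\eps}$. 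Note also (mirroring Eq.~(\ref{eq:Lentropy})) that $v - u = D(q_L\Vert q_L^*) \ge 0$ for $q_L^*(l) = 2^{-l}$, so the first factor is nonnegative.

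Next I would estimate the two factors separately. For $v-u$, combining $H(\sY)\le u/v$ with the hypothesis $H(\sY)\ge 1 - d^{2-\eps/2}$ gives
\begin{align*}
v-u \;\le\; v\bigl(1 - H(\sY)\bigr) \;\le\; v\, d^{2-\eps/2}.
\end{align*}
For $v-2$, Lemma~\ref{lemma:mean_closeto2} applied to $\sY$ with $\beta = 2 - \eps/2 > 1/2$ (legal for any $\eps < 3$) yields $|v-2| \le 7\, d^{1-\eps/4}$.

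Finally I would split on the sign of $v-2$. If $v \ge 2$, the product is at most $7v\, d^{3-3\eps/4}$, and this is $\le 2v\, d^{3-\eps}$ iff $\tfrac{7}{2} d^{\eps/4} \le 1$, which holds once $d\le (2/7)^{4/\eps}$; this sets $d_0(\eps)$. If $v < 2$, then $v-2 < 0$ while $v-u \ge 0$, so $(v-u)(v-2)\le 0 \le 2v\, d^{3-\eps}$ holds trivially. The lemma is a clean algebraic consequence of the two ingredients (the entropy upper bound via runs, and the closeness of $\mu(\sY)$ to $2$), so I do not anticipate any substantive obstacle; the only subtlety is verifying that the assumption $H(\sY)\ge 1 - d^{2-\eps/2}$ drives both factors simultaneously and gives slack strictly better than $d^{3-\eps}$ in the exponent.
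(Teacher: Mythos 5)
Your proof is correct and follows essentially the same route as the paper: both rest on $H(\sY)\le H(L_{\sY})/\mu(\sY)=1-D(q_L\|q_L^*)/\mu(\sY)$, use the hypothesis to get $D(q_L\|q_L^*)=O(d^{2-\eps/2})$, and use $|\mu(\sY)-2|=O(d^{1-\eps/4})$ to trade $1/\mu(\sY)$ for $1/2$ at a cost of $d^{3-\eps}$, so your factoring $(v-u)(v-2)\le 2v\,d^{3-\eps}$ is just a repackaging of that computation. The only difference is that you control $\mu(\sY)$ by applying Lemma \ref{lemma:mean_closeto2} to $\sY$, using exactly the stated hypothesis, whereas the paper invokes Lemma \ref{lemma:xy_similar_runs_gammaSTRONG}(ii) (which formally also assumes $H(\sX)>1-d^{0.6}$); your choice is, if anything, cleaner.
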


The proof of this lemma is fairly straightforward.

\begin{proof}[Proof of Lemma \ref{lemma:hy_upper_bound}]
An explicit calculation yields $H(q_L) = \mu(\sY) - D(q_L || q_L^*)$ where $q_L^*$ is the run length distribution corresponding to the i.i.d. Bernoulli$(1/2)$ half process (cf. proof of Lemma \ref{lemma:L_TV}). We know $H(\sY) \leq H(q_L)/\mu(\sY)$. It follows that
\begin{align}
H(\sY) \leq 1 - D(q_L || q_L^*)/\mu(\sY)\, .
\label{eq:hy_upperbd_1minusDbymu}
\end{align}
Using Lemma \ref{lemma:xy_similar_runs_gammaSTRONG}(ii), we deduce that
\begin{align*}
\left |\frac{1}{\mu(\sY)} - \frac{1}{2} \right |  \leq \frac{1}{3}\,d^{1-\eps/2}\, ,
\end{align*}
 and, in particular, $\mu(\sY) <3$  for small $d$. Hence, substituting
 in Eq.~(\ref{eq:hy_upperbd_1minusDbymu}) and using the lower bound  $H(\sY) \geq 1 - d^{2-\eps/2}$
we  have $ D(q_L || q_L^*) < 3 d^{2-\eps/2}$. Explicit calculation gives $D(q_L || q_L^*)=
 \sum_{l=1}^\infty q_L(l) \big ( \log q_L(l) + l)$. The result follows by plugging into Eq.~\eqref{eq:hy_upperbd_1minusDbymu}.
\end{proof}

%
%
\subsection{A modified deletion process}
\label{subsec:modified_deletion}

We want to get a handle on the term $H(D^n|X^n,Y,K)$. The main
difficulty in achieving this is that
a fixed run in $Y$ can arise in ways from parent runs, via a countable infinity of different deletion `patterns'. For example, consider that a run in $Y$ may have \emph{any} odd number of parent runs. Moreover, a countable infinity of these deletion patterns `contribute' to $H(D^n|X^n,Y,K)$.

However, we expect that deletions are typically well separated at small deletion probabilities, and as a result, there are only a few dominant `types' of deletion patterns that influence the leading order terms $H(D^n|X^n,Y,K)$.
Deletions that `act' in isolation from other deletions should contribute an order $d$ term: for instance a positive fraction of runs in $X^n$ should have a length $4$, and with probability of order $d$, they should shrink to runs of length $3$ in $Y$ due to one deletion. Each time this occurs, there are four (equally likely) candidate positions at which the one deletion occurred, contributing $\log(4)$ to $H(D^n|X^n,Y,K)$. Similarly, pairs of `nearby' deletions (for instance in the same run of $X^n$) should contribute a term of order $d^2$. We should be able to ignore instances of more than two deletions occurring in close proximity, since (intuitively) they should have a contribution  of $O(d^3)$ on $H(D^n|X^n,Y,K)$.

We formalize this
intuition by constructing a suitable \emph{modified deletion process} that allows us to focus on the dominant deletion patterns in our estimate of this term. We bound the error in our estimate due to our modification of the deletion process, leading to an estimate of $H(D^n|X^n,Y,K)$ that is exact up to order $d^2$.

We restrict attention to $\sX \in \St_{\lfloor 1/d \rfloor}$. Denote by $R_j$ the $j$th run in $\sX$ (where the run including bit 1 is labeled $R_1$).
$R_j$ has length $L_j$. Recall that the deletion process $\sD$ is an i.i.d. Bernoulli$(d)$ process, independent
of $\sX$, with $D_1^n$ being the $n$-bit vector that contains a $1$ if
and only if the corresponding bit in $X^n$ is deleted by the channel $W_n$.
We define an auxiliary sequence of channels
$\widehat{W}_n$ whose output --denoted by $\widehat{Y}(X^n)$--
is obtained by modifying the deletion channel output: $\widehat{Y}(X^n)$ contains all bits present in $Y(X^n)$ and some of the deleted bits in addition. Specifically,  whenever
there are \emph{three} or more deletions in a single run $R_i$
under $\sD$, the run $R_i$ suffers no deletions
in $\widehat{Y}(X^n)$.

Formally, we construct this sequence of channels
when the input is a stationary process $\sX$ as follows.
For all integers $i$, define:
\begin{enumerate}[]
\item $\sZ^{i} \equiv$
Binary process that is zero throughout except if $R_i $ contains at
$3$ or more deletions, in which case $Z^{a,i}_l=1$ if and only if $X_l \in R_i$ and $D_l=1$.
\end{enumerate}

Define
\begin{align*}
\mathbb{Z} = \sum_{i=-\infty}^\infty \sZ^{i}\, ,
\end{align*}
where $\sum$ here denotes bitwise OR.
Finally, define $\widehat{\sD}(\sD, \sX) \equiv \sD \oplus \sZ $ (where $\oplus$ is componentwise
sum modulo $2$).
The output of the channel $\hW_n$ is simply defined by
deleting from $X^n$ those bits whose positions correspond to $1$s in
$\hsD$. We define $\widehat{K}(X^n)$ for the modified deletion process
in the same way as $K(X^n)$.
The sequence of channels $W_n$ are defined by $\sD$, and
the coupled sequence of channels $\widehat{W}_n$ are defined
by $\sD$. We emphasize that $\widehat{\sD}$ is a function of
$(\sX,\sD)$.

Note that if $D_l = 0$
then $Z_l=0$ and hence $\widehat{D}_l =0$. Thus $\widehat{\sD}$ is obtained by flipping the
$1$s in $\sD$ that also correspond to $1$s in $\sZ$. If $Z_i =1$, i.e. $D_i=1, \widehat{D}_i=0$, we will say that a deletion is \emph{reversed} at position $i$.
It is not hard to see that the process $\sZ$ is  stationary.
(In fact $(\sX,\sD,\sZ,\hsD)$ are jointly stationary.) Define $z \equiv \prob(Z_i=1)$,
where $i$ is arbitrary.


The expected number of
deletions reversed due to a run with length $\ell$ is
bounded above by
\begin{align}
\ell d - \ell d (1-d)^{l-1} - 2 \binom{l}{2} d^2(1-d)^{\ell -2}
\leq \ell (\ell -1)(\ell -2) d^3
\leq \ell^3 d^3\, ,
\label{eq:reversals_per_run}
\end{align}
using $(1-d)^{l-1} \geq 1- (l-1)d$ and $(1-d)^{l-2} \geq 1- (l-2)d$.

We know that each run has length at least $1$.
Thus, we have the following.
\begin{fact}
\label{fact:z_is_d3_EL3}
For arbitrary stationary process $\sX$, the probability $z$ of a reversed deletion at an arbitrary position $i$ is bounded as
$
z \leq d^3\E[L^3]\,
$.
\end{fact}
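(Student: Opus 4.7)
The plan is to reduce to a single position via stationarity, then condition on the length of the run containing that position, and finally invoke the per-run bound \eqref{eq:reversals_per_run} that the paper has already established.

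First, by the joint stationarity of $(\sX,\sD,\sZ)$ noted just before the statement, I would observe that it suffices to compute $z = \prob(Z_0 = 1)$ for any fixed position, say $0$. Let $L_0$ denote the length of the run of $\sX$ containing index $0$. The standard length-biased Palm-measure identity recalled in Section \ref{sec:Preliminaries} gives $\prob(L_0 = l) = l\, p_L(l)/\mu(\sX)$, and conditional on $L_0 = l$ the index $0$ is uniformly distributed over the $l$ positions of this run. Given $L_0 = l$, the deletion variables $D_i$ for $i$ in the run are i.i.d.\ Bernoulli$(d)$, so the expected number of positions $i$ in the run with $Z_i = 1$ is exactly the quantity bounded in \eqref{eq:reversals_per_run} by $l(l-1)(l-2)d^3 \leq l^3 d^3$. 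Combining this with the uniformity of index $0$ within the run gives $\prob(Z_0 = 1 \mid L_0 = l) \leq l^2 d^3$.

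Putting the two ingredients together, the calculation I would do is
\begin{align*}
z \;=\; \sum_{l\ge 1} \prob(L_0 = l)\,\prob(Z_0 = 1 \mid L_0 = l) \;\le\; \sum_{l\ge 1} \frac{l\, p_L(l)}{\mu(\sX)}\, l^2 d^3 \;=\; \frac{d^3}{\mu(\sX)}\,\E[L^3],
\end{align*}
and since run lengths are at least one we have $\mu(\sX) \ge 1$, yielding the claimed $z \leq d^3 \E[L^3]$. I do not anticipate any substantive obstacle: the argument is essentially a one-line application of length-biasing to \eqref{eq:reversals_per_run}, and the only care needed is to keep track of the length-bias so that $\E[L^3]$ (rather than $\E[L^4]$) appears in the bound.
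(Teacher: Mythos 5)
Your proof is correct and follows essentially the same route as the paper: the paper also bounds the expected number of reversals per run by $\ell^3 d^3$ via \eqref{eq:reversals_per_run} and then uses that runs have length at least one (i.e.\ $\mu(\sX)=\E[L]\ge 1$) to pass to a per-bit bound. You have simply made the length-biasing/Palm step explicit, which the paper leaves implicit.
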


Now $\E[L^3] \leq d^{-2} \E[L]$ for $\sX \in \St_{\lfloor 1/d
  \rfloor}$.  Combining with Lemmas \ref{lemma:L_tail_control} and \ref{lemma:pL_tail_control_fromHy}, we obtain:
\begin{fact}
\label{fact:ELcube_is_small}
For any $\eps > 0$, there exists $d_0\equiv d_0(\eps)>0$ and $\const< \infty$ such that for any $d< d_0$  the following occurs: Consider any $\sX \in \St_{\lfloor 1/d \rfloor}$  such that
$\max\{H(\sX),H(\sY)\} \geq 1-d^{\gamma}$. Then we have $\E[L^3] <  \const d^{\gamma -2}$.
\end{fact}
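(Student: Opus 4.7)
The key observation is that for $\sX \in \St_{\lfloor 1/d \rfloor}$, since every super-run has length at most $\lfloor 1/d \rfloor$, every individual run also has length at most $\lfloor 1/d \rfloor$. In particular, the run-length random variable satisfies $L \leq \lfloor 1/d \rfloor$ almost surely under the Palm measure, so $\E[L^3] = \sum_{l=1}^{\lfloor 1/d\rfloor} l^3 p_L(l)$, and throughout this sum we have the uniform bound $l^2 \leq d^{-2}$. This pointwise bound on $l$ is what allows us to trade two powers of $l$ for two powers of $d^{-1}$ and reduce the cubic moment to a first moment, which the existing tail lemmas control.

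With that in hand, the plan is a standard head/tail split at a logarithmic cutoff. Fix $\gamma$ as in the hypothesis (noting $\gamma > 1/2$ so that Lemmas \ref{lemma:L_tail_control} and \ref{lemma:pL_tail_control_fromHy} apply) and set $\ell \equiv \lfloor 2\gamma \log(1/d) \rfloor$. For the head, bound crudely
\[
\sum_{l < \ell} l^3 p_L(l) \;\leq\; \ell^3 \;\leq\; \const'\, (\log(1/d))^3 .
\]
For the tail, apply $l^2 \leq d^{-2}$ to get
\[
\sum_{l \geq \ell} l^3 p_L(l) \;\leq\; d^{-2} \sum_{l \geq \ell} l\, p_L(l) ,
\]
and invoke whichever of the two tail lemmas is available: if $H(\sX) \geq 1 - d^\gamma$, Lemma \ref{lemma:L_tail_control} gives $\sum_{l \geq \ell} l p_L(l) \leq 20\, d^\gamma$; if instead $H(\sY) \geq 1 - d^\gamma$, use Lemma \ref{lemma:pL_tail_control_fromHy} at cutoff $2\ell$ (absorbing the small-$l$ part $\ell \leq l < 2\ell$ into the head estimate, whose order is unchanged) to get $\sum_{l \geq 2\ell} l p_L(l) \leq 80\, d^\gamma$. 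Either way, the tail contributes $O(d^{\gamma-2})$.

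Combining, $\E[L^3] \leq \const'(\log(1/d))^3 + \const'' d^{\gamma-2}$. Since $\gamma < 2$, one has $d^{\gamma-2} \to \infty$ as $d \to 0$ while $(\log(1/d))^3$ grows only polylogarithmically, so the tail term dominates for all sufficiently small $d$, yielding $\E[L^3] \leq \const\, d^{\gamma - 2}$. There is no real obstacle here: the only substantive input beyond the head/tail split is the uniform a.s.\ bound $L \leq 1/d$ supplied by $\sX \in \St_{\lfloor 1/d \rfloor}$, which is precisely what makes the reduction from $\E[L^3]$ to $\E[L\,\mathbb{I}\{L \geq \ell\}]$ lose only a factor $d^{-2}$ rather than something depending on the unbounded support of $L$.
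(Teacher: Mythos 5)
Your proposal is correct and follows essentially the same route as the paper: the paper's (very terse) justification likewise uses the a.s. bound $L\leq \lfloor 1/d\rfloor$ for $\sX\in\St_{\lfloor 1/d\rfloor}$ to trade $l^3$ for $d^{-2}l$ and then invokes Lemmas \ref{lemma:L_tail_control} and \ref{lemma:pL_tail_control_fromHy} to control the first-moment tail, with the head below the logarithmic cutoff being negligible since $\gamma<2$. Your write-up just makes the head/tail split explicit, which is exactly what the paper's one-line argument implicitly requires.
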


Note that $\max\{H(\sX),H(\sY)\} \geq 1-d^{2-\eps/2}$ holds for relevant processes $\sX$ (see Lemma \ref{lemma:hy_is_large}), justifying our assumption above.

The next proposition follows immediately from Facts \ref{fact:z_is_d3_EL3} and \ref{fact:ELcube_is_small}.
\begin{propo}
\label{propo:z_is_small}
For any $\eps > 0$, there exists $d_0\equiv d_0(\eps)>0$ and $\const< \infty$ such that for any $d< d_0$  the following occurs: Consider any $\sX \in \St_{\lfloor 1/d \rfloor}$  such that
$\max\{H(\sX),H(\sY)\} \geq 1-d^{\gamma}$. Then we have $z < \const d^{1+\gamma}$.
\end{propo}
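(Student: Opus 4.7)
The plan is a direct composition of Facts \ref{fact:z_is_d3_EL3} and \ref{fact:ELcube_is_small}, so the proof is essentially a one-line chain of inequalities. Fact \ref{fact:z_is_d3_EL3} gives the unconditional bound $z \leq d^3\, \E[L^3]$ valid for any stationary source (this is where the per-run estimate \eqref{eq:reversals_per_run} on the expected number of reversed deletions gets folded in, together with the length-biased/block-perspective bookkeeping that reduces the expectation to $\E[L^3]$ in the block perspective). Fact \ref{fact:ELcube_is_small}, whose hypotheses coincide exactly with those of the present proposition (namely $\sX \in \St_{\lfloor 1/d \rfloor}$ and $\max\{H(\sX), H(\sY)\} \geq 1 - d^\gamma$, together with $d < d_0(\eps)$), then supplies $\E[L^3] \leq \const\, d^{\gamma - 2}$ for a finite constant $\const$ depending only on $\eps$.

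Substituting the second inequality into the first yields
\begin{equation*}
z \;\leq\; d^3\, \E[L^3] \;\leq\; \const\, d^3 \cdot d^{\gamma - 2} \;=\; \const\, d^{1+\gamma},
\end{equation*}
which is exactly the conclusion, with the $d_0(\eps)$ and $\const$ taken to be those produced by Fact \ref{fact:ELcube_is_small}. The only point that deserves a brief sanity check is that the constant $\const$ in Fact \ref{fact:ELcube_is_small} is uniform in $\gamma$ and in the choice of $\sX$ (depending only on $\eps$), so that the same uniformity is inherited here; this is immediate from how that fact is set up. There is no genuine technical obstacle at this step: all of the substantive work — the expected-number-of-reversals estimate \eqref{eq:reversals_per_run}, the deterministic bound $L \leq \lfloor 1/d \rfloor$ coming from membership in $\St_{\lfloor 1/d \rfloor}$, and the sharp tail control $\sum_{l \geq \ell} l\, p_L(l) = O(d^\gamma)$ from Lemmas \ref{lemma:L_tail_control} and \ref{lemma:pL_tail_control_fromHy} that combine to yield Fact \ref{fact:ELcube_is_small} — has already been done.
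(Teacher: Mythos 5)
Your proof is correct and is exactly the paper's argument: the paper also obtains the proposition by composing Fact \ref{fact:z_is_d3_EL3} ($z \leq d^3\E[L^3]$) with Fact \ref{fact:ELcube_is_small} ($\E[L^3] \leq \const\, d^{\gamma-2}$), yielding $z \leq \const\, d^{1+\gamma}$. Nothing further is needed.
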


We now analyze the modified deletion process with the aim of estimating $H(\widehat{D}^n|X^n, \widehat{Y}, \widehat{K})$. Notice that for any run $R_i$, either all deletions in $R_i$ are reversed (in which case we say that $R_i$ suffers deletion reversal), or none of the deletions are reversed (in which case we say that $R_i$ is unaffected by reversal). It follows that
\begin{align}
H(\widehat{D}^n|X^n, \widehat{Y}, \widehat{K}) &= \sum_{j=1}^{M} H(\widehat{D}(j)|\hX(j), \widehat{Y}(j))\, ,
\label{eq:modified_deletion_entropy}
\end{align}
where $\widehat{D}(j)$ consists of the substring of $\widehat{D}^n$ corresponding to $\hX(j)$.
As before, when we study $H(\widehat{D}^n|X^n, \widehat{Y}, \widehat{K})/n$ in the limit $n\rightarrow \infty$, the terms  corresponding to $j=1$ and $j=M$ can be neglected, and we can perform the calculation by considering the stationary processes $\sX$, $\sY$ and $\sD$.

Recall the definition of the parent runs $\hX(j)$ of a run $\widehat{Y}(j)$ for $j>1$ from Section \ref{subsec:information_rate}.  Consider the possibilities for how many runs $\hX(j)$ contains, and the resultant ambiguity (or not) in the position of deletions (under $\widehat{\sD}$) in the parent run(s):
\begin{enumerate}[]
\item {\bf A single parent run.}\\
Let the parent run be $R_\parent$. The parent run should not
disappear\footnote{We emphasize that we are referring here to
  deletions under $\widehat{\sD}$.}; by definition it should contribute at least one bit to $\widehat{Y}(j)$.
The run $R_{\parent+1}$ should not disappear (else it is also a
parent). $R_\parent$ can suffer $0, 1$ or $2$ deletions (else we have
a deletion pattern not allowed under $\widehat{\sD}$). The cases of $1$ or $2$ deletions lead to ambiguity in the location of deletions.

Note that  if $R_{\parent-1}$ disappears  then $R_{\parent-2}$ also disappears (else $R_{\parent-2}, R_{\parent-1}$ are also parents of $\widehat{Y}(j)$), and so on.

\item {\bf A combination of three parent runs.}\\
Let the parent runs be $R_\parent, R_{\parent +1}$ and $R_{\parent+2}$. We know that $R_\parent$ and $R_{\parent+3}$ did not disappear and $R_{\parent+1}$ has disappeared, by definition of $X(j)$ (cf. Table \ref{table:KirschDrinea_runs}).
If $R_\parent$ and $R_{\parent+2}$
suffer no deletions, this leads to no ambiguity in the location of deletions. Ambiguity can arise in case $R_\parent$ and $R_{\parent+2}$ suffer between one and four deletions in total.

Note that  if $R_{\parent-1}$ disappears  then $R_{\parent-2}$ also disappears, and so on.

\item {\bf A combination of $2k+1$ parent runs, for $k=2,3, \ldots$.}\\
Let the parent runs be $R_\parent, R_{\parent+1}, \ldots, R_{\parent+2k}$. The runs $R_{\parent+1}, R_{\parent+3}, \ldots, R_{\parent+2k-1}$ must disappear and $R_\parent$ does not disappear. The runs $R_\parent, R_{\parent+2}, \ldots, R_{\parent+2k}$ must suffer between one and $2(k+1)$ deletions in total for ambiguity to arise in the location of deletions.
\end{enumerate}

Define
\begin{align*}
p_{L(3)}(\gt1, l_2, l_3) &\equiv \sum_{l_1=2}^\infty p_{L(3)}(l_1, l_2, l_3) \, ,\\
p_{L(3)}(\gt1, l_2, \gt1) &\equiv \sum_{l_1=2}^\infty \; \sum_{l_3=2}^\infty p_{L(3)}(l_1, l_2, l_3)\, ,
\end{align*}
and so on.

The following lemma shows the utility of the modified deletion process. We obtain this result by adding the contributions of the cases enumerated above.
\begin{lemma}
There exists $d_0>0$ such that for any $d< d_0$ the following occurs: Consider any $\sX \in \St_{\lfloor 1/d \rfloor}$. Then
\begin{align}
&\lim_{n\to\infty} \frac{1}{n}H(\widehat{D}^n|X^n, \widehat{Y}, \widehat{K}) = \nonumber\\
&\frac{d}{\mu(\sX)}  \sum_{l=2}^\infty p_{L}(l) \; l \log l   \nonumber\\
&+ \frac{d^2}{\mu(\sX)} \sum_{l=2}^\infty p_L(l)\Big \{ \binom{l}{2} \log \binom{l}{2}
-l^2 \log l \Big \}\nonumber\\
&+\frac{d^2}{\mu(\sX)} \sum_{l=2}^\infty  \Big \{p_{L(3)}(\gt1, l, \gt1) \; l \log l
- p_{L(3)}(1, l, 1) \; l \log l \Big \} \nonumber\\
&+ \frac{d^2}{\mu(\sX)} \left (\sum_{l_0 >1, l_2} \Big \{p_{L(3)}(l_0, 1, l_2)\, (l_0+l_2) \log (l_0+l_2) \Big \}
+ \sum_{1,1,l_2} \Big \{ p_{L(3)}(1, 1, l_2) \,l_2 \log l_2 \Big \} \right )  +\delta\, ,
\label{eq:modified_deletion_entropy_estimate}
\end{align}
where
\begin{align}
-11 d^3 \log(1/d) \E [L^3] \leq \delta \leq  140 d^3 \log(1/d) \E
[L^3]\, .
\label{eq:modified_deletion_estimate_error}
\end{align}

\label{lemma:hatD_givenxy}
\end{lemma}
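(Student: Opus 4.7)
The starting point is the decomposition~\eqref{eq:modified_deletion_entropy}, which splits $H(\widehat{D}^n|X^n,\widehat{Y},\widehat{K})$ as a sum of per-block entropies $H(\widehat{D}(j)|\hX(j),\widehat{Y}(j))$. Because $(\sX,\sD)$ is jointly stationary ergodic and $\widehat{\sD}$ is a shift-covariant function of $(\sX,\sD)$, the process of parent blocks is stationary ergodic. Dividing by $n$ and taking the limit, the left-hand side of~\eqref{eq:modified_deletion_entropy_estimate} equals $\frac{1}{\mu_{\hX}}\,\E\bigl[H(\widehat{D}(1)|\hX(1),\widehat{Y}(1))\bigr]$, where $\mu_{\hX}$ is the mean parent-block length and the expectation is under the Palm measure that anchors a block at index~$1$ (boundary terms at $j=1,M$ are $o(n)$). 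Since an average block contains $1+O(d)$ parent runs, $\mu_{\hX}=\mu(\sX)+O(d)$, and the difference $\tfrac{1}{\mu(\sX)}-\tfrac{1}{\mu_{\hX}}$ will contribute only at order $d^2$ when multiplied into the $O(d)$-leading entropy.

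I would then stratify by the structure of $\hX(1)$. For $j>1$ a block has $2k+1$ parent runs $R_0,R_1,\dots,R_{2k}$, alternating in polarity, with the odd-indexed ones fully deleted, the even-indexed same-polarity ones aggregately contributing at least one surviving bit, and---under the modified process---each run carrying at most two un-reversed deletions. Conditional on lengths $(l_0,\dots,l_{2k})$ and on the total number $k'$ of un-reversed same-polarity deletions, every consistent deletion pattern is equally likely: the interior-run deletions are pinned and the $k'$ remaining deletions are uniformly distributed among the available same-polarity positions. Configurations with Palm probability $\Omega(d^2)$ or larger are only: (a) $k=0$ with $k'\in\{1,2\}$; and (b) $k=1$ with $l_1=1$ (cost $d$) and $k'=1$ (cost $\sim d(l_0+l_2)$). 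All remaining configurations---$k=1$ with $l_1\ge 2$, $k\ge 2$, $k'\ge 3$, and configurations involving reversed deletions---are $O(d^3)$ in Palm probability and are deferred to $\delta$.

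For each surviving configuration I would Taylor-expand the relevant factors through $O(d^2)$: the deletion-count probability $\binom{l}{k'}d^{k'}(1-d)^{l-k'}$; the survival probabilities of the runs flanking the block (needed to enforce that $\hX(1)$ has exactly $2k+1$ parents), which are $1-d^L$ with Palm expectation $1-\tfrac12 d+O(d^2)$ by Lemma~\ref{lemma:L_TV}; the factor $d^{l_1}$ from the vanished interior run; and the normalization $1/\mu_{\hX}=1/\mu(\sX)+(1/\mu_{\hX}-1/\mu(\sX))$. Taking Palm expectations against $p_L$ for (a) and $p_{L(3)}$ for (b), the first term of~\eqref{eq:modified_deletion_entropy_estimate} arises from (a) with $k'=1$ at leading order. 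The four $d^2$ corrections then split as: (i) the $k'=2$ subcase of (a), yielding $\binom{l}{2}\log\binom{l}{2}$; (ii) the $-l^2\log l$ correction in the second line, which arises by combining the $-(l-1)d$ correction from $(1-d)^{l-1}$, the two neighbour-survival expansions $(1-\tfrac12 d)^2$, and the normalization shift between $1/\mu_{\hX}$ and $1/\mu(\sX)$; (iii) the $l_0>1$ sub-case of (b), giving $(l_0+l_2)\log(l_0+l_2)$; and (iv) the $l_0=1$ sub-case of (b), in which the naive answer $(1+l_2)\log(1+l_2)$ is replaced by $l_2\log l_2$ because the pseudocode of Table~\ref{table:KirschDrinea_runs} forces that if the single bit of $R_0$ were itself deleted then both $R_0$ and $R_1$ would be absorbed into the previous block and $\hX(1)$ would start at $R_2$, so that deletion position is inconsistent with the assumed structure and must be excluded.

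It remains to bound $\delta$. The three kinds of residual contributions I would collect are: (a) third- and higher-order Taylor remainders in the expansions above, each bounded pointwise by $O(d^3 L^3)$ with entropy contribution at most $\log L_{\max}=O(\log(1/d))$ using $\sX\in\St_{\lfloor 1/d\rfloor}$, which sums to $O(d^3\log(1/d)\,\E[L^3])$; (b) configurations with $k\ge 2$ parent runs or $k=1$ with $l_1=2$, which have Palm probability $O(d^3\E[L^3])$ once any non-zero-entropy additional deletion is included, and entropy at most $O(\log(1/d))$; and (c) events involving reversed deletions, which by Fact~\ref{fact:z_is_d3_EL3} have Palm probability per bit $O(d^3\E[L^3])$ and alter the block's entropy by at most $O(\log(1/d))$. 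Summing all three gives $|\delta|\le O(d^3\log(1/d)\,\E[L^3])$, matching~\eqref{eq:modified_deletion_estimate_error}. The main obstacle is the case enumeration in paragraph~3: tracking which $O(d)$ and $O(d^2)$ sub-contributions map to which of the four displayed terms, especially the $-l^2\log l$ coefficient---which emerges as the sum of three unrelated-looking $O(d)$ corrections---and the $l_0=1$ boundary, which has to be handled without double-counting against the one-parent case.
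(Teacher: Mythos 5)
Your overall architecture matches the paper's: decompose via \eqref{eq:modified_deletion_entropy}, anchor each block's entropy term at a run of $\sX$ under the Palm measure, stratify by the number of parent runs and the number of (un-reversed) deletions, identify the $O(d)$ and $O(d^2)$ configurations, and dump everything else (three-or-more parent-run absorptions beyond the $l_1=1$ case, three-or-more deletions, reversal events) into a remainder bounded by $O\bigl(d^3\log(1/d)\,\E[L^3]\bigr)$ using $L\le \lfloor 1/d\rfloor$. The identification of the $\binom{l}{2}\log\binom{l}{2}$ term, of the two three-parent contributions (including the $l_0=1$ boundary where the entropy is $\log l_2$ rather than $\log(1+l_2)$), and the structure of the error analysis are all in line with the paper's argument.

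However, there is a genuine gap in the step that produces the second and third displayed lines of \eqref{eq:modified_deletion_entropy_estimate}. You propose to handle the flanking-run survival factors by replacing them with their ``Palm expectation $1-\tfrac12 d+O(d^2)$ by Lemma \ref{lemma:L_TV}'' and to recover the $-l^2\log l$ coefficient by combining this with the $(1-d)^{l-1}$ expansion and a normalization shift between $1/\mu_{\hX}$ and $1/\mu(\sX)$. This fails on two counts. First, Lemma \ref{lemma:L_TV} requires $H(\sX)>1-d^\beta$, whereas Lemma \ref{lemma:hatD_givenxy} is asserted for \emph{every} $\sX\in\St_{\lfloor 1/d\rfloor}$ with no entropy hypothesis; the entropy assumptions only enter in Corollary \ref{coro:hatD_givenxy}, so invoking $p_L(1)\approx 1/2$ here conflates the lemma with its corollary. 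Second, and more fundamentally, averaging the neighbour-survival factor independently of the central run length discards exactly the information that the third line of the statement records: a length-$1$ neighbour disappears with probability exactly $d$ while a longer neighbour disappears with probability $O(d^2)$, so the correct first-order correction is $-\bigl(l-1\bigr)d$ minus $d$ times the \emph{conditional} expected number of length-one neighbours given the central length $l$; rewriting that conditional expectation via $p_{L(3)}$ is precisely what yields $-l^2\log l$ together with the terms $p_{L(3)}(\gt1,l,\gt1)\,l\log l - p_{L(3)}(1,l,1)\,l\log l$. No normalization shift is needed or present in the paper: attaching each block's entropy to the first parent run gives density exactly $1/\mu(\sX)$ of candidate runs per bit, so $\mu_{\hX}$ never appears; if you insist on the block-anchored normalization, the $O(d)$ discrepancy $\mu_{\hX}-\mu(\sX)$ depends on $p_L(1)$ and would itself have to be tracked exactly and recombined with the $p_{L(3)}$ terms, which your sketch does not do. As written, your argument would at best prove a version of the identity with the third line replaced by quantities valid only for near-Bernoulli sources, which is not the claimed statement.
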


The proof of Lemma \ref{lemma:hatD_givenxy} is quite technical and is deferred to Appendix \ref{app:hatD_lemma_proof}.

Making use of the estimates of $p_{L(k)}(\cdot)$ derived in Section \ref{subsec:run_charac},
we obtain the following corollary of Lemma \ref{lemma:hatD_givenxy}. It is proved in Appendix \ref{app:hatD_lemma_proof}.

\begin{coro}
For any $\eps > 0$, there exists $d_0\equiv d_0(\eps)>0$ and $\const< \infty$ such that for any $d< d_0$ the following occurs: Consider any $\sX \in \St_{\lfloor 1/d \rfloor}$  such that $H(\sX) \geq 1-d^{1-\eps}$ and
$\max\{H(\sX), H(\sY)\} \geq 1-d^{\gamma}$ for some $\gamma \in (0,2)$. Then
\begin{align*}
&\lim_{n\to\infty} \frac{1}{n}H(\widehat{D}^n|X^n, \widehat{Y}, \widehat{K}) \, = \;
\frac{d}{\mu(\sX)}  \Big \{ \sum_{l=2}^\infty p_{L}(l) \; l \log l \Big \}  + d^2 c_3 +
\xi\, ,
\end{align*}
where
$|\xi| \leq \const d^{1+\gamma-\eps/2}$. Recall that
\begin{align*}
c_3\equiv \frac{1}{2} \left ( -1 + \sum_{l=3}^\infty 2^{-l} \left\{\binom{l}{2} \log \binom{l}{2} - l^2 \log l
+(l-1)(l-3) \log (l-1) + (l-2) \log (l-2) \right \} \,\right) \, .
\end{align*}
\label{coro:hatD_givenxy}
\end{coro}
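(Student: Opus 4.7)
The plan is to start from Lemma \ref{lemma:hatD_givenxy} and, in each of its $O(d^2)$ terms, replace the multi-run distributions $p_{L(k)}$ by their i.i.d.\ Bernoulli$(1/2)$ counterparts $p_{L(k)}^*(l_1,\ldots,l_k)=2^{-\sum l_i}$ and $1/\mu(\sX)$ by $1/2$. After controlling all errors, the resulting explicit constant works out to $c_3$.

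First, the error $\delta$ in Lemma \ref{lemma:hatD_givenxy} is handled using Fact \ref{fact:ELcube_is_small}: under $\max\{H(\sX),H(\sY)\}\ge 1-d^\gamma$ we have $\E[L^3]\le \const\,d^{\gamma-2}$, so $|\delta|=O(d^{1+\gamma}\log(1/d))=O(d^{1+\gamma-\eps/2})$, the logarithm being absorbed into the polynomial slack. For each of the three $d^2$-order sums I would truncate at $l,\,l_1+l_2+l_3\le L_0:=\lfloor 4\log(1/d)\rfloor$; Corollary \ref{coro:Lk_tail_control} (when $H(\sX)\ge 1-d^\gamma$) or Corollary \ref{coro:pLk_tail_control_fromHy_STRONG} (when $H(\sY)\ge 1-d^\gamma$) bounds the tail weight by $\const\,d^\gamma$, so that combined with the $d^2$ prefactor and the $O(L_0\log L_0)$ factor from $l\log l$, the tail error is $O(d^{2+\gamma}\,\mathrm{polylog}(1/d))$, well within budget. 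On the truncated portion I would apply Lemma \ref{lemma:xy_similar_runs_gammaSTRONG}(i) when $H(\sY)\ge 1-d^\gamma$ and Lemma \ref{lemma:L_TVstrong} (with $\beta=\max(\gamma,1-\eps)$) when $H(\sX)\ge 1-d^\gamma$ to replace $p_{L(k)}$ by $p_{L(k)}^*$ at per-point cost $d^{\gamma/2-\eps/4}$; summed over $O(L_0^3)$ triples weighted by $O(L_0\log L_0)$ with the $d^2$ prefactor, the total error is $O(d^{2+\gamma/2-\eps/4}\,\mathrm{polylog}(1/d))$, which is $O(d^{1+\gamma-\eps/2})$ for $\gamma\le 2$. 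The replacement $1/\mu(\sX)\to 1/2$, justified by Lemma \ref{lemma:mean_closeto2} or Lemma \ref{lemma:xy_similar_runs_gammaSTRONG}(ii), contributes at most $O(d^{2+\gamma/2})$, which is even smaller.

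After these substitutions, the second of the three $d^2$ sums vanishes identically (since $p_{L(3)}^*(\gt1,l,\gt1)=p_{L(3)}^*(1,l,1)=2^{-l-2}$), and the first and third reduce to an explicit closed-form sum in $l$. The final step, which I anticipate as the main obstacle, is the purely combinatorial verification that this sum equals $c_3$ as defined in Theorem \ref{thm:main_theorem}. The key manipulations are: substitute $l=l_0+l_2$ in the $p_{L(3)}^*(l_0,1,l_2)$ sum, using $\#\{(l_0,l_2):l_0\ge 2,\,l_2\ge 1,\,l_0+l_2=l\}=l-2$; isolate the boundary $l=2$ contribution of the $\binom{l}{2}\log\binom{l}{2}-l^2\log l$ sum, which produces the $-1$ appearing in $c_3$; and apply index shifts $l\mapsto l-1,\,l\mapsto l-2$ to re-express the various $l\log l$-type weights in the form $(l-1)(l-3)\log(l-1)+(l-2)\log(l-2)$ of the definition of $c_3$. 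This bookkeeping, rather than any analytic subtlety, is the delicate part of the argument.
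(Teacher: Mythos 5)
Your proposal is correct and follows essentially the same route as the paper: start from Lemma \ref{lemma:hatD_givenxy}, dispose of $\delta$ via Fact \ref{fact:ELcube_is_small}, truncate the order-$d^2$ sums at $\lfloor 4\log(1/d)\rfloor$, replace $p_{L(k)}$ by $p_{L(k)}^*$ and $1/\mu(\sX)$ by $1/2$ using Lemma \ref{lemma:xy_similar_runs_gammaSTRONG} (resp.\ Lemma \ref{lemma:L_TVstrong}) plus the tail-control results, and your bookkeeping identifying the resulting constant with $c_3$ (the $l=2$ boundary term producing the $-1$, the count $l-2$ of pairs $(l_0,l_2)$, and the index shifts yielding $(l-1)(l-3)\log(l-1)+(l-2)\log(l-2)$) is exactly right, carried out where the paper only says the remaining terms are ``similarly analyzed''. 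The one small slip is in the tail of the $\binom{l}{2}\log\binom{l}{2}-l^2\log l$ sum: after applying the tail corollaries the leftover factor is of order $l\le\lfloor 1/d\rfloor$ (not $O(L_0\log L_0)$), so that tail error is $O(d^{1+\gamma}\log(1/d))$ rather than $O(d^{2+\gamma}\,\mathrm{polylog}(1/d))$ --- still within the allowed $\const\, d^{1+\gamma-\eps/2}$, exactly as in the paper's own estimate, so the conclusion is unaffected.
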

Note that with $\gamma=2-\eps/2$, we obtain $|\xi| \leq \const d^{3-\eps}$.

We need to show that our estimate for the modified deletion process is also a good estimate for original deletion process. The following simple fact helps us do this:
\begin{fact}
\label{fact:U_V_W}
Suppose $U, \widehat{U}$ and $V$ are random variables with the property that $U$ is a deterministic
function of $\widehat{U}$ and $V$, and also $\widehat{U}$ is a deterministic function of $U$ and $V$. (Denote
this property by $U \xleftrightarrow{V}  \widehat{U}$.) Then
\begin{align*}
|H(U)-H(\widehat{U})| \leq H(V)\, .
\end{align*}
\end{fact}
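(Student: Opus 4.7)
The plan is to prove Fact \ref{fact:U_V_W} by expanding the joint entropy $H(U,\widehat{U},V)$ in two different ways via the chain rule and exploiting the symmetric deterministic relationship $U \xleftrightarrow{V} \widehat{U}$.

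First I would observe that since $\widehat{U}$ is a deterministic function of $(U,V)$, we have $H(\widehat{U}\mid U,V) = 0$, and similarly $H(U\mid \widehat{U},V) = 0$. Applying the chain rule twice to $H(U,\widehat{U},V)$ therefore yields
\begin{align*}
H(U,\widehat{U},V) \;=\; H(U,V) + H(\widehat{U}\mid U,V) \;=\; H(U,V)\,,
\end{align*}
and also $H(U,\widehat{U},V) = H(\widehat{U},V)$. Hence $H(U,V) = H(\widehat{U},V)$.

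Next I would use the chain rule once more, combined with the elementary bound $H(V\mid\,\cdot\,) \leq H(V)$, to write
\begin{align*}
H(U) \;\leq\; H(U,V) \;=\; H(\widehat{U},V) \;=\; H(\widehat{U}) + H(V\mid \widehat{U}) \;\leq\; H(\widehat{U}) + H(V)\,,
\end{align*}
and symmetrically $H(\widehat{U}) \leq H(U) + H(V)$. Combining these two inequalities yields the desired bound $|H(U) - H(\widehat{U})| \leq H(V)$.

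There is essentially no obstacle here: the claim is a standard consequence of the chain rule and the vanishing of conditional entropy for deterministic functions. The only thing to be careful about is ensuring that all the entropies involved are finite so that the chain rule can be applied unambiguously, but this is not a concern in any of the applications in the paper, where $U, \widehat{U}, V$ arise as discrete random variables derived from finite-length bit strings.
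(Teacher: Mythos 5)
Your proof is correct and follows essentially the same route as the paper: the paper's one-line argument is $H(U) \leq H(\widehat{U},V) \leq H(\widehat{U})+H(V)$ (and symmetrically), which is exactly your chain-rule computation with the intermediate step $H(U,V)=H(U,\widehat{U},V)=H(\widehat{U},V)$ spelled out in more detail.
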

\begin{proof}
We have $H(U) \leq H(\widehat{U},V) \leq H(\widehat{U})+H(V)$. Similarly,  $H(\widehat{U})\leq H(U)+H(V)$.
\end{proof}

It is not hard to see that $(X^n, Y, K, D^n)  \xleftrightarrow{Z^n} (X^n, \widehat{Y}, \widehat{K}, \widehat{D}^n)$ and
$(X^n, Y, K) \xleftrightarrow{Z^n} (X^n, \widehat{Y}, \widehat{K})$. Using Fact \ref{fact:U_V_W}, we obtain
\begin{align}
|H(\widehat{D}^n|X^n, \widehat{Y}, \widehat{K}) - H({D}^n|X^n, Y, K)| \leq 2 H(Z^n) \leq 2n h(z)\, .
\label{eq:D_hatD_condentropy_difference}
\end{align}

Combining Eq.~\eqref{eq:D_hatD_condentropy_difference} with Corollary \ref{coro:hatD_givenxy},
we obtain an estimate for the second term in Eq.~\eqref{eq:I_three_terms}.
For future convenience, we form an estimate in terms of  $q_L(\cdot)$ instead of $p_L(\cdot)$, using Lemma \ref{lemma:xy_similar_runs_gammaSTRONG} to make the switch.

\begin{coro}
For any $\eps > 0$, there exists $d_0\equiv d_0(\eps)>0$ and $\const< \infty$ such that for any $d< d_0$  the following occurs: Define $\ell \equiv \lfloor 4 \log (1/d) \rfloor$. Consider any $\sX \in \St_{\lfloor 1/d \rfloor}$  such that $H(\sX) \geq 1-d^{1-\eps}$ and
$\max\{H(\sX), H(\sY)\} \geq 1-d^{2-\eps/2}$. Then
\begin{align*}
\lim_{n \rightarrow \infty} \frac{1}{n}H(Y(X^n), K(X^n)|X^n) =  - \frac{d}{2}  \sum_{l=2}^\ell q_{L}(l) \; l \log l  +\frac{dc_2}{4\ln 2}  \sum_{l=1}^\ell q_L(l) l \nonumber\\
+d \log(1/d) + \frac{d}{\ln 2}\left( 1 - \frac{c_2}{2}\right ) +  d^2 \bigg ( - c_3 - \frac{1}{2 \ln 2} \bigg ) + \delta\, ,
\end{align*}
where $|\delta| \leq \const d^{3-\eps}$. Recall $c_2 \equiv \sum_{l=1}^\infty 2^{-l} l \ln l$.
\label{coro:hygivenx_q}
\end{coro}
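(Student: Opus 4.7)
The plan is to combine the decomposition \eqref{eq:hYKgivenX_split} with the modified-deletion estimate of Corollary \ref{coro:hatD_givenxy}, and then convert the resulting $p_L$- and $\mu(\sX)$-dependent expression into the target $q_L$-form. Dividing \eqref{eq:hYKgivenX_split} by $n$ and letting $n\to\infty$, I will apply the Taylor expansion $h(d)=d\log(1/d)+\frac{d}{\ln 2}-\frac{d^2}{2\ln 2}+O(d^3)$ and pass from $H(D^n|X^n,Y,K)$ to $H(\widehat D^n|X^n,\widehat Y,\widehat K)$ via \eqref{eq:D_hatD_condentropy_difference}. Proposition \ref{propo:z_is_small} applied with $\gamma=2-\eps/2$ gives $h(z)=O(d^{3-\eps})$, absorbing this swap into the error. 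Corollary \ref{coro:hatD_givenxy} at the same $\gamma$ then produces $\lim_{n\to\infty}\frac{1}{n}H(\widehat D^n|X^n,\widehat Y,\widehat K)=\frac{d}{\mu(\sX)}\sum_{l\geq 2}p_L(l)\,l\log l+d^2c_3+O(d^{3-\eps})$.

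The remaining task reduces to proving the identity
\[
\frac{d}{\mu(\sX)}\sum_{l\geq 2}p_L(l)\,l\log l \;=\; \frac{d}{2}\sum_{l=2}^\ell q_L(l)\,l\log l \;-\; \frac{dc_2}{4\ln 2}\sum_{l=1}^\ell q_L(l)\,l \;+\; \frac{dc_2}{2\ln 2} \;+\; O(d^{3-\eps}),
\]
after which bookkeeping collects constants: $\frac{d}{\ln 2}$ (from $h(d)$) combined with $-\frac{dc_2}{2\ln 2}$ (from this identity) yields $\frac{d}{\ln 2}(1-c_2/2)$, while $-\frac{d^2}{2\ln 2}-d^2c_3$ matches the stated quadratic term. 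I plan to prove the identity through three reductions, each controlled at $O(d^{3-\eps})$. (1) Truncation at $\ell=\lfloor 4\log(1/d)\rfloor$: since $\sX\in\St_{\lfloor 1/d\rfloor}$ forces $\log l\leq \log(1/d)$, the tail contribution is $O(d\log(1/d))\cdot\sum_{l>\ell}lp_L(l)$, and the bound $\sum_{l>\ell}lp_L(l)=O(d^{2-\eps})$ follows from Lemma \ref{lemma:L_tail_control} or Lemma \ref{lemma:pL_tail_control_fromHy}, possibly combined with Lemma \ref{lemma:xy_similar_runs_gammaSTRONG} to transfer a $q_L$-tail bound into a $p_L$-tail bound, depending on which of $H(\sX),H(\sY)$ satisfies $\geq 1-d^{2-\eps/2}$. (2) Swap $p_L(l)$ for $q_L(l)$ inside the truncated sum via Lemma \ref{lemma:xy_similar_runs_gammaSTRONG}(i) with $k=1$ and $\gamma=2-\eps/2$: $|p_L(l)-q_L(l)|=O(d^{2-\eps})$ for $l\leq\ell$, and summing against $l\log l\leq\log^3(1/d)$ over $\ell=O(\log(1/d))$ terms yields total swap error $O(d^{3-\eps})$.

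The third reduction converts the prefactor $1/\mu(\sX)$, and is the main obstacle, since a naive substitution $\mu(\sX)\to 2$ would leave an $O(d^{2-\eps})$ error. First replace $\mu(\sX)$ by $\mu(\sY)$ at cost $O(d^{3-\eps})$ (via Lemma \ref{lemma:xy_similar_runs_gammaSTRONG}(ii) when $H(\sY)\geq 1-d^{2-\eps/2}$, and otherwise by estimating $\mu(\sX)-\mu(\sY)=\sum_l l(p_L(l)-q_L(l))$ using Lemma \ref{lemma:xy_similar_runs_gammaSTRONG}(i) combined with tail bounds from Lemmas \ref{lemma:L_tail_control} and \ref{lemma:qL_tail_control_fromHx}), then split $\frac{d}{\mu(\sY)}=\frac{d}{2}-\frac{d(\mu(\sY)-2)}{2\mu(\sY)}$. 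The $\frac{d}{2}$-piece is exactly the target term $\frac{d}{2}\sum_{l=2}^\ell q_L(l)\,l\log l$. In the correction piece, because $\frac{d(\mu(\sY)-2)}{2\mu(\sY)}=O(d^{2-\eps})$ is already small, I may replace $\sum_{l=2}^\ell q_L(l)\,l\log l$ by its leading value $\sum_{l\geq 2}2^{-l}l\log l=c_2/\ln 2$ (pointwise error $|q_L(l)-2^{-l}|=O(d^{1-\eps})$ from Lemma \ref{lemma:xy_similar_runs_gammaSTRONG}(i), with the exponentially small $2^{-l}l\log l$ tail absorbed) and $1/\mu(\sY)$ by $1/2$ (since $|\mu(\sY)-2|=O(d^{1-\eps})$ by Lemma \ref{lemma:mean_closeto2}), producing $-\frac{dc_2(\mu(\sY)-2)}{4\ln 2}+O(d^{3-\eps})$. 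Finally, since $\mu(\sY)=\sum_l q_L(l)l$ and $\sum_{l>\ell}q_L(l)l=O(d^{2-\eps})$ by the same tail arguments as in Reduction (1), I rewrite $\mu(\sY)-2=\bigl(\sum_{l=1}^\ell q_L(l)l-2\bigr)+O(d^{2-\eps})$, giving $-\frac{dc_2}{4\ln 2}\sum_{l=1}^\ell q_L(l)l+\frac{dc_2}{2\ln 2}+O(d^{3-\eps})$ and completing the identity.
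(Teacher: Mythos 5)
Your proposal is correct and follows essentially the same route as the paper's proof: the decomposition \eqref{eq:hYKgivenX_split}, the switch to the modified deletion process via \eqref{eq:D_hatD_condentropy_difference} together with Proposition \ref{propo:z_is_small}, Corollary \ref{coro:hatD_givenxy} at $\gamma=2-\eps/2$, the Taylor expansion of $h(d)$, and then the same conversion of $\frac{d}{\mu(\sX)}\sum_l p_L(l)\,l\log l$ into the truncated $q_L$ form using Lemmas \ref{lemma:xy_similar_runs_gammaSTRONG}, \ref{lemma:pL_tail_control_fromHy}, \ref{lemma:mean_closeto2} and \ref{lemma:L_tail_control}. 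Your handling of the prefactor (splitting $\frac{d}{\mu(\sY)}=\frac{d}{2}-\frac{d(\mu(\sY)-2)}{2\mu(\sY)}$ and replacing the sum by $c_2/\ln 2$ in the correction term) is only a cosmetic rearrangement of the paper's linearization $\{\mu(\sY)\}^{-1}=1-\tfrac14\sum_{l=1}^{\ell}q_L(l)\,l+O(d^{2-\eps})$, and your case split according to which of $H(\sX),H(\sY)$ exceeds $1-d^{2-\eps/2}$ mirrors the paper's ``the other case is analogous.''
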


Corollary \ref{coro:hygivenx_q} is also proved in Appendix \ref{app:hatD_lemma_proof}.

\subsection{A self improving bound on $H(\sY)$}
\label{subsec:self_improving_bd_hy}

Our next Lemma constitutes a `self-improving' bound on the closeness of $H(\sY)$ to $1$ and leads directly to Lemma \ref{lemma:hy_is_large}.
\begin{lemma}
\label{lemma:cyclic_hy_bound_SLstar}
There exists a function $(\const,\eps)\mapsto d_0(\const,\epsilon)>0$
such that  the following happens for any $\epsilon > 0$, and constants
$\const>0$ and  $\gamma \in (1/2,2)$.
For any $d< d_0$ and any $\sX \in S_{\lfloor1/d \rfloor}$ such that
\begin{align*}
I(\sX) &\geq 1 - d \log(1/d) -A_1 d - \const d^{2-(\epsilon/4)}
\end{align*}
and $H(\sY) \geq 1 - d^{\gamma}$, we have
\begin{align*}
H(\sY) \geq 1 - d^{1+ \gamma/2 -\epsilon/2}\, .
\end{align*}
\end{lemma}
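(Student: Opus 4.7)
The plan is to begin from the identity $I(X^n;Y)=H(Y)-H(Y,K\mid X^n)+H(K\mid X^n,Y)$ of Eq.~\eqref{eq:I_three_terms}, divide by $n$, and pass to the limit $n\to\infty$ to obtain
\begin{align*}
(1-d)\,H(\sY) \,=\, I(\sX) + \lim_{n\to\infty} \tfrac{1}{n}H(Y,K\mid X^n) - \lim_{n\to\infty} \tfrac{1}{n}H(K\mid X^n,Y)\, .
\end{align*}
I will lower-bound the right-hand side using the hypothesis and the estimates of Sections \ref{subsec:information_rate}--\ref{subsec:modified_deletion}. First, the hypothesis together with $I(\sX)\le H(\sX)$ implies $H(\sX)\ge 1-d^{1-\eps}$ for small $d$, which supplies the side condition needed to invoke Lemma \ref{lemma:K_conditional_entropy} and the natural $\gamma$-generalization of Corollary \ref{coro:hygivenx_q}. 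The latter generalization simply re-traces the corollary's derivation from Corollary \ref{coro:hatD_givenxy} (which is already stated with general $\gamma\in(0,2)$), producing a residue of size $O(d^{1+\gamma-\eps/2})$ in place of $O(d^{3-\eps})$.

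Substituting these two estimates, the second limit becomes $d\log(1/d)-\tfrac{d}{2}\sum_{l=2}^\ell q_L(l)\,l\log l+\tfrac{dc_2}{4\ln 2}\sum_{l=1}^\ell q_L(l)\,l+\tfrac{d}{\ln 2}(1-c_2/2)-d^2c_3-\tfrac{d^2}{2\ln 2}+O(d^{1+\gamma-\eps/2})$ and the third becomes $d^2c_4+O(d^{1+\gamma-\eps/2})$. The crucial algebraic step is that evaluating the second expression at $q_L(l)=q_L^*(l):=2^{-l}$ collapses it to $d\log(1/d)+(A_1-1)d+O(d^2)$, using the identity $A_1=1+1/\ln 2-c_2/(2\ln 2)$. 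Combined with the hypothesis lower bound on $I(\sX)$, the $d\log(1/d)$ and $A_1 d$ contributions cancel \emph{exactly}, leaving
\begin{align*}
(1-d)H(\sY)\ \ge\ 1-d+\mathcal{F}(q_L)-\const\bigl(d^{2-\eps/4}+d^{1+\gamma-\eps/2}\bigr),
\end{align*}
where $\mathcal{F}(q_L)=-\tfrac{d}{2}\sum_{l=2}^\ell(q_L(l)-2^{-l})\,l\log l+\tfrac{dc_2}{4\ln 2}\sum_{l=1}^\ell(q_L(l)-2^{-l})\,l$ is a \emph{linear} functional of the deviation $q_L-q_L^*$.

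It remains to bound $|\mathcal{F}(q_L)|$. Lemma \ref{lemma:L_TV} applied to $\sY$ --- legitimate since $H(\sY)\ge 1-d^\gamma$ --- gives $\sum_l|q_L(l)-2^{-l}|\le\const' d^{\gamma/2}$, and the tails from $l>\ell=\Theta(\log(1/d))$ are controlled by Lemma \ref{lemma:L_tail_control} applied to $\sY$. Since $l\log l=O(\log^2(1/d))$ uniformly for $l\le\ell$, this yields $|\mathcal{F}(q_L)|\le\const d\cdot d^{\gamma/2}\log^2(1/d)\le\const d^{1+\gamma/2-\eps/4}$ for $d$ small. For $\gamma\in(1/2,2)$ the remaining error exponents satisfy $2-\eps/4\ge 1+\gamma/2-\eps/4$ and $1+\gamma-\eps/2\ge 1+\gamma/2-\eps/4$, so $d^{1+\gamma/2-\eps/4}$ dominates; dividing by $(1-d)$ and absorbing constants and logarithmic factors into the gap between $\eps/4$ and $\eps/2$ then yields the claimed $H(\sY)\ge 1-d^{1+\gamma/2-\eps/2}$ for all sufficiently small $d$.

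The main technical hurdle is the exact algebraic cancellation at $q_L=q_L^*$: once this is verified, the linearity of $\mathcal{F}$ in $q_L-q_L^*$ together with the $L^1$ bound from Lemma \ref{lemma:L_TV} (a Pinsker-type consequence of $H(\sY)$ being close to $1$) delivers the self-improvement almost mechanically. The secondary, largely bookkeeping, concern is confirming that the $\gamma$-generalization of Corollary \ref{coro:hygivenx_q} and of Lemma \ref{lemma:K_conditional_entropy} remains valid down to $\gamma$ as low as $1/2+\eps$; this follows from the already-general statements of Corollary \ref{coro:pLk_tail_control_fromHy_STRONG} and Corollary \ref{coro:hatD_givenxy} upstream.
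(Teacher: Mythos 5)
Your proposal is correct and follows essentially the same route as the paper's proof: the decomposition \eqref{eq:I_three_terms}, the modified-deletion estimate (Corollary \ref{coro:hatD_givenxy}, equivalently a $\gamma$-version of Corollary \ref{coro:hygivenx_q}) together with Lemma \ref{lemma:K_conditional_entropy}, the exact cancellation of the $d\log(1/d)$ and $A_1 d$ terms at the Bernoulli$(1/2)$ reference distribution, and an $O(d^{1+\gamma/2-\eps/4})$ bound on the contribution of the run-length deviation, giving the self-improvement. The only (cosmetic) difference is that the paper substitutes $p_L\to p_L^*$ directly in Corollary \ref{coro:hatD_givenxy} using Lemma \ref{lemma:xy_similar_runs_gammaSTRONG}(i) and Lemma \ref{lemma:pL_tail_control_fromHy}, whereas you keep the $q_L$-form and bound the linear functional of $q_L-q_L^*$ via Lemma \ref{lemma:L_TV} applied to $\sY$ -- an equivalent bookkeeping choice.
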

\begin{proof}
From Eq.~\eqref{eq:I_three_terms} we have
\begin{align}
I(\sX) &= \lim_{n \rightarrow \infty} \frac{1}{n} \left \{ H(Y) - H(D^n) + H(D^n|X^n,Y,K) + H(K|X^n,Y)\right \}\nonumber\\
&= (1-d) H(\sY) - h(d)+
\lim_{n \rightarrow \infty} \frac{1}{n} \left \{H(D^n|X^n,Y,K) + H(K|X^n,Y)\right \}\, .
\label{eq:Ix_expanded_in_hy_recursive}
\end{align}

Using Eq.~\eqref{eq:D_hatD_condentropy_difference} and Proposition \ref{propo:z_is_small}, we have
\begin{align*}
\frac{1}{n}\big | H(D^n|X^n,Y,K) - H(\hD^n|X^n,\hY,\hK) \big | \leq
\const_1 d^{1+\gamma}\log(1/d)\, .
\end{align*}

It follows from $H(\sX) > I(\sX)$ and our assumed lower bound on $I(\sX)$, that $H(\sX)>1-d^{1-\eps} $ for some $\eps>0$.
Using Corollary \ref{coro:hatD_givenxy}, $|\mu(\sX)-2| \leq \const_2 d^{\gamma/2}$ from Lemma \ref{lemma:xy_similar_runs_gammaSTRONG}(ii), and Lemmas  \ref{lemma:xy_similar_runs_gammaSTRONG}(i) and \ref{lemma:pL_tail_control_fromHy} to control $p_L(\cdot)$, we have
\begin{align*}
\lim_{n\to\infty} \frac{1}{n}H(\widehat{D}^n|X^n, \widehat{Y}, \widehat{K}) \, = \;
\frac{d}{2}  \Big \{ \sum_{l=2}^\infty 2^{-l} \; l \log l \Big \}  + \delta_1\, ,
\end{align*}
where $|\delta_1| \leq \const_3 d^{1+\gamma/2-\eps/4}$.

Lemma \ref{lemma:K_conditional_entropy} gives
\begin{align*}
\lim_{n \rightarrow \infty} H(K|X^n,Y) \leq \const_4 d^{1+\gamma/2-\eps/4} \, .
\end{align*}
We used here $\gamma<2$.

Plugging back into Eq.~\eqref{eq:Ix_expanded_in_hy_recursive}, we obtain
\begin{align*}
I(\sX) \leq  H(\sY) - d \log(1/d) -A_1 d + \const_5 d^{1+\gamma/2-\eps/4}\, .
\end{align*}

The result follows from the assumption on $I(\sX)$.
\end{proof}

\subsection{Auxiliary lemmas for our lower bound}
\begin{lemma}
Recall $\sX^\dagger$ is the process consisting of i.i.d.  runs with
distribution $p_L^\dagger(l) = 2^{-l}(1+ d( l \log l -c_2l/2))$ (cf. Lemma
\ref{lemma:achievability}). There exists $d_0>0$ such that, for any $d<d_0$ we have the following: For any integer $i$ and any $x_{-\infty}^{i-1}$, we have
\begin{align*}
\big |\prob\big\{X^\dagger_i=1 \big |
(X^\dagger)_{-\infty}^{i-1}=x_{-\infty}^{i-1}\big\} - 1/2\big | \leq
0.05\, .
\end{align*}
\label{lemma:Xdagger_almost_i.i.d.}
\end{lemma}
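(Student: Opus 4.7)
The plan is to exploit the renewal structure of $\sX^\dagger$. Since its runs are i.i.d.\ with distribution $p_L^\dagger$, the conditional law of $X^\dagger_i$ given $x_{-\infty}^{i-1}$ depends on the past only through the current bit $b \equiv x_{i-1}$ and the length $K \geq 1$ of the maximal suffix of $b$'s in $x_{-\infty}^{i-1}$. Writing $F(K) \equiv \sum_{l \geq K} p_L^\dagger(l)$, we have $\prob\{X^\dagger_i = b \mid (X^\dagger)_{-\infty}^{i-1} = x_{-\infty}^{i-1}\} = F(K+1)/F(K)$, so the lemma reduces to proving
\[
\Bigl|\frac{F(K+1)}{F(K)} - \frac{1}{2}\Bigr| \leq 0.05 \qquad \text{for every } K \geq 1 \text{ and small } d.
\]

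Next, I would plug in $p_L^\dagger(l) = 2^{-l}\bigl(1 + d\,g(l)\bigr)$ with $g(l) \equiv l \ln l - c_2 l/2$, obtaining
\[
F(K) = 2^{-(K-1)}\bigl(1 + d\,A(K)\bigr), \qquad A(K) \equiv \sum_{j=0}^\infty 2^{-j-1}\, g(K+j).
\]
The normalization $\sum_{l \geq 1} 2^{-l} g(l) = c_2 - c_2 = 0$ gives $A(1) = 0$, and splitting off the $j=0$ term yields $A(K) = \tfrac{1}{2}g(K) + \tfrac{1}{2}A(K+1)$, i.e.\ the recursion $A(K+1) = 2A(K) - g(K)$. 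A one-step algebra then produces the clean identity
\[
\frac{F(K+1)}{F(K)} - \frac{1}{2} = \frac{d}{2} \cdot \frac{A(K) - g(K)}{1 + d\,A(K)}.
\]

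The main task is to control the right-hand side uniformly in $K$. Since $g''(l) = 1/l > 0$ and $g'(l) = \ln l + 1 - c_2/2 > 0$ on $[1,\infty)$, the function $g$ is convex and strictly increasing there. Using $g(K+j) - g(K) \leq j\,g'(K+j)$, $\ln(K+j) \leq \ln K + j/K$, and the elementary sums $\sum_{j \geq 0} 2^{-j-1} j = 1$, $\sum_{j \geq 0} 2^{-j-1} j^2 = 3$, one obtains
\[
0 \leq A(K) - g(K) \leq \ln K + 4,
\]
while in the opposite direction $A(K) \geq g(K) \geq \tfrac{1}{2} K \ln K$ for $K \geq 3$.

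A two-regime split then finishes the argument. Fix $K_0$ large enough (e.g.\ $K_0 = 200$) that $(\ln K + 4)/(K \ln K) \leq 0.1$ for every $K \geq K_0$. For $K > K_0$, using $1 + d A(K) \geq d A(K)$,
\[
\Bigl|\frac{F(K+1)}{F(K)} - \frac{1}{2}\Bigr| \leq \frac{A(K) - g(K)}{2 A(K)} \leq \frac{\ln K + 4}{K \ln K} \leq 0.05.
\]
For $1 \leq K \leq K_0$, the trivial bound $1 + d A(K) \geq 1$ gives $|F(K+1)/F(K) - 1/2| \leq \tfrac{d}{2}(\ln K_0 + 4)$, which is at most $0.05$ whenever $d \leq d_0 \equiv 0.1/(\ln K_0 + 4)$. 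The only mild obstacle is uniformity in $K$, handled precisely by the lower bound $A(K) \geq \tfrac{1}{2}K\ln K$, which dominates the $O(\ln K)$ numerator for $K$ large; the edge cases $K \in \{1,2\}$ where that lower bound is vacuous are absorbed by the small-$K$ regime.
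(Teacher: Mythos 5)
Your proposal is correct and takes essentially the same route as the paper: both use the renewal structure of $\sX^\dagger$ to reduce the conditional probability to a tail ratio of $p_L^\dagger$ (your $F(K+1)/F(K)$, the paper's $p_L^\dagger(l)/\sum_{l'>l}p_L^\dagger(l')$), compare with the geometric tail, and finish with a two-regime split in the current run length (the paper splits at $l\sim 1/\sqrt{d}$, you at a fixed $K_0$; your exact recursion $A(K+1)=2A(K)-g(K)$ just makes the cancellation explicit). One harmless slip: $g(K)\geq \tfrac{1}{2}K\ln K$ requires $K\geq e^{c_2}\approx 6$, not $K\geq 3$, and the step $1+dA(K)\geq 1$ tacitly uses $A(K)\geq 0$ for $K=1,2$ (immediate from $A(1)=0$ and $A(2)=-g(1)>0$), but since the large-$K$ bound is only invoked for $K>K_0=200$ the argument stands as written.
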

\begin{proof}
Without loss of generality, suppose $x_{i-1} = 1$. Also, suppose that
it is the $l$th consecutive 1 to occur. Now, since the runs' starting
points  form a renewal process under $\sX^\dagger$,  we have
\begin{align*}
\frac{\prob\big\{X^\dagger_i=0 \big | (X^\dagger)_{-\infty}^{i-1}=x_{-\infty}^{i-1}\big\}}
{\prob\big\{X^\dagger_i=1 \big | (X^\dagger)_{-\infty}^{i-1}=x_{-\infty}^{i-1}\big\}}
= \frac{p_L^\dagger(l)}{\sum_{l'>l} p_L^\dagger(l')} \, .
\end{align*}

A little calculus yields
\begin{align*}
\sum_{l'>l} p_L^\dagger(l') = 2^{-l} \left ( 1 + d \{l \log l + \eta_{1,l}\} \right) \, ,
\end{align*}
where $|\eta_{1,l}| \leq \const_1 l$ for some $\const < \infty$. In comparison, $p_L^\dagger(l) = 2^{-l} \left ( 1 + d \{l \log l - c_2 l/2 \} \right )$.

\noindent Case (i): $l < 1/\sqrt{d}$.\\
In this case, we have $p_L^\dagger(l) = 2^{-l}(1 + \eta_{2,l})$ with $|\eta_{2,l}| \leq d^{0.4}$ and $\sum_{l'>l} p_L^\dagger(l') = 2^{-l}(1 + \eta_{3,l})$ with $|\eta_{3,l}| \leq d^{0.4}$, for sufficiently small $d$. The result follows.

\noindent Case (ii): $l \ge 1/\sqrt{d}$.\\
In this case, $ \{l \log l + \eta_{1,l}\} = \{l \log l - c_2 l/2 \}
(1+\eta_{4,l})$, where $|\eta_{4,1}| \leq 0.01$
provided $d$ is small enough. It follows that
\begin{align*}
\left | \frac{p_L^\dagger(l)}{\sum_{l'>l} p_L^\dagger(l')} - 1 \right | \leq 0.02 \, .
\end{align*}
The result follows.

\end{proof}

\begin{lemma}
Let $q_L^\dagger(\cdot)$ be the run length distribution of $\sY^\dagger$ corresponding to input $\sX^\dagger$. Then there exists $d_0$ (same as in Lemma \ref{lemma:Xdagger_almost_i.i.d.}) such that, for any $d< d_0$, we have $q_L(l) \leq (3/4)^l$ for all $l$.
\label{lemma:Ydagger_exponential_decay}
\end{lemma}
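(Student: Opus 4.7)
The plan is to use Lemma \ref{lemma:Xdagger_almost_i.i.d.} to show that the output process $\sY^\dagger$ also has all one-step conditional probabilities close to $1/2$, and then apply this bound $l$ times to bound the Palm tail probability of a run of length $l$.

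First I would extend Lemma \ref{lemma:Xdagger_almost_i.i.d.} from one-step to $r$-step conditional probabilities on $\sX^\dagger$. For any $r\ge 1$, integer $j$ and past configuration $x_{-\infty}^j$, the tower property gives
\begin{align*}
\prob\big\{X^\dagger_{j+r}=1\,\big|\, (X^\dagger)_{-\infty}^j = x_{-\infty}^j\big\}
= \E\!\left[\prob\big\{X^\dagger_{j+r}=1\,\big|\,(X^\dagger)_{-\infty}^{j+r-1}\big\}\,\Big|\,(X^\dagger)_{-\infty}^j = x_{-\infty}^j\right] \in [0.45,\,0.55],
\end{align*}
since the inner probability is in $[0.45,0.55]$ by Lemma \ref{lemma:Xdagger_almost_i.i.d.}.

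Next I would transfer this to $\sY^\dagger$. Realize $\sY^\dagger$ as the output of $\sX^\dagger$ passed through the i.i.d. Bernoulli$(d)$ deletion process $\sD$ independent of $\sX^\dagger$. If $Y^\dagger_i$ corresponds to bit $X^\dagger_j$, then $Y^\dagger_{i+1}=X^\dagger_{j+r}$ where $r\ge 1$ is the index of the first undeleted bit after position $j$; this $r$ is geometric and, together with the past deletions, is independent of $(X^\dagger)_{j+1}^\infty$. Conditioning on the joint past $((X^\dagger)_{-\infty}^{j}, \sD\text{ up to }j, r=r_0)$ and applying the $r_0$-step bound above shows $\prob(X^\dagger_{j+r_0}=1\mid\text{that past})\in[0.45,0.55]$ for every $r_0\ge 1$. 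Marginalizing over $r$ and over the richer hidden past, and noting that $(Y^\dagger)_{-\infty}^i$ is a deterministic function of $((X^\dagger)_{-\infty}^{j},\sD\text{ up to }j)$, we obtain for any past $(Y^\dagger)_{-\infty}^i$:
\begin{align*}
\prob\big\{Y^\dagger_{i+1}=Y^\dagger_i\,\big|\,(Y^\dagger)_{-\infty}^i\big\} \le 0.55.
\end{align*}

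Finally I would bound $q_L^\dagger(l)$. Under the Palm measure $\prob_1$ defined in Section \ref{sec:Preliminaries}, $Y^\dagger_1$ is the first bit of a run; since the Palm event $\{Y^\dagger_0\neq Y^\dagger_1\}$ is $\sigma((Y^\dagger)_{-\infty}^1)$-measurable, the conditional law of $Y^\dagger_2,Y^\dagger_3,\ldots$ given $(Y^\dagger)_{-\infty}^1$ is unchanged under $\prob_1$. Applying the one-step bound iteratively gives $q_L^\dagger(l)\le \prob_1(L\ge l)=\prob_1(Y^\dagger_1=\cdots=Y^\dagger_l)\le 0.55^{l-1}$. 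For $l=1$ we use the one-step bound directly: $q_L^\dagger(1)=\prob_1(Y^\dagger_2\neq Y^\dagger_1)\le 0.55\le 3/4$. For $l\ge 2$, since $0.55/( 3/4)=11/15<3/4$, we get $0.55^{l-1}\le (3/4)^{l-1}\cdot(11/15)\le (3/4)^l$, completing the proof. I do not anticipate a real obstacle; the only care needed is in the second step, to make sure the independence of the deletion process from $\sX^\dagger$ is used correctly when transferring the conditional bound through the deletions and when passing to the Palm measure.
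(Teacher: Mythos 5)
Your proof is correct and follows essentially the same route as the paper: it transfers the near-uniform one-step conditional probabilities of $\sX^\dagger$ (Lemma \ref{lemma:Xdagger_almost_i.i.d.}) to the output process $\sY^\dagger$ and then bounds the Palm run-length tail geometrically. You merely fill in the details the paper leaves implicit (the tower-property extension to $r$-step probabilities, independence of $\sD$ from $\sX^\dagger$, and the Palm-measure conditioning), obtaining in fact a slightly sharper constant than the paper's stated $0.1$ deviation.
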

\begin{proof}
It follows from Lemma \ref{lemma:Xdagger_almost_i.i.d.}, that for any $y_{-\infty}^{i-1}$, we have
\begin{align*}
\big |\prob\big\{Y^\dagger_i=1 \big | (Y^\dagger)_{-\infty}^{i-1}=y_{-\infty}^{i-1}\big\} - 1/2\big | \leq 0.1 \, ,
\end{align*}
for $d< d_0$. This gives $q_L(l) \leq (0.45/0.55)^l$, implying the result.
\end{proof}

%
%
\subsection{Proofs of Lemmas \ref{lemma:achievability}, \ref{lemma:hy_is_large},  \ref{lemma:converse_for_restricted_runs} and \ref{lemma:small_loss_by_restricting_runs}}
\label{subsec:lemma_proofs}

We first prove Lemma \ref{lemma:small_loss_by_restricting_runs}, followed by Lemmas \ref{lemma:achievability}, \ref{lemma:hy_is_large} and \ref{lemma:converse_for_restricted_runs}.

\begin{proof}[Proof of Lemma \ref{lemma:small_loss_by_restricting_runs}]
We construct $\asX \in \St_{L^*}$ from $\sX$ as follows: Suppose a super-run starts at $X_j$ and continues until $X_{j+L^*}$. We flip one or both of $X_{j+L^*+1}$ and $X_{j+L^*+2}$ such that the super-run ends at $X_{j+L^*}$. (It is easy to verify that this can always be done. If multiple different choices work, then pick an arbitrary one.)
The density of flipped bits  in $\sX$ is upper bounded by
$\alpha=2 \E[\tL \ind(\tL \geq L^*)]/L^*$. The expected fraction of bits in the
channel output $\aY=Y(\aX^n)$ that have been flipped relative to
$Y=Y(X^n)$ (output of the same channel realization
with different input) is also
at most $\alpha$. Let $F=F(\sX, \sD)$ be the binary vector  having the same
length as $Y$, with a $1$ wherever the corresponding bit in $\aY$
is flipped relative to $Y$, and $0$s elsewhere. The expected fraction of $1$'s
in $F$ is at most $\alpha$. Therefore
\begin{align}
H(F) \leq n (1-d) h(\alpha) + \log (n+1)\, .
\label{eq:hflips_bound}
\end{align}
%
Recall Fact \ref{fact:U_V_W}. Notice that $Y \xleftrightarrow{F} \aY$, whence
\begin{align}
|H(Y) - H(\aY)| \leq H(F)\, .
\label{eq:hy_bound_flips}
\end{align}
Further, $\sX-\asX-\aX^n-\aY$ form a Markov chain,
and $\asX$, $\aX^n$ are deterministic functions of $\sX$.
Hence, $H(\aY | \aX^n) = H(\aY | \asX)$. Similarly,
$H(Y | X^n) = H(Y| \sX)$. Therefore (the second step is analogous to
Eq.~(\ref{eq:hy_bound_flips}))
\begin{align}
\label{eq:hygivenx_bound_flips}
|H(\aY | \aX^n)  - H(Y | X^n)| =
|H(\aY | \sX) - H(Y | \sX)|
\leq \;  H(F)\, .
\end{align}
It follows from Lemma \ref{lemma:tL_tail_control_fromHy}  and $L^* > 2\gamma\log(1/d)$ that
$\alpha \leq 80 d^{\gamma}/L^*$ for sufficiently small $d$.
Hence, $h(\alpha) \leq d^{\gamma-\epsilon}  \log L^* /L^*$ for $d < d_0( \epsilon)$,
for some $d_0( \epsilon) > 0$.
 Now Eqs.~\eqref{eq:hflips_bound} and \eqref{eq:hy_bound_flips} gives Eq.~\eqref{eq:I_vs_restricted}, where as Eq.~\eqref{eq:Hy_vs_restricted} follows by combining Eqs.~(\ref{eq:hflips_bound}),
(\ref{eq:hy_bound_flips}) and (\ref{eq:hygivenx_bound_flips}) to bound
$|I(\sX) - I(\asX)|$.

\end{proof}

\begin{proof}[Proof of Lemma \ref{lemma:achievability}]

 We first make some preliminary observations. Direct calculation leads
 to $H(\sX^\dagger)=H(p_L^\dagger)/\mu(\sX^{\dagger}) = 1 - O(d^2)$, and $|\mu(\sX^\dagger) -2|=O(d)$. From Lemma \ref{lemma:xy_similar_runs}(ii), we deduce $|\mu(\sY^\dagger) -2|=O(d)$.

Since $\sX^\dagger$ consists of independent runs, the same is true for
$\sY^\dagger$. Hence, recalling the notation $q_L^*(l) = 2^{-l}$, we have
\begin{align*}
H(\sY^\dagger) &= H(q_L^\dagger)/\mu(\sY^\dagger) = 1 - D(q_L^\dagger|| \{2^{-l}\})/\mu(\sY^\dagger)\nonumber \\
&= 1- \frac{1}{\mu(\sY^\dagger)} \sum_{l=1}^\infty q_L^\dagger(l) \big ( \log q_L^\dagger(l) + l\big ) \, .
\end{align*}

Define $\ell \equiv \lfloor 4 \log (1/d) \rfloor$. It follows from Lemma \ref{lemma:Ydagger_exponential_decay} that $\sum_{l=\ell+1}^\infty q_L^\dagger(l) l = O(d^{3})$, leading to
\begin{align}
H(\sY^\dagger) &\geq 1- \frac{1}{\mu(\sY^\dagger)} \sum_{l=1}^\ell q_L^\dagger(l) \big ( \log q_L^\dagger(l) + l\big )  + O(d^{3})\, .
\label{eq:Hydagger_lb_initial1}
\end{align}

Now, from Lemma \ref{lemma:xy_similar_runs}(i), we know that
\begin{align}
|q_L^\dagger(l) - p_L^\dagger(l)| \leq \const_2 d^{2-\eps/2}
\label{eq:qLdagger_near_pLdagger}
\end{align}
for $l<\ell$.

A Taylor approximation yields
\begin{align}
\sum_{l=1}^{\ell} q_L^\dagger(l) \big ( \log q_L^\dagger(l) + l \big)
&= \frac{1}{\ln2}\sum_{l=1}^{\ell} \bigg ( \left(q_L^\dagger(l) -
  2^{-l}\right ) + 2^{l-1} \left (q_L^\dagger(l) - 2^{-l}\right )^2
\bigg ) + O(d^{3-\eps}) \nonumber\\
&=\frac{1}{\ln2} \sum_{l=\ell+1}^{\infty}  \left(q_L^\dagger(l) -
  2^{-l}\right ) + \frac{2^{l-1}}{\ln 2}\sum_{l=1}^{\ell}  \left (q_L^\dagger(l) -
  2^{-l}\right )^2
 + O(d^{3-\eps}) \nonumber\\
&= \frac{d^2}{2\ln 2} \sum_{l=1}^{\ell}
    2^{-l} \left (-c_2l/2+ l \ln l \right)^2 + O(d^{3-\eps}) \nonumber\\
&= \frac{d^2}{2\ln 2} \sum_{l=1}^{\infty}
    2^{-l} \left (-c_2l/2+ l \ln l \right)^2 + O(d^{3-\eps}) \nonumber \\
&= \frac{d^2}{2\ln 2}\left( \frac{3}{2} c_2^2 + \sum_{l=1}^{\infty}
    2^{-l} \left ( \left (l \ln l \right)^2-c_2l^2 \ln l \right)\right) + O(d^{3-\eps}) \, . \label{eq:first term_Idagger}
\end{align}

Plugging back into Eq.~\eqref{eq:Hydagger_lb_initial1} and using $|\mu(\sY^\dagger)-2|= O(d)$, we obtain
\begin{align}
H(\sY^\dagger) &\geq 1- \frac{d^2}{4\ln 2}\left( \frac{3}{2} c_2^2 + \sum_{l=1}^{\infty}
    2^{-l} \left ( \left (l \ln l \right)^2-c_2l^2 \ln l \right)\right) + O(d^{3-\eps})  \, .
\label{eq:Hydagger_lb_initial}
\end{align}

We construct $\asX^\dagger \in \St_{\lfloor 1/d \rfloor}$ from
$\sX^\dagger$ by flipping a few bits as in the proof of Lemma
\ref{lemma:small_loss_by_restricting_runs}. The fraction of flipped
bits, both in $\sX^\dagger$ and in $\sY^\dagger$, is at most $\alpha=2
\E[\tL \ind(\tL \geq \lfloor 1/d \rfloor)]/\lfloor 1/d \rfloor \leq
O(2 ^{-d/2}) = O(d^4)$.
Proceeding as in the proof of Lemma \ref{lemma:small_loss_by_restricting_runs}, cf. Eqs.~\eqref{eq:hflips_bound} and \eqref{eq:hygivenx_bound_flips}, we have
\begin{align}
\big |   H(\aY^\dagger | (\aX^\dagger)^n)  - H(Y^\dagger | (X^\dagger)^n)   \big | \leq n h(\alpha) = n O(d^3)\, .
\label{eq:aYdagger_hyx_close}
\end{align}
For each bit that is flipped, the number of runs in $Y$ can change by at most $2$, and the number of runs of a particular length can change by at most $3$. It follows that
\begin{align*}
\left | \frac{1}{\mu(\sY^\dagger)} - \frac{1}{\mu(\asY^\dagger)} \right | \leq 2\alpha =  O(d^4)\, ,
\end{align*}
and, for any positive integer $l$,
\begin{align*}
\left | \frac{q_L^\dagger(l)}{\mu(\sY^\dagger)} - \frac{\aq_L^\dagger(l)}{\mu(\asY^\dagger)} \right | \leq 3\alpha =  O(d^4)\, .
\end{align*}
 We then deduce from the above that
\begin{align*}
\left |\mu(\sY^\dagger) - \mu(\asY^\dagger)\right | = O(d^4)\, ,
\end{align*}
and for any $l>0$,
\begin{align*}
\left |q_L^\dagger(l) - \aq_L^\dagger(l)\right | \leq \const_1 d^4\, ,
\end{align*}
where $\aq_L^\dagger(\cdot)$ is the distribution of runs under $\aY$.
From Eq.~\eqref{eq:qLdagger_near_pLdagger}, it follows that for $l< \ell$,
\begin{align}
\left |q_L^\dagger(l) - p_L^\dagger(l)\right | \leq 2 \const_2 d^{2-\eps/2}\, .
\label{eq:aqdagger_close}
\end{align}

We have $H(\aY^\dagger | (\aX^\dagger)^n)=H(\aY^\dagger, K^\dagger |
(\aX^\dagger)^n) - H( K^\dagger | (\aX^\dagger)^n,\aY^\dagger)$
where $K^\dagger\equiv K( (\aX^\dagger)^n)$.
We use  Corollary \ref{coro:hygivenx_q} and Lemma \ref{lemma:K_conditional_entropy}
to arrive at
\begin{align}
 \lim_{n \rightarrow \infty} \frac{1}{n} H(\aY^\dagger | (\aX^\dagger)^n) &=   d \log (1/d) - \frac{d}{2}  \sum_{l=2}^\ell \aq_L^\dagger(l) \; l \log l  +\frac{dc_2}{4 \ln 2} \sum_{l=1}^\ell \aq_L^\dagger(l) l  \nonumber\\
& + \left(1 - \frac{c_2}{2} \right) \frac{d}{\ln2} - \left (c_3 + c_4 + \frac{1}{2 \ln 2}\right) d^2 + O(d^{3-\eps})\, .
\label{eq:ahyx_estimate}
\end{align}

Combining Eqs.~\eqref{eq:aYdagger_hyx_close}, \eqref{eq:ahyx_estimate} and \eqref{eq:aqdagger_close}, we obtain,
\begin{align*}
\lim_{n \rightarrow \infty} \frac{1}{n} H(Y^\dagger | (X^\dagger)^n) &=   d \log (1/d) - \frac{d}{2}  \sum_{l=2}^\ell p_L^\dagger(l) \; l \log l  +\frac{dc_2}{4 \ln 2} \sum_{l=1}^\ell p_L^\dagger(l) l  \nonumber\\
& + \left(1 - \frac{c_2}{2} \right) \frac{d}{\ln2} - \left (c_3 + c_4 + \frac{1}{2 \ln 2}\right) d^2 + O(d^{3-\eps})\, .
\end{align*}

A calculation yields
\begin{align}
\lim_{n \rightarrow \infty} & \frac{1}{n} H(Y^\dagger | (X^\dagger)^n) = \nonumber \\
&d \log (1/d)
+\left(1 - \frac{c_2}{2} \right) \frac{d}{\ln2} \nonumber\\
&-  d^2 \left (c_3 + c_4 + \frac{1}{4 \ln 2}
    \left [2  +  3 c_2^2+ 2 \sum_{l=1}^\infty
        2^{-l} \big ( (l \ln l)^2 - c_2 l^2 \ln l \big )
    \right ]
\right) + O(d^{3-\eps})\, .
\label{eq:hyxdagger_estimate}
\end{align}

Finally,
\begin{align*}
I(\sX^\dagger) = (1-d) H(\sY^\dagger) + \lim_{n \rightarrow \infty} \frac{1}{n} H(Y^\dagger | (X^\dagger)^n)\, .
\end{align*}


The result now follows by using the estimates in Eqs.~\eqref{eq:first term_Idagger}
and Eq.~\eqref{eq:hyxdagger_estimate}.

We obtain
\begin{align*}
I(\sX^\dagger) \geq 1 - d\log(1/d) -A_1 d +A_2 d^2 + O(d^{3-\eps})\, ,
\end{align*}
where
\begin{align*}
A_1 &= \log (2e) - \frac{c_2}{2 \ln 2}\, ,\\
A_2 &=  - \frac{1}{4 \ln 2}  \left(
\frac{3}{2} c_2^2 + \sum_{l=1}^{\infty}
    2^{-l} \left ( \left (l \ln l \right)^2-c_2l^2 \ln l \right)\right) \\
&\phantom{= xx }+c_3 +c_4  +  \frac{1}{4 \ln 2}\left(2+3c_2^2+ 2 \sum_{l=1}^{\infty}
    2^{-l} \left ( \left (l \ln l \right)^2-c_2l^2 \ln l \right)\right)\\
&= c_3 +c_4+ \frac{1}{4 \ln 2}  \left( 2+ \frac{3}{2} c_2^2 +
\sum_{l=1}^{\infty}
    2^{-l} \left (l \ln l \right)^2- c_2\sum_{l=1}^{\infty}
    2^{-l}l^2 \ln l \right)\, .
\end{align*}
\end{proof}

\begin{proof}[Proof of Lemma \ref{lemma:hy_is_large}]
Let $\gamma_* = \sup\{\gamma: H(\sY) \geq 1-d^{\gamma} \}$. 
Then $\gamma_* \geq 1+ \gamma_*/2 - \eps/2$ must hold, else Lemma \ref{lemma:cyclic_hy_bound_SLstar} leads to a contradiction. It follows that $\gamma_* \geq 2 - \eps$, hence the result.

  We use here the fact that $d_0$ in Lemma \ref{lemma:cyclic_hy_bound_SLstar} does not depend on $\gamma$.
\end{proof}

\begin{proof}[Proof of Lemma \ref{lemma:converse_for_restricted_runs}]
Fix $\eps > 0$. Consider any $\sX \in \St_{\lfloor 1/d \rfloor}$.
Assume
\begin{align*}
I(\sX) &\geq 1 - d \log(1/d) -A_1 d -  d^{2-(\eps/8)}\, .
\end{align*}
(If not, we are done, for small enough $d$.)

By Lemma \ref{lemma:hy_is_large}, we know that $H(\sY)> 1-d^{2-(\eps/2)}$.  Now, we use Lemma \ref{lemma:hy_upper_bound}, Corollary \ref{coro:hygivenx_q} and Lemma \ref{lemma:K_conditional_entropy} for the three terms in Eq.~\eqref{eq:I_three_terms}, to arrive at
\begin{align}
I(\sX) \leq 1 - &d \log (1/d) - \frac{1}{2} \sum_{l=1}^\infty q_L(l) \big ( \log q_L(l) + l)\nonumber \\
&  + \frac{d}{2}  \sum_{l=2}^{4 \log(1/d)} q_{L}(l) \; l \log l  -\frac{dc_2}{4\ln 2} \sum_{l=1}^{4 \log(1/d)} q_L(l) l + \tc_1 d + \tc_2 d^2 + \const_1 d^{3-\eps}\, ,
\label{eq:Ix_upper_bd1}
\end{align}
where $\tc_1, \tc_2$ can be explicitly computed in terms of constants
above, and $\const_1<\infty$ is independent of $q_L$.
 The precise value
of these constants is irrelevant for the argument below.

Since we know that $\sX \in \St_{\lfloor 1/d \rfloor}$, Lemma \ref{lemma:q_exp_decay_in_St1byd} tells us that the tail of $q_L$ is small. Define
$\ell \equiv \lfloor 8/d \rfloor$. We deduce that
\begin{align*}
\sum_{l=\ell +1}^\infty q_{L}(l) &\leq d^{4}\, , \qquad \sum_{l=\ell +1}^\infty l q_{L}(l) \leq d^{4} \,  ,
\end{align*}
for small enough $d$.
From elementary calculus, we obtain
\begin{align}
\sum_{l=\ell+1}^\infty q_L(l) \big ( \log q_L(l) + l) \geq \sum_{l=\ell+1}^\infty q_L(l) \log \left (\frac{\sum_{l=\ell+1}^\infty q_L(l)}{2^{-\ell}} \right )\nonumber\\
\geq \ell d^4 + d^4 \log d^4 \geq d^{3-\eps/2} \, .
\label{eq:Dbound}
\end{align}
From Lemma \ref{lemma:L_tail_control}, we deduce 
\begin{align}
\sum_{l=4 \log(1/d)}^\ell q_L(l) l \leq d^{2 - \eps}\, .
\label{eq:ELY_bd}
\end{align}

Plugging the bounds in Eqs.~\eqref{eq:Dbound}, \eqref{eq:ELY_bd} into Eq.~\eqref{eq:Ix_upper_bd1}, we obtain
\begin{align*}
I(\sX) \leq 1 - &d \log (1/d) - \frac{1}{2} \sum_{l=1}^\ell q_L(l) \big ( \log q_L(l) + l)\nonumber \\
& + \frac{d}{2}  \sum_{l=2}^\ell q_{L}(l) \; l \log l  -\frac{dc_2}{4\ln 2} \sum_{l=1}^\ell q_L(l) l  + \tc_1 d + \tc_2 d^2 + \const_2 d^{3-\eps}\, ,
\end{align*}
where $\const_2<\infty$ is independent of $q_L$.

Now we simply maximize the
bound over `distributions'  $\{q_L(l)\}_{l=1}^{\ell}$ satisfying $\sum_{l \leq \ell}q_L(l) \leq 1$, to arrive at an
optimal distribution
\begin{align*}
q_L^*(l) = B(d) 2^{-l} 2^{d(-S l/2 + l \log l)} 
\end{align*}
for $l \leq \ell$, where $B(d)$ is such that $\sum_{l \leq \ell} q_L^*(l) = 1$, and $S = c_2/\ln 2$. Note that
$q_L^*(l)$ has no dependence on the process $\sX$ we started with.

It is easy to verify that
\begin{align*}
B(d)= 1  + O(d^{2-\eps/2})\, .
\end{align*}
This leads to
\begin{align*}
q_L^*(l) = \left \{
\begin{array}{ll}
2^{-l} \left ( 1+d(-c_2 l/2+ l \ln l) + O(d^{2-\eps/2})\right) \qquad &\mbox{for } l \leq \ell \\
2^{-l/2}O(1) \qquad &\mbox{otherwise.}
\end{array} \right .
\end{align*}
We now have
\begin{align}
I(\sX) \leq 1 - &d \log (1/d) - \frac{1}{2} \sum_{l=1}^\ell q_L^*(l) \big ( \log q_L^*(l) + l)\nonumber \\
& + \frac{d}{2}  \sum_{l=2}^\ell q_{L}^*(l) \; l \log l  -\frac{dc_2}{4\ln 2} \sum_{l=1}^\ell q_L^*(l) l + \tc_1 d + \tc_2 d^2 + \const_4 d^{3-\eps}\, ,
\label{eq:Ix_bound_optimal_qL}
\end{align}
for some $\const_4 < \infty$.
Again, calculus yields
\begin{align*}
\sum_{l=1}^{\lfloor 6\log(1/d) \rfloor} q_L^*(l) \big ( \log
q_L^*(l)  + l \big) &= \frac{d^2}{2\ln 2}\left( \frac{3}{2} c_2^2 +
\sum_{l=1}^{\infty}
    2^{-l} \left ( \left (l \ln l \right)^2-c_2l^2 \ln l \right)\right) +
    O(d^{3-\eps})\, .
\end{align*}

We substitute in Eq.~\eqref{eq:Ix_bound_optimal_qL} to get the result.
\end{proof}

\section{Discussion}
\label{sec:discussion}

The previous best lower bounds on the capacity of the deletion channel
were derived using first order Markov sources. In contrast, we found that the optimal coding scheme for small $d$ consists of independent runs with run length distribution $p_L^\dagger(l) = 2^{-l}(1+ d( l \log l -c_2l/2))$
This leads to the natural question \emph{How much `loss' do we incur if we are only allowed to use an input distribution that is a first order Markov source?}

%
%

The following theorem is fairly straightforward to prove using the results we have derived. It provides an upper bound on the rate achievable with a Markov source, and also a precise analytical characterization of the optimal Markov source for small $d$.
\begin{thm}
\label{thm:markov_loss}
Fix any $\eps>0$. Consider the class of first order Markov sources. There exists $\const< \infty$ and $d_0 \equiv d_0(\eps)>0$, such that for and any $\sX$ in this class,
\begin{align*}
I(\sX) \leq 1 - d \log(1/d) -A_1 d + A_2' d^2 + \const d^{3-\eps} \,
\end{align*}
holds for any $d< d_0$, where
\begin{align*}
A_2' &\equiv 2c_5^2/\ln 2 + c_3 + c_4 + 1/(2 \ln 2)\, ,\\
c_5 &\equiv \frac{\ln 2}{4}\sum_{l=1}^{\infty}\left \{ l(l-3)2^{-l}\log l \right \} \, .
\end{align*}
 Denote the symmetric first order Markov source with $\mathring{p}(d)\equiv \prob(X_i = b | X_{i-1}=b) = 1/2 + c_5 d$ for $b \in \{0,1 \}$, by $\mathring{\sX}$. We have
 \begin{align*}
 I(\mathring{\sX}) \geq 1 - d \log(1/d) -A_1 d + A_2' d^2 + \const d^{3-\eps} \, .
 \end{align*}
\end{thm}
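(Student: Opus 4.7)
The plan is to leverage the machinery developed for Theorem \ref{thm:main_theorem}, restricted to first-order Markov sources. Such a source has i.i.d.\ runs with a geometric run-length distribution $p_L(l) = (1-p)p^{l-1}$, so the optimization reduces to a single parameter $p$. Writing $p = 1/2 + \eta$, the key algebraic fact is that
\begin{align*}
p_L(l) = 2^{-l}\left[1 + 2(l-2)\eta + 2(l-1)(l-4)\eta^2 + O(\eta^3)\right],
\end{align*}
so the Markov family is a one-parameter deformation of $p_L^*(l)=2^{-l}$ in the direction $2^{-l}\cdot 2(l-2)$, and the question is which deformation parameter $\eta$ gives the best match to the unconstrained optimum $p_L^\dagger$ at order $d$.

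For the upper bound, I would first invoke Remark \ref{rem:h_within_1minuseps} to restrict to $H(\sX) > 1 - d^{1-\eps}$, which via Lemma \ref{lemma:mean_closeto2} forces $|\eta| = O(d^{(1-\eps)/2})$. Geometric tails decay exponentially, so by a flipping argument as in Lemma \ref{lemma:small_loss_by_restricting_runs} (the loss is $O(2^{-1/d})$ and negligible) I may assume $\sX \in \St_{\lfloor 1/d\rfloor}$. I then use the intermediate bound from the proof of Lemma \ref{lemma:converse_for_restricted_runs} (before maximizing over $q_L$). Since Markov sources have i.i.d.\ runs, $\sY$ also has i.i.d.\ runs, and Lemma \ref{lemma:xy_similar_runs}(i) together with the tail control of Lemma \ref{lemma:pL_tail_control_fromHy} lets me substitute $q_L(l)$ by $p_L(l) = (1-p)p^{l-1}$ up to error $O(d^{3-\eps})$. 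Taylor-expanding in $\eta$ and $d$ jointly, the resulting expression takes the form
\begin{align*}
I(\sX) \le 1 - d\log(1/d) - A_1 d + \left[c_3 + c_4 + \tfrac{1}{2\ln 2}\right] d^2 \;-\; \tfrac{2\eta^2}{\ln 2} \;+\; \eta d\, C + O(d^{3-\eps}),
\end{align*}
where $C = \sum_{l\ge 1} 2^{-l}\, l(l-2)\bigl(\log l - c_2/(2\ln 2)\bigr)$. Using $\sum_l 2^{-l} l = 2$, $\sum_l 2^{-l} l^2 = 6$, and the definition of $c_2$, a short computation gives the identity $C = 4 c_5 / \ln 2$. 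Maximizing the concave quadratic $-2\eta^2/\ln 2 + \eta d C$ in $\eta$ yields $\eta^\star = dC\ln 2/4 = c_5 d$ and the maximum value $d^2 C^2 \ln 2 / 8 = 2 c_5^2 d^2/\ln 2$. Adding this to $c_3 + c_4 + 1/(2\ln 2)$ produces exactly $A_2'$, giving the claimed upper bound.

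For the lower bound I would plug the specific source $\mathring{\sX}$ with $\mathring p = 1/2 + c_5 d$ into the achievability computation of Lemma \ref{lemma:achievability}. That proof only uses that the source has i.i.d.\ runs and that its run-length distribution is close to $p_L^*$ with controlled tails; the geometric $p_L^M$ satisfies both requirements (with even better tails). Repeating the computations leading to Eq.~\eqref{eq:hyxdagger_estimate} and Eq.~\eqref{eq:first term_Idagger} with $p_L^M$ in place of $p_L^\dagger$ gives
\begin{align*}
I(\mathring{\sX}) = 1 - d\log(1/d) - A_1 d - \tfrac{1}{2} D(p_L^M \| p_L^*) + \tfrac{d}{2}\sum_{l} p_L^M(l)\, l\bigl(\log l - \tfrac{c_2}{2\ln 2}\bigr) + \bigl[c_3+c_4+\tfrac{1}{2\ln 2}\bigr] d^2 + O(d^{3-\eps}).
\end{align*}
Expanding in $\eta = c_5 d$, $D(p_L^M\|p_L^*) = 4 c_5^2 d^2/\ln 2 + O(d^3)$ and the linear-in-$l$ sum equals $c_5 d\cdot C + O(d^2)= 4c_5^2 d^2/\ln 2 + O(d^2)$, so the $d^2$ correction is exactly $2 c_5^2 d^2/\ln 2$, matching $A_2'$. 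The two bounds coincide because, by construction, $\mathring p = 1/2 + c_5 d$ is the maximizer of the same quadratic form that appears in the upper bound.

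The main obstacle is bookkeeping: carefully tracking the joint Taylor expansion in $\eta$ and $d$ so that cubic and higher cross-terms are genuinely $O(d^{3-\eps})$, and verifying the algebraic identity $C = 4c_5/\ln 2$ that ties the optimal Markov parameter to the constant $c_5$ in the theorem statement. All other ingredients (tail bounds on $p_L,q_L$, applicability of Lemma \ref{lemma:xy_similar_runs}, reduction to $\St_{\lfloor 1/d\rfloor}$, the modified deletion process) carry over from the proof of Theorem \ref{thm:main_theorem} without essential changes.
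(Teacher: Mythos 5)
Your overall route is the intended one (the paper itself only asserts that Theorem \ref{thm:markov_loss} follows from the machinery already developed), and your algebra is essentially right: the deformation direction $2^{-l}\cdot 2(l-2)\eta$, the identity $C=\sum_l 2^{-l}l(l-3)\log l=4c_5/\ln 2$, the maximizer $\eta^\star=c_5 d$, and the extremal value $2c_5^2 d^2/\ln 2$ all check out (the factor-of-two slip in your lower-bound display, where the sum is $2c_5 dC+O(d^2)$ rather than $c_5 dC$, washes out and the final constant is correct). However, there is a genuine gap in the upper bound: the a priori control $|\eta|=O(d^{(1-\eps)/2})$ that you extract from Remark \ref{rem:h_within_1minuseps} via Lemma \ref{lemma:mean_closeto2} is too weak to justify the step ``the bound equals $-2\eta^2/\ln 2+\eta dC$ plus $O(d^{3-\eps})$''. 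With $\eta$ as large as $d^{1/2}$, the discarded cubic terms in the expansion of $D(p_L\Vert p_L^*)$ are of size $\eta^3\asymp d^{3/2}$ and the cross terms $d\eta^2\asymp d^{2}$, i.e.\ they are at least as large as the $d^2$ coefficient you are trying to determine, so maximizing the truncated quadratic does not yield $A_2'$ and this is not mere bookkeeping. The fix is available from the tools you already cite: argue WLOG that $I(\sX)\geq 1-d\log(1/d)-A_1 d-d^{2-\eps/8}$ (otherwise the claimed bound is trivial), pass to $\St_{\lfloor 1/d\rfloor}$, invoke Lemma \ref{lemma:hy_is_large} to get $H(\sY)\geq 1-d^{2-\eps/2}$, and then Lemma \ref{lemma:xy_similar_runs_gammaSTRONG}(i) (applied at $l=1$, so $|p_L(1)-1/2|\leq d^{1-\eps}$) pins down $|\eta|\leq O(d^{1-\eps})$; only after this bootstrap are the neglected terms uniformly $O(d^{3-O(\eps)})$ and the quadratic maximization legitimate.

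A second, smaller issue is scope: the theorem is stated for all first-order Markov sources, while your single parameter $p$ covers only the symmetric ones. For transition probabilities $\prob(X_i=0\mid X_{i-1}=0)=1/2+\eta_0$ and $\prob(X_i=1\mid X_{i-1}=1)=1/2+\eta_1$, the Palm run-length law is the equal-weight mixture of two geometrics, whose first-order deformation is $2^{-l}(l-2)(\eta_0+\eta_1)$; with $|\eta_0|,|\eta_1|=O(d^{1-\eps})$ from the same bootstrap, the identical quadratic bound follows with $\eta$ replaced by $(\eta_0+\eta_1)/2$, but this reduction needs to be stated rather than assumed. With these two repairs (and the routine care you already flag in substituting $q_L$ by the geometric $p_L$ inside the divergence term, which should be done keeping the sign $-D(q_L\Vert p_L)\leq 0$ favorable), your argument matches what the paper intends.
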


Numerical evaluation yields $A_2' \approx  1.57796256$ and $c_5
\approx 0.60409609$. We have $A_2 - A_2'  \approx 0.10018339$,
implying that \emph{the restriction to Markov sources leads to a rate loss
  of $0.10018339\, d^2$ bits per channel use, with respect to the
  optimal coding scheme}.

\begin{remark}
\label{rem:lb_off_pt9dsq}
Lower bounds are derived in \cite{Drinea07lb} using Markov sources and `jigsaw' decoding. In this case we can show that the best achievable rate is
\begin{align*}
1 - d \log(1/d) -A_1 d + (A_2'-c_4) d^2 + O(d^{3-\eps}) \, ,
\end{align*}
and that $\mathring{\sX}$ achieves this rate to within $O(d^{3-\eps})$. Thus, the lower bounds in \cite{Drinea07lb} are off by $A_2 - A_2' - c_4 \approx 0.904 d^2$, to leading order.
\end{remark}

\begin{remark}
The utility of our asymptotic analysis is confirmed by considering the
prescription for the optimal
 optimal Markov source $\mathring{\sX}$ provided by Theorem
 \ref{thm:markov_loss}.
Drinea and Mitzenmacher  \cite{Drinea07lb} optimized numerically over Markov sources
obtaining, for instance, $p=0.53$  for $d=0.05$.
Our analytical prediction yields  $\mathring{p}(0.05)\approx 0.530204804$.
\end{remark}

In comparison, we have shown that $I(\sX^\dagger) = C -
O(d^{3-\eps})$. In fact, we conjecture that an
even stronger bound holds.
\begin{conj}
$I(\sX^\dagger) = C - \Theta(d^4)$
\end{conj}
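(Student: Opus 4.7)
The plan is to push the series expansion one order further, identify the optimal i.i.d.-runs process to order $d^2$, and then argue via a standard perturbation-theory calculation that the gap between $I(\sX^\dagger)$ and $C$ is quadratic in the second-order correction to the input distribution, hence of order $d^4$.

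The first step is to sharpen Lemmas \ref{lemma:achievability}--\ref{lemma:small_loss_by_restricting_runs} so their error terms become $O(d^{4-\eps})$ rather than $O(d^{3-\eps})$. The key quantitative tools to tighten are: (i) the modified deletion process of Section \ref{subsec:modified_deletion} currently discards runs with at least three deletions, contributing an error of order $d^3\log(1/d)\,\E[L^3]$; to reach $O(d^4)$ one must enlarge the ``good'' event to runs with at most three deletions and perform an analogous enumeration of contributing patterns, thus extending Lemma \ref{lemma:hatD_givenxy} and Corollary \ref{coro:hygivenx_q}; (ii) Lemma \ref{lemma:K_conditional_entropy} must be refined to track triple-deletion ambiguity patterns contributing to $H(K \mid X^n, Y)$ at order $d^3$; (iii) the self-improving bound of Lemma \ref{lemma:cyclic_hy_bound_SLstar} is iterated one further step to upgrade $H(\sY) \ge 1 - d^{2-\eps}$ into $H(\sY) \ge 1 - d^{3-\eps}$ for any near-optimal process; (iv) the run-length control lemmas of Section \ref{subsec:run_charac} are tightened by carrying one more term in the Taylor expansion of the relevant KL divergences, yielding $\|p_{L(k)} - p_{L(k)}^*\|_1 = O(d^{1-\eps})$ whenever $H(\sX) \ge 1 - d^{2-\eps}$.

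The second step is the perturbation argument itself. The refined variational calculation (heuristically outlined in Section \ref{sec:intuition}) identifies an i.i.d.-runs process $\sX^{\dagger\dagger}$ with length distribution $p_L^{\dagger\dagger} = p_L^* + d\,\Delta_1 + d^2\,\Delta_2$, where $\Delta_1(l) = 2^{-l}(l \ln l - c_2 l/2)$ is as in the paper and $\Delta_2$ is obtained by extremizing the order-$d^3$ part of the objective. Writing the rate, viewed as a functional of $p_L$, as $I(p_L; d) = \Phi_0(p_L) + d\,\Phi_1(p_L) + d^2\,\Phi_2(p_L) + O(d^{3-\eps})$, the dominant quadratic piece of $I$ around $p_L^*$ comes from the KL contribution $-D(p_L \| p_L^*)/\mu$, whose Hessian is strictly negative-definite of order $\Theta(1)$. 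A Taylor expansion of $I(\cdot\,;d)$ around the maximizer $p_L^{\dagger\dagger}$ then yields
\begin{align*}
I(\sX^{\dagger\dagger}) - I(\sX^\dagger) = \tfrac{1}{2}\, d^4\, \langle \Delta_2,\, (-H)\,\Delta_2\rangle + O(d^{5-\eps}),
\end{align*}
where $H$ is the negative-definite Hessian. Combined with the converse $C - I(\sX^{\dagger\dagger}) = O(d^{4-\eps})$ furnished by Step 1, this gives both the upper bound $C - I(\sX^\dagger) = O(d^{4-\eps})$ and, provided $\Delta_2 \not\equiv 0$, the matching lower bound $C - I(\sX^\dagger) = \Omega(d^4)$. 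Recovering exactly $\Theta(d^4)$ (i.e., removing the $\eps$ from the upper bound) should then follow from more careful bookkeeping of the polylogarithmic factors accumulating through Step 1.

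The main obstacle is Step 1. The combinatorics of the modified deletion process balloon at the next order: one must enumerate configurations with up to three deletions interacting across as many as five parent runs, producing many more case-by-case contributions to $H(D^n \mid X^n, Y, K)$. Controlling $H(\sY)$ to relative accuracy $d^3$ likewise forces a further iteration of the self-improving bound together with careful handling of the accumulating $\log(1/d)$ factors. A secondary conceptual step, needed for the lower bound, is verifying that $\Delta_2 \not\equiv 0$: this reduces to an explicit if tedious calculation once the order-$d^3$ objective is written down, but a hidden symmetry forcing $\Delta_2 \equiv 0$ would upgrade the conjecture to $C - I(\sX^\dagger) = \Theta(d^{2k})$ for some $k \ge 3$, and no such symmetry is apparent.
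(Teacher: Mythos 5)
The statement you are addressing is not proved in the paper at all: it is stated as a Conjecture, supported only by the one-paragraph smoothness heuristic that follows it (a change of order $d^2$ in the input law should cost $\Theta((d^2)^2)$ in rate). Your proposal is essentially that same heuristic expanded into a research program, and as written it is not a proof. The decisive steps are announced but not carried out: the entire order-$d^{4-\eps}$ converse machinery of Step 1 (re-doing Lemma \ref{lemma:hatD_givenxy}, Corollary \ref{coro:hygivenx_q}, Lemma \ref{lemma:K_conditional_entropy} and a further iteration of Lemma \ref{lemma:cyclic_hy_bound_SLstar} at the next order) is only described; the non-vanishing of $\Delta_2$ is left unverified; and the passage from $O(d^{4-\eps})$ to a genuine $O(d^4)$ upper bound is deferred to ``bookkeeping,'' even though every error estimate in the paper carries a $d^{-\eps}$ or $\log(1/d)$ factor by construction, so eliminating these is an unproven component of the claim rather than an accounting detail (without it you get $\Theta(d^4)$ from below but only $O(d^{4-\eps})$ from above, which is weaker than the conjectured statement).

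There is also a concrete structural gap in the plan itself: you treat the rate as a functional of the marginal run-length law $p_L$ and perturb only within the class of i.i.d.-runs processes. But as noted in Section \ref{sec:discussion}, the expected order-$d^2$ correction to the optimal input involves non-trivial dependence between consecutive run lengths; a correlation of size $\Theta(d^2)$ perturbs both $H(\sY)$ (through the slack in $H(\sY)\le H(L_{\sY})/\mu(\sY)$, which is quadratic in the correlation) and the rate at exactly order $d^4$ --- precisely the order at stake. Two consequences: (a) your fallback claim that $\Delta_2\equiv 0$ would upgrade the conjecture to $\Theta(d^{2k})$, $k\ge 3$, is unjustified, because the $\Omega(d^4)$ gap could be driven entirely by the correlation direction even when the optimal marginal correction vanishes; and (b) to establish the two-sided $\Theta(d^4)$ statement the Hessian/variational analysis must be carried out in a space that includes at least pairwise run dependence (e.g., functionals of $p_{L(2)}$), with a converse that is tight at order $d^4$ in that enlarged space --- the $q_L$-marginal-based converse you propose to sharpen cannot certify where, between the best i.i.d.-runs process and the best correlated process, the capacity actually sits at order $d^4$. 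So even granting all of Step 1, the argument as structured yields $C-I(\sX^\dagger)=O(d^{4-\eps})$ together with a lower bound conditional on $\Delta_2\not\equiv 0$, not the conjectured $\Theta(d^4)$.
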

The reasoning behind this conjecture is as follows: We expect the next
order correction to the optimal input distribution to be quadratic in
$d$. If $I(\sX)$ is a `smooth' function of the input distribution, a
change of order $d^2$ in the input distribution should
imply that $I(\sX)$ decreases by an amount $\Theta((d^2)^2)= \Theta(d^4)$ below capacity.

Our work leaves several open questions:
\begin{itemize}
\item Can the capacity be expanded as
\begin{align*}
C= 1- d \log(1/d) -A_1 d + A_2 d^2 + A_3 d^3 + A_4 d^4 + \ldots \,
\end{align*}
for small $d$? If yes, is this series convergent? In other words, is there a $d_0>0$ such that for all $d< d_0$, the infinite sum on the right has terms that decay exponentially in magnitude? We expect that the answer to both these questions is in the affirmative. We provide a very coarse reasoning for this below.

The analysis carried out in the present paper suggests that  the
optimal input distribution for $d < d_0$ does not have
`long range dependence'. In particular, we expect correlations to
decay exponentially in the distance between bits. Suppose we are
computing contribution to capacity due to `clusters' of $k$ nearby deletions. These `clusters' should correspond to $k$ deletions occurring within $2k+1$ consecutive runs. This should give us a term $A_k d^k$ with the error being bounded by the probability of seeing $(k+1)$ deletions in $2k+1$ consecutive runs. This error should decay exponentially in $k$ for $d< d_0$, assuming our hypothesis on correlation decay.

\item What is the next order correction to the optimal input distribution? It appears that this correction should be of order $d^2$ and should involve non-trivial dependence between the run length distribution of consecutive runs. It would be illuminating to shed light on the type of dependence that would be most beneficial in terms of maximizing rate $I(\sX)$ achieved. Moreover, it appears that computing this correction heuristically may, in fact, be tractable, using some of the estimates derived in this work.

\item Can the results here be generalized to other channel models of insertions/deletions?

\item What about the deletion channel in the large deletion probability regime, i.e., $d \rightarrow 1$? What is the best coding scheme in this limit? It seems this limit may be harder to analyze than the $d \rightarrow 0$ limit studied in the present work: For $d=1$ the channel capacity is 0 and there is no particular coding scheme that we can hope to modify slightly in order to achieve good performance for $d$ close to 1. This is in contrast to the case $d=0$, where we know that the i.i.d. Bernoulli$(1/2)$ input achieves capacity.

\item Can a similar series expansion approach be used to `solve' other hard channels in particular asymptotic regimes of interest?

\item We did not compute explicitly the constants in the error terms of our upper and lower bounds, thus preventing us from numerically evaluating our upper and lower bounds on capacity (cf. Remark \ref{rem:constants_not_computed}). It would be interesting to compute constants for the error terms leading to improved numerical bounds on capacity.
\end{itemize}

\vskip4pt

{\bf Acknowledgments.}
Yashodhan Kanoria is supported by
a 3Com Corporation Stanford Graduate Fellowship. This research was
supported by NSF, grants
CCF-0743978 and CCF-0915145, and a Terman fellowship.

\appendix

\section{Proofs of Preliminary results}
\label{app:prelim_results}

\begin{proof}[Proof of Theorem \ref{thm:cap_limit}]
This is just a reformulation of Theorem 1 in \cite{Dobrushin},
to which we  add the remark $C = \inf_{n\ge 1} C_n$, which is of independent
interest.
In order to prove this fact, consider the channel $W_{m+n}$,
and let $X^{m+n}= (X_1^m,X_{m+1}^{m+n})$ be its input.
The channel $W_{m+n}$ can be realized as follows.
First the input is passed through a channel
$\tW_{m+n}$ that introduces deletions independently in the two strings
$X_{1}^m$ and $X_{m+1}^{m+n}$ and outputs
$\tY(X_1^{m+n})\equiv (Y(X_{1}^m),|,Y(X_{m+1}^{m+n}))$
where $|$ is a marker. Then the marker is removed.

This construction proves that $W_{m+n}$ is  physically degraded
with respect to $\tW_{m+n}$, whence
\begin{eqnarray*}
(m+n)C_{m+n}&\le &\max_{p_{X^{m+n}}} I(X^{m+n};\tY(X_{1}^{m+n}))\\
&\le & mC_m+nC_n\, .
\end{eqnarray*}
Here the last inequality follows from the fact that $\tW_{m+n}$ is the product
of two independent channels, and hence the mutual information is maximized
by a product input distribution.

Therefore the sequence $\{nC_n\}_{n\ge 1}$ is superadditive, and
the claim follows
from Fekete's lemma.
\end{proof}

\begin{proof}[Proof of Lemma \ref{lemma:stationary_suffices}]
Take any stationary $\sX$, and let $I_n= I(X^n; Y(X^n))$. Notice that
 $Y(X_1^n)-X_1^n-X_{n+1}^{n+m}-Y(X_{n+1}^{n+m})$ form a Markov chain.
Define $\tY(X^{n+m})$ as in the proof of Theorem \ref{thm:cap_limit}.
We therefore have $I_{n+m} \leq I(X^{n+m};\tY(X^{n+m})) \leq I(X_1^m;\tY(X_1^m)) +I(X_{m+1}^{m+n};Y(X_{m+1}^{m+n})) = I_m+I_n$.
(the last identity follows by
stationarity of $\sX$). Thus $I_{m+n}\leq I_n+I_m$ and the limit
$\lim_{n\to\infty}I_n/n$
exists by Fekete's lemma, and is equal to $\inf_{n\geq 1}I_n/n$.

Clearly, $I_{n} \leq C_n$ for all $n$.
Fix any $\ve>0$. We will construct a process $\sX$ such that
\begin{align}
 I_{N}/N\geq  C - \ve \qquad \forall \; N>N_0(\ve)\, ,
\label{eq:liminf_closeto_limsup}
\end{align}
thus proving our claim.

Fix $n$ such that
$C_n \geq  C - \ve/2$. Construct $\sX$ with
i.i.d. blocks of length $n$ with common distribution $p^*(n)$ that
achieves the supremum in the definition of $C_n$.
In order to make this process stationary,
we make the first complete block to the
right of the position $0$ start at position
$s$ uniformly random in $\{1,2,\dots,n\}$.
We call the position $s$ the offset.
The resulting process is clearly stationary and ergodic.

Now consider $N=kn+r$ for some $k \in \naturals$
and $r \in \{0, 1, \ldots, n-1\}$. The vector $X_1^N$ contains
at least $k-1$ complete blocks of size $n$, call them $x(1), x(2),
\ldots, x(k-1)$ with $x(i) \sim p^*(n)$. The block $x(1)$
starts at position $s$. There will be  further $r+n - s+1$ bits
at the end, so that $X_1^N=(X_1^{s-1}, x(1), x(2), \ldots, x(k-1),
X_{s+kn}^N)$.
We write $y(i)$ for $Y(x(i))$.
Given the output $Y$, we define
$\tY= ( Y(X_1^{s-1})\bb  y(1) \bb y(2) \bb \ldots \bb y(k-1)\bb Y(X_{s+(k-1)n}^N))$,
by introducing $k$ synchronization symbols $\bb$.
There are at most $(n+1)^k$ possibilities for $\tY$ given $Y$
(corresponding to potential placements of synchronization symbols).
Therefore we have
\begin{align*}
H(Y) &= H(\tY) - H(\tY|Y)\\
&\geq H(\tY) - \log((n+1)^k)\\
&\geq (k-1)H(y(1)) - k\log(n+1)\, ,
\end{align*}
where we used the fact that the $(x(i),y(i))$'s are i.i.d..
Further
\begin{align*}
H(Y|X^N) \leq H(\tY|X^N) \leq (k-1)H(y(1)|x(1)) + 2n\, ,
\end{align*}
where the last term accounts for bits outside the blocks.
We conclude that
\begin{align*}
I(X^N;Y(X^N)) &= H(Y) - H(Y|X^N)\\
&\geq (k-1)nC_n - k \log(n+1) - 2n\\
&\geq N(C_n - \ve/2)
\end{align*}
provided $\log(n+1) /n < \ve / 10 $ and $N > N_0\equiv 10n/\ve$.
Since $C_n\ge C-\ve/2$, this in turn implies
Eq.~(\ref{eq:liminf_closeto_limsup}).
\end{proof}

\section{Proofs of Lemmas in Section \ref{subsec:run_charac}}
\label{app:run_charac}

\begin{proof}[Proof of Lemma \ref{lemma:L_tail_control}]
Combining (\ref{eq:run_hx_upper_bd}), Lemma \ref{lemma:mean_closeto2} and (\ref{eq:Lentropy})
it follows that for small enough $d$, we must have
\begin{align}
D(p_L || p_L^*) \leq 3 d^{\beta}
\label{eq:D_is_small}
\end{align}
to achieve $H(\sX) \geq 1- d^{\beta}$.
Now define $\Delta \equiv \sum_{l=l_0}^\infty\  l p_L(l)$.
Take $\alpha = e^{3/5}$.
We have
\begin{align*}
\sum_{l=l_0}^\infty \frac{l}{\alpha^l} = \frac{l_0 \alpha^{-l_0}}{(1-\alpha)^2}
 < d^{\beta}
\end{align*}
for sufficiently small $d$, since $\alpha^{-l_0} \approx \exp \left \{ \frac{6}{5} \beta \log d \right \}$.
Thus,
\begin{align*}
&\sum_{l=l_0}^\infty l (p_L(l) -\alpha^{-l}) &\geq  \Delta - d^{\beta}\nonumber \\
\Rightarrow &\sum_{l \in \I}l p_L(l) &\geq   \Delta - d^{\beta}
\end{align*}
where $\I = \{l: l \geq l_0, p_L(l) \geq \alpha^{-l} \}$.

This yields,
\begin{align}
\sum_{l \in \I} p_L(l) \log \frac{p_L(l)}{p_L^*(l)} \geq \sum_{l \in \I} lp_L(l) \log \frac{2}{\alpha} \geq \log(2/\alpha)( \Delta - d^{\beta})
\label{eq:I_contri_to_D}
\end{align}

It remains to show that the sum of terms from outside $\I$ is not too small.
By Markov inequality, we have
\begin{align}
&\sum_{l \in \I} p_L(l) &\leq \Delta / l_0 \nonumber\\
\Rightarrow &\sum_{l \notin \I} p_L(l) &\geq 1 -\Delta / l_0
\label{eq:notI_sum_is_large}
\end{align}
 With a fixed sum constraint on $(p_L(l), l \notin \I)$, the smallest value of
$\sum_{l\notin \I} p_L(l) \log \frac{p_L(l)}{p_L^*(l)}$ is achieved when
\begin{align}
\frac{p_L(l)}{p_L^*(l)} = \const= \frac{\sum_{l \notin \I} p_L(l)}{\sum_{l \notin \I} 2^{-l}} \qquad \forall l \notin \I
\label{eq:notI_fixed_ratio}
\end{align}
Note that this ratio is smaller than $1$.
It follows from (\ref{eq:notI_fixed_ratio}) and (\ref{eq:notI_sum_is_large}) that for small $d$,
\begin{align}
\sum_{l\notin \I} p_L(l) \log \frac{p_L(l)}{p_L^*(l)} \geq \log(\sum_{l \notin \I} p_L(l)) \geq - 2 \Delta/l_0
\label{eq:notI_contri_to_D}
\end{align}
since we know that $\Delta \leq \mu(\sX)= 3$, and hence $\Delta/l_0 \leq 1/10$.
The lemma follows by combining (\ref{eq:I_contri_to_D}), (\ref{eq:notI_contri_to_D}) and
$D(p_L || p_L^*) \leq 3 d^{\beta}$.

\end{proof}

\begin{proof}[Proof of Corollary \ref{coro:Lk_tail_control}]
Clearly $L_1 + \ldots + L_k \geq k l_*$ occurs only if at least one of the $L_i$'s is at least
$l_*$. Also, the distribution $p_{L(k)}$ has a marginal $p_L$ for each individual $L_i$.
We have
\begin{align*}
&\sum_{l_1+ \ldots +l_k \geq kl_*}\  (l_1 + \ldots +l_k) p_{L(k)}(l_1, \ldots, l_k)\\
&\leq \sum_{i=1}^k \sum_{l_1+ \ldots +l_k \geq kl_* }  \mathbb{I}[l_i \mbox{ is the largest} ] \,
k l_i \, p_{L(k)}(l_1, \ldots, l_k)\\
&\leq \sum_{i=1}^k  \sum_{l_i=l_*}^\infty k l_i \, p_{L} (l_i)\\
&= k^2 \sum_{l=l_*}^\infty  l p_{L} (l)
\end{align*}
The result now follows from the first inequality in Lemma \ref{lemma:L_tail_control}.
\end{proof}

\begin{proof}[Proof of Lemma \ref{lemma:L_TVstrong}]
Repeat proof of Lemma \ref{lemma:L_TV}.
\end{proof}

\begin{proof}[Proof of Proposition \ref{propo:Y_stat_ergodic}]
A time shift by a constant in $\sY$ corresponds to a time shift by a random amount in
$\sX$. The random shift in $\sX$ depends only on the $\sD$ and is hence independent of $\sX$.
Also, $\sD$ is independent identically distributed.
Thus, stationarity of $\sX$ implies stationarity of $\sY$.
\end{proof}

\begin{proof}[Proof of Lemma \ref{lemma:pL_tail_control_fromHy}]
Consider a run $R$ of length $l \geq 2 l_0$ in $\sX$. With probability at least $(1-d)^2$, the
runs bordering $R$ do not disappear due to deletions. Independently, with probability
$\prob [ \textup{Binomial}(l, 1-d) \geq l/2 ]$ at least half the bits of $R$ survive deletion.
Thus, for small $d$, with probability at least $1/2$, $R$ leads to a run of length at least $l/2$ in
$\sY$. Moreover, runs can only disappear in going from $\sX$ to $\sY$. It follows that
\begin{align*}
\sum_{l=l_0}^\infty\  l q_L(l) \ &\geq \sum_{l=2l_0}^\infty\  \left (\frac{l}{2} \right )\left ( \frac{p_L(l)}{2}\right )\, .
\end{align*}
From Lemma \ref{lemma:L_tail_control} applied to $\sY$, we know that
\begin{align*}
\sum_{l=l_0}^\infty\  l q_L(l) \ &\leq 20 d^{\beta} \, .
\end{align*}
The result follows.
\end{proof}

\begin{proof}[Proof of Corollary \ref{coro:pLk_tail_control_fromHy_STRONG}]
Analogous to proof of Corollary \ref{coro:Lk_tail_control}.
\end{proof}

\begin{proof}[Proof of Lemma \ref{lemma:xy_similar_runs}]
We adopt two conventions. First, when we use the $O(\cdot)$
or the $\Omega(\cdot)$ notation, the constant involved does not depend on
the particular $\sX, \sY$ under consideration. Second, we use `typical' in this proof to refer to events
having a probability $\Omega(d^{2 - \delta})$, for some $\delta>0$. Thus, an event with probability $2d^2$ is not
typical, but an event with probability $d^{1.5}$ is typical.

We ignore boundary effects due to runs at the beginning and end.

First, we estimate the factor due to disappearance of runs in moving from $\sX$ in $\sY$.
Define
\begin{align*}
r(\sX) \equiv \lim_{n \rightarrow \infty} \, \frac{\mbox{Number of runs in }Y(X^n)}{\mbox{Number of runs in }X^n}
\end{align*}
We have almost sure convergence of this ratio to a constant value  due to ergodicity.

Runs disappear typically due
to runs of length $1$ being deleted, and the runs at each end being fused with each other (i.e. neither of them is deleted).
Such an event reduces the number of runs by $2$.
Non-typical run deletions lead to a correction factor that is $O(d^2)$.
Hence, the expected number of runs in $Y$ per run
in $X^n$ is $ 1 - 2 p_L(1) d + O(d^2)$. It follows from a limiting argument that
\begin{align}
\label{eq:Z_intermsof_pL1}
r = 1 - 2 p_L(1) d + O(d^2)
\end{align}
In this proof, we make use of the following implication of Lemma \ref{lemma:L_TVstrong}.
\begin{align}
\left | p_{L(k)}(l_1, \ldots, l_k) - 2^{-\sum_{i=1}^k l_i} \right | \leq \const' \sqrt{k} d^{\beta/2}
\label{eq:L_Vstrong}
\end{align}
We immediately have $p_L(1)=1/2+O(d^{\beta/2})$ and hence
$r = 1- d + O(d^{1+\beta/2})$.

Consider $q_L(1)$. Blocks of length $1$ in $Y$ typically arise due to blocks in $\sX$ of length $1$ or $2$.
In case of a block of length $1$, we require that it isn't deleted, and also that bordering blocks
 are not deleted. Consider a randomly selected run in $\sX$ (Formally,
we pick a run uniformly at random in $X^n$ and then take the limit $n \rightarrow \infty$).
The run has length $L=1$ with probability $p_L(1)$. Define
\begin{itemize}
\item $\Ev_1 \equiv$ No bordering block of length $1$.  We have $\prob[\Ev_1,L=1]=(1/8)+ O(d^{\beta/2})$.
\item $\Ev_2 \equiv$ One bordering block of length $1$. We have $\prob[\Ev_2,L=1]=(1/4)+ O(d^{\beta/2})$.
\item $\Ev_3 \equiv$ Two bordering blocks of length $1$. We have $\prob[\Ev_3,L=1]=(1/8)+ O(d^{\beta/2})$.
\end{itemize}
 Probabilities were estimated using $p_L(1) = 1/2 + O(d^{\beta/2})$, $p_{L(2)}(1,1) = 1/4 + O(d^{\beta/2})$ and $p_{L(3)}(1,1,1) = 1/8 + O(d^{\beta/2})$, and their immediate consequences $p_{L(3)}(1,1,>1) = 1/8 + O(d^{\beta/2})$, $p_{L(3)}(>1,1,1) = 1/8 + O(d^{\beta/2})$ and $p_{L(3)}(>1,1,>1) = 1/8 + O(d^{\beta/2})$. We made of Eq.~\eqref{eq:L_Vstrong}.

Probability of arising from block of length $1$ is
\begin{align*}
&\phantom{=} (1-d) \, \big \{
\prob[\Ev_1,L=1](1-O(d^2)) +
\prob[\Ev_2,L=1](1-d)(1-O(d^2)) +
\prob[\Ev_3,L=1](1-d)^2
  \big \} \\
&= p_L(1) (1 - 2d) + O(d^{1+\beta/2})
\end{align*}
Probability
of arising from a block of length $2$ is $p_L(2) 2d + O(d^2) = d/2 + O(d^{1+\beta/2})$, using
Eq.~\eqref{eq:L_Vstrong}. It follows that
\begin{align*}
q_L(1) = \frac{p_L(1)(1 - 2d) + d/2 + O(d^{1+\beta/2})}{r} = p_L(1) + O(d^{1+\beta/2})
\end{align*}
as required.

Now consider $q_L(l)$ for $1 < l < \const \log(1/d)$. Typical modes of creation of such a run in $\sY$ are:
\begin{enumerate}
\item Run of length $l$ in $\sX$ that goes through unchanged.
\item Two runs in $\sX$ being fused due to the length $1$ run between them being deleted. Fused
runs have no deletions. They have $l$ bits in total.
\item Run of length $l+1$ in $\sX$ that suffers exactly one deletion. Bordering runs do not
disappear.
\end{enumerate}
For mode 1, we define events $\Ev_1, \Ev_2, \Ev_3$ as above. Probability estimates are:
\begin{itemize}
\item $\prob[\Ev_1,L=l] = 2^{-l-2}+ O(d^{\beta/2})$.
\item $\prob[\Ev_2,L=l] = 2^{-l-1} + O(d^{\beta/2})$.
\item $\prob[\Ev_3,L=l] = 2^{-l-2} + O(d^{\beta/2})$.
\end{itemize}
using Eq.~\eqref{eq:L_Vstrong} as we did for $L=1$. Thus, probability of creation from randomly selected run
via mode 1 is
\begin{align*}
&\phantom{=} (1 - d)^l \, \big \{
\prob[\Ev_1,L=l](1-O(d^2)) +
\prob[\Ev_2,L=l](1-d)(1-O(d^2)) +
\prob[\Ev_3,L=l](1-d)^2
  \big \} \\
&= p_L(l) - 2^{-l}(l+1)d + O(d^{1+\beta/2-\eps})
\end{align*}
for any $\eps>0$, since $l < \const \log(1/d)$.

The probability of a random set of three consecutive runs being such that the middle run has length $1$
and bordering runs have total length $l$ is $(l-1) 2^{-l-1} + O(d^{\beta/2-\eps})$ using Eq.~\eqref{eq:L_Vstrong} and $l < \const \log(1/d) < d^{-\eps}$ for small enough $d$.
Probability of the middle run being deleted and the other two runs being left intact, along with
bordering runs of this set of three runs not being deleted, is $d + O(l d^2)$. Thus, probability
of creation via mode 2 is $(l-1) 2^{-l-1}d + O(d^{1+\beta/2-\eps})$.

It is easy to check that the probability of mode 3 working on a randomly selected run is $(l+1) \,2^{-l-1} d+
O(d^{1+\beta/2})$.

Combining, we have
\begin{align*}
q_L(l) &= r^{-1}\left \{ p_L(l) -  2^{-l}(l+1)d + (l-1) 2^{-l-1}d + (l+1) \,2^{-l-1} d + O(d^{1+\beta/2-\eps})  \right \} \nonumber\\
&= p_L(l) +O(d^{1+\beta/2-\eps})
\end{align*}
This completes the proof of (i).

For (ii), simply note that
\begin{align*}
\frac{\mu(\sX)}{\mu(\sY)} = r(\sX) \times \lim_{n \rightarrow \infty} \frac{n}{\textup{Length of } Y(X^n)} = \frac{r(\sX)}{1-d}
\end{align*}
It follows from Eq.~\eqref{eq:Z_intermsof_pL1} that
\begin{align}
\label{eq:muXminusmuY_intermsof_pL1}
|\mu(\sX) - \mu(\sY)| \leq 4 \big |p_L(1) - 1/2\big | d \, + \, \const_3 d^2
\end{align}
for some $\const_3 < \infty$.
Eq.~\eqref{eq:muX_close2_muY} follows using Lemma \ref{lemma:L_TV} to bound $p_L(1)$.
\end{proof}

\begin{proof}[Proof of Lemma \ref{lemma:xy_similar_runs_kSTRONG}]
Similar to proof of Lemma \ref{lemma:xy_similar_runs}(i). We use Eq.~\eqref{eq:L_Vstrong} again, and make use of $k \leq \sum_{i=1}^k l_i \leq \const \log (1/d)$ to deduce that $\sqrt{k+2} \leq d^{-\eps/2}$ for small enough $d$.
\end{proof}

\begin{proof}[Proof of Lemma \ref{lemma:qL_tail_control_fromHx}]
From Lemma \ref{lemma:xy_similar_runs}(ii), we know that
\begin{align}
\left | \sum_{l=1}^\infty  lp_L(l) - \sum_{l=1}^\infty lq_L(l) \right |\leq \const_1 d^{1+\beta/2}
\label{eq:qLHx_1}
\end{align}

Recall $l \equiv \lfloor 4 \log(1/d)\rfloor$. Using Lemma \ref{lemma:xy_similar_runs}(i), we deduce
\begin{align}
\left | \sum_{l=1}^{\ell-1}  lp_L(l) - \sum_{l=1}^{\ell-1} lq_L(l) \right |\leq \const_2 d^{1+\beta/2 -\eps/2}
\label{eq:qLHx_2}
\end{align}

From Lemma \ref{lemma:L_tail_control}, we know that
\begin{align}
 \sum_{l=\ell}^\infty  lp_L(l) \leq \const_3 d^{\beta}
 \label{eq:qLHx_3}
\end{align}
Note that $\const_1, \const_2, \const_3$ do not depend on $\beta$.

Combining Eqs.~\eqref{eq:qLHx_1}, \eqref{eq:qLHx_2} and \eqref{eq:qLHx_3}, and using $\beta \leq 2$, we arrive at the desired result.

\end{proof}

\begin{proof}[Proof of Lemma \ref{lemma:xy_similar_runs_gammaSTRONG}]
By Lemma \ref{lemma:L_TVstrong} applied to $\sY$, we know that
\begin{align*}
\sum_{l_1=1}^\infty \sum_{l_2=1}^\infty \ldots \sum_{l_k=1}^\infty
\left|q_{L(k)}(l_1, l_2, \ldots, l_k) - p_{L(k)}^*(l_1, \ldots, l_k ) \right| \leq
\const_5 \sqrt{k}\,d^{\gamma/2}\, .
\end{align*}
Using Lemma \ref{lemma:xy_similar_runs_kSTRONG}, we have for $d<d_0(\const,\gamma)$, for any integer $k$ and $(l_1, \ldots, l_k)$ such that $\sum_{i=1}^k l_i < \const \log(1/d)$.
\begin{align*}
 \left|p_{L(k)}(l_1, l_2, \ldots, l_k) - q_{L(k)}(l_1, l_2, \ldots, l_k) \right| \leq
\const_6 \,d\, .
\end{align*}
Thus, we obtain Eq.~\eqref{eq:L_Vstrong_fromY},
using $k<\const \log(1/d)< d^{-\eps}$ for small $d$.
Eq.~\eqref{eq:L_Vstrong_fromY} follows. Also, note that we can deduce
\begin{align}
|p_L(1) - p_L^*(1)| \leq 2 \const_5 d^{\gamma/2}
\label{eq:pL1_bound_using_Hy}
\end{align}
for small enough $d$.
We repeat the proof of Lemma \ref{lemma:xy_similar_runs}(i) (or Lemma \ref{lemma:xy_similar_runs_kSTRONG}), using Eq.~\eqref{eq:L_Vstrong_fromY} instead of Eq.~\eqref{eq:L_Vstrong} to obtain Eq.~\eqref{eq:xy_similar_runs_gammaSTRONG}. This completes
the proof of (i).

For (ii), we proceed as follows to prove Eqs.~\eqref{eq:muX_close2_two_using_Hy} and \eqref{eq:muX_close2_muY_using_Hy}. In the proof of Lemma \ref{lemma:xy_similar_runs}(ii), we deduced that $|\mu(\sX) - \mu(\sY)| \leq 4 \big |p_L(1) - 1/2\big | d \, + \, \const_7 d^2$
(this is Eq.~\eqref{eq:muXminusmuY_intermsof_pL1} with the constant renamed). Using Eq.~\eqref{eq:pL1_bound_using_Hy} to bound $p_L(1)$, we obtain Eq.~\eqref{eq:muX_close2_muY_using_Hy}. From Lemma \ref{lemma:mean_closeto2} applied to $H(\sY)$, we know that $|\mu(\sY) - 2|\leq 7 d^{\gamma/2}$. Eq.~\eqref{eq:muX_close2_two_using_Hy} follows.
\end{proof}

\begin{proof}[Proof of Lemma \ref{lemma:q_exp_decay_in_St1byd}]
Associate each run in $\sY$ with the run in $\sX$ from which its first bit came. Consider any run $R_P$ in $\sX$. If it gives rise to a run in $\sY$ of length $\lambda \lfloor 1/d \rfloor$, then we know that the runs $R_{P+1}, R_{P+3}, \ldots, R_{P+2\lfloor \lambda -0.1 \rfloor-1}$ were all deleted (since $\sX \in \St_{\lfloor 1/d \rfloor}$). This occurs with probability at most $d^{\lfloor \lambda -0.1 \rfloor}$. Further, for each run in $\sX$, there are $\mu(\sX)(1-d)/\mu(\sY)$. This implies
\begin{align*}
q_L(\lambda \lfloor 1/d \rfloor) \leq \frac{\mu(\sY)}{\mu(\sX)(1-d)} d^{\lfloor \lambda -0.1 \rfloor}
\end{align*}
From Lemmas \ref{lemma:mean_closeto2} and \ref{lemma:xy_similar_runs}(ii), we know that
$|\mu(\sX)-2|<0.1$ and $|\mu(\sY)-2|<0.1$ for small enough $d$. Plugging into the above equation yields the desired result.
\end{proof}

\begin{proof}[Proof of Lemma \ref{lemma:tmu_closeto4}]
We make use of Eq.~\eqref{eq:superrun_hX_upper_bd}. Maximizing $H(\tT)$ for fixed $\tmu$, it is not hard to deduce that
\begin{align}
\frac{H(\tT)}{\tmu} &\leq f(\tmu) \label{eq:bound_tmu}\\
\textup{where } f(x) &\equiv  - \frac{2}{x} - \left(1- \frac{2}{x}\right) \log (x -2) + \log x
\nonumber
\end{align}
with equality iff $\sX$ consists of i.i.d. super-runs with $p_{\tT}(\lr, l-\lr) = (\lambda -1)^2 \lambda^{-l}$ where $\lambda = \tmu/(\tmu-2)$. Now, using Eq.~\eqref{eq:superrun_hX_upper_bd}, $H(\sX) \leq H(\tT)/\tmu$, and Eq.~\eqref{eq:bound_tmu}, we know that we must have $f(\tmu) \geq 1- d^{-\beta}$. Now, we have $f(4)=1$. Further, it is easy to check that $f(\cdot)$ achieves its unique global and local maximum at $4$, increasing monotonically before that and decreasing monotonically after that. It follows that for any fixed $\eps>0$, for small enough $d$, we must have $|\tmu - 4|\leq \eps$. It then follows from Taylor's theorem that
$f(\tmu)\leq 1 - (\tmu -4)^2/15$, so that we must have $|\tmu -4| \leq 4 d^{\beta/2}$ for $d \leq d_0$, where $d_0>0$.
\end{proof}

\begin{proof}[Proof of Lemma \ref{lemma:tL_tail_bound}]
An explicit calculation yields
\begin{align*}
H(\tT) = \tmu(\sX) - D(p_{\tT}||p_{\tT}^*)
\end{align*}
The proof now mirrors the proof of Lemma \ref{lemma:L_tail_control}, making use of Lemma \ref{lemma:tmu_closeto4} in place of Lemma \ref{lemma:mean_closeto2}.
\end{proof}

\begin{proof}[Proof of Lemma \ref{lemma:tL_tail_control_fromHy}]
It is easy to see that $f_{\sX}=\sum_{l=\ell}^\infty\ l p_{\tL}(l) /\tmu(\sX)$ is the asymptotic
fraction of bits in $\sX$ that are part of super-runs of length at least $\ell$.
Similarly, $f_{\sY} \sum_{l=\ell}^\infty\ l q_{\tL}(l) /\tmu(\sX)$ is the asymptotic
fraction of bits in $\sX$ that are part of super-runs of length at least $\ell$.

We argue that $f_{\sY}\geq 0.9 f_{\sX}$. Consider any bit $b_P$ at position $P$ in $\sX$ that is part of a super-run $S_i$ with length $\tL_i \geq \ell$. Consider
a contiguous substring of $S_i$ that includes $b_P$ of length exactly $\ell$. Clearly such a substring exists. The probability that it does not undergo any deletion is at least $1-\ell d \leq 0.9$ for small enough $d$. Further, if this substring does not undergo any deletion, then all bits in this substring are part of the same super-run in $\sY$, which must therefore have length at least $\ell$. It follows that bit $b_P$ is part of a super-run of length at least
$\ell$ in $\sY$ with probability at least $0.9$. Thus, we have proved $f_{\sY}\geq 0.9 f_{\sX}$. From Lemma \ref{lemma:tmu_closeto4}, it follows that $\tmu(\sX) \leq 5$ and
$\tmu(\sY) \geq 3$ for small enough $d$. Putting these facts together leads to the result.
\begin{align*}
\sum_{l=\ell}^\infty\  l p_{\tL}(l) \leq 5 f_{\sX} \leq 5 f_{\sY} / 0.9 \leq \frac{5}{0.9 \cdot 3 } \sum_{l=\ell}^\infty l q_{\tL}(l) \leq 80 d^{\gamma} \, ,
\end{align*}
where we have made use of Lemma \ref{lemma:tL_tail_bound} applied to $\sY$.
\end{proof}

\begin{proof}[Proof of Corollary \ref{coro:tLk_tail_control_fromHy_STRONG}]
Analogous to proof of Corollary \ref{coro:Lk_tail_control}.
\end{proof}

\section{Proof of Lemma \ref{lemma:K_conditional_entropy}}
\label{app:K_conditional_entropy_proof}

The proof of Lemma \ref{lemma:K_conditional_entropy} is quite intricate and requires us to 
define a new modified deletion process in terms of super-runs.

Now we define a new modification to the deletion process, we call it the perturbed deletion process to avoid confusion with the modified deletion process $\hsD$.

The input process $\sX$ is divided into super-runs as $\ldots, S_{-1}, S_0, S_1, \ldots$\,(cf. Definition \ref{def:super-run}).
For all integers $i$, define:
\begin{enumerate}[]
\item $\bsZ^{i} \equiv$
Binary process that is zero throughout except if $(S_i, S_{i+1}, S_{i+2}))$ have three or more deletions in total, in which case $\bZ^{i}_l=1$ if and only if $X_l \in S_i$ and $D_l=1$.
\end{enumerate}

Define
\begin{align*}
\bsZ = \sum_{i=-\infty}^\infty \bsZ^{i}
\end{align*}
where $\sum$ here denotes bitwise OR.
Finally, define $\bsD(\sD, \sX) \equiv \sD \oplus \bsZ $ (where $\oplus$ is componentwise
sum modulo $2$).
The output of the channel is simply defined by
deleting from $X^n$ those bits whose positions correspond to $1$s in
$\bsD$. We define $\bK$ for the modified deletion process similarly to $K$.

We make use of the following fact:
\begin{propo}
\label{propo:product_of_increasing}
Consider any integer $m>0$. Let $U_1, U_2, \ldots, U_m$ be random variables, taking values in $\naturals$, that have the same marginal distribution, i.e., $U_i \sim U$ for $i=1, 2, \ldots, m$, and arbitrary joint distribution. Let $f_1, f_2, \ldots, f_m:\naturals \rightarrow \reals_+$ be non-decreasing functions. Then we have
\begin{align*}
\E\bigg[\prod_{i=1}^m f_i(U_i)\bigg] \leq \E\bigg[\prod_{i=1}^m f_i(U)\bigg]
\end{align*}
\end{propo}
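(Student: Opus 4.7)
The plan is to reduce the inequality to the classical Fréchet--Hoeffding upper bound by means of a ``layer-cake'' representation of each $f_i$. For each $i$, since $f_i$ is non-decreasing and non-negative, one can write
\begin{equation*}
f_i(u)=\int_{0}^{\infty}\ind\bigl(f_i(u)\ge s\bigr)\,ds=\int_{0}^{\infty}\ind\bigl(u\ge g_i(s)\bigr)\,ds,
\qquad g_i(s)\equiv\inf\{u\in\naturals:f_i(u)\ge s\}.
\end{equation*}
Multiplying these representations and applying Fubini yields
\begin{equation*}
\E\!\left[\prod_{i=1}^m f_i(U_i)\right]=\int_{\reals_+^m}\prob\!\left(\,\bigcap_{i=1}^m\{U_i\ge g_i(s_i)\}\right)d s_1\cdots d s_m,
\end{equation*}
and analogously for $\E[\prod_i f_i(U)]$ with all $U_i$ replaced by the common variable $U$.

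Once both sides are in this form, the inequality becomes a pointwise comparison (for each $(s_1,\dots,s_m)$) of two probabilities that share the same marginals $\prob(U\ge g_i(s_i))$. For the left-hand side I will invoke the trivial upper bound $\prob(\bigcap_i A_i)\le\min_i\prob(A_i)$, together with the common marginal hypothesis $U_i\ed U$, to conclude
\begin{equation*}
\prob\!\left(\bigcap_{i=1}^m\{U_i\ge g_i(s_i)\}\right)\le \min_{i}\prob\bigl(U\ge g_i(s_i)\bigr).
\end{equation*}
For the right-hand side the crucial observation is that the events $\{U\ge g_i(s_i)\}$ are \emph{nested} (they all concern the same random variable), so their intersection equals the event with the largest threshold, i.e.
\begin{equation*}
\prob\!\left(\bigcap_{i=1}^m\{U\ge g_i(s_i)\}\right)=\prob\!\left(U\ge \max_i g_i(s_i)\right)=\min_{i}\prob\bigl(U\ge g_i(s_i)\bigr).
\end{equation*}
Combining the two displays gives the desired pointwise inequality, and integrating over $s_1,\dots,s_m$ concludes the proof.

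There is no real obstacle in this argument: the monotonicity and non-negativity of $f_i$ enter only through the validity of the layer-cake representation (with a non-negative measure $ds$ on $\reals_+$), and the common marginal assumption is exactly what is needed to match the two sides after the reduction. In particular, one does not need any independence, exchangeability, or further structural hypothesis on the joint law of $(U_1,\dots,U_m)$; the inequality is the multivariate Fréchet--Hoeffding bound applied to the supermodular (in fact, comonotonically maximal) function $\prod_i f_i$.
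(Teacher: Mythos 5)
Your proof is correct, and it takes a genuinely different route from the paper's. The paper argues by a monotone-class scheme: it establishes the inequality for indicator functions $\ind_{[a,\infty)}$ via the same elementary bound $\prob(U\ge a,\,V\ge b)\le\min\{\prob(U\ge a),\prob(U\ge b)\}=\prob(U\ge\max(a,b))$, extends it to positive linear combinations of such indicators, then to all non-negative non-decreasing functions by monotone convergence, and does this in two successive passes (first in the argument $f$, then in $g$), writing the details only for $m=2$ and asserting the extension to general $m$. You instead use a layer-cake representation $f_i(u)=\int_0^\infty\ind\bigl(u\ge g_i(s)\bigr)\,ds$ (valid since $f_i$ is non-decreasing and non-negative, with $g_i(s)$ the minimal threshold), apply Tonelli to reduce both sides to integrals of probabilities of intersections of threshold events, and compare the integrands pointwise: the left integrand is at most $\min_i\prob(U\ge g_i(s_i))$ by the common-marginal hypothesis, while the right integrand equals that minimum because the events concern a single $U$ and are nested. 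The indicator-level comparison at the heart of both arguments is the same, but your reduction handles arbitrary $m$ in one stroke and avoids the sequential class-closure bookkeeping; the paper's version, by contrast, makes explicit the approximation-by-simple-increasing-functions step that your integral representation packages into Tonelli. No gaps: non-negativity makes Tonelli applicable without integrability assumptions, and the case $g_i(s_i)=+\infty$ (empty superlevel set) contributes zero to both sides.
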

\begin{proof}[Proof of Proposition \ref{propo:product_of_increasing}]
We prove the result for $m=2$. The proof can easily be extended to arbitrary $m\in \naturals$.

We want to show that for random variables $U$ and $V$, with $U \sim V$, and non-decreasing, non-negative valued functions $f,g$, we have
\begin{align*}
\E[f(U) g(V)] = \E[f(U) g(U)]
\end{align*}

\noindent {\bf Part I:}

Define $\cH= \{f: \E[f(U)\ind(V\geq b)] \leq \E[f(U)\ind(U\geq b)], \ \forall b \in \reals\}$.

\noindent{\bf Claim:} The class $\cH$ contains all non-negative, non-decreasing functions $f$.

\noindent{Proof of Claim:}\\
\noindent(i) We have $\ind_{[a, \infty)} \in \cH, \forall a \in \reals$.
\begin{align*}
\E[\ind(U\geq a)\ind(V\geq b)] \leq \min \big \{ \prob(U\geq b), \prob(U\geq a) \big \}
= \prob(U \geq \max(a,b)) = \E[\ind(U\geq a)\ind(U\geq b)]
\end{align*}

\noindent(ii) If $f_1, f_2 \in \cH$ then $c_1 f_1 + c_2 f_2 \in \cH$ for any $c_1 >0, c_2>0$.\\
This follows from linearity of expectation.

Define the class of `simple increasing functions'
 \begin{align*}
 \cI \equiv \{f: \exists k \in \naturals \textup{ s.t. }f = \sum_{i=1}^k c_i \ind_{[a_i, \infty)} \textup{ for some } c_i>0, a_i \in \reals \textup{ for }i=1,2, \ldots, k\}
 \end{align*}

\noindent(iii) It follows from (i) and (ii) that $\cI \subseteq \cH$.

Now, it is not hard to see that for any non-negative non-decreasing $f$, we can find a monotone non-decreasing sequence of functions $(f_n)_{n=1}^\infty \in \cI$ such that $f_n \uparrow f$. By the monotone convergence theorem, we have
\begin{align*}
\lim_{n \rightarrow \infty}\E[f_n(U)\ind(V\geq b)] &= \E[f(U)\ind(V\geq b)] \, ,\\
\lim_{n \rightarrow \infty}\E[f_n(U)\ind(U\geq b)] &= \E[f(U)\ind(U\geq b)]\, .
\end{align*}
Combining with (iii), we infer that $f \in \cH$, proving our claim.\\

\noindent {\bf Part II:}

Define $\hcH_f= \{g: \E[f(U)g(V)] \leq \E[f(U)g(U)]\}$.

From Part I, we infer that $\ind(V\geq b) \in \hcH_f$ for all $b \in \reals$. We now repeat the steps in the proof of the Claim in Part I, to obtain the result ``The class $\hcH_f$ contains all non-negative, non-decreasing functions $g$." This completes our proof of the proposition.

\end{proof}

\begin{lemma}
\label{lemma:Kmod_conditional_entropy_intermsof_pLk}
There exists $d_0>0$ such that for any $d<d_0$ the following occurs: Consider any
$\sX \in \St_{\lfloor 1/d \rfloor}$.
Then
\begin{align}
&\lim_{n \rightarrow \infty} \, \frac{1}{n} \, H(\bK(X^n)|X^n,\bY(X^n)) = \frac{d^2}{\mu(\sX)} \bigg \{\nonumber\\
&\phantom{+}\sum_{k=2}^\infty \;\sum_{l_{k+1}=2}^\infty p_{L(k+2)}\big(1,1, \ldots (k+1 \textup{ ones}), l_{k+1}\big )  \, \big (k-1 +l_{k+1} \big )\, h\!\left( \frac{1}{k-1+l_{k+1}}\right)\nonumber\\
&+\sum_{l_0=2}^\infty \, \sum_{k=2}^\infty \;\sum_{l_{k+1}=2}^\infty p_{L(k+2)}\big(l_0,1, 1, \ldots (k \textup{ ones}), l_{k+1}\big )  \, \big (l_0 + k-1 +l_{k+1} \big )\, h\!\left( \frac{l_0+ 1}{l_0+k-1+l_{k+1}}\right)\nonumber\\
& \;\bigg\} \; + \;
\delta
\label{eq:Kmod_conditional_entropy_intermsof_pLk}
\end{align}
for some $\delta$ such that $|\delta| \leq 18 d^{3} \E [\tL^{2}]$.
\end{lemma}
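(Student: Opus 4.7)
The plan is to compute $H(\bK(X^n)\mid X^n,\bY(X^n))$ by enumerating the two-deletion configurations under $\bsD$ that create ambiguity in the parsing. The key property of the perturbed deletion process is that any super-run carrying deletions under $\bsD$ sits in a window of three consecutive super-runs that contains at most two deletions total (otherwise its deletions would have been reversed by $\bsZ$). Hence every ambiguity event is supported on a bounded window of runs. Using stationarity of $(\sX,\bsD,\bY)$ and the ergodic theorem, I will express
\[
\lim_{n\to\infty}\frac{1}{n}H(\bK\mid X^n,\bY)
\]
as an expected per-bit sum, each term attributed to a $(k+2)$-block of consecutive runs traversed by the two deletions.

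By direct analysis of the Kirsch--Drinea parsing procedure (Table~\ref{table:KirschDrinea_runs}), the parsing $\bK$ is determined uniquely by $(X^n,\bY)$ except when the two deletions collapse a consecutive pair of single-bit runs inside a super-run, allowing the collapse to slide along a chain of such runs. Each such event can be canonically attributed to exactly one reference $(k+2)$-block in one of two families: Term~2 covers blocks of signature $(l_0,1,\ldots,1,l_{k+1})$ with both anchors multi-bit ($l_0,l_{k+1}\ge 2$), while Term~1 covers blocks of signature $(1,\ldots,1,l_{k+1})$ where the sliding region extends beyond the previous super-run boundary. The attribution is chosen so that events and reference blocks are in bijection, avoiding double-counting across families and across values of $k$.

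For each Term~2 reference block I will enumerate the $l_0+k-1+l_{k+1}$ two-deletion configurations that contribute ambiguity and partition them into a ``left-leaning'' equivalence class of size $l_0+1$ and a ``right-leaning'' class of complementary size $k-2+l_{k+1}$, with configurations in each class inducing the same parsing (configurations that differ only in which specific bit of a multi-bit anchor was deleted leave no trace in $\bY$ and are equivalent). Multiplying by $d^2$ and the density $p_{L(k+2)}(\ldots)/\mu(\sX)$ of the reference block yields the stated entropy factor $(l_0+k-1+l_{k+1})h((l_0+1)/(l_0+k-1+l_{k+1}))$; the analogous computation for Term~1 produces $(k-1+l_{k+1})h(1/(k-1+l_{k+1}))$. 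The error $\delta$ collects three-or-more-deletion configurations that leak through the reversal mechanism, multiplicative $(1-d)$ corrections to the probabilities of the enumerated configurations, and finite-$n$ boundary effects, each bounded by $O(d^3\E[\tL^2])$ using the stationary super-run moment.

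The principal technical obstacle lies in the combinatorial heart of the argument: setting up the canonical bijection between two-deletion ambiguity events and reference blocks so that each event is counted exactly once, and verifying that the per-block conditional entropy takes the binary form $Nh(p)$ rather than a coarser or finer partition. A naive enumeration of parsings compatible with a given $(X^n,\bY)$ can produce three or more distinct parsings, so the reduction to a single binary entropy factor requires carefully attributing ``middle'' configurations (which give rise to intermediate parsings) to smaller reference blocks of Term~1 type. This attribution in turn demands a delicate case analysis of how the KD algorithm processes the interface between multi-bit anchors and adjacent single-bit runs.
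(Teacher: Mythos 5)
Your enumeration matches the paper's computation in all the quantitative respects: the two families of reference blocks, the split of the $l_0+k-1+l_{k+1}$ two-deletion configurations into a class of size $l_0+1$ (one deletion in the left anchor plus the adjacent single-bit run, or the first adjacent pair) versus $k-2+l_{k+1}$, the resulting factors $(l_0+k-1+l_{k+1})\,h((l_0+1)/(l_0+k-1+l_{k+1}))$ and $(k-1+l_{k+1})\,h(1/(k-1+l_{k+1}))$, and an error of order $d^3\E[\tL^2]$ coming from the $(1-d)$-power corrections. The genuine gap is exactly the step you flag as the principal obstacle, and the mechanism you propose for it --- a canonical bijection between ambiguity events and reference blocks, with ``middle'' configurations reassigned to Term-1 blocks so that each event is counted exactly once --- is not the right one. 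The paper never attributes deletion configurations to blocks. It starts from the exact chain rule $H(\bK\mid X^n,\bY)=\sum_j H(|\bX(j)|\mid \bX(j)\ldots\bX(M),\bY(j)\ldots\bY(M))$ and attributes the $j$-th term to the first run $R_P$ of $\bX(j)$. Conditioned on the suffix strings, the window property of $\bsD$ (at most two surviving deletions in any three consecutive super-runs) forces $|\bX(j)|$ to take one of only \emph{two} values, $L_P$ or $L_P+2$; ambiguity requires $L_P=|\bY(j)|$ and $L_{P+1}=L_{P+2}=1$, and the posterior odds are exactly the ratio of the two configuration-class weights, so each term is a single binary entropy with no need to collapse ``three or more parsings'' at the level of one group.

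Moreover, a strict bijection would miscount: a single middle configuration (say, deletion of the adjacent single-bit runs $R_{P+2}$ and $R_{P+3}$) genuinely creates posterior uncertainty at \emph{two} groups --- the one anchored at the multi-bit run $R_P$, where it lies in the right-leaning class (parsing $|\bX(j)|=L_P$), and the one starting at the single-bit run $R_{P+1}$, where it lies in the left-leaning class of a Term-1 block. Both contributions are wanted, and the stated formula produces them because each block's term is (probability of \emph{its own} ambiguity event) times $h(\cdot)$, with the same deletion pattern allowed to appear in the ambiguity events of several blocks. If you reassign middle configurations away from the Term-2 block, its event probability no longer carries the factor $l_0+k-1+l_{k+1}$ that must multiply $h((l_0+1)/(l_0+k-1+l_{k+1}))$. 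So the missing idea is to replace the bijection by the chain-rule/suffix-conditioning decomposition of Kirsch--Drinea; once that is in place, the rest of your outline goes through as in the paper, including the error analysis, where the only sources are the $(1-d)^{\tL_i+\tL_{i+1}+\tL_{i+2}-2}$ factor and the conditional probability that $R_{P-1}$ survives, bounded via $\E[(\tL_i+\tL_{i+1}+\tL_{i+2})^2]\le 9\,\E[\tL^2]$ (the paper's Proposition on products of increasing functions, or Cauchy--Schwarz) to reach the explicit constant $18$ required by the statement, which a bare $O(d^3\E[\tL^2])$ does not yet give.
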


\begin{proof}[Proof of Lemma \ref{lemma:Kmod_conditional_entropy_intermsof_pLk}]
Using the chain rule, we obtain
\begin{align*}
H(\bK(X^n)|X^n,\bY(X^n)) =  \sum_{j=1}^{M} H(|\bX(j)|\, | \bX(j) ... \bX(M), \bY(j) ... \bY(M)) \end{align*}

Consider the term $t_j\equiv H(|\bX(j)| | \bX(j) ... \bX(M), \bY(j) ... \bY(M))$. Suppose the first bit in $\bX(j) \ldots$ is part of super-run $S_i$. Call the first run in $\bX(j)$ be $R_P$. By the construction of the perturbed deletion process, we know that $S_i, S_{i+1}$ and $S_{i+2}$ cannot have more than two deletions in total.

Different cases may arise:
\begin{itemize}
\item $L_P > |\bY(j)|$\\
    If $L_P + L_{P+2} \geq |\bY(j)|$ then we know that $\bX(j)=(R_P, R_{P+1}, R_{P+2})$. If not, then we know that $\bX(j)=(R_P, R_{P+1}, R_{P+2}, R_{P+3}, R_{P+4})$. In either case, $t_j=0$.
\item $L_P > |\bY(j)|$\\
    It must be that $\bX(j)=R_P$. Again, $t_j=0$
\item $L_P = |\bY(j)|$\\
    In this case, if $L_{P+1} > 1$ or $L_{P+2}>1$, then we know that $\bX(j)=R_P$ and $t_j=0$. Suppose $L_{P+1}=L_{P+2}=1$. Now consider the possibility that $\bX(j)=(R_P, R_{P+1}, R_{P+2})$ (this is the only alternative to $\bX(j)=R_P$). For this possibility to exist, the following condition must hold
     \begin{align*}
       \cC \equiv \;&\left \{\textup{$\bY(j)\bY(j+1) \bY(j+2) \ldots$ must match exactly $R_{P}R_{P+3} R_{P+4} \ldots$}\right .\\
       &\left .\textup{until the end of $S_{i+2}$}\right \} \; \cap \, \{L_{P+1}=L_{P+2}=1\}
       \end{align*}
(Else, we would need more than two deletions in $(S_i, S_{i+1}, S_{i+2})$, a contradiction.)
\end{itemize}

Note that in any case, there are at most two possibilities for $\bX(j)$, so we have $t_j \leq 1$.

Let us understand $\cC$ better. Let $S_i$ include $k$ runs to the right of $R_P$, i.e.,
$L_{P+1}=L_{P+2}= \ldots = L_{P+k}=1$ and $L_{P+k+1}>1$. Condition $\cC$ can arise, along with $\bX(j)$ starting at $R_P$ iff:
\begin{itemize}
\item Runs $R_{P-1}$ does not disappear under $\bsD$.
\item Super-runs $(S_i, S_{i+1}, S_{i+2})$ undergo no more than two deletions in total. Event $\Ev$.
\item One of the following deletion patterns occur:
\begin{itemize}
\item (Only if $L_{P}>1$) The bit $R_{P+1}$ is deleted and one deletion in $R_P$. Event $\Ev_1$.
\item The bits $R_{P+1}$ and $R_{P+2}$ are deleted.     Event $\Ev_2$.
\item The bits $R_{P+2}$ and $R_{P+3}$ are deleted.     Event $\Ev_3$.\\
$\vdots$
\item The bits $R_{P+k-1}$ and $R_{P+k}$ are deleted.     Event $\Ev_k$.
\item The bit $R_{P+k}$ is deleted and one deletion in $R_{P+k+1}$.     Event $\Ev_{k+1}$.
\end{itemize}
\end{itemize}

Define $p_0 \equiv (1-d)^{\tL_i + \tL_{i+1}+\tL_{i+2}-2}$. It is easy to see that
$\prob(\Ev_1\cap \Ev) = p_0 d^2 L_P$, $\prob(\Ev_l\cap \Ev) = p_0 d^2$ for $l=2,3, \ldots, k-1$,
and $\prob(\Ev_k\cap \Ev) = p_0 d^2 L_{P+k+1}$. We know that exactly one of these has occurred.
$(\Ev_1\cap \Ev) \cup (\Ev_2\cap \Ev)$ leads to $\bX(j)=(R_P, R_{P+1}, R_{P+2})$, whereas all
other possibilities lead to $\bX(j)=R_P$. It follows that if $\cC$ holds, $L_P=l_P$ and
$L_{P+k+1}=l_{P+k+1}$,
\begin{align*}
t_j = h\left(\frac{l_P \ind (l_P>1) + 1 }{l_P \ind (l_P>1) + k -1 + l_{P+k+1} }\right) \, .
\end{align*}

Let $R_P$ be a uniformly random run (cf. Section \ref{sec:Preliminaries}). The probability of seeing $L_P=l_P$, $k$, $L_{P+k+1}=l_{P+k+1}$ and $(\Ev_1 \cup \Ev_2 \cup \ldots \cup \Ev_{k})\cap \Ev$ is
\begin{align*}
p_{L(k+2)}(l_P, 1, 1, \ldots (k \textup{ ones}), l_{P+k+1}) \; p_0 d^2\;  (l_P \ind (l_P>1) + k -1 + l_{P+k+1})
\end{align*}
where $p_0= (1-d)^{\tL_i + \tL_{i+1}+\tL_{i+2}-2}$
It is easy to see that $p_0 \in (1-d(\tL_i + \tL_{i+1}+\tL_{i+2}), 1)$. Also,  the conditional probability of $R_{P-1}$ not disappearing is in $(1-d,1)$. Thus the expected contribution of $R_P$ to the sum is
\begin{align*}
d^2 \bigg \{ \sum_{l_P=2}^\infty \sum_{k=2}^\infty \sum_{l_{P+k+1}=2}^\infty  & p_{L(k+2)}(l_P, 1, 1, \ldots (k \textup{ ones}), l_{P+k+1}) \;  \big(l_P \ind (l_P>1) + k -1 + l_{P+k+1}\big) \nonumber\\
 &\cdot h\left(\frac{l_P \ind (l_P>1) + 1 }{l_P \ind (l_P>1) + k -1 + l_{P+k+1} }\right)\bigg \} + \delta
\end{align*}
where $|\delta| \leq 2 d^3  E[(\tL_i + \tL_{i+1}+\tL_{i+2})^2]\leq 18 d^3 E[\tL^2]$, using Fact \ref{propo:product_of_increasing} in the final inequality.
The result follows.
\end{proof}

\begin{coro}
\label{coro:Kmod_conditional_entropy_estimate}
For any $\eps>0$, there exists $d_0\equiv d_0(\eps)>0$, and $\const < \infty$ such that for any $d<d_0$ the following occurs: Consider any
$\sX \in \St_{\lfloor 1/d \rfloor}$ such that $H(\sX)>1-d^{1-\eps}$ and $\max\{H(\sX),H(\sY)\}>1-d^{\gamma}$ for some $\gamma \in (1/2,2)$.
Then
\begin{align}
&\lim_{n \rightarrow \infty} \, \frac{1}{n} \, H(\bK(X^n)|X^n,\bY(X^n)) = \frac{d^2}{2} \bigg \{\nonumber\\
&\phantom{+}\sum_{k=2}^\infty \;\sum_{l_{k+1}=2}^\infty 2^{-(1+k + l_{k+1})}  \, \big (k-1 +l_{k+1} \big )\, h\!\left( \frac{1}{k-1+l_{k+1}}\right)\nonumber\\
&+\sum_{l_0=2}^\infty \, \sum_{k=2}^\infty \;\sum_{l_{k+1}=2}^\infty 2^{-(l_0+k + l_{k+1})}  \, \big (l_0 + k-1 +l_{k+1} \big )\, h\!\left( \frac{l_0+ 1}{l_0+k-1+l_{k+1}}\right)\nonumber\\
& \;\bigg\} \; + \;
\eta
\label{eq:Kmod_conditional_entropy_estimate}
\end{align}
for some $\eta$ such that $|\eta| \leq \const d^{2+\gamma/2-\eps/2}$.
\end{coro}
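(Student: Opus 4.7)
The plan is to start from the exact identity in Lemma \ref{lemma:Kmod_conditional_entropy_intermsof_pLk} and carry out two substitutions in its right-hand side: replace $\mu(\sX)^{-1}$ by $1/2$, and replace $p_{L(k+2)}(l_0, 1, \ldots, 1, l_{k+1})$ by $p^*_{L(k+2)}(\cdot) = 2^{-(l_0 + k + l_{k+1})}$, carefully tracking the accumulated error. The hypotheses of the corollary ($\sX \in \St_{\lfloor 1/d \rfloor}$, $H(\sX) \geq 1 - d^{1-\eps}$, and $\max\{H(\sX), H(\sY)\} \geq 1 - d^\gamma$) ensure that all tail and closeness estimates from Section \ref{subsec:run_charac} are applicable.

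First I would bound the remainder $\delta$, which by Lemma \ref{lemma:Kmod_conditional_entropy_intermsof_pLk} satisfies $|\delta| \leq 18 d^3 \E[\tL^2]$. Splitting at $\ell = \lfloor \const \log(1/d) \rfloor$, the bulk gives $\sum_{l \leq \ell} l^2 p_{\tL}(l) \leq \ell \, \tmu(\sX) = O(\log(1/d))$, while the tail uses the deterministic bound $\tL \leq 1/d$ from $\sX \in \St_{\lfloor 1/d \rfloor}$ together with Lemma \ref{lemma:tL_tail_bound} (applied at $\beta = 1 - \eps$) and Lemma \ref{lemma:tL_tail_control_fromHy} (applied at $\gamma$) to yield $\sum_{l > \ell} l^2 p_{\tL}(l) \leq (1/d) \cdot O(d^{\min(\gamma, 1 - \eps)})$. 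Combining, $\E[\tL^2] = O(d^{\min(\gamma, 1) - O(\eps) - 1})$ and so $\delta = O(d^{2 + \min(\gamma, 1) - O(\eps)}) = O(d^{2 + \gamma/2 - \eps/2})$, since $\gamma/2 \leq \min(\gamma, 1)$ for $\gamma \in (1/2, 2)$. The substitution $\mu(\sX)^{-1} \to 1/2$ is then immediate from Lemma \ref{lemma:xy_similar_runs_gammaSTRONG}(ii), which gives $|\mu(\sX)^{-1} - 1/2| = O(d^{\gamma/2})$; multiplied by the $O(d^2)$ main term, this contributes $O(d^{2 + \gamma/2})$.

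The principal step is the substitution $p_{L(k+2)} \to p^*_{L(k+2)}$. I would split each of the two sums at the threshold $l_0 + k + l_{k+1} \leq \ell = \lfloor \const_* \log(1/d) \rfloor$ for some sufficiently large constant $\const_*$. In the bulk there are $O(\ell^3)$ configurations, and Lemma \ref{lemma:xy_similar_runs_gammaSTRONG}(i) bounds each pointwise difference by $d^{1 + \gamma/2 - \eps/4}$; multiplying by the prefactor $d^2/\mu(\sX)$ yields a bulk replacement error of $O(d^{3 + \gamma/2 - \eps/2})$. For the $p^*$-tail, the geometric decay of $2^{-(l_0 + k + l_{k+1})}$ gives a contribution of $O(2^{-\ell/2}) = O(d^{\const_*/2})$, negligible for $\const_*$ large enough. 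The hard part will be bounding the $p_{L(k+2)}$-tail, because the configuration couples the super-run of length $l_0 + k$ with the first run $l_{k+1}$ of the next super-run. I plan to decompose the tail event $\{l_0 + k + l_{k+1} > \ell\}$ as $\{l_0 + k > \ell/2\} \cup \{l_{k+1} > \ell/2\}$ and handle each using marginal Palm estimates: the first via the super-run tail bound of Lemma \ref{lemma:tL_tail_control_fromHy} (or Lemma \ref{lemma:tL_tail_bound}), and the second via the run tail bound of Lemma \ref{lemma:pL_tail_control_fromHy} (or Lemma \ref{lemma:L_tail_control}). Since each such tail is $O(d^{\min(\gamma, 1 - \eps)})$, the resulting tail error is $O(d^{2 + \gamma/2 - \eps/2})$, and summing all contributions gives the claimed bound $|\eta| \leq \const d^{2 + \gamma/2 - \eps/2}$.
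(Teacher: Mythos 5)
Your proposal follows essentially the same route as the paper: start from the exact expression in Lemma \ref{lemma:Kmod_conditional_entropy_intermsof_pLk}, replace $1/\mu(\sX)$ by $1/2$, replace $p_{L(k+2)}(l_0,1,\dots,1,l_{k+1})$ by $2^{-(l_0+k+l_{k+1})}$ on a bulk of logarithmic size, control the discarded configurations by super-run tail estimates, and absorb the remainder $\delta$ using $\tL\le\lfloor 1/d\rfloor$ together with the tail lemmas. Two quantitative slips should be repaired, although neither changes the conclusion. First, Lemma \ref{lemma:xy_similar_runs_gammaSTRONG}(i) does \emph{not} give a pointwise bound of order $d^{1+\gamma/2-\eps/4}$ on $|p_{L(k+2)}-p^*_{L(k+2)}|$: that order pertains to the comparison of $p_{L(k)}$ with $q_{L(k)}$ (Eq.~\eqref{eq:xy_similar_runs_gammaSTRONG}), whereas the comparison with $2^{-\sum_i l_i}$ is Eq.~\eqref{eq:L_Vstrong_fromY} and is only of order $d^{\gamma/2-\eps}$ (indeed $p_L(1)$ may genuinely deviate from $1/2$ at order $d^{\gamma/2}$). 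With the correct bound your bulk replacement error becomes $O(d^{2+\gamma/2-\eps/2})$, not $O(d^{3+\gamma/2-\eps/2})$; this is exactly what the paper gets and still fits inside the allowed $\eta$. Second, your exponent $\min(\gamma,1-\eps)$ for the tails (both in $\E[\tL^2]$ and in the heavy-configuration sum) falls short of the target when $\gamma\in(2-\eps,2)$, since then $2+\min(\gamma,1-\eps)<2+\gamma/2-\eps/2$. The cause is invoking Lemma \ref{lemma:tL_tail_bound} at $\beta=1-\eps$; instead, split on the hypothesis $\max\{H(\sX),H(\sY)\}>1-d^{\gamma}$ as the paper does: if $H(\sY)>1-d^{\gamma}$ use Lemmas \ref{lemma:pL_tail_control_fromHy}, \ref{lemma:tL_tail_control_fromHy} and \ref{lemma:xy_similar_runs_gammaSTRONG}; if $H(\sX)>1-d^{\gamma}$ use Lemmas \ref{lemma:L_tail_control}, \ref{lemma:tL_tail_bound} and \ref{lemma:L_TVstrong} (and Lemma \ref{lemma:mean_closeto2}) at $\beta=\gamma$. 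In either branch all tails are $O(d^{\gamma})$ and no minimum appears.

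A smaller point: your tail decomposition $\{l_0+k>\ell/2\}\cup\{l_{k+1}>\ell/2\}$ handled by marginal estimates leaves cross terms, because the weight $l_0+k-1+l_{k+1}$ couples the two events (on $\{l_0+k>\ell/2\}$ the factor $l_{k+1}$ is not controlled by the super-run marginal alone). This is fixable by a short case analysis, but the paper's device is cleaner: since $\tl_0\ge l_0+k$ and $\tl_1\ge l_{k+1}$ for the adjacent super-runs, the whole tail is dominated by $\sum_{\tl_0+\tl_1\ge\ell}(\tl_0+\tl_1)\,p_{\tL(2)}(\tl_0,\tl_1)$, which Corollary \ref{coro:tLk_tail_control_fromHy_STRONG} (or Lemma \ref{lemma:tL_tail_control_fromHy}) bounds directly. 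With these adjustments your argument coincides with the paper's proof.
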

\begin{proof}[Proof of Corollary \ref{coro:Kmod_conditional_entropy_estimate}]
We prove the corollary assuming $H(\sY) > 1 -d^{\gamma}$. The proof assuming $H(\sX) > 1 -d^{\gamma}$ is analogous.

Consider the second summation in Eq.~\eqref{eq:Kmod_conditional_entropy_intermsof_pLk}.
Define $\ell \equiv \lfloor 4 \log(1/d) \rfloor$. Consider any term with $l_0 \leq \ell  $, $k \leq \ell $, $l_{k+1}\leq \ell$. Using Lemma \ref{lemma:xy_similar_runs_gammaSTRONG} (i) (Eq.~\eqref{eq:L_Vstrong_fromY}), we have
\begin{align*}
\big| p_{L(k+2)}\big(l_0,1, 1, \ldots (k \textup{ ones}), l_{k+1}\big ) - 2^{-(l_0+k + l_{k+1})} \big | \leq d^{\gamma/2-\eps/4}
\end{align*}
for $d<d_0(\eps)$. Note that $d_0$ does not depend on $l_0, k, l_{k+1}$. It follows that
\begin{align*}
\sum_{l_0=2}^\ell \, \sum_{k=2}^\ell \;\sum_{l_{k+1}=2}^\ell p_{L(k+2)}\big(l_0,1, 1, \ldots (k \textup{ ones}), l_{k+1}\big )  \, \big (l_0 + k-1 +l_{k+1} \big )\, h\!\left( \frac{l_0+ 1}{l_0+k-1+l_{k+1}}\right)\\
= \sum_{l_0=2}^\infty \, \sum_{k=2}^\infty \;\sum_{l_{k+1}=2}^\infty 2^{-(l_0+k + l_{k+1})}  \, \big (l_0 + k-1 +l_{k+1} \big )\, h\!\left( \frac{l_0+ 1}{l_0+k-1+l_{k+1}}\right) + \delta_{21}
\end{align*}
where $|\delta_{21}| \leq d^{\gamma/2-\eps/2}$.

We make use of Lemma \ref{lemma:tL_tail_control_fromHy} to bound the error due to the missed terms. Let $\tl_0$ be the length of the super-run containing the initial run of length $l_0$. Clearly, $\tl_0 \geq l_0 + k$. Let $\tl_1$ be the length of the next super-run to the right. Clearly, $\tl_1 \geq l_{k+1}$. Now
\begin{align*}
&\{l_0 > \ell\} \textup{ OR } \{k > \ell\} \textup{ OR } \{l_{k+1} > \ell\} \\
\Rightarrow &\{l_0 + k + l_{k+1} > \ell\}\\
\Rightarrow &\{\tl_0+\tl_1 > \ell\}
\end{align*}
Also, $\big (l_0 + k-1 +l_{k+1} \big ) \leq  \tl_0+\tl_1$ and $h(p) \leq 1$ for any $p$. It follows that the missed terms contribute
\begin{align*}
\delta_{22} \leq  \sum_{\tl_0 + \tl_1 \geq 4 \ell}
p_{\tL(2)}(\tl_0, \tl_1) \, \big(\tl_0+\tl_1\big)
\leq d^{\gamma/2-\eps/2}
\end{align*}
to the sum, where we have used Lemma \ref{lemma:tL_tail_control_fromHy} in the second inequality.

Thus, we have established
\begin{align*}
\sum_{l_0=2}^\infty \, \sum_{k=2}^\infty \;\sum_{l_{k+1}=2}^\infty p_{L(k+2)}\big(l_0,1, 1, \ldots (k \textup{ ones}), l_{k+1}\big )  \, \big (l_0 + k-1 +l_{k+1} \big )\, h\!\left( \frac{l_0+ 1}{l_0+k-1+l_{k+1}}\right)\\
= \sum_{l_0=2}^\infty \, \sum_{k=2}^\infty \;\sum_{l_{k+1}=2}^\infty 2^{-(l_0+k + l_{k+1})}  \, \big (l_0 + k-1 +l_{k+1} \big )\, h\!\left( \frac{l_0+ 1}{l_0+k-1+l_{k+1}}\right) + \delta_2
\end{align*}
with $|\delta_2| \leq 2 d^{\gamma/2-\eps/2}$ for $d< d_0(\eps)$. The first summation in Eq.~\eqref{eq:Kmod_conditional_entropy_intermsof_pLk} can be similarly handled. Finally, Lemma \ref{lemma:xy_similar_runs_gammaSTRONG}(ii) tells us that $|\mu(\sX)-2| \leq d^{\gamma/2}$ for small enough $d$. Putting the estimates together yields the result.
\end{proof}


\begin{proof}[Proof of Lemma \ref{lemma:K_conditional_entropy}]
We prove the lemma assuming $H(\sY) > 1 -d^{\gamma}$. The proof assuming $H(\sX) > 1 -d^{\gamma}$ is analogous.

It is easy to verify that the right hand side of Eq.~\eqref{eq:Kmod_conditional_entropy_estimate} is, in fact, $d^2 c_4  + \eta$. We show that
\begin{align}
\lim_{n \rightarrow \infty} \frac{1}{n}| H(\bK(X^n)|X^n,\bY(X^n)) - H({K}(X^n)|X^n,{Y}(X^n))| \leq d^{1+\gamma-\eps/2}
\label{eq:K_Khat_condH_close}
\end{align}
whence Eq.~\eqref{eq:K_conditional_entropy} follows using Corollary \ref{coro:Kmod_conditional_entropy_estimate}.

Consider $\bZ^n$
defined in our construction of the perturbed deletion process.  We define $U(X^n,D^n, Z^n) \in \{\t, 0,1\} ^{|\bY|}$ constructed as follows: Start from the first bit in $\bY$ and consider bits sequentially
\begin{itemize}
\item For each bit also present in $Y$, $U$ has a $\t$.
\item For each bit not present in $Y$, $U$ has $0$ if that bit $0$ and a $1$ if that bit is $1$.
\end{itemize}
Clearly, the corresponding stationary process $\sU$ can also be defined.

Recall Fact \ref{fact:U_V_W}. It is not hard to see that $(X^n, Y) \xleftrightarrow{U} (X^n, \widehat{Y})$ and
$(X^n, Y, K) \xleftrightarrow{(U,Z)} (X^n, \widehat{Y}, \widehat{K})$.
It follows that
\begin{align*}
| H(\widehat{K}(X^n)|X^n,\widehat{Y}(X^n)) - H({K}(X^n)|X^n,{Y}(X^n))| \leq 2H(U) + H(Z)
\end{align*}

Let $\bz\equiv \prob[\bZ_j = 1]$ for arbitrary $j$. The number of deletions reversed in a random super-run is at most $d^3 \sum_{l_0,l_1,l_2} p_{\tL(3)}(l_0,l_1,l_2) (l_0+l_1+l_2)^3$ in expectation (similar to Eq.~\eqref{eq:reversals_per_run}). Using Proposition \ref{propo:product_of_increasing}, this is bounded above by $27d^3 \E[\tL^3]$. Since each super-run  has length at least one, it follows that $\bz \leq 27 d^3 \E[\tL^3]$.
Using Lemma \ref{lemma:tL_tail_control_fromHy} and $\tL \leq 1/d$ w.p. 1, we find that $\E[\tL^3] \leq d^{\gamma-2}$ for small enough $d$. Hence, $\bz \leq 27 d^{1+\gamma}$. It follows that $H(\bsZ) \leq h(\bz) \leq d^{1+\gamma-\eps/2}$ for small enough $d$.

Let $u \equiv \prob(U_j \neq \t)$ for arbitrary $j$. Then $u = \bz/(1-d)$. It follows that $H(\sU) \leq u + h(u) \leq d^{1+\gamma-\eps/2}$ for small enough $d$. Finally, we have
\begin{align*}
\lim_{n \rightarrow \infty} \frac{2H(U) + H(Z)}{n} = 2(1-d)H(\sU) + H(\sZ) \leq 3 d^{1+\gamma-\eps/2}
\end{align*}
leading to the desired bound Eq.~\eqref{eq:K_Khat_condH_close}.
\end{proof}



\section{Proof of Lemma \ref{lemma:hatD_givenxy} and its corollaries}
\label{app:hatD_lemma_proof}

\begin{proof}[Proof of Lemma \ref{lemma:hatD_givenxy}]
We make use of \eqref{eq:modified_deletion_entropy} and the fact that $\sX$ is stationary and ergodic. Consider a randomly chosen run $R_\parent$ in $\sX$. We associate $H(\widehat{D}(j)|\hX (j), \widehat{Y}(j))$ with $R_\parent$ if $R_\parent$ is the first run in $\hX(j)$. Denote by $L_{P+i}$ the length of $R_{P+i}$ for any integer $i$. We add contributions from the three possibilities of how $\widehat{Y}(j)$ arose under $\widehat{D}(j)$:

\begin{enumerate}
\item {\bf From a single parent run}

Define
\begin{align*}
B_1 \equiv R_P \textup{ suffers one or two deletions under $\sD$ and } \exists j \textup{ s.t. } \hX(j)= R_P
\end{align*}
Clearly, $B_1$ is exactly the event we are interested in here.
We will restrict attention to a subset of $B_1$ and the prove that we are missing a very small contribution.
Define
\begin{align*}
E_1 \equiv B_1 \cap \{\textup{$R_{P-1}$ and $R_{P+1}$ do not disappear under $\sD$.}\}
\end{align*}
Consider $B_1 \backslash E_1$. For this event, one of the following must occur:
\begin{itemize}
\item Run $R_{P-1}$ disappears under $\sD$ but not under $\hsD$. For this, we need at least $3$ deletions in run $R_{P-1}$. A simple calculation shows that this occurs with probability less than $d^3 L_{-1}^3$.
\item Run $R_{P-1}$ disappears under $\hsD$ as well. In this case $R_{P-2}$ also disappears under $\hsD$. Thus, we need $R_{P-1}$ and $R_{P-2}$ both to disappear under $\sD$ which occurs with probability at most $d^2$. Moreover, we require at least one deletion in $R_P$ (probability less than $L_P d$). Thus, the overall probability is bounded above by $d^3 L_P$.
\item Run $R_{P+1}$ disappears under $\sD$ but not under $\hsD$. For this, we need at least $3$ deletions in run $R_{P+1}$. This occurs with probability less than $d^3 L_{1}^3$.
\end{itemize}
Thus, $0\leq \prob(B_1 \backslash E_1) < d^3 (L_{P-1}^3+L_P+L_{P+1}^3)$. The largest possible value of $H(\widehat{D}(j)|\hX (j), \widehat{Y}(j))$ for a particular occurrence of $B_1 \backslash E_1$ is $\max_{i=1,2} \log \binom{L_P}{i} \leq 2 \log L_P$. Thus, the additive error introduced by restricting to $E_1$ in our estimate of
$\lim_{n\to\infty} \frac{1}{n}H(\widehat{D}^n|X^n, \widehat{Y}, \widehat{K})$ is
\begin{align}
\label{eq:delta1E_bound}
0 \leq \delta_{1E}(d, \sX) \leq d^3 \E[2(L_{P-1}^3+L_P+L_{P+1}^3) \log L_P] \leq 6d^3 \E[L^3 \log L]
\end{align}
where we have made use of Proposition \ref{propo:product_of_increasing}.

Partition $E_1$ into two events:
\begin{align}
B_{11} &\equiv E_1 \cap \{R_P \textup{ undergoes one deletion under $\sD$} \}\\
B_{12} &\equiv E_1 \cap \{R_P \textup{ undergoes two deletions under $\sD$} \}
\end{align}
Let $T_1$ be the contribution of $B_{1}$, $T_{11}$ be the contribution of $B_{11}$
and $T_{12}$ be the contribution of $B_{12}$. Then we have
\begin{align}
T_1 = T_{11}+T_{12}+ \delta_{1E}
\label{eq:T1_split}
\end{align}

\begin{itemize}
\item {\bf One deletion in $R_\parent$:}\\
Consider $B_{11}$. The contribution of a particular occurrence is $\log L_P$.
Now
\begin{align}
\prob (B_{11}&, L_P=l, L_{P-1}=l_{P-1}, L_{P+1}=l_{P+1})  \nonumber\\
&= p_{L(3)}(l_{-1},l,l_{+1})\, (1-d^{l_{P-1}})\,(1-d^{l_{P+1}})\, l_Pd(1-d)^{l-1}
\label{eq:B11_particularl}
\end{align}
We have, for $l>1$,
\begin{align*}
p_{L(3)}(\gt1, l, \gt1) \, l d (1-d)^{l-1} \, (1-2d^2) \, \leq &\; \prob (B_{11}, L_P=l, L_{P-1}>1, L_{P+1}>1) \\
&\phantom{xxxxxxxx}\leq \, p_{L(3)}(\gt1, l, \gt1)\, l d (1-d)^{l-1}
\end{align*}
since probability that $R_{\parent-1}$ of length greater than $1$ disappears is bounded above by $d^2$ and similarly for $R_{\parent+1}$.
It follows that
\begin{align*}
\prob(B_{11}, L_P=l, L_{\parent-1}>1, L_{\parent+1}>1) \, = \,  p_{L(3)}(\gt1, l, \gt1)\, l d (1-(l-1)d ) + \eta_{1,1}(l)\\
-2d^3  p_{L(3)}(\gt1, l, \gt1)\, l  \, \leq \, \eta_{1,1}(l) \, \leq \, d^3 p_{L(3)}(\gt1, l, \gt1)\, l \binom{l-1}{2}
\end{align*}
Similarly we get
\begin{align*}
\prob(B_{11}, L_P=l, L_{\parent-1}=1, L_{\parent+1}=1) \, = \,  p_{L(3)}( 1, l, 1)\, l d (1-(l+1) d ) + \eta_{1,4}(l)\\
0  \, \leq \, \eta_{1,4}(l) \, \leq \, d^3 p_{L(3)}(1, l, 1)\, l \binom{l+1}{2}
\end{align*}
and
\begin{align*}
\prob(B_{11}, L_P=l, L_{\parent-1}>1, L_{\parent+1}=1) \, = \,  p_{L(3)}(\gt 1, l, 1)\, l d (1-ld ) + \eta_{1,3}(l)\\
-d^3  p_{L(3)}(\gt1, l, 1)\, l  \, \leq \, \eta_{1,3}(l) \, \leq \, d^3 p_{L(3)}(\gt1, l, 1)\, l \binom{l}{2}
\end{align*}
and

\begin{align*}
\prob(B_{11}, L_P=l, L_{\parent-1}=1, L_{\parent+1}>1) \, = \,  p_{L(3)}(1, l, \gt1)\, l d (1-ld ) + \eta_{1,2}(l)\\
-d^3  p_{L(3)}(1, l, \gt1)\, l  \, \leq \, \eta_{1,2}(l) \, \leq \, d^3 p_{L(3)}(1, l, \gt1)\, l \binom{l}{2}
\end{align*}

Combining, we arrive at the following contribution of $B_{11}$ to $\lim_{n\rightarrow \infty} H(\widehat{D}^n|X^n, \widehat{Y}, \widehat{K})/n$:
\begin{align}
T_{11} &= \frac{1}{\mu(\sX)} \sum_{l=2}^\infty \prob(B_{11}, L_P=l) \log l \nonumber\\
& = \frac{d}{\mu(\sX)} \; \sum_{l=2}^\infty \Big \{
p_{L(3)}(\gt1, l, \gt1) \; l \log l \big( 1 - (l-1)d\,\big)
+ p_{L(3)}(1, l, 1) \; l \log l \big( 1 - (l+1)d\,\big) +    \nonumber\\
&\phantom{= \frac{d}{\mu(\sX)} \Big ( \sum_{l=2}^\infty \ \,} \big( p_{L(3)}(1, l, \gt1)+p_{L(3)}(\gt1, l, 1)\big) \; l \log l \big( 1 - ld\,\big)  \Big \} + \delta_{11}
\label{eq:T11}
\end{align}
with
\begin{align}
  -\frac{2d^3}{\mu(\sX)} \sum_{l=2}^\infty p_L(l)\, l \log l &\leq
\delta_{11} = \delta_{11} (d, \sX) \leq
\frac{d^3}{\mu(\sX)} \sum_{l=2}^\infty p_L(l)\, l \binom{l+1}{2}\log l
\label{eq:delta11_detail}
\end{align}
We have normalized by $\mu(\sX)$ to move from a per run contribution to a per bit contribution.

It is easy to infer
\begin{align}
-d^3\E[L^3 \log L]\leq \delta_{11} \leq d^3\E[L^3 \log L]
\label{eq:delta11_crude}
\end{align}
from Eq.~\eqref{eq:delta11_detail}.

\item{\bf Two deletions in $R_{P}$:}\\
Consider $B_{12}$. If $L_P=l>2$ then entropy contribution is $\log \binom{l}{2}$. We have, for $l>2$,
\begin{align*}
\prob(B_2, L_P=l) = p_L(l) \binom{l}{2} d^2  (1-d)^{l-2}\,\cdot \,\prob(R_{P-1} \textup{ and } R_{P+1} \textup{ do not disappear under $\sD$})
\end{align*}

It follows that
\begin{align*}
p_L(l) \binom{l}{2} d^2 (1-d)^l \leq \prob(B_2, L_P=l) \leq p_L(l) \binom{l}{2} d^2 (1-d)^{l-2}
\end{align*}
leading to
\begin{align*}
\prob(B_2, L_P=l) = p_L(l) \binom{l}{2} d^2 + \eta_2\\
-d^3 p_L(l)l \binom{l}{2} \leq \eta_2 \leq 0
\end{align*}

Combining, we arrive at the following contribution to $\lim_{n\rightarrow \infty} H(\widehat{D}^n|X^n, \widehat{Y}, \widehat{K})/n$:
\begin{align}
\label{eq:T12}
T_{12} &= \frac{1}{\mu(\sX)} \sum_{l=3}^\infty \prob(B_2, L_P=l) \log \binom{l}{2} \nonumber\\
& = \frac{d^2}{\mu(\sX)} \sum_{l=3}^\infty p_L(l) \binom{l}{2}  \log \binom{l}{2} +\delta_{12}
\end{align}
with
\begin{align}
-d^3 \E[L^3 \log L]
\leq -\frac{d^3}{\mu(\sX)} \sum_{l=3}^{\infty} p_L(l)l \binom{l}{2} \log \binom{l}{2}  \leq
\delta_{12} = \delta_{12} (d, \sX) \leq
0
\label{eq:delta12_bound}
\end{align}
\end{itemize}

Plugging Eqs.~\eqref{eq:T11} and \eqref{eq:T12} into Eq.~\eqref{eq:T1_split}, we obtain
our desired estimate on the contribution $T_1$ of the event $B_1$,
\begin{align*}
T_1 =& \;
\frac{d}{\mu(\sX)}  \; \sum_{l=2}^\infty \Big \{
p_{L(3)}(\gt1, l, \gt1) \; l \log l \big( 1 - (l-1)d\,\big)
+ p_{L(3)}(1, l, 1) \; l \log l \big( 1 - (l+1)d\,\big) +    \nonumber\\
&\phantom{\ \frac{d}{\mu(\sX)} \Big ( \sum_{l=2}^\infty } \big( p_{L(3)}(1, l, \gt1)+p_{L(3)}(\gt1, l, 1)\big) \; l \log l \big( 1 - ld\,\big)  \Big \} \; \\
& \; + \; \frac{d^2}{\mu(\sX)} \; \sum_{l=3}^\infty p_L(l) \binom{l}{2}  \log \binom{l}{2}
+ \delta_1 \, ,
\end{align*}
where $\delta_1= \delta_{1E}+\delta_{11}+\delta_{12}$ is bounded using Eqs.~\eqref{eq:delta1E_bound}, \eqref{eq:delta11_crude} and \eqref{eq:delta12_bound} as
\begin{align}
\label{eq:delta1_bounds}
- 2 d^3 \E[L^3 \log L]
\leq
\delta_1
\leq
 7 d^3 \E[L^3 \log L] \, .
\end{align}

\item {\bf From a combination of three parent runs}\\
Define
\begin{align*}
B_3 \equiv\; &R_\parent \textup{ and } R_{\parent+2} \textup{ suffer at least one deletion in total under $\hsD$ and } \nonumber \\
&\exists j \textup{ s.t. } \hX(j)=(R_P\, R_{P+1}\, R_{P+2})
\end{align*}
We are interested in the contribution due to occurrence of event $B_3$.

Again, we will restrict attention to a subset of $B_3$ and the prove that we are missing a very small contribution.
Define
\begin{align*}
E_3 \equiv B_3 \cap \{\textup{$R_{P-1}$ and $R_{P+3}$ do not disappear under $\sD$.}\}
\end{align*}
Similar to our analysis for Case 1, we can show that
\begin{align*}
0\leq \prob(B_3 \backslash E_3) < d^3 (L_{P-1}^3+L_P+L_{P+2}+L_{P+1}^3) \, .
\end{align*}
The largest possible value of $H(\widehat{D}(j)|\hX (j), \widehat{Y}(j))$ for a particular occurrence of $B_3 \backslash E_3$ is
\begin{align*}
\max_{i=1,2,3,4} \; \log \binom{L_P+L_{P+2}}{i} \leq 4 \log (L_P+L_{P+2})
 \end{align*}
 since $R_P$ and $R_{P+2}$ can suffer at most 4 deletions in total under $\hsD$. Thus, the additive error introduced by restricting to $E_3$ in our estimate of
$\lim_{n\to\infty} \frac{1}{n}H(\widehat{D}^n|X^n, \widehat{Y}, \widehat{K})$ is
\begin{align}
\label{eq:delta1E_bound_B}
0 \leq \delta_{3E}(d, \sX) \leq d^3 \E[4 (L_{P-1}^3+L_P+L_{P+2}+L_{P+1}^3) \log (L_P+L_{P+2})]
\end{align}
Now, $\log (L_P+L_{P+2}) \leq \log (2 L_PL_{P+2})= 1 + \log L_P + \log L_{P+2}$.
From Proposition \ref{propo:product_of_increasing}, $\E[L_{P-1}^3 \log L_P] \leq E[L^3 \log L]$, also
 $\E[L_P \log L_{P+2}] \leq \E[L \log L]$,  and so on. Plugging into Eq.~\eqref{eq:delta1E_bound_B}, we arrive at
\begin{align}
0 \leq \delta_{3E}(d, \sX) \leq d^3 \E[ 16 L^3 + 32 L^3 \log L]
\label{eq:delta3E_bound}
\end{align}

Now, we further restrict to a subset of $E_3$. Define
\begin{align*}
B_{31} = E_3 \cap \{\textup{One deletion in total in } R_P, R_{P+2}\} \cap
\{L_{P+1}=1\}
\end{align*}

Consider the event $E_3 \backslash B_{31}$. This can occur due to one of the following:
\begin{itemize}
\item More than one deletion in $R_P, R_{P+2}$: This occurs with probability at most $\binom{L_P+L_{P+2}}{2}d^3$ (since we also need $R_{P+1}$ to disappear).
\item $L_{P+1}>1$: Now the probability that $R_{P+1}$ disappears is at most $d^2$.  Thus, the probability of $\prob(E_3 \cap \{L_{P+1}>1\}) \leq (L_P+L_{P+2})d^3$.
\end{itemize}
It follows from union bound that $\prob(E_3 \backslash B_{31}) \leq d^3 (L_P+L_{P+2})^2$.
As before, the largest possible value of
$H(\widehat{D}(j)|\hX (j), \widehat{Y}(j))$ for a particular occurrence of $E_3 \backslash B_{31}$ is $4 \log (L_P+L_{P+2})$. Thus, the additive error introduced by restricting to $B_{31}$ in estimating the contribution of $E_3$ is
\begin{align*}
0 \leq \delta_{32} \leq 4 d^3 (L_P+L_{P+2})^2 \log (L_P+L_{P+2})
\end{align*}
Now, we use $\log (L_P+L_{P+2}) \leq 1 + \log L_P + \log L_{P+2}$ and Proposition \ref{propo:product_of_increasing} to obtain
\begin{align}
0 \leq \delta_{32} \leq  d^3 \E [16 L^2+ 32 L^2 \log L]
\label{eq:delta32_bound}
\end{align}
Denoting by $T_{31}$ the contribution of $B_{31}$, and $T_3$ the contribution of $B_3$, we have
\begin{align}
T_3 = T_{31} + \delta_{3E} + \delta_{32}
\label{eq:T3_pieces}
\end{align}

We consider two cases in estimating $T_{31}$:
\begin{itemize}
\item $L_P>1$\\
The value of
$H(\widehat{D}(j)|\hX (j), \widehat{Y}(j))$ for a particular occurrence is $\log(L_P + L_{\parent+2})$.
We have
\begin{align*}
\prob(B_{31}, L_P=l_0, |R_{\parent+2}|=l_2) = d^2 p_{L(3)}(l_0,1,l_2) (l_0+l_2)  + \eta_{3,1} \\
-d^3 p_{L(3)}(l_0,1,l_2) (l_0+l_2)^2  \leq  \eta_{3,1} \leq 0
\end{align*}
\item $L_{\parent}=1$\\
The value of
$H(\widehat{D}(j)|\hX (j), \widehat{Y}(j))$ for a particular occurrence is $\log L_{\parent+2}$ since $R_{\parent}$ should not disappear. We have
\begin{align*}
\prob(B_3, L_P=1, L_{\parent+2}=l_2) = d^2 p_{L(3)}(1,1,l_2) l_2  + \eta_{3,2} \\
-d^3 p_{L(3)}(1,1,l_2) l_2^2  \leq  \eta_{3,2} \leq 0
\end{align*}
\end{itemize}
Combining the two cases, we arrive at the following estimate:
\begin{align}
T_{31} &= \frac{1}{\mu(\sX)} \sum_{l=3}^\infty \prob(B_3, L_P=l_0, |R_{\parent+2}|=l_2) \,\log \big(\,l_2 + l_0 \mathbb{I}(l_0>1)\, \big) \nonumber\\
&= \frac{d^2}{\mu(\sX)} \left (\sum_{l_0 >1, l_2} p_{L(3)}(l_0, 1, l_2)\, (l_0+l_2) \log (l_0+l_2)
+ \sum_{l_2} p_{L(3)}(1, 1, l_2) \,l_2 \log l_2 \right ) +\delta_{31}
\label{eq:T31_estimate}
\end{align}
where
\begin{align*}
-\frac{d^3}{\mu(\sX)} \sum_{l_0, l_2} p_{L(3)} (l_0, 1, l_2)\, (l_0+l_2)^2 \log (l_0+l_2) &\leq
\delta_{31} = \delta_{31} (d, \sX) \leq 0
\end{align*}
Again, we use $\log (L_P+L_{P+2}) \leq 1 + \log L_P + \log L_{P+2}$ and Proposition \ref{propo:product_of_increasing} to obtain
\begin{align}
-{d^3} \E[4L^2 + 8 L^2 \log L] &\leq
\delta_{31} = \delta_{31} (d, \sX) \leq 0
\label{eq:delta31_bound}
\end{align}

Finally, we plug Eq.~\eqref{eq:T31_estimate} into Eq.~\eqref{eq:T3_pieces} to obtain
\begin{align*}
T_3 = \frac{d^2}{\mu(\sX)} \left (\sum_{l_0 >1, l_2} p_{L(3)}(l_0, 1, l_2)\, (l_0+l_2) \log (l_0+l_2)
+ \sum_{l_2} p_{L(3)}(1, 1, l_2) \,l_2 \log l_2 \right ) +\delta_{3}
\end{align*}
where $\delta_3 = \delta_{3E}+ \delta_{32}+\delta_{31}$. Using Eqs.~\eqref{eq:delta3E_bound}, \eqref{eq:delta32_bound} and \eqref{eq:delta31_bound}, we obtain
\begin{align}
\label{eq:delta3_bounds}
- d^3 \E[4L^2 + 8 L^2 \log L] \leq \delta_3 \leq d^3 \E[32 L^3 + 64 L^3 \log L]
\end{align}

\item {\bf From a combination of five parent runs}\\
Define
\begin{align*}
B_5 \equiv\;& R_P, R_{P+2}, R_{P+4} \textup{ suffer at least one deletion in total under $\hsD$ and }\nonumber \\ &\exists j \textup{ s.t. } \hX(j)= (R_P R_{P+1} R_{P+2} R_{P+3} R_{P+4})
\end{align*}
We have $\prob(B_5) \leq d^3 (L_{P}+L_{P+2}+L_{P+4})$ since $R_{P+1}$ and $R_{P+3}$ must disappear. Also, the
largest possible value of
$H(\widehat{D}(j)|\hX (j), \widehat{Y}(j))$ for a particular occurrence is
\begin{align*}
\max_{i=1,2,\ldots,6} \; \log \binom{L_P+L_{P+2}+L_{P+4}}{i} \leq
6 \log (L_P+L_{P+2}+L_{P+4})
\end{align*}
since each run can suffer at most two deletions under $\hsD$.
Thus, the contribution of $B_5$ is $\delta_5$, where
\begin{align}
\label{eq:delta5_bounds}
0 \leq \delta_5 \leq 6 d^3 \E[(L_P+L_{P+2}+L_{P+4}) \log (L_P+L_{P+2}+L_{P+4})] \leq
d^3 \E[36 L + 54 L \log L]
\end{align}
where we have used $\log (L_P+L_{P+2}+L_{P+4}) \leq 2 + \log L_{P} + \log L_{P+2}+ \log L_{P+4}$ and Proposition \ref{propo:product_of_increasing}.

\item{\bf From a combination of $2k+1$ parent runs for $k \geq 3$}\\
Define
\begin{align*}
B_{2k+1} \equiv\;& \exists j \textup{ s.t. } \hX(j)= (R_P R_{P+1} \ldots  R_{P+2k})
\end{align*}
We need $k$ runs to disappear, and this occurs with probability at most $d^k$. The
largest possible value of
$H(\widehat{D}(j)|\hX (j), \widehat{Y}(j))$ for a particular occurrence is
$2(k+1) \log (L_P+L_{P+2}+\ldots+L_{P+2k}) \leq 2(k+1) \log ((k+1)/d)$ since no run has
length exceeding $1/d$. Thus, the contribution of $B_{2k+1}$ is bounded above by $d^k 2(k+1) \log ((k+1)/d)$. Summing we find that the overall contribution $T_{{\rm gt}5}$ of $B_7, B_9, \ldots$ is bounded as
\begin{align}
\label{eq:Tgt5_bounds}
0\leq T_{{\rm gt}5} \leq \sum_{k=3}^\infty d^k 2(k+1) \log ((k+1)/d) \leq 10 d^3 \log(1/d)
\end{align}
for small enough $d$.
\end{enumerate}
Finally, we obtain
\begin{align*}
\lim_{n\to\infty} & \frac{1}{n}H(\widehat{D}^n|X^n, \widehat{Y}, \widehat{K}) =
T_1+ T_3 + T_5 + T_{{\rm gt}5} \nonumber\\
&=\frac{d}{\mu(\sX)}  \sum_{l=2}^\infty \Big \{ p_{L(3)}(\gt1, l, \gt1) \; l \log l \big( 1 - (l-1)d\,\big)
+  \big( p_{L(3)}(1, l, \gt1)+p_{L(3)}(\gt1, l, 1)\big) \; l \log l \big( 1 - ld\,\big) \nonumber\\
&\ \ \ \ \ \ \ \ \ +p_{L(3)}(1, l, 1) \; l \log l \big( 1 - (l+1)d\,\big)\Big \} \nonumber\\
&+ \frac{d^2}{\mu(\sX)} \sum_{l=3}^\infty p_L(l)\binom{l}{2} \log \binom{l}{2}\nonumber\\
&+ \frac{d^2}{\mu(\sX)} \left (\sum_{l_0 >1, l_2} p_{L(3)}(l_0, 1, l_2)\, (l_0+l_2) \log (l_0+l_2)
+ \sum_{1,1,l_2} p_{L(3)}(1, 1, l_2) \,l_2 \log l_2 \right ) +\delta
\end{align*}
where $\delta= \delta_1+\delta_3+\delta_5+T_{{\rm gt}5}$. Rearranging gives Eq.~\eqref{eq:modified_deletion_entropy_estimate}, whereas Eq.~\eqref{eq:modified_deletion_estimate_error} follows for small enough $d$ from Eqs.~\eqref{eq:delta1_bounds},
\eqref{eq:delta3_bounds}, \eqref{eq:delta5_bounds} and \eqref{eq:Tgt5_bounds} and the fact that no run has length exceeding $1/d$.
\end{proof}

\begin{proof}[Proof of Corollary \ref{coro:hatD_givenxy}]
We prove the corollary assuming $H(\sY) > 1 -d^{\gamma}$. The proof assuming $H(\sX) > 1 -d^{\gamma}$ is analogous.

It follows from Fact \ref{fact:ELcube_is_small} that if $H(\sY) \geq 1-d^{\gamma}$,
then $\delta$ (cf. Eq.~\eqref{eq:modified_deletion_estimate_error}) is bounded as $|\delta|<\const_1 d^{1+\gamma} \log(1/d) \leq d^{1+\gamma-\eps/2}$ for small enough $d$, for some $\const_1< \infty$.

Consider $\sum_{l=2}^\infty p_L(l) l^2 \log l$. We separately analyze the first $l_0=\lfloor 4 \log(1/d) \rfloor$ terms of the sum. We use Lemma \ref{lemma:xy_similar_runs_gammaSTRONG}(i) (Eq.~\eqref{eq:L_Vstrong_fromY}) to deduce that
\begin{align}
\label{eq:sample_sum_first_l0}
\sum_{l=2}^{l_0} p_L(l) l^2 \log l=
\sum_{l=2}^{\infty} p_L^*(l) l^2 \log l + \xi_1 \, ,\\
 \textup{with} \quad |\xi_1| \leq \const_4 d^{\gamma/2-\eps/4} (l_0)^3  \leq \const_5 d^{\gamma/2-\eps/2} \, ,\nonumber
\end{align}
for small enough $d$. Next, we use Lemma
\ref{lemma:pL_tail_control_fromHy} to deduce that
\begin{align}
\label{eq:sample_sum_remaining}
\sum_{l=l_0+1}^{\infty} p_L(l) l^2 \log l = \sum_{l=l_0+1}^{\lfloor 1/d \rfloor} p_L(l) l^2 \log l \leq
\const_6 d^{\gamma} (1/d) \log (1/d) \leq \const_7 d^{\gamma-\eps/2}
\end{align}
for small enough $d$. Finally, Lemma \ref{lemma:xy_similar_runs_gammaSTRONG}(ii) tells us that \begin{align*}
|\mu(\sX) - 2| \leq \const_3 d^{\gamma/2}
\end{align*}
Combining with Eqs.~\eqref{eq:sample_sum_first_l0} and \eqref{eq:sample_sum_remaining}, it
follows that
\begin{align*}
\frac{d^2}{\mu(\sX)}\sum_{l=2}^\infty p_L(l) l^2 \log l = \frac{d^2}{2} \left \{\sum_{l=2}^{\infty} p_L^*(l) l^2 \log l \right \} + \eta_2
\end{align*}
where $|\eta_2|\leq \const_8 d^{2+\gamma/2-\eps/2} \leq \const_8 d^{1+\gamma-\eps/2}$, for small enough $d$.

Other terms in Eq.~\eqref{eq:modified_deletion_entropy_estimate} can be similarly analyzed. The result follows.
\end{proof}

\begin{proof}[Proof of Corollary \ref{coro:hygivenx_q}]
We prove the corollary assuming $H(\sY) > 1 -d^{\gamma}$. The proof assuming $H(\sX) > 1 -d^{\gamma}$ is analogous.

By definition, $D^n$ is independent of $X^n$, so $H(D^n)=H(D^n|X^n)=n h(d)$, where $h(\cdot)$ is the binary entropy function.
We have, for $Y= Y(X^n)$,
\begin{align*}
H(Y, K|X^n)&= H(D^n|X^n) - H(D^n|X^n, Y, K)\\
&= nh(d) -  H(\widehat{D}^n|X^n, \widehat{Y}, \widehat{K}) + n \delta_1
\end{align*}
with $|\delta_1(d, \sX) |\leq 2 H(Z^n)/n \rightarrow 2 h(z)$. It follows from Corollary \ref{coro:hatD_givenxy}, with $\gamma=2-\eps/2$, that
\begin{align}
\lim_{n \rightarrow \infty} \frac{1}{n}H(Y(X^n), K(X^n)|X^n) = h(d) - \frac{d}{\mu(\sX)}  \sum_{l=2}^\infty p_{L}(l) \; l \log l
 -  d^2 c_3  + \delta_2
 \label{eq:YKgivenX_prelim}
\end{align}
with $|\delta_2|\leq 2h(z) + \const_1 d^{3-\eps}$. From Proposition \ref{propo:z_is_small}, we know that
$z< \const_1 d^{3-\eps/2}$. It follows that $h(z) \leq \const_2 d^{3-\eps}$ and hence $|\delta_2|\leq \const_3 d^{3-\eps}$.
Simple calculus gives
\begin{align}
h(d) = d \log (1/d) + (d - d^2/2) /\ln 2 + \delta_3
\label{eq:h_estimate}
\end{align}
$|\delta_3| \leq \const_4 d^3$. Using Lemma \ref{lemma:xy_similar_runs_gammaSTRONG}(i) (Eq.~\eqref{eq:xy_similar_runs_gammaSTRONG})
and Lemma \ref{lemma:pL_tail_control_fromHy},
we obtain
\begin{align}
\sum_{l=2}^\infty p_{L}(l) \; l \log l = \sum_{l=2}^\ell q_{L}(l) \; l \log l + \delta_4
\label{eq:llogl_p_q}
\end{align}
where $|\delta_4| \leq \const_5 d^{2-\eps}$ for small enough $d$.
Using Lemma \ref{lemma:xy_similar_runs_gammaSTRONG}(ii)(Eq.~\eqref{eq:muX_close2_muY_using_Hy})
and $\mu(\sX) > 1$ (from Lemma \ref{lemma:mean_closeto2}), we obtain
\begin{align}
\left |\frac{1}{\mu(\sX)} - \frac{1}{\mu(\sY)} \right | \leq \const_6 d^{2-\eps}
\label{eq:muXinv_muYinv_difference}
\end{align}
Also, it follows from $|\mu(\sY) -2| \leq 7 d^{1-\eps/4}$ (Lemma \ref{lemma:mean_closeto2} applied to $\sY$) and elementary calculus that
\begin{align}
\{\mu(\sY)\}^{-1} &= 1- \frac{1}{4} \mu(\sY) + \delta_5\nonumber\\
&= 1- \frac{1}{4} \sum_{l=1}^\ell q_L(l) l + \delta_6
\label{eq:muY_inverse_estimate}
\end{align}
where $|\delta_6|\leq \const_7 d^{2-\eps}$. Here we have used Lemma \ref{lemma:L_tail_control} (applied to $\sY$) to bound $\sum_{l=\ell+1}^\infty q_L(l) l$.

Plugging Eqs.~\eqref{eq:h_estimate}, \eqref{eq:llogl_p_q}, \eqref{eq:muXinv_muYinv_difference} and \eqref{eq:muY_inverse_estimate} into Eq.~\eqref{eq:YKgivenX_prelim}, we obtain the result.

\end{proof}

\bibliographystyle{IEEEtran}

\begin{thebibliography}{99}


\bibitem{MitzenmacherReview} M.~Mitzenmacher, ``A survey of results for
deletion channels and related synchronization channels,''
Probab. Surveys, 6 (2009), 1-33.


\bibitem{Drinea07lb} E.~Drinea and M.~Mitzenmacher,
``Improved lower bounds for the capacity of i.i.d. deletion and
duplication channels,'' IEEE Trans. Inform. Theory, 53 (2007) 2693-2714.

\bibitem{Dobrushin} R.~L.~Dobrushin,
``Shannon's Theorems for Channels with Synchronization Errors,''
Problemy Peredachi Informatsii, 3 (1967), 18-36.

\bibitem{KirschDrinea} A.~Kirsch and E.~Drinea,
``Directly Lower Bounding the Information Capacity for Channels with I.I.D.
Deletions and Duplications,''
Proc. of IEEE  Intl. Symp. on Inform. Theory (ISIT) 2007.


\bibitem{Drinea06dnear1} E.~Drinea and M.~Mitzenmacher,
``A Simple Lower Bound for the Capacity of the
Deletion Channel,'' IEEE Trans. Inform. Theory, 52:10 (2006), 4657–4660.

\bibitem{Fertonani09} D.~Fertonani and T.M.~Duman,
``Novel bounds on the capacity of binary channels with deletions
and substitutions,''
Proc. of IEEE  Intl. Symp. on Inform. Theory (ISIT) 2009.

\bibitem{Dalai} M.~Dalai, ``A new bound for the capacity of the deletion channel with
               high deletion probabilities", arXiv:1004.0400, 2010.

\bibitem{DMP07upperbounds} S.~Diggavi, M.~Mitzenmacher, and H.~Pfister, ``Capacity Upper Bounds for Deletion Channels,"
Proc. of IEEE  Intl. Symp. on Inform. Theory (ISIT) 2007.

\bibitem{KanoriaMontanari} Y.~Kanoria and A.~Montanari, ``On the deletion
channel with small deletion probability,''
Proc. of IEEE  Intl. Symp. on Inform. Theory (ISIT) 2010.

\bibitem{KMS} A.~Kalai, M.~Mitzenmacher and M.~Sudan,
``Tight Asymptotic Bounds for the Deletion Channel
with Small Deletion Probabilities'', Proc. of IEEE  Intl. Symp. on Inform. Theory (ISIT) 2010.


\bibitem{PP1} D.~J.~Daley and D.~Vere-Jones, \emph{An Introduction to
    the Theoryy of Point Processes}, Springer, New York, 2008.

\bibitem{PP2} F.~Baccelli and P.~Br\'emaud, \emph{Elements of Queuing Theory}, Springer, New York, 2003.





\end{thebibliography}

\end{document}